\newtheorem{thm}{Theorem}[chapter]
\newtheorem{defn}{Definition}[chapter]
\newtheorem{obsrv}{Observation}[chapter]
\newtheorem{clm}{Claim}[chapter]
\newtheorem{lemma}{Lemma}[chapter]
\newtheorem{corollary}{Corollary}[chapter]
\newtheorem{prop}{Proposition}[chapter]
\newtheorem{remark}{Remark}[chapter]
\newcommand{\es}{\emptyset}
\newcommand{\e}{\varepsilon}
\newcommand{\bb}[1]{\mathbb{#1}}
\newcommand{\cl}[1]{\mathcal{#1}}
\newcommand{\Ftwo}{\mathbb{F}_{2}}
\newcommand{\Fq}{\mathbb{F}_{q}}
\newcommand{\zo}[1]{\{0,1\}^{#1}}
\newcommand{\ga}{\alpha}
\newcommand{\gb}{\beta}
\newcommand{\sus}{\subseteq}
\newenvironment{proof}[1][Proof]{\par \textbf{#1.} }{\hspace{10pt}\hfill$\blacksquare$\par}
\title{Truth Table Minimization of Computational Models\footnote{This work was done under the supervision of Prof. Eyal Kushilevitz as a partial fulfilment of the requirements for the degree of Master of Science in Computer Science in the Technion, Haifa 32000, Israel.}}
\author{Netanel Raviv\footnote{Department of Computer Science, Technion---Israel Institute of Technology,  Technion City, Haifa 32000, Israel}
}
\date{\today}
\begin{document}

\maketitle
\begin{abstract}
Complexity theory offers a variety of concise computational models for computing boolean functions - branching
programs, circuits, decision trees and ordered binary decision diagrams to name a few. A natural question that arises in this context with respect to any such model is this:
\begin{center}
Given a function $f:\zo{n} \to \zo{}$, can we compute the optimal complexity of computing $f$ in the
computational model in question? (according to some desirable measure).
\end{center}

A critical issue regarding this question is how exactly is $f$ given, since a more elaborate description of $f$ allows the algorithm to use more computational resources. Among the possible representations are black-box access to $f$ (such as in computational learning theory), a representation of $f$ in the desired computational model or a representation of $f$ in some other model. One might conjecture that if $f$ is given as its complete truth table (i.e., a list of $f$'s values on each of its $2^n$ possible inputs), the most elaborate description conceivable, then any computational model can be efficiently computed, since the algorithm computing it can run $poly(2^n)$ time. Several recent studies show that this is far from the truth - some models have efficient and simple algorithms that yield the desired result, others are believed to be hard, and for some models this problem remains open. In this thesis we will discuss the computational complexity of this question regarding several common types of computational models. We present several new hardness results and efficient algorithms, as well as new proofs and extensions for known theorems, for variants of decision trees, formulas and branching programs.

\end{abstract}
\tableofcontents

\chapter{Introduction} \label{chapt:Intro}
A classic question in computer science, which dates back to the early 60s, is how can we define and compute the ``complexity'' of a given string. The question is natural - the string ``0101010'' is intuitively simpler than the string ``9846723''. This notion is called \textit{Kolmogorov Complexity} and it is denoted by $K(x)$ in the literature (for more information on Kolmogorov complexity, see \cite{LV97}). Consider the following definition: a complexity of a string $s$ is the length of the shortest computer program that outputs $s$ when executed on the empty input. It is not hard to imagine situations where such program is sought, and why this definition is useful. Moreover, one may see why this would be a natural definition for string complexity - ``simple'' strings are the ones that present some coherent structure, and are printable by ``short'' computer programs, while ``complicated'' strings are the ones that do not exhibit such a structure thus requiring long computer programs for that purpose. \\
\indent Formally, the complexity $K(x)$ of a string $x$ is defined as the minimum number of states of a Turing machine that prints $x$ on the empty input. The function $K(x)$ was defined in \cite{K98}, where it was proved to be a non-computable function. Seeking to expand this line of research, one might ask similar questions about the complexity of printing strings using different computational models for model  that received a wide attention in modern complexity theory during the past couple of decades.\\
\indent The Turing machine model, albeit being simple, provides extraordinary challenges when trying to prove hardness results (i.e., showing that a certain problem cannot be solved efficiently). As a consequence, simpler models were offered, whose outputs are (in most cases) a single bit. Our concern in this thesis is the question of computing the complexity of strings in some of those models. This may seem insightful since it may provide us with a better understanding of the complexity of strings when the computation methods at hand are limited somehow. In order to do so, we ought to alter our view of ``string complexity'' to comply with computational models other than the Turing machine. Therefore, an alternative point of view of the question of string complexity is offered: we require that the computational model of interest will provide us with the $i$th bit of the string $x$ when given a binary representation of $i$, rather than the entire string $x$. We observe that this new definition for string complexity is computable for Turing machines iff $K(x)$ is computable. Moreover, these definitions differ by at most some multiplicative factor. This new requirement views the string $x$ as a function $f:\zo{\lceil \log x \rceil} \to \zo{}$, where $x$ serves as the truth table of $f$. In light of this point of view of string complexity, the main question that we ask in this thesis is: Given a full truth table of a function $f:\zo{n} \to \zo{}$, can we compute the optimal complexity of computing $f$ in the computational model in question? (where optimality is according to some desirable measure). Henceforth, when discussing a certain  computational model $C$, we shall name this question "the truth-table minimization problem of $C$".\\
\indent The computational models we are interested in were developed in different contexts and under different motivations. All those models supply some insights about boolean functions computation, which are of independent interest, but some may also shed light on different aspects of computer science and computer engineering. E.g., the boolean circuit and the formula models (see Chapter ~\ref{chapt:Formulas}), which resemble real-life digital circuits, are studied in both computational complexity, since they supply a simplified approach towards understanding parallel computing and polynomial Turing machines, and in hardware design. Being very difficult to understand themselves, it is common to restrict the structure of the circuits or formulas at hand in order to achieve better understanding of their power and limitations. Common restrictions are depth restriction, limited types of gates, limited number of appearances of every variable, etc. Branching programs (see Chapter ~\ref{chapt:BPs}) were studied in complexity theory mainly because they constitute an automaton-like modelling for bounded space algorithms. Very much like circuits, certain restrictions (e.g., number of appearances of every variable) are often applied on branching programs in order to understand them better. A common restriction is setting the variables in a pre-defined order, which result in a model called OBDD (ordered binary decision diagrams, see also in Chapter \ref{chapt:BPs}). Decision trees (Chapter \ref{chapt:DT}), which are abundant also outside of computer science (e.g., in medical diagnosis or in risk management), provide a strong insight of the inherent complexity of computing a function in a simple if-else environment using a very small set of possible conditions.\\
\indent Most truth table minimization problems (for the aforementioned models) were studied in the literature. Our work may be seen as a direct continuation of some of the works mentioned below. For the decision tree model, \cite{GLR99} provide a very simple dynamic programming algorithm for truth-table minimization. In Chapter \ref{chapt:DT} we preform some modifications to their algorithm in order to achieve efficient (polynomial or quasi-polynomial) algorithms for two natural variants of the decision tree model: linear decision trees (where nodes contain linear functions in arbitrarily many variables, see Section \ref{LDT}) and read once decision trees with symmetric functions in nodes (where every variable appears at most once in every path from the root to a leaf, and all nodes contain some symmetric function between arbitrarily many variables, see Section \ref{sctn:SymmetricTrees}). Moreover, we provide an NP-completeness result for another variant of the decision tree model (Section \ref{sctn:TreesNPC}), in which the set of possible tests in the nodes is given as input, together with the full truth table. A similar problem in the non-boolean world was asked by \cite{HR76} where it was proved to be NP-complete. In the branching program model, we give a simpler proof for the hardness-of-approximation result by \cite{AKRR03} using a generalization of a method used by \cite{KW09, AH+06} (presented in Chapter \ref{chapt:Pseudo}). Namely, we give a different proof for the fact that truth table minimization of branching programs is inapproximable up to a factor of $2^{cn}$, where $n$ is the number of variables, and for every $c\in (0\frac{1}{2})$. We also give an algorithm for $\mu$-branching program truth-table minimization (where every variable may appear only once in the entire program, see Definition \ref{defn:OBDD}) which is faster than applying the OBDD (see Definiton \ref{defn:OBDD}) algorithm of \cite{FS90}. \\
\indent In another model, \cite{KC00} address the truth-table minimization problem of boolean circuits (MCSP). While they were unable to provide a definite classification of the decisional variant of this problem (P or NP-complete), they gave some strong evidence of what might be the correct answer, and what its implications would be. Namely, they showed that if MCSP is in $P$, then there are no pseudorandom functions in $P/poly$ (a result that would undermine almost all modern cryptography). We show a similar result regarding any computational model in Chapter \ref{chapt:Pseudo}, a result which also generalizes several works regarding the truth table minimization of $AC^0$ circuits \cite{AH+06} and communication complexity \cite{KW09}. In addition, \cite{KC00} also show that if MCSP is NP-complete under a reduction that is ``natural'' in some sense, then this would immediately imply an explicit construction of a function with high circuit complexity, which is a long-standing open problem. We use a similar technique, that together with Valiant's depth reduction lemma (see \cite{V77, V09}) provides an explicit construction of a different kind of hard functions (which is also an open problem), assuming the natural NP-completeness of a seemingly easier problem - the truth-table minimization of depth-3 formulas (see Section \ref{sctn:LowerBoundsSigma3}). \\
\indent In Chapter \ref{chapt:Formulas} we provide efficient algorithms for the construction of several types of read-once formulas (a formula with at most one appearance of every variable). The main theorem in this chapter regards the uniqueness of several types of decompositions of a boolean function (i.e., representation of the function as $\wedge$, $\vee$ or $\oplus$ of variable disjoint factors, see Corollary \ref{cor:Main}). The proof of this theorem strongly relies on several theorems regarding partial derivatives of multilinear polynomials that were developed by Shpilka and Volkovich in \cite{SV08, SV10, V12}. While similar work regarding read-once formulas was already done (\cite{GAU04} showed a recognition algorithm for read-once formulas given a DNF representation, while \cite{P95} gave a construction algorithm for read-once formulas given a DNF that uses graph-theoretic tools), our work uses the aforementioned algebraic tools in order to construct several variants and generalizations of the read-once formula model. For example, a larger set of possible gates (Section ~\ref{sctn:ROFXOR}), larger readability (Sections ~\ref{sctn:UF2} and \ref{sctn:F2A}) and costly negation gates (Sections ~\ref{sctn:CostlyBoolean} and \ref{sctn:CostlyXOR}). In the same chapter we address the question of truth-table minimization of monotone depth-3 formulas. Being unsuccessful in providing a definite answer, we show some evidence of the hardness of this problem - we show that an algorithm for truth-table minimization of monotone formulas of depth 3, if exists, is unlikely to work in a \textit{serial} manner, i.e., to construct the minimal second level formulas one by one, since in this way it is most likely to encounter a problem which we prove to be $NP$-complete: the problem of finding the minimal monotone DNF for a partial truth table (Section \ref{sctn:monDNF}). Moreover, in Section \ref{sctn:FormulaDepth} we use a result by \cite{AKRR03} about inapproximabiltiy of formula size (under some cryptographic assumption) to show that the minimal formula depth is also inapproximable.\\
\indent We have also found a simple connection between the problems of truth-table minimization, learning and model minimization (Section \ref{sctn:TTMandLearning}), allowing us to use algorithms from learning theory to obtain truth-table minimization algorithms for several computational models (see Sections \ref{sctn:LDL}, \ref{sctn:OurROF} and \ref{sctn:BPEfficient}). 

\begin{remark} \label{remark:AllVariables}
All along this thesis we shall assume w.l.o.g that all given truth-tables represent functions which depend on all variables. This assumption does not limit the generality of the discussion since given a truth table of size $2^n$, we may run a simple $O(n\cdot 2^n)$ algorithm that verifies that the function indeed depends on all its variables, and if not it produces a truth-table of an equivalent sub-function that does depend on all its variables.
\end{remark}
\paragraph{Organization:} Chapters \ref{chapt:DT}, \ref{chapt:Formulas} and \ref{chapt:BPs} each contains a self-contained discussion about truth-table minimization of decision trees, formulas and branching programs respectively. Before presenting our results, each chapter begins by formally defining the model and its variants and summarizing known results. Chapter \ref{chapt:Pseudo} presents a generalization of a hardness result for truth-table minimization which appeared in several different papers in recent years. Some chapters are concluded with a discussion on open problems and further research directions.

\section{Truth-Table Minimization and Learning} \label{sctn:TTMandLearning}
Truth-table minimization's more popular counterpart, often named ``model minimization'', is the following problem: given \textit{some} model (e.g., a branching program, a decision tree, etc.) which represents a function $f$, can we efficiently find the minimal model which is consistent with $f$? This type of questions was a topic for extensive research throughout the years (e.g., \cite{ZB99,Sieling08} for decision trees, \cite{BW96} for OBDDs, \cite{BU08} for formulas or \cite{GD99} for branching programs). The reader may wonder whether there is a connection between the former and the latter problems. In addition, a reader which is familiar with computational learning theory may notice the resemblance between truth-table minimization and learning. In both we are given some kind of access to the values of a given function, and we are asked to decide if the function has a certain property. As we shall see, this intuition may be formalized, and some results may be deduced from it (see Sections \ref{sctn:LDL}, \ref{sctn:OurROF}, \ref{sctn:BPEfficient}). Since results about learning computational models are abound, this direction may lead to further lucrative research beyond the scope of this thesis (e.g., consider different models of learning and their implications to truth-table minimization, such as PAC learning). \\
\indent The learning model we consider consists of an algorithm with an oracle access to a function in one or more of the conventional ways, as in \cite{KM93, AHK89, BHH92, BBTV96, HSW90, BTW96, VW93, A87}. E.g.,

\begin{itemize}
\item A \textit{membership query}, where the algorithm supplies $x\in \{0,1\}^{n}$ and the oracle answers with $f(x)$.
\item  An \textit{equivalence query}, where the algorithm supplies a hypothesis $h\in C$ (where $C$ is the class of models to be learned) and the oracle answers either by saying ''yes'' or by supplying an $x$ such that $h(x)\ne f(x)$.
\item  A \textit{relevent possibility oracle}, where the algorithm specifies a set of literals, and the oracle answers if it is a subset of some minterm of $f$ (first defined in \cite{Valiant84}).
\end{itemize}

The output of the algorithm is a hypothesis which is identical or similar to the target function. The complexity of such algorithm is measured as a function of the number of variables of the target function. We call a learning algorithm \textit{exact} if the resulted hypothesis matches the target concept on all assignments. It will be called \textit{proper} if the resulted hypothesis is represented as a member of the class at question. Notice that in all of the papers mentioned above, the target function is guaranteed to reside within the concept class.

The following claim shows a connection between truth-table minimization, proper and exact learning and model minimization. 

\begin{thm} \label{thm:Learning}
Let $C$ be some class of models such that:
\begin{enumerate}
\item $C$ has a proper and exact learning algorithm $A$ running in $2^{O(n)}$ time (when $n$ if the number of variables of the target function); and
\item $C$ has an algorithm $B$ that receives a concept $c\in C$ and outputs a minimal equivalent $c'\in C$, and runs in time $2^{O(n)}$.
\end{enumerate}
Then $C$ has a polynomial truth-table minimization algorithm. 
\end{thm}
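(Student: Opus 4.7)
The plan is to chain the two given procedures. Given a truth table of $f:\zo{n}\to\zo{}$ of size $N=2^n$, I would first invoke the learning algorithm $A$ while simulating its oracle access to $f$ using the table. Since $A$ is proper and exact, its output is a concept $c\in C$ that computes $f$ exactly. Feeding $c$ into $B$ then yields a minimal equivalent $c'\in C$, which is the desired output of the truth-table minimization algorithm. Correctness is immediate from the stated guarantees of $A$ (returns a member of $C$ that agrees with $f$ everywhere) and of $B$ (returns a minimal equivalent concept).

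For the running-time bound, observe that the truth table has size $N=2^n$, so any running time of the form $2^{O(n)}$ is automatically $N^{O(1)}$, i.e., polynomial in the size of the input. Hence both stages fit in polynomial time in $N$, and so does their composition, as soon as we can simulate the oracle queries of $A$ in polynomial time in $N$.

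The main obstacle, modest as it is, is exactly this oracle simulation, which depends on the type of oracle that the learning algorithm $A$ uses. A membership query on input $x$ is answered by a single lookup in the truth table, costing $O(n)$ time. An equivalence query with hypothesis $h$ is answered by evaluating $h$ on every $x\in\zo{n}$ and comparing to $f(x)$; since $A$ runs in $2^{O(n)}$ time it cannot produce hypotheses of description length larger than $2^{O(n)}$, so the entire sweep over the $2^n$ inputs takes $N^{O(1)}$ time, and any disagreement yields a counterexample. More exotic oracles, such as the relevant possibility oracle of \cite{Valiant84}, can likewise be answered by straightforward inspection of the truth table in time $N^{O(1)}$.

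Putting it together, the overall procedure runs $A$ (with each oracle call simulated in $N^{O(1)}$ time) to produce some $c\in C$ computing $f$, then runs $B(c)$ to produce a minimal $c'\in C$, and outputs $c'$; the total running time is $N^{O(1)}$. The critical point, which I would emphasize in the write-up, is that what is exponential in the number of variables becomes polynomial in the truth-table size, so an otherwise inefficient learning-plus-minimization pipeline collapses to an efficient truth-table minimization algorithm.
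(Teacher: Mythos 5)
Your proposal follows the same route as the paper: simulate the oracle queries from the truth table (membership by table lookup, equivalence by a full sweep over all $2^n$ assignments, the relevant possibility oracle by exhaustive inspection), run $A$, and pipe its output into $B$, with the key observation that $2^{O(n)}$ time is polynomial in the input size $N=2^n$. That part is correct and matches the paper.

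There is, however, one genuine gap: you write that ``since $A$ is proper and exact, its output is a concept $c\in C$ that computes $f$ exactly,'' but the guarantees of a learning algorithm are stated under the promise that the target function lies in the concept class $C$. A truth-table minimization algorithm receives an \emph{arbitrary} truth table, and when $f$ has no representation in $C$ the behaviour of $A$ is unspecified --- it may fail to terminate within its stated time bound, or terminate with a hypothesis that does not compute $f$. Your procedure as written could therefore loop forever or silently output a wrong answer on such inputs. The paper's proof handles this explicitly: it clocks $A$ against its worst-case time bound $T(n)$ and aborts if that bound is exceeded, and it verifies (by one more sweep over all $2^n$ assignments) that the hypothesis returned by $A$ actually agrees with $f$ before feeding it to $B$; if either check fails, the correctness of $A$ on in-class targets lets one conclude that $f\notin C$ and reject. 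Adding these two checks, each costing only $2^{O(n)}$ time, closes the gap without changing anything else in your argument.
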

\begin{proof}
Observe that all mentioned types of queries may be simulated in polynomial (in $2^{O(n)}$) time when the full truth-table is given. E.g.:

\begin{itemize}
\item A \textit{membership query} is simply implemented by a truth-table look-up.
\item  An \textit{equivalence query} may be simulated by traversing all $2^{n}$ assignments and searching for a mismatch between $h$ and $f$. As all models considered in this thesis may be evaluated on a given input in $2^{O(n)}$ time, simulating an equivalence query may also be done in $2^{O(n)}$ time.
\item  A \textit{relevant possibility oracle} may also be simulated in $2^{O(n)}$ time. Since there are $3^n$ possible minterms, we may traverse them all and check if any of them constitutes a minterm by traversing all assignments.
\end{itemize}

Therefore we may define the following truth-table minimization algorithm: simulate $A$, answering all queries using the truth-table of $f$, and feed the result to $B$. However, in the usual settings in learning theory, the target function is promised to reside in the concept class, unlike in our settings. Therefore, we must add the following restrictions to the above algorithm: if $T(n)$ is a worst-case bound on the running time of $A$, we ought not to let $A$ run more than that much time. Moreover, we must make sure that the output of $A$ indeed represents $f$ before feeding it into $B$. This will assure us that the given truth-table truly lies in the concept class $C$, and if one of these conditions is not met, the correctness of $A$ allows us to deduce that the given truth-table does not have a representation in $C$, and we may reject.
\end{proof}

\chapter{Decision Trees} \label{chapt:DT}
This chapter will discuss several variants of the traditional model of decision trees, and minimization algorithms for these models. First, let us recall the original definition of a decision tree.
\begin{defn} \label{defn:DT}
A decision tree is a rooted binary tree in which every non-terminal
node (i.e, not a leaf) is labelled with a variable from $\left\{ x_{i}\right\} _{i=1}^{n}$,
and has out-degree 2. The edges from every such node are labelled
with 0 and 1. Each leaf is labelled with either 0 or 1. A tree $T$
is said to compute the function $f$ if for all $\overline{a}\in\left\{ 0,1\right\} ^{n}$,
the path that begins at the root and follows the edges labelled $a_{i}$
when the node is labelled $x_{i}$, reaches a leaf labelled $f(\overline{a})$. The size of a decision tree is defined as its number of nodes. The depth of a decision tree is the length of the longest path from the root to a leaf.
\end{defn}

Given the truth-table of a function $f$, it is known that one may find a smallest decision tree in polynomial time in the size of the truth-table, denoted $N=2^{n}$ \cite{GLR99}. In the next sections, we show three variants of ordinary decision trees which have efficient truth table minimization algorithms. Symmetric read-once decision trees (SRODT, Section \ref{sctn:SymmetricTrees}) and linear decision lists (LDL, Section \ref{sctn:LDL}) have a polynomial truth table minimization algorithm, while for linear decision trees (LDT, Section \ref{LDT}) we present a quasi-polynomial algorithm. Since some of these algorithms have a similar structure, we present a parametrized generalization of them in Section \ref{sctn:Meta}. Another variant we consider, for which we show that the corresponding decision problem is NP-hard, is that of a decision tree that may contain any function in the nodes (Section \ref{sctn:TreesNPC}).

\section{Efficient Algorithms}

\subsection{Linear Decision Trees} \label{LDT}

We use a technique similar to the one in \cite{GLR99} to devise a quasi-polynomial algorithm for a wider class of trees called linear decision trees (also known as parity decision trees).

\begin{defn}
A linear decision tree (LDT) is a decision tree where every node is labelled by some linear (over $\mathbb{F}_{2}$) function of the input variables. At each node the corresponding linear function is evaluated, and the edge that agrees with its output is followed. An LDT computes a function as explained in Definition \ref{defn:DT}. 
\end{defn}
\indent This model was originally considered by \cite{S81} in the context of integer input, and was later studied also in the boolean case (e.g., \cite{KM93,ZS10}). 

\indent Since the nodes in an LDT are labelled with linear functions, any node of an LDT corresponds to a set of linear constraints, and may be regarded as an affine subspace of $\mathbb{F}_{2}^{n}$. The idea behind the algorithm of \cite{GLR99} for standard decision trees is to find the smallest tree for each cube\footnote{A cube of $\mathbb{F}_{2}^{n}$ that corresponds to $\alpha \in \{ 0,1,\star \}^{n} $ is the set $\{x \in \{ 0,1 \}^{n} \vert \forall i , \alpha_{i}\ne \star \Rightarrow x_{i} = \alpha_{i}\}$.} of the space $\mathbb{F}_{2}^{n}$, starting from cubes that are a single points and up to larger cubes. We do a similar thing with affine subspaces instead of cubes. Our algorithm is based on the following graph, denote by $\mathbb{M}$:
\begin{enumerate}

\item  $\mathbb{M}\ $ has $n+1$ layers, where layer $i$ contains a node for every affine subspace of dimension $n+1-i$. We will identify each subspace $V$ by a pair $[A,b]$ of a matrix $A\in \mathbb{F}_{2}^{n \times n}$ and a vector $b \in \mathbb{F}_{2}^{n}$, such that $V=\{x\vert Ax=b\}$.

\item The edges are only between adjacent layers, and are labelled by some possible test $[u,b]$ (i.e., $\left<u,x\right>=b$ for some $u \in \mathbb{F}_{2}^{n} $ and $b \in \{ 0,1 \}$). The label of an edge $(s,t)$ must be linearly independent in any set of constraints that define the subspace that corresponds to the node $s$.

\item For every edge $(u,v)$ labelled by a constraint $C_{0}$, and for each set of constraints $\{C_{i}\}_{i=1}^{t}$, the subspace $v$ corresponds to the affine subspace defined by $\{C_{i}\}_{i=0}^{t}$.

\end{enumerate}

Notice that the $(n+1)$'th layer of $\mathbb{M}\ $ consists of all 0-dimensional affine subspaces of $\mathbb{F}_{2}^{n}$, namely, all points in $\mathbb{F}_{2}^{n}$, and the first layer consists of one node that corresponds to the entire space. We first show how the construction of $\mathbb{M}\ $ is possible in quasi-polynomial time. To see this, we make two simple observations:

\begin{obsrv} \label{obsrv:CheckEq}
Given two affine subspaces $[A,a]$,$[B,b]$ it is possible to check if they are equal in $poly(n)$ time.
\end{obsrv}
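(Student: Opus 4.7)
The plan is to reduce the equality test to a standard Gaussian-elimination computation over $\mathbb{F}_2$. I would begin by forming the augmented matrices $[A \mid a]$ and $[B \mid b]$ and bringing each to its reduced row echelon form (RREF), discarding all-zero rows along the way. Each such reduction takes $O(n^3)$ bit operations over $\mathbb{F}_2$, which is $poly(n)$.

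Next, I would distinguish the feasible and infeasible cases. If either reduced augmented matrix contains a row of the form $[0\cdots 0 \mid 1]$, the corresponding linear system has no solutions and the affine subspace is empty; the two subspaces are then equal iff both of them turn out to be empty in this manner. Otherwise both systems are feasible, and here I would invoke the standard linear-algebra fact that the RREF of a consistent augmented matrix is a canonical representative of its solution set: two consistent systems over $\mathbb{F}_2$ have the same solution set iff the row spans of $[A \mid a]$ and $[B \mid b]$ coincide, and the RREF is uniquely determined by this row span. Consequently, equality of $[A,a]$ and $[B,b]$ reduces to an entrywise comparison of the two reduced matrices, costing an additional $O(n^2)$ operations.

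The total running time is $O(n^3)=poly(n)$, as required. There is no real obstacle in the argument; the only nontrivial ingredient is the canonicity of the RREF, which is a textbook result and requires no new work in our setting. (An alternative formulation, if one prefers to avoid invoking canonicity, is to check inclusion in both directions by verifying that every row of $[B \mid b]$ lies in the row span of $[A \mid a]$ and vice versa; this is also doable in $O(n^3)$ time via Gaussian elimination.)
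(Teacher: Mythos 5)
Your proof is correct and takes essentially the same approach as the paper's: both reduce the equality test to Gaussian elimination over $\mathbb{F}_2$ in $poly(n)$ time. The only (immaterial) difference is that the paper canonicalizes the \emph{solution} side (computing a basis and shift vector for each subspace and then comparing), whereas you canonicalize the \emph{constraint} side (RREF of the augmented matrices, with the empty case handled separately); your version is in fact spelled out more carefully, since the fact that equal nonempty solution sets force equal row spans of the augmented matrices is exactly the lemma one needs and you state it explicitly.
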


\begin{proof}
It is well known that given $[A,a]$ we may find a basis and a shift vector of the solution space in polynomial time. After doing so to both $[A,a]$,$[B,b]$, we may check equivalence of affine spaces by (say) Gaussian elimination.
\end{proof}

\begin{obsrv} \label{obsrv:NewVectors} Given a set $\{u_{i}\}_{i=1}^{t}$ of independent vectors over $\mathbb{F}_{2}^{n}$ it is possible to produce all vectors that are independent of $\{u_{i}\}_{i=1}^{t}$ in $poly(2^{n})$ time. 
\end{obsrv}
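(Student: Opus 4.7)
The plan is direct enumeration combined with a standard linear-algebra independence test. I would simply iterate over every vector $v \in \mathbb{F}_2^n$ and, for each candidate, decide whether it is linearly independent of $\{u_i\}_{i=1}^t$, outputting the vectors that pass the test.

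First, I would note that $|\mathbb{F}_2^n| = 2^n$, so there are exactly $2^n$ candidates to check, which already fits inside the $poly(2^n)$ budget. The enumeration itself (listing all $n$-bit strings) takes $O(n \cdot 2^n)$ time. The nontrivial per-candidate work is the independence test, which I would implement via Gaussian elimination: form the matrix whose rows are $u_1,\ldots,u_t,v$ and row-reduce it over $\mathbb{F}_2$; the vector $v$ is independent of $\{u_i\}_{i=1}^t$ if and only if no zero row is produced beyond those already present in the reduction of $\{u_i\}$. Since $t \le n$ (the $u_i$ are independent vectors in an $n$-dimensional space), this row reduction runs in $poly(n)$ time.

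Combining these two ingredients, the total running time is $2^n \cdot poly(n) = poly(2^n)$, as required. A small optimization would be to pre-compute the reduced row-echelon form of $\{u_i\}$ once, and then for each $v$ only perform the reduction of the single additional row against this fixed basis; this saves a factor but is not needed for the stated quasi-polynomial bound.

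There is no real obstacle here: the observation is essentially a counting statement combined with the polynomial-time solvability of linear independence over $\mathbb{F}_2$. The only subtlety worth stating explicitly is that $t \le n$, so each independence check is genuinely polynomial in $n$ (and hence in $\log(2^n)$), which is what makes the enumeration strategy fit within $poly(2^n)$.
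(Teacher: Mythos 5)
Your proposal is correct and follows essentially the same strategy as the paper: enumerate all $2^n$ candidate vectors and test each for independence from $\{u_i\}_{i=1}^{t}$. The only difference is the independence test itself --- the paper brute-forces all $2^t \le 2^n$ linear combinations of the $u_i$, whereas your Gaussian-elimination test runs in $poly(n)$ per candidate, which is a (harmless) improvement but does not change the overall $poly(2^n)$ bound.
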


\begin{proof}
We may traverse all possible $2^{n}$ vectors and check if they are linearly dependent in $\{u_{i}\}_{i=1}^{t}$ by traversing all $2^{t} \le 2^{n}$ possible linear combinations of $\{u_{i}\}_{i=1}^{t}$. 
\end{proof}

We denote by CheckEq($[A,a]$,$[B,b]$) the algorithm corresponding to Observation \ref{obsrv:CheckEq} and the one corresponding to Observation \ref{obsrv:NewVectors} by $\mbox{NewVectors}(\{u_{i}\}_{i=1}^{t})$. Using these two algorithms we may construct $\mathbb{M}\ $ (which is of quasi-polynomial size - see below) in quasi-polynomial inductively - begin with constructing the 1st layer, which consists of a single node $v=[0,0]$. In every consecutive step $i$ we traverse every node $v=[A,a]$ in the last constructed layer $i$, apply NewVectors on the set of rows of $A$, and for each output $u$ of NewVectors we create two new nodes in the $(i+1)$'th layer - 
\begin{eqnarray*} 
v_{0} =  \left[ 
					  \begin{pmatrix} A \\ u\end{pmatrix}, 
		           \begin{pmatrix} a \\ 0\end{pmatrix}
			\right]
		,
v_{1} = 	\left[
      			  \begin{pmatrix} A \\ u\end{pmatrix}, 
 					  \begin{pmatrix} a \\ 1\end{pmatrix}
         \right]
\end{eqnarray*}
while preventing duplication with existing nodes by using CheckEq. It is easy to see that there are at most $2^{n+1}$ edges coming out of every node in the graph, and finding them requires $poly(2^n)$ time. However, preventing duplication requires traversing all nodes in the next layer. Therefore, constructing every layer in $\mathbb{M}\ $ can be done in at most $2^n \times |i\mbox{th layer}| \times |(i+1)\mbox{th layer}|$. Therefore the entire algorithm may be done in at most $n \times |\mbox{largest layer}|^2 \times 2^n$. It is widely known that the number of linear subspaces of $\mathbb{F}_{2}^{n}$ of dimension $k$ is given by the Gaussian binomial coefficient (also known as the $q$-binomial coefficient, see definitions in \cite[Chapter 24]{CourseInCombinatorics}), denoted\footnote{The Gaussian binomial coefficient is defined as ${n \brack k} _{2}\triangleq \prod_{i=0}^{k-1}\frac{q^{n-i}-1}{q^{k-i}-1}$ and it is equal to the number of $k$-subspaces of an $n$-dimensional space over a field with $q$ elements.} ${n \brack k} _{2}$. To get the number of affine subspaces of that dimension, we ought to multiply by all possible shift vectors, namely, by $2^n$. The largest binomial coefficient is known to be ${n \choose n/2} _{2}$, which may easily be upper bounded by $2^{O(n^2)}$, which is quasi-polynomial in $N$. Therefore there are at most $2^{O(n^2)}$ vertices in the graph. We now turn to present the algorithm for minimization of linear decision trees. In this algorithm, every node in $\mathbb{M}\ $ will contain the minimal LDT for the corresponding affine subspace. We denote the tree in a node $u$ by $T(u)$. We find the minimal tree for every node inductively, by traversing all possible tests, and checking the resulting trees. The array $M_{v}$ will contain all possible trees for a vertex $v$.

\begin{algorithm}
\DontPrintSemicolon
\LinesNumbered
\nl Construct $\mathbb{M}$ of dimension $n$ ($n$ being the number of variables in $T_{f}$)\;

\nl Label all nodes in the $(n+1)$'th layer with the constant tree according to $f$'s values\;

\nl \For {$i=n, \ldots , 1$}{ \nllabel{MinimizeLDT:LayerLoop}
\nl	\For {all vertices $v=[A,a]$ in layer $i$} {
\nl		\For {every edges $(v,u_{0}),(v,u_{1})$ labelled by $[u,0],[u,1]$ that are connected to $v$} {
\nl			\lIf {$T(u_{0}) = T(u_{1})$}{\nllabel{MinimizeLDT:TreeEq}Add $\min \{ T(u_{0}),T(u_{1}) \}$ to $M_{v}$}\\
\nl			 \Else 
 {\nl Construct a tree with a root labelled $u$ having $T(u_{0}), T(u_{1})$ as sons, and add it to $M_{v}$}
		
		}
\nl		 $T(v) = \min M_{v}$
	}
}
			
\caption{MinimizeLDT($T_{f}$)}\label{MinimizeLDT}
\end{algorithm}

\begin{remark}
The equality between the tree in line \ref{MinimizeLDT:TreeEq} is checked functionally (not topologically) by traversing all assignments. This is the reason that the $\min$ operation is required.
\end{remark}

The correctness of Algorithm \ref{MinimizeLDT} may be verified using the following claim:

\begin{clm}
For every $i \in [n+1]$, after finishing iteration $i$ of the loop in line \ref{MinimizeLDT:LayerLoop}, all nodes of $\mathbb{M}\ $ in layer $i$ contain the minimal LDT for $f$, when restricted to the affine subspace that they represent.
\end{clm}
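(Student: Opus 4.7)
The natural approach is reverse induction on the layer index $i$, moving from $i=n+1$ down to $i=1$. The base case $i=n+1$ is handled by line 2: each node represents a single point $p\in\mathbb{F}_2^n$, and the constant tree with value $f(p)$ is trivially the minimal LDT on a singleton. The algorithm constructs layer $i$'s trees using only information from layer $i+1$ (via the outgoing edges to $u_0,u_1$), which matches the structure of an induction from $i+1$ to $i$.

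For the inductive step, fix a node $v=[A,a]$ in layer $i$ with affine subspace $V$ of dimension $n+1-i$, and split the argument into soundness and optimality. For soundness, I would observe that for any test $u$ linearly independent from the defining constraints of $V$, the children $u_0,u_1$ represent the partition of $V$ into $V_0,V_1$ (each of dimension $n-i$, i.e.\ in layer $i+1$). If $T(u_0)\equiv T(u_1)$ functionally, then $T(u_0)$ alone computes $f$ on $V_0\cup V_1=V$; otherwise, the tree with root labelled $\langle u,x\rangle$ and subtrees $T(u_0),T(u_1)$ computes $f$ on $V$. So every tree put into $M_v$ is a valid LDT for $f|_V$, hence $T(v)=\min M_v$ is as well.

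For optimality, let $T^\ast$ be a minimal LDT for $f|_V$ and consider two cases. If $T^\ast$ is a single leaf then $f$ is constant on $V$, hence constant on every $V_0,V_1$ split; by induction $T(u_0)$ and $T(u_1)$ are both equal to this constant leaf, so the ``if'' branch puts that leaf into $M_v$, giving $|T(v)|\le |T^\ast|$. If $T^\ast$ has a root test $\langle u^\ast,x\rangle$, then $u^\ast$ must be linearly independent from the constraints defining $V$ (otherwise the test is constant on $V$ and $T^\ast$ could be replaced by one of its children, contradicting minimality), so the corresponding edges out of $v$ exist in $\mathbb{M}$. The two subtrees $T^\ast_0,T^\ast_1$ of $T^\ast$ must themselves be minimal for $f|_{V_0^\ast},f|_{V_1^\ast}$ (else $T^\ast$ could be shrunk), and they must be functionally unequal (otherwise $T^\ast$ reduces to either of them). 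By the inductive hypothesis $|T(u_0^\ast)|\le |T^\ast_0|$ and $|T(u_1^\ast)|\le |T^\ast_1|$; since the trees at $u_0^\ast,u_1^\ast$ are also functionally unequal (being equivalent to the distinct $T^\ast_0,T^\ast_1$ on their domains), the algorithm enters the ``else'' branch on the test $u^\ast$ and adds a tree of size at most $1+|T^\ast_0|+|T^\ast_1|=|T^\ast|$ into $M_v$. Either way, $|T(v)|\le|T^\ast|$, so combined with soundness $T(v)$ is a minimal LDT for $f|_V$.

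The main obstacle I anticipate is the subtle interaction between the ``$T(u_0)=T(u_1)$'' shortcut in line \ref{MinimizeLDT:TreeEq} and optimality: one must be sure that this shortcut never prevents the algorithm from discovering an optimal root test, and that it never produces a tree that is actually smaller than the claimed minimum in a way that breaks the recursion. Both concerns are resolved by invoking minimality of $T^\ast$ to rule out ``equal children'' at the root of any minimal tree, and by observing that the shortcut only ever inserts functionally correct trees (so soundness is preserved regardless). A secondary point worth being precise about is why the test $u^\ast$ of an optimal $T^\ast$ is automatically among the tests enumerated by the algorithm, which follows from the linear independence argument above together with the construction of $\mathbb{M}$ via \mbox{NewVectors}.
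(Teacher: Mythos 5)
Your proof is correct and follows the same reverse induction on the layer index that the paper uses; the paper's own proof is only a two-sentence sketch of exactly this argument (base case on $0$-dimensional subspaces, inductive step because all independent tests are enumerated), so your write-up simply supplies the missing details. One minor imprecision: when $T^\ast_0\ne T^\ast_1$ you cannot conclude that the algorithm enters the \emph{else} branch, since the stored trees $T(u_0^\ast),T(u_1^\ast)$ need only agree with $T^\ast_0,T^\ast_1$ on $V_0^\ast,V_1^\ast$ and could still coincide globally --- but if the \emph{if} branch is taken instead, the tree added to $M_v$ is even smaller, so the bound $|T(v)|\le|T^\ast|$ holds either way.
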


\begin{proof}
Using induction on $i$, starting from $i=n+1$. The base case is obvious, since $f$ is constant on subspaces of dimension $0$. The induction step may also be seen easily, since all possible tests are taken into consideration.
\end{proof}

The correctness of the algorithm follows from the case $i=1$. The complexity may be seen as quasi-polynomial (in $N$), using a bound on the number of vertices of $\mathbb{M}$. We have that $\mathbb{M}$ consists of $n$ layers, each of size at most $2^{O(n^2)}$ as explained before, thus the total number of vertices in $\mathbb{M}$ is also $2^{O(n^2)}$. Since Algorithm \ref{MinimizeLDT} is polynomial in the number of vertices of the graph, there exists a constant $c>0$ such that the algorithm requires at most $\left(2^{O(n^2)}\right)^{c}=2^{O(n^2)}$ computation steps.

\begin{remark}

We suggest the following restriction of the LDT model, denoted by $LDT_{c}$, which is defined as an ordinary LDT with the additional restriction that every node is only allowed to bare a linear function between up to $c$ variables (for some constant $c$). We devise an efficient algorithm for truth-table minimization of for $LDT_{c}$. We limit the graph $\mathbb{M}$ defined earlier in this section to the graph $\mathbb{M}_{c}$ with the additional restriction that any test $u$ that labels an edge must be of Hamming weight at most $c$. In this case we may asymptotically bound the number of vertices in $\mathbb{M}_{c}$, since the out-degree of any vertex in the graph is polynomial in $n$ and $\mathbb{M}_c$ has $n+1$ layers. The total number of vertices is therefore at most $\sum _{i=0}^{n+1}n^{c\cdot i}=N^{O(\log\log N)}$. Since the construction time of the graph depends on the size of its widest layer, we get a construction algorithm that runs in this time bound.\\ \indent In order to find the minimal $LDT_{c}$ for a given function, we first construct $\mathbb{M}_{c}$ in the same manner as explained earlier in this section, disregarding vectors in the output of NewVectors of Hamming weight over $c$ whenever it is called. Afterwards, we execute Algorithm \ref{MinimizeLDT} on the graph $\mathbb{M}_{c}$. The correctness of the algorithm follows similarly, while the complexity reduces to $N^{O(\log\log N)}$.
\end{remark}

\subsection{Symmetric Read Once Decision Trees} \label{sctn:SymmetricTrees}

Another variant of the ordinary decision tree model that may be constructed in polynomial time under some restrictions is the following:

\begin{defn}
A symmetric decision tree is a decision tree where every node may contain some symmetric function between any number of variables.
\end{defn}

Applying the following restriction on symmetric decision tress allows us to construct a polynomial (in $2^n$) minimization algorithm.

\begin{defn} \label{defn:RRT}
A decision tree (of any kind) will be called a read once tree if any variable appears at most once in any path from the root to a leaf.
\end{defn}
Notice that an ordinary decision tree, as well as read once LDTs (LDTs with the additional restriction described in Definition \ref{defn:RRT}), are a subclass of this class of trees. The algorithm presented here will minimize symmetric read once decision trees (SRODT). \\
\indent The algorithm highly resembles the one of \cite{GLR99}, and we describe it using graph theoretic tools as in the previous section. We denote by $SYM$ the set of all symmetric functions (notice that $|SYM|=2^{n+1}$). For an assignment $\alpha \in \{0,1,\star\}^{n}$, a set $A \subseteq \alpha^{-1}(0) \cup \alpha^{-1}(1)$ and $g\in SYM$ we denote by $g(\alpha\vert_{A})$ the result of applying $g$ on the entries of $\alpha$ that are numbered by elements of $A$. Note that the order of those entries does not matter, since $g$ is symmetric. In order to get a concise description of the algorithm, we define the directed graph $\mathbb{M}^{SYM}$ as follows. The set of nodes corresponds to $\{0,1,\star\}^n$, and $(\alpha,\beta)\in E$ iff 
\begin{eqnarray*}
\exists A \subseteq \alpha ^{-1}(\star), \exists g\in SYM, \exists b\in \{0,1\}\\
\mbox{s.t } \forall i \in \alpha^{-1}(0) \cup \alpha^{-1}(1), \beta_{i} = \alpha_{i}\\
\forall j \in A, \beta_{j} \neq \star \\
\forall k \in \alpha^{-1}(\star) \setminus A, \beta_{j} = \star\\
\mbox{and } g(\beta\vert_{A})=b.
\end{eqnarray*}
Namely, $\beta$ is some extension of $\alpha$ to an assignment that agrees with $\alpha$ in every non-$\star$ entry, has a non-$\star$ entry in every index of some $A \subseteq \alpha^{-1}(\star)$, and such that the function $g$ applied on the entries numbered by $A$ in $\beta$ yields $b$. For convenience, we label each edge with the tuple $(A,g,b)$. Notice that an edge might have more than one label. \\
\indent It is possible to construct the graph $\mathbb{M}^{SYM}$ in $poly(2^{n})$ time. Construct $V=\{0,1,\star\}^{n}$, and traverse all nodes according to decreasing number of stars in the following way - for every node $\alpha$ traverse all nodes $\beta$ that agree with $\alpha$ in every non-$\star$ entry, traverse all $g\in SYM$, apply every $g$ on the appropriate entries of $\beta$ and label the edge accordingly. Notice that the number of vertices is $3^{n}$ and the number of outgoing edges from each node is at most $2^{2n+1}$, thus the size of $\mathbb{M}^{SYM}$ is polynomial in $2^{n}$.\\
\indent Algorithm \ref{MinimizeSRODT} finds the minimal SRODT for a given function $f$, by inductively placing the best SRODT for any cube of $\alpha \in \mathbb{F}_{2}^{n}$ in the node $\alpha$. It begins by assigning the values of the input truth-table to all nodes $\alpha \in \{0,1\}^{n}$. Then it traverses all nodes of $\mathbb{M}^{SYM}$, according to the number of $\star$ entries in them, and checks what is the smallest tree that may be placed in them. The array $W$ is a temporary array used to hold all candidates for the best tree in a node $\alpha$, and will be reset in every iteration of the main loop. For a processed node $\beta$ we denote by $T(\beta)$ the tree that was placed in it. As in the previous section, two trees will be considered equal if they represent the same function. This may be verified in $poly(2^{n})$ time by traversing all assignments.

\begin{algorithm} 
\DontPrintSemicolon
\LinesNumbered
\nl Construct $\mathbb{M}^{SYM}$.\;

\nl $\forall \alpha \in \{0,1\}^{n}$, place the single-leaf tree labelled $(T_{f})_{\alpha}$ in the node $\alpha$.\;
\nl \ForAll {nodes $\alpha$ in $\mathbb{M}^{SYM}$ that haven't been processed yet, and have a minimal number of $\star$ entries} {
\nl 		Reset $W$\;
\nl 		\ForAll{$g\in SYM$, $A \subseteq \alpha^{-1}(\star)$} { \nllabel{MinimizeSRODT:lineLOOP}  
\nl 			\ForAll {$b\in\{0,1\}$} {
\nl 				\ForAll {pairs $\beta,\gamma$ of outgoing neighbours of $\alpha$ such that the connecting edge is labelled by $[g,A,b]$} {
\nl 					\lIf {$T(\beta) \ne T(\gamma))$} {continue to line \ref{MinimizeSRODT:lineLOOP}.} 
					}
\nl				Denote by $T_b$ the smallest tree seen while traversing the pairs $\beta,\gamma$.\;
			}
\nl 			\lIf{$T_{0}=T_{1}$} {$\min (T_{0},T_{1}) \to W$} \\
\nl 			\lElse {Add to $W$ the tree with a root labelled $g$ and $T_{0},T_{1}$ as sons.}
		}
\nl 		$\min W \to \ga$\;
}
\caption{MinimizeSRODT($T_{f}$)} \label{MinimizeSRODT}
\end{algorithm}

The correctness of the algorithm is an easy corollary of the following claim.

\begin{clm}
for every $\alpha \in \{0,1,\star\}^{n}$ with $t$ many $\star$-entries the algorithm puts in $U\left(\alpha\right)$ some minimal SRODT for $f\vert_{\alpha}$. 
\end{clm}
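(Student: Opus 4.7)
The proof proceeds by induction on $t$, the number of $\star$-entries in $\alpha$. Because every outgoing edge of $\alpha$ in $\mathbb{M}^{SYM}$ leads to a node with strictly fewer stars, all such targets are already processed by the main loop when we reach $\alpha$, which justifies the induction. The base case $t=0$ is immediate: $\alpha \in \{0,1\}^n$ is a full assignment, $f|_\alpha$ is a single bit, and step 2 of Algorithm \ref{MinimizeSRODT} places the corresponding single-leaf tree, which is the unique and trivially minimal SRODT for $f|_\alpha$.

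For the inductive step I would prove two inequalities. First, the tree the algorithm stores in $U(\alpha)$ is always a valid SRODT for $f|_\alpha$: any candidate added to $W$ is built from a root label $(g,A)$ with $A\subseteq\alpha^{-1}(\star)$ together with subtrees $T_0,T_1$ which, by the inductive hypothesis, correctly compute $f|_\beta$ on the relevant extensions $\beta$ of $\alpha$; the read-once property is preserved since $A$ is disjoint from the free-variable set of each such $\beta$ and hence from the variables tested inside $T_b$. Observe also that $W$ is non-empty whenever $f|_\alpha$ is non-constant: taking $|A|=1$ and $g$ the identity yields a unique outgoing neighbour per value of $b\in\{0,1\}$, making the inner equality check vacuous, and if $f|_\alpha$ is constant then taking $A=\alpha^{-1}(\star)$ with any $g$ and contracting via the $T_0=T_1$ branch adds the constant leaf to $W$.

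The other direction is the heart of the argument. Let $T^*$ be an optimal SRODT for $f|_\alpha$ with root $(g^*, A^*)$ and subtrees $T_b^*$. The crucial observation is that since $T_b^*$ tests only variables outside $A^*$, the restriction of $f$ to $\{x : g^*(x|_{A^*})=b\}$ depends only on the free variables in $\alpha^{-1}(\star)\setminus A^*$; hence for any two extensions $\beta,\gamma$ of $\alpha$ that fix exactly the coordinates of $A^*$ with $g^*(\beta|_{A^*}) = g^*(\gamma|_{A^*}) = b$, we have $f|_\beta = f|_\gamma$ as functions of the remaining free variables. By induction, $T(\beta)$ and $T(\gamma)$ are then both minimal SRODTs for this common function, so the functional equality check in the algorithm's inner loop succeeds. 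Because $T_b^*$ is itself a valid SRODT for each such $f|_\beta$, induction gives $|T(\beta)|\le|T_b^*|$, so the candidate assembled by the algorithm at $(g^*, A^*)$ has size at most $1+|T_0^*|+|T_1^*|=|T^*|$, forcing $|U(\alpha)|=|T^*|$. The main obstacle I foresee is precisely this functional-equality subtlety: a priori, minimal SRODTs for the same function need not be syntactically unique, so it is essential that equality of trees is interpreted functionally (as the accompanying remark stipulates) in order for the two different $\beta$'s to reconcile and the induction to close.
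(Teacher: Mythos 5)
Your proposal is correct and follows essentially the same route as the paper: induction on the number of $\star$-entries, with the key observation that the read-once property forces the subtrees of any optimal SRODT rooted at $(g,A)$ to compute $f\vert_\beta$ for the corresponding extensions $\beta$, which the algorithm has already handled. You spell out two points the paper leaves implicit — why the functional-equality check between different $\beta,\gamma$ with the same $g$-value must succeed, and why $W$ is non-empty — but these are elaborations of the same argument, not a different one.
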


\begin{proof}
By induction on $t$. for $t=0$ it is obvious. For an arbitrary $t$, let $\alpha \in \{0,1,\star\}^{n}$ be some assignment with $t$ stars and let $T$ be some SRODT consistent with $f\vert _{\alpha}$. It is easy to see that the algorithm produces a tree not larger than $T$ after processing node $\alpha$, since it traverses all possible labels for the root of SRODTs for $f \vert _{\alpha}$, one of which is the root of $T$. The rest follows from the induction hypothesis - the key observation which allows us to use the induction hypothesis is that if $T$'s root $v$ is labelled with $g(A)$ then due to the read-once property, the sub-trees rooted at $v$ are some trees that correspond to $f\vert _{\alpha}$ reduced to assignments $\beta$ such that $g(\beta \vert_{A})=0$ (resp. 1) and $(\alpha,\beta)$ is an edge in $\mathbb{M}^{SYM}$, for whom the minimal trees for were already calculated.
\end{proof}

The correctness of the algorithm follows from the case where $t=n$. Furthermore, the algorithm is polynomial since it consists of nesting and concatenations of polynomial loops.

\subsection{Meta-algorithm for Decision Trees with a Fixed Set of Operations in Nodes} \label{sctn:Meta}

The reader may notice that all above algorithms present a similar structure. In this section we try to parametrize the complexity of any minimization algorithm for any fixed set of operations in nodes. \\ 
\indent For a set of operations (or tests) $M=\{M_{i}\}_{i\in [t]}$ (i.e., every $M_{i}$ is a function from some subset of the variables to $\{0,1\}$), we define $\mathcal{M}=\{\mathcal{M}_{\alpha}\}_{\alpha \in \{0,1,\star\}^{t}}$ as the collection of all subsets of $\mathbb{F}_{2}^{n}$ that may be defined by the tests in $M$. Formally, 
\begin{eqnarray*}
\mathcal{M}_{\alpha} = \{ x \in \mathbb{F}_{2}^{n} \vert \forall i \in [t], \alpha_{i} \ne \star \Rightarrow M _{i}(x)=\alpha_{i}\}
\end{eqnarray*}

Notice that, as in the case of LDTs, we might have $\mathcal{M}_{\alpha}=\mathcal{M}_{\beta}$ for $\alpha \ne \beta$. Notice also that if $\{ \{ x\} \vert x\in \mathbb{F}_{2}^{n}\} \nsubseteq \mathcal{M}$, then this model is not universal, since we have two indistinguishable points, thus every function that gives them different values is not computable in this model. \\
\indent In order to generalize the use of the graph $\mathbb{M}$ used in Section \ref{LDT}, we need a generalized notion of independence. 

\begin{defn}
A test $M_{i}$ will be called dependent of a set $\mathcal{M}_{\alpha}$ if either of the following conditions hold:
\begin{enumerate}
	\item $\alpha_{i} \ne \star$.
	\item $\alpha_{i} = \star$ and $\{\mathcal{M}_{\alpha(i=0)}, 		\mathcal{M}_{\alpha(i=1)}\} = \{\emptyset, \mathcal{M}_\alpha\}$ (when $\alpha(i=b)$ denotes the vector $\alpha$ with the $i$th entry changed to $b$).
\end{enumerate}
\end{defn}
Intuitively, a test $M_{i}$ is dependent of a set $\mathcal{M}_{\alpha}$ if the tests that were used to define $\mathcal{M}_{\alpha}$ either contain $M_{i}$ or the value of $M_{i}(x)$ may be derived from them for all $x$.
\\ \indent We now define the directed graph $\mathbb {M}\left( \mathcal{M} \right)$ as follows - 

\begin{enumerate}
\item $V=\mathcal{M}$.
\item An edge $(\mathcal{M}_{\alpha},\mathcal{M}_{\beta})$ exists if there is a test $M_{i}$ independent of $\mathcal{M}_{\alpha}$ and a result $a\in \{0,1\}$ such that $\beta = \alpha_{(i=a)}$ and $\beta _i=a$.

\end{enumerate}

From this stage, generalizing the construction algorithm from \ref{LDT} is straightforward: \\ 
Let $A$ be an algorithm for construction of $\mathbb{M}(\mathcal{M})$. After running $A$ we check if all singleton subsets of $\mathbb{F}_{2}^{n}$ are nodes in the graph. If not, we check if the input function $f$ gives the same value for all indistinguishable points. If not, we reject. If so, we label all singleton sets, as well as sets of indistinguishable points by the single node tree containing $f$'s value. We then apply a bottom up method similar to the one in Section \ref{LDT} - For every node $v$ in $\mathbb{M}(\mathcal{M})$ such that all its sons are already processed, check if all sons represent the same function. If so, copy the smallest tree among them into $v$. If not, choose the outgoing edge labelled with the test that induces the smallest tree, and place it in $v$. After this algorithm finishes, the smallest tree for $f$ will label the node $\mathcal{M}_{\star ^{t}}$.\\
\indent The complexity of the algorithm depends on the structure of $\mathcal{M}$. It is easy to see that the algorithm polynomial in the number of vertices of $\mathcal{M}(\mathbb{M})$. Therefore the total complexity of the minimization algorithm is the complexity of $A$, plus $poly(|\mathcal{M}|)$.

\subsection{Linear Decision Lists} \label{sctn:LDL}
As stated in Theorem \ref{thm:Learning}, learning algorithms may be used for truth-table minimization, provided that a model minimization algorithm that requires $2^{O(n)}$ time exists. In this section we present such model minimization algorithm for linear decision lists, and use a learning algorithm by \cite{BBTV96} to get a polynomial truth table minimization for linear decision lists (Definition \ref{defn:LDL}). 
\begin{remark}
In Theorem \ref{thm:Learning} it is stated that the learning algorithm oughts to be proper and exact. \cite{BBTV96} only mention that their algorithm (Lemma 4.3) is exact. However, the main stage in their algorithm is applying the algorithm of  \cite{HSW90} for learning nested differences of learnable classes. Taking a close look at the algorithm of \cite{HSW90} one may see\footnote{Algorithm ``Total Recall'' in Section 2 of \cite{HSW90}.} that the hypotheses it gives are from the concept class of nested differences, which in our case is a linear decision list.
\end{remark}

\begin{defn} \label{defn:LDL}
A Decision list is a list of pairs $(f_1,v_1),\ldots,(f_r,v_r)$ such that each $f_i$ is a boolean function, each $v_i$ is a value from $\{0,1\}$ and $f_r$ is the constant 1 function. A decision list defines a function $f$ in the following way: for an assignment $a \in \zo{n}$ the value $f(a)$ is equal to $v_i$, where $i$ the least index such that $f_i(a)=1$. 
\end{defn}
Decision lists were first introduced by \cite{Riv87} in the specific case where the $f_i$'s are conjunctions of variables. We will consider a variation of this model which we call \textit{linear decision lists} (introduced by \cite{BBTV96} and denoted there by $\oplus_n$-DL), where each $f_i$ is a linear function (over $\Ftwo$). Notice that linear decision lists may be seen as degenerate linear decision trees with $r-1$ inner nodes and $r$ leaves. The size of a (linear) decision list is defined as its number of inner nodes, excluding the last constant function (e.g., the size of the decision list $\left(f_1,v_1\right),\ldots,\left(f_r,v_r\right)$ is $r-1$). We say that a decision list is \textit{redundant} if it contains a leaf such that no $a \in \zo{n}$ reaches it.

In order to use Theorem \ref{thm:Learning} we need to present a model minimization algorithm. The algorithm we present relies on the following claim:

\begin{clm} \label{clm:RedundantLDL}
Let $S=\left( (f_1,v_1),\ldots,(f_s,v_s) \right), T=\left( (g_1,u_1),\ldots,(g_t,u_t) \right)$ be two non-redundant linear decision lists, both consistent with a function $f$. Then $|S|=|T|$ (i.e., $t=s$).
\end{clm}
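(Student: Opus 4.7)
The plan is to identify a numerical invariant of $f$ that pins down the length of any non-redundant linear decision list computing it; the natural candidate is $\nu_2(|f^{-1}(1)|)$, the $2$-adic valuation, because a linear decision list partitions $\Ftwo^n$ into pieces whose sizes are controlled powers of two.

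First I would pin down the sizes of the parts $A_i=\{a\in\Ftwo^n \mid f_i(a)=1\text{ and }f_j(a)=0\ \forall j<i\}$ of a generic non-redundant list $((f_1,v_1),\ldots,(f_r,v_r))$. Writing $V_i=\bigcap_{j\le i}f_j^{-1}(0)$, an affine subspace of $\Ftwo^n$, non-redundancy forces each $f_i$ (for $i<r$) to be a non-constant linear function on $V_{i-1}$: if $f_i\equiv 0$ on $V_{i-1}$ then $A_i=\es$, and if $f_i\equiv 1$ on $V_{i-1}$ then $V_i=\es$ and every later leaf becomes unreachable. So each such $f_i$ halves $V_{i-1}$, yielding $|V_i|=2^{n-i}$, $|A_i|=2^{n-i}$ for $i<r$, and $|A_r|=|V_{r-1}|=2^{n-r+1}$.

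The valuation step comes next. Because $|f^{-1}(1)|=\sum_{i:\,v_i=1}|A_i|$ is a sum of powers of two all at least $2^{n-r+1}$, we automatically have $2^{n-r+1}\mid|f^{-1}(1)|$. The only contributions sitting at the very bottom level $2^{n-r+1}$ come from $A_{r-1}$ and $A_r$; every other $|A_i|$ is at least $2^{n-r+2}$. Non-redundancy forces $v_{r-1}\ne v_r$ (otherwise the entry $(f_{r-1},v_{r-1})$ is functionally absorbed by the trailing constant entry and the list can be strictly shortened while keeping every remaining leaf reachable), so exactly one of the two bottom-level contributions is present. Hence the lowest set bit of $|f^{-1}(1)|$ sits at position $n-r+1$, i.e.\ $\nu_2(|f^{-1}(1)|)=n-r+1$. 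Since this quantity depends only on $f$, applying it to both $S$ and $T$ yields $s-1=n-\nu_2(|f^{-1}(1)|)=t-1$, hence $s=t$.

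The main obstacle is justifying the implicit step $v_{r-1}\ne v_r$: the paper's literal definition of non-redundancy (every leaf reached by some input) allows $v_{r-1}=v_r$ in principle, and the valuation identity breaks in that case because the two bottom contributions merge and carry into a higher bit. I would bridge this with a short preprocessing lemma: any non-redundant list with $v_{r-1}=v_r$ can be strictly shortened by deleting $(f_{r-1},v_{r-1})$ without changing the computed function and without making any remaining leaf unreachable, so when comparing sizes of non-redundant lists we may WLOG assume $v_{r-1}\ne v_r$ in both $S$ and $T$. A more self-contained alternative would be induction on $r$, peeling off the top entry: when $f_1=g_1$ recurse on the affine subspace $V_1$, and when $f_1\ne g_1$ derive a contradiction from the part-size constraint on $|f^{-1}(v_1)|$. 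The valuation route is cleaner and I would pursue it.
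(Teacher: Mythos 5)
Your proposal is essentially the paper's own proof: the paper likewise writes $|f^{-1}(1)|=\sum_{i=1}^{s-2}v_i\cdot 2^{n-i}+2^{n-s+1}$ (each reachable leaf at depth $i$ contributes $2^{n-i}$ ones because every independent linear test halves the current affine subspace), asserts that exactly one of $v_{s-1},v_s$ is nonzero, and concludes $s=t$ from uniqueness of binary representation --- which is precisely your $2$-adic valuation argument. The obstacle you flag is real and is present in the paper as well: the assertion that the last two leaf values differ does not follow from the stated definition of non-redundancy (one can append an extra independent test splitting the final leaf into two reachable leaves with equal labels, producing a longer list that is still ``non-redundant'' as defined), so the claim as literally stated needs either your preprocessing/WLOG step or a strengthened definition; in the paper's application (Corollary \ref{cor:LDLModelMinimization}) both lists being compared do satisfy $v_{r-1}\ne v_r$, so the argument goes through there. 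Your write-up is, if anything, more careful than the original in justifying $|V_i|=2^{n-i}$ and in isolating this gap.
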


\begin{proof}
Assume for contradiction that (w.l.o.g) $t<s$. Observe that $|f^{-1}(1)|$ may be represented in two ways:
\[
|f^{-1}(1)| = \sum _{i=1}^{s-2} v_i \cdot 2^{n-i} + 2^{n-s+1}= \sum _{i=1}^{t-2} u_i \cdot 2^{n-i} + 2^{n-t+1}
\]

To see that, notice that any linear test splits the space into two parts, which are either of equal size, or one of them is the entire space and the other is empty. Since there are no non-reachable leaves, every leaf labelled 1 in depth $i$ contributes exactly $2^{n-i}$ 1's. In addition, exactly one of $v_{s-1},v_s$ and exactly one of $u_{t-1},u_t$ is non- zero. However, this cannot be since $t<s$, and since the binary representation of any number is unique.
\end{proof}

\begin{corollary} \label{cor:LDLModelMinimization}
There exists a polynomial algorithm for the model minimization of linear decision lists.
\end{corollary}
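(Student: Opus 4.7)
The plan is to turn Claim \ref{clm:RedundantLDL} into an algorithmic procedure. Since any two non-redundant linear decision lists for the same function have the same size, it suffices to transform a given list $L=((f_1,v_1),\ldots,(f_r,v_r))$ into an equivalent non-redundant one; the result is automatically of minimum size by the claim.

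The central observation is that redundancy of the $i$-th leaf is a feasibility question over $\Ftwo$: leaf $i$ is reachable iff the linear system $f_1(a)=0,\ldots,f_{i-1}(a)=0,f_i(a)=1$ has a solution $a\in\zo{n}$. This is a system of at most $i\le r$ linear equations in $n$ unknowns, and its feasibility is decided by Gaussian elimination in time polynomial in $r$ and $n$. I would therefore loop over $i=1,\ldots,r$, mark each position whose leaf is unreachable, and delete all marked pairs in a single pass.

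Two small correctness checks are needed. First, deleting a pair $(f_i,v_i)$ with $i<r$ preserves the computed function: every assignment that reaches position $i$ must have $f_i(a)=0$ (otherwise leaf $i$ would be reachable), so it was already falling through to position $i+1$ anyway. Second, removing middle pairs changes only the indexing of the later pairs---not the set of assignments that reach them---so the reachability predicates computed once against the original $L$ remain correct after the deletions; no iterative refinement is necessary.

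The only real subtlety is the last pair, which by definition must carry the constant $1$ function. If leaf $r$ is unreachable, then no assignment reaches position $r$, so every assignment is classified by pairs $1,\ldots,r-1$; in particular $f_{r-1}(a)=1$ for every $a$ that reaches position $r-1$. To restore the invariant that the final test is constant, I would replace $f_{r-1}$ by the constant $1$ function (which is sound by the previous sentence) and then drop the last pair. Everything else follows directly from Claim \ref{clm:RedundantLDL} and the polynomial-time solvability of linear systems over $\Ftwo$, giving an algorithm whose running time is polynomial in $r$ and $n$.
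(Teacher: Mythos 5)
Your proof is correct and follows essentially the same route as the paper: both reduce minimality to Claim \ref{clm:RedundantLDL} and then give a polynomial procedure that strips the redundant leaves from a given list. The only cosmetic difference is that you detect redundancy of leaf $i$ by testing feasibility of the system $f_1=\cdots=f_{i-1}=0,\ f_i=1$ via Gaussian elimination, whereas the paper tests whether $f_i$ lies in the span of $f_1,\ldots,f_{i-1}$ and redirects accordingly --- equivalent checks leading to the same non-redundant output.
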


\begin{proof}
Claim \ref{clm:RedundantLDL} allows us to devise the following $2^{O(n)}$ time algorithm which removes any linear dependence between the nodes and removes redundant leaves at the end: Given a linear decision list $L$ with nodes $\{f_i\}_{i=1}^{l}$: for $i=1,\ldots,l$, if $f_i \in span \{f_1,\ldots,f_{i-1}\}$ remove $f_i$ from $L$. If the linear dependence of $f_i$ in $\{f_1,\ldots,f_{i-1}\}$ implies $f_i(a)=1$ for all $a\in \{0,1\}^n$, connect the part of $L$ traversed so far to the $1$ successor of $f_i$, and otherwise to the $0$ successor. At the end of the loop check if both last leaves are identical, if so remove the last test, and check the last leaves again.

This process may easily be seen to conserve the consistency with $f$. Moreover, its output is non-redundant, since all tests are independent and the last two nodes have different values. The output model is minimal according to Claim \ref{clm:RedundantLDL}, since the minimal linear decision list for $f$ is non-redundant as well.
\end{proof}

Therefore, Theorem \ref{thm:Learning} implies:

\begin{corollary}
Linear decision lists have a polynomial truth table minimization algorithm.
\end{corollary}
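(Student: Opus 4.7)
The plan is to directly invoke Theorem \ref{thm:Learning}, which reduces truth-table minimization to the combination of a proper exact learning algorithm and a model minimization algorithm, both running in $2^{O(n)}$ time. The two hypotheses of the theorem have essentially already been assembled in the surrounding text, so the work is to verify that each ingredient fits the interface required by Theorem \ref{thm:Learning}.

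First I would address the learning hypothesis by pointing to the algorithm of \cite{BBTV96} for the class $\oplus_n$-DL. By the remark preceding Definition \ref{defn:LDL}, this algorithm is exact, and because it is built on top of the ``Total Recall'' procedure of \cite{HSW90} for nested differences of learnable classes, the hypotheses it outputs are themselves nested differences of linear functions, i.e., linear decision lists; hence it is proper. Its running time is $2^{O(n)}$, as required. Then I would address the model minimization hypothesis simply by citing Corollary \ref{cor:LDLModelMinimization}, which supplies a $2^{O(n)}$ algorithm $B$ that, given a linear decision list, returns a minimal equivalent one.

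With both ingredients in place, Theorem \ref{thm:Learning} applies to the class $C$ of linear decision lists and yields a polynomial (in $N=2^n$) truth-table minimization algorithm. The only subtlety worth flagging is the one that Theorem \ref{thm:Learning} itself already handles: since $f$ given as a truth table is not promised to lie in $C$, one has to cap the simulation of the learner at its worst-case running time and then verify, using an equivalence-query-style scan over all $2^n$ assignments, that the hypothesis returned really computes $f$ before passing it to $B$; otherwise the algorithm rejects. I do not expect a substantive obstacle here, because the simulation of membership, equivalence, and relevant-possibility queries against the given truth table is exactly the content of the proof of Theorem \ref{thm:Learning}. The main (minor) point to argue carefully is that the output of $B$ is indeed a minimum-size linear decision list, which is precisely what Corollary \ref{cor:LDLModelMinimization} asserts, so the result follows.
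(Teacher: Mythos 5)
Your proof is correct and matches the paper's argument exactly: the paper likewise obtains this corollary by feeding the proper, exact learning algorithm of \cite{BBTV96} (properness justified via the ``Total Recall'' procedure of \cite{HSW90}) and the model minimization algorithm of Corollary \ref{cor:LDLModelMinimization} into Theorem \ref{thm:Learning}. No gaps.
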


\begin{remark}
Notice that the corresponding decision problem  
\[
L=\{(T_f,k)\vert f \mbox{ has a linear decision list of size }k\}
\]
is decidable efficiently using an exact learning algorithm which is not proper. In order to do so, we may run the exact learning algorithm, verify that its output indeed represents $f$ by traversing all assignments, and then accept iff $|f^{-1}(1)|$ is divisable by $2^{n-k}$. The correctness of this process is easily provable, since the learning algorithm shows us that there exists \textit{some} linear decision list consistent with $f$, while Claim \ref{clm:RedundantLDL} and the above algorithm show us that any minimal linear decision list for $f$ must be of size $k$, for the minimal $k$ such that $2^{n-k}$ divides $|f^{-1}(1)|$.
\end{remark}

\section{Hardness Results}

\subsection{Decision Trees with Arbitrary Tests in Nodes} \label{sctn:TreesNPC}

In this section the model we consider is that of a tree such that the tests that may be applied over the input in the nodes can be any function. The motivation for this model is exploring the power of the decision tree model, as a function of the tests that are allowed in the nodes. Since this model trivially yields a tree of size 1 to any function, we restrict the minimization algorithm to use only tests that are accepted as input. In our setting the input to the minimization algorithm is already of size $2^{n}$, therefore we may allow the input to contain the specific tests that the algorithm is allowed to put in the nodes, represented as an explicit set in $\{0,1\}^{n}$ (i.e., the test that is represented by a set $D \subseteq \{0,1\}^{n}$ gives 1 to an input $x$ iff $x \in D$). The definition of size for this model will be the number of \textit{different} nodes in the tree (disregarding repetitions of nodes with the same label). We shall see that the corresponding language is NP-hard using a reduction from set cover. A very similar problem was already considered in a different context by \cite{HR76} - the input for the decision tree they define is an element from an abstract finite set $X$, and the tree oughts to supply an exact distinction procedure using subsets of $X$, placed in nodes. They prove that the language of tuples of a set $X$, a set of subsets of $X$ (to be used as tests in nodes), and a number $k$, such that there exists a decision tree of size $k$ that distinguishes between the elements of $X$, is NP-complete.
In our setting the set $X$ may be considered as $\{0,1\}^{n}$ but an exact identification is not needed, since we only need to distinguish between $x$'s in $f^{-1}(1)$ and $f^{-1}(0)$. Moreover, their definition of size is the sum of length of paths in the tree, while we use a completely different notion of size. \\ \indent Define the following language:
\begin{eqnarray*}
L = \{ \left( T_{f}, \{D_{i}\}_{i=1}^{t}, k \right) \vert \mbox{There exists a tree $T$ for $f$ with at most $k$ different tests from $\{D_{i}\}_{i=1}^{t}$ in nodes.}\}
\end{eqnarray*}
Recall the definition of set cover (proved to be NP-complete in \cite{Kar72}):
\begin{eqnarray*}
SC = \{ \left( 1^{m}, \{D_{i}\}_{i=1}^{t}, k \right) \vert \mbox{There are at most $k$ sets from $\{D_{i}\}_{i=1}^{t}$ that cover $[m]$.} \}
\end{eqnarray*}

\begin{remark}
 Notice that the format of the input is crucial to the complexity of deciding the language $SC$. In \cite{Kar72}, the sets $\{D_{i}\}_{i=1}^{t}$ are assumed to contain $\log m$ bit integers in the range $[m]$, and are w.l.o.g assumed to contain all numbers in that range. Therefore the input size is at least $m\log m$, thus we may add $1^{m}$ for convenience without blowing-up the input size.
\end{remark}
\begin{thm}
The language $L$ is NP-hard.
\end{thm}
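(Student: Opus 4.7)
The plan is to reduce from Set Cover. Given a Set Cover instance $(1^m, \{D_i\}_{i=1}^t, k)$ with $D_i \sus [m]$, I would set $n = \lceil \log_2(m+1) \rceil$ and fix any injection $e : \{0,1,\ldots,m\} \hookrightarrow \zo{n}$. Then define the truth table $T_f$ on $\zo{n}$ by $f(e(j)) = 0$ for every $j \in [m]$ and $f(x) = 1$ for every other $x \in \zo{n}$ (in particular $f(e(0)) = 1$). Translate each set $D_i$ into the test $D_i' = \{e(j) : j \in D_i\} \sus \zo{n}$, and set $k' = k$. The truth table has size $2^n = O(m)$ and each $D_i'$ is encodable as a bit vector of length $2^n$, so this reduction runs in polynomial time in the Set Cover input size (which, by the input format used earlier, is already at least $m \log m$).

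The key reformulation I would establish first is that, for this model, the minimum tree size for $f$ using tests from $\{D_i'\}$ equals the minimum size of a subfamily $\mathcal{S} \sus \{D_i'\}$ that \emph{separates} $f$: for every pair $x,y \in \zo{n}$ with $f(x) \neq f(y)$ some $S \in \mathcal{S}$ satisfies $|S \cap \{x,y\}| = 1$. Given such a separating family, a complete binary tree that queries each test on one level has cells (in the induced partition of $\zo{n}$) that refine $\{f^{-1}(0), f^{-1}(1)\}$; every leaf is pure and can be labelled, so the tree computes $f$. Conversely, in any tree whose labels come from a set $\{S_1,\ldots,S_k\}$, two inputs that agree on every $S_i$ traverse the same root-to-leaf path and thus receive the same output, so $f$ must assign them the same value, which means $\{S_1,\ldots,S_k\}$ separates $f$.

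With this reformulation, correctness of the reduction is immediate. In our construction every mixed-value pair is of the form $(e(j), y)$ with $j \in [m]$ and $y \in f^{-1}(1)$, while each $D_i'$ sits entirely inside $f^{-1}(0)$; hence $D_i'$ separates $(e(j), y)$ iff $e(j) \in D_i'$, i.e., iff $j \in D_i$. Consequently, a subfamily $\{D_{i_s}'\}_{s=1}^k$ separates $f$ exactly when the corresponding $D_{i_s}$'s cover $[m]$, giving the two-way equivalence between $L$-membership at parameter $k$ and $SC$-membership at parameter $k$.

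The main obstacle I anticipate is the mild complication that $m+1$ need not be a power of two, so there may be ``spare'' points in $\zo{n}$ outside the image of $e$. I would handle this, as above, by assigning $f$-value $1$ to every such spare point; they then play the role of the already-safe point $e(0)$ and do not create any extra separation constraints, which is what keeps the correspondence with Set Cover exact.
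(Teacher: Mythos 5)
Your reduction is correct, and at its core it is the same reduction as the paper's: reduce from Set Cover, embed $[m]$ into a Boolean cube padded with extra points of the opposite value, lift each $D_i$ to a test on the cube, and keep the parameter $k$ unchanged (your swap of which side gets value $0$ versus $1$ is immaterial). Where you genuinely diverge is in how correctness is argued. The paper handles the two directions asymmetrically: for the ``if'' direction it builds an explicit decision-list-shaped tree, one layer per chosen set, with the all-zero path ending in a $0$-leaf; for the ``only if'' direction it observes that an uncovered element and the padding points would follow the same all-zero path yet require different outputs. You instead prove a standalone characterization: under this model's size measure (number of \emph{distinct} tests), the optimal tree size equals the minimum size of a subfamily of tests separating $f^{-1}(0)$ from $f^{-1}(1)$, after which the equivalence with Set Cover is a one-line computation. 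This buys modularity and symmetry --- both directions of the reduction fall out of one lemma, and the lemma is a reusable fact about the model rather than about the particular instance --- at the cost of invoking an exponentially large (complete binary) tree in the forward direction, which is harmless here since only distinct tests are counted but would need revisiting under a node-count size measure, where the paper's explicit list-shaped construction is the more economical witness.
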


\begin{proof}
We shall see that $SC \le _{p} L$. The reduction $R$ is as follows - given an instance $\left( 1^{m}, \{D_{i}\}_{i=1}^{t}, k \right)$ of $SC$ we define:
\begin{itemize}
\item $T_{f}=0^{u}1^{m}$ (a truth-table is regarded here as a $2^{n}$ bit string that defines the values of the function on each $x\in \{0,1\}^{n}$ for $n=\log(u+m)$, according to lexicographic order form left to right), when $u>0$ is the smallest positive complement of $m$ to an integer power of 2.
\item If $D_{i}=\{d_{1},\ldots,d_{s}\}$, we define $D_{i}'=\{d_{1}+u,\ldots,d_{s}+u\}$.
\end{itemize}
The output of the reduction is $\left(T_{f},\{D_{i}'\}_{i=1}^{t}, k \right)$.
First, it is easy to see that the reduction $R$ is polynomial: Since $m$ is given in unary, the truth-table $T_{f}$ is at most twice larger than it, and the construction of the sets $D_{i}'$ is obviously polynomial. Second, we show that 
\begin{eqnarray*}
\left( 1^{m}, \{D_{i}\}_{i=1}^{t}, k \right)\in SC \iff R\left( 1^{m}, \{D_{i}\}_{i=1}^{t}, k \right)\in L.
\end{eqnarray*}

For the "if" direction, assume $\left( 1^{m}, \{D_{i}\}_{i=1}^{t}, k \right)\in SC$, and let $\{D_{i}\}_{i\in I}$ be the smallest witness. We define $T$ as a tree with $|I|$ layers, one for each $D_{i}'$. The leaves of the tree are defined as follows: the leftmost leaf (corresponding to the all 0 path) will be labelled 0, the rest will be labelled 1. We now claim that $T$ computes $f$. To see that, we first consider any $x \in \{0,1\}^{n}$ such that $f(x)=1$. From the construction of $R$ it is clear that $x\ge u$ (as numbers in binary representation). Since $\{D_{i}\}_{i \in I}$ is a cover of $[m]$, we have that $\{D_{i}'\}_{i \in I}$ is a cover of $\{u,\ldots,u+m-1\}$ thus there is some $D_{i}'$ such that $D_{i}'(x)=1$. We get that when evaluating $T$ on the input $x$ we follow a 1-edge at some point, and reach a 1-leaf. Second, we consider an $x \in \{0,1\}^{n}$ such that $f(x)=0$. Similarly, $x < u$, therefore $x$ does not belong to any $D_{i}'$. The corresponding computation path in $T$ follows only 0 edges, and reaches a 0 leaf.\\
\indent For the "only if" direction, let $T$ be a tree consistent with $f$ with a node set $\{D_{i}'\}_{i\in I}$. We claim that $\{D_{i}\}_{i\in I}$ is a set cover for $[m]$. Assume for contradiction that $\exists j \in [m] \setminus \cup _{i\in I} D_{i}$, namely, there is a $j$ not covered by $\{D_{i}\}_{i\in I}$. We infer that $j+u$ is not covered by any of the $D_{i}'$s, i.e. $ j+u \in \{u,\ldots,u+m-1\} \setminus \cup _{i\in I} D_{i}'$. Since $j+u$ does not belong to any set in the nodes of $T$, the corresponding computation path follows only 0-edges, and must reach a 1-leaf (since $f(j+u)=1$). However, we have that any $x \in \{0,1\}^{n}$ such that $x<u$ holds $f(x)=0$, and also follows the all 0 path. Therefore $T$ is inconsistent with $f$, a contradiction, and the claim follows.
\end{proof}

\chapter{Boolean and Arithmetic Formulas} \label{chapt:Formulas}
In this chapter we will show both positive and negative results regarding truth-table minimization of boolean and arithmetic formulas. First recall the definition of a boolean formula:

\begin{defn}
A boolean formula is a directed rooted tree, where inner nodes are labelled by $\wedge$ or $\vee$ and leaves are labelled by variables or their negation. Given an assignment in $\{0,1\}^n$, the value of the formula is defined inductively, from the leaves to the root, in the natural way. The size of the formula is its number of nodes, and its depth is the length of the longest path from the root to a leaf.
\end{defn}

\begin{remark}
For some restricted models of formulas we will use a different notion of size (e.g., see Sections \ref{sctn:monDNF}, \ref{subsecn:ROF}).
\end{remark}

When discussing general formulas, we limit the fan-in of the gates to 2, and impose no limitation on depth. When discussing bounded depth formulas, we partition the gates into layers according to depth. In addition, bounding the depth imposes too hard of a restriction unless we allow any fan-in for inner nodes. This allows us to collapse together adjacent inner nodes with the same label, and therefore we also require that all nodes in a layer will bare the same label, alternating between $\vee$ and $\wedge$. We denote by $\Sigma_{k}$ a depth $k$ formula with top gate $\vee$ and by $\Pi_{k}$ a depth $k$ formula with top gate $\wedge$. E.g., a $\Sigma_2$ formula is a DNF. \\
\indent A formula is called \textit {unate} if each variable appears only in its negated form or only in its positive form. A formula is called \textit{monotone} if it is unate, and with no negated variables. 

\begin{defn}
Arithmetic formulas will be defined similarly over some finite field, when $\cdot, +$ gates replace $\wedge,\vee$ gates. A $\Sigma_k^A$ formula is a depth $k$ arithmetic formula with an addition top gate and $\Pi_k^A$ is a depth $k$ arithmetic formula with a multiplication top gate.
\end{defn}

This chapter begins by presenting some harness results and continues with efficient algorithms. Section \ref{sctn:hardness} contains 3 hardness results: In Section \ref{sctn:monDNF} we shall prove that finding the minimal monotone DNF which complies with a given \textit{partial} truth table (i.e., a table with entries from $\{0,1,\star\}$, where $\star$ entries indicate that the function may have either $0$ or $1$ on that assignment) is NP-hard. As will be explained, this problem also arises when discussing monotone $\Pi_3$ formulas. Section \ref{sctn:LowerBoundsSigma3} will show how ``natural'' NP-completeness of the decisional variant of $\Sigma_3$ minimization, if exists, can be used to achieve surprising lower bounds. Section \ref{sctn:FormulaDepth} will use the known $N^{1-\e}$ hardness of approximation of formula size \cite{AKRR03} to achieve a $(1+\frac{1}{c})$ hardness of approximation of formula depth, for some constant $c$. 

In Section \ref{sctn:algorithms}, we give efficient truth-table minimization algorithms for several models, some of which were not previously defined in the literature. The main theorem of this section (Corollary \ref{cor:Main}) concerns the uniqueness of $\wedge,\vee$ and $\oplus$ decompositions of a boolean function into variable disjoint factors. This will be used to construct read-once formulas with gates from $\{\neg,\vee,\wedge,\oplus\}$ in Section \ref{sctn:ROFXOR} (both with and without negligence of the cost of negation gates, see Sections \ref{sctn:CostlyBoolean} and \ref{sctn:CostlyXOR}), boolean and arithmetic read-once formulas (Section \ref{sctn:OurROF}), and finally two models defined by us: unate boolean formulas (Section \ref{sctn:UF2}) and arithmetic formulas (Section \ref{sctn:F2A}) of second order. In these two models any variable participates in a sub-formula of depth 2. The minimization algorithms for these models are based on the decomposition theorem mentioned earlier and depth 2 minimization algorithms, all of them are known except the minimization algorithm for $\Pi_2^A$, which we present in Algorithm \ref{alg:MinimizePi2A}. In all sections, previously known results will be surveyed before presenting our results. We conclude in Section \ref{sctn:OpenProblems} by mentioning some open problems.

\section{Hardness Results} \label{sctn:hardness}

\subsection{Monotone DNF for a partial truth-table} \label{sctn:monDNF}

When size is defined to be the number of terms, it is known that a minimal monotone DNF (that is, a monotone $\Sigma_{2}$ formula) may be easily found using a simple algorithm over the $n$-th dimensional hypercube graph ($G_{HC}=( \{0,1\}^{n} , E)$, where $e=(u,v)\in E$ iff the Hamming distance between $u$ and $v$ is 1, when $u$ is the lighter one). For the ideas behind this algorithm see \cite{A87}. The algorithm is as follows: label each vertex of the graph by the corresponding value of the given truth-table. Find all vertices $\{\alpha_{i}\}_{i=1}^{k}$ that are labelled with 1, and all their incoming neighbours are labelled 0. Output $\alpha= \bigvee_{i=1}^{k}\bigwedge_{j\vert\alpha_{i,j}=1}x_{j}$. One may easily prove that $\alpha$ is indeed the smallest monotone DNF for the given function. Moreover, this algorithm may be used as a black box for unate DNF minimization. To see that, notice first that for any unate function $f$ (that is, a function which has a unate formula representation) there exists $a_{f}\in\{0,1\}^{n}$ such that $f(x\oplus a_{f})$ is monotone (where $\oplus$ denotes bitwise sum modulo 2). Second, notice that a given truth-table may be efficiently verified to represent a monotone function using the above graph $G$ by labelling all vertices accordingly and verifying that there is no directed edge $(u,v)\in E$ such that $f(u)=1$ and $f(v)=0$. Combining these two facts we may construct the minimal unate DNF as follows: given a truth-table of $f$, place its values over the vertices of $G_{HC}$. For every violating edge $(u,v)$ (i.e., $(u,v)\in E$ such that $f(u) = 1$ and $f(v)=0$) such that $u$ and $v$ differ on the $i$th coordinate, define $\left( a_f \right)_i=1$. Later, check if there are violating edges in $G_{HC}$ that corresponds to the function $f(x\oplus a)$. If so reject. Otherwise, apply the ordinary monotone DNF minimization algorithm of the truth-table of $f(x\oplus a_{f})$, and replace in the resulting formula every variable $x_{i}$ such that $(a_{f})_i=1$ with its negation. This algorithm may also be used to find the minimal unate CNF for a function $f$ by applying it over $\overline{f}$, and negating the result. Notice that since the number of edges in the hypercube is $n\cdot 2^{n-1}$, the entire algorithm requires $O(nN)$ time.

As for minimization of ordinary (non-unate) DNF formulas, a classic result \cite{M79} recently simplified in \cite{AH+06} shows:

\begin{thm}
\cite{M79,AH+06} The language 
\begin{eqnarray*}
minDNF = \{(T_{f},k)\vert \mbox{There exists a $k$-term DNF consistent with the truth-table $T_{f}$}\} 
\end{eqnarray*}
is NP-complete.
\end{thm}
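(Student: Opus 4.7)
The plan is to first establish membership in $NP$, which is immediate: a candidate $k$-term DNF has bit-length polynomial in $2^n$ (each of the $k \le 2^n$ terms is a conjunction of at most $n$ literals), and consistency with $T_f$ can be checked by evaluating the DNF on all $2^n$ assignments in $poly(2^n)$ time. The real content is NP-hardness, and I would reduce from a covering-type problem, most naturally \textsc{Set Cover}, following the spirit of the AH+06 simplification.

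Given a \textsc{Set Cover} instance $(U=[m],\mathcal{S}=\{S_1,\ldots,S_t\},k)$, the goal is to construct in polynomial time a truth table $T_f$ on $n$ variables together with a threshold $k'$ such that $T_f$ has a $k'$-term DNF iff $U$ admits a $k$-element cover from $\mathcal{S}$. A crucial point, already visible in Section \ref{sctn:TreesNPC}, is that since the $minDNF$ instance has size $2^n$, the reduction is only polynomial if $n = O(\log(mt))$; thus each element of $U$ and each set of $\mathcal{S}$ must be encoded as a short bit pattern rather than as a separate variable. My concrete plan is to introduce two blocks of variables: an ``address'' block of $\lceil \log m\rceil$ bits which indexes elements of $U$, and a ``set'' block of $\lceil \log t\rceil$ bits which indexes candidate sets. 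On carefully chosen 1-points the function encodes which element-set incidences appear in the instance, and on the remaining points (the ``0-region'') it is defined so that every prime implicant of $f$ corresponds bijectively to a single set $S_j$ and exactly covers the 1-points associated with the elements of $S_j$.

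The forward direction of the equivalence is then straightforward: a $k$-cover yields a $k$-term DNF by taking the $k$ canonical implicants. For the reverse direction I would argue that in any minimum DNF one may assume each term is a prime implicant (replacing a non-prime term by the prime implicant containing it can only reduce size), and then appeal to the structural fact that the prime implicants of $f$ biject with $\mathcal{S}$; hence any $k'$-term DNF induces a $k'$-element cover.

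The main obstacle is exactly the gadget design in the second paragraph: making sure that the only prime implicants of the constructed $f$ are the ``intended'' ones, so that no DNF can cheat by using a large monomial that happens to cover several elements through an unintended coincidence of the bit encoding. This typically requires adding a few auxiliary ``separator'' variables and carefully placing additional 1- and 0-entries to kill all spurious implicants, while keeping the total number of variables at $O(\log(mt))$ so the reduction remains polynomial in the size of the \textsc{Set Cover} input. Once this combinatorial design is accomplished, NP-hardness of \textsc{Set Cover} \cite{Kar72} transfers to $minDNF$ and, combined with the NP membership above, yields NP-completeness.
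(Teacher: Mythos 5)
First, a point of comparison: the paper does not prove this theorem at all --- it is cited from \cite{M79,AH+06} as a known result, and the paper's own contribution is the analogous statement for the \emph{monotone, partial-truth-table} variant (Lemma \ref{lemma:monDNFmainlemma} and the theorem following it in Section \ref{sctn:monDNF}). So your proposal has to be judged against the known proof it is trying to reconstruct.

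Your NP-membership argument is fine. The hardness half, however, has a genuine gap, and it is exactly the part you yourself flag as ``the main obstacle'': you never exhibit the gadget. The entire content of this theorem is the combinatorial design of a \emph{total} function on $O(\log(mt))$ variables whose minimum DNF size tracks the set-cover number; saying that one ``carefully places additional 1- and 0-entries to kill all spurious implicants'' is a restatement of the goal, not a construction. Concretely, two ingredients are missing. (i) The encoding must force an order-theoretic correspondence between terms and sets: the known construction (reproduced in the paper for the monotone case) reduces from \emph{3-partite} set cover and assigns to each ground element a distinct weight-$q/2$ vector in $\{0,1\}^{q}$ with $q=\Theta(\log n)$, so that the vectors $\{w^{S_i}\}$ form an antichain and $w^{S_i}\le v^{\alpha}$ iff $\alpha\in S_i$; this constant-weight antichain property is what rules out ``unintended coincidences of the bit encoding,'' and your generic address-block/set-block scheme does not provide it. (ii) More importantly, the established route does \emph{not} go directly from set cover to a total function. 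It first proves hardness of $minDNF(\star)$, where the instance designer only pins down the function on the sets $A$, $V$, $B$ and leaves everything else as don't-cares --- precisely because controlling \emph{all} prime implicants of a total function is too hard --- and then lifts to total functions by a separate padding/composition argument (this lifting is the step \cite{AH+06} simplified relative to \cite{M79}). Your plan skips the partial-function intermediate entirely and therefore lands on the hardest version of the gadget problem with no tool to solve it. As written, the reduction is a plausible program, not a proof.
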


The main stage of the reduction in \cite{AH+06} is showing: 

\begin{thm}
\cite{AH+06} The language
\begin{eqnarray*}
minDNF(\star) = \{(T_{f},k)\vert \mbox{There exists a $k$-term DNF consistent with the partial truth-table $T_{f}$\}}
\end{eqnarray*}
is NP-complete, where a partial truth-table is a $2^{n}$-bit string over $\{0,1,\star\}$, and the witness DNF oughts to be consistent with every non-$\star$ entry.
\end{thm}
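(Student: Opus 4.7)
The plan is two-fold: membership in NP is routine, and NP-hardness follows from a reduction that views DNF-minimization as a geometric covering problem.

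For membership, a verifier nondeterministically guesses a DNF $D$ with at most $k$ terms, then evaluates $D$ on every $a\in\zo{n}$ with $T_f(a)\ne\star$, accepting iff $D(a)=T_f(a)$ throughout. This runs in $poly(2^n)$, polynomial in the input size $|T_f|=2^n$.

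For NP-hardness, I would interpret a $k$-term DNF consistent with $T_f$ as a covering of the $1$-entries of $T_f$ by $k$ subcubes of $\zo{n}$, each subcube disjoint from every $0$-entry (the $\star$-entries impose no constraint). Under this covering interpretation the natural source of hardness is Set Cover. Given an instance $(U,\{S_i\}_{i=1}^m,k)$, I would use one variable $x_i$ per set and set $f(e_u)=1$ for each $u\in U$, where $(e_u)_i=1 \iff u\in S_i$, set $f(\vec 0)=0$, and leave every remaining input as $\star$. For the forward direction, a set cover $\{S_{i_j}\}_{j=1}^k$ yields the DNF $\bigvee_j x_{i_j}$, which has $k$ terms, evaluates to $1$ on every $e_u$, and to $0$ on $\vec 0$. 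For the backward direction, any DNF of size $k$ consistent with $T_f$ must have every term contain at least one positive literal (else that term would satisfy $\vec 0$); replacing each term by any single one of its positive literals can only enlarge the set of inputs it satisfies, so consistency with the $1$-entries is preserved (and $\vec 0$ still fails), and the chosen variables form a set cover of size at most $k$.

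The main obstacle is that the reduction must itself be polynomial: the naive truth-table described above has size $2^m$, which is exponential in the Set Cover instance size. I expect the resolution, following \cite{AH+06}, is to reduce from a structured covering problem in which the set system is already given as a family of subcubes of $\zo{n}$ for $n=O(\log(\text{input size}))$---equivalently, from a hardness result for minimum subcube cover over the Boolean hypercube---so that the resulting partial truth-table is of polynomial size. Once such a source problem is chosen, the covering/DNF correspondence outlined above lifts immediately, yielding NP-hardness of $minDNF(\star)$.
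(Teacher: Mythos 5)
Your NP-membership argument and your high-level reading of a $k$-term DNF as a cover of the $1$-entries by subcubes avoiding the $0$-entries are both correct, and your naive Set Cover reduction (one variable per set, $f(e_u)=1$, $f(\vec 0)=0$, everything else $\star$) is logically sound in both directions. However, you have correctly identified the central obstacle --- the truth table has size $2^m$ in the number of sets $m$ --- and then deferred its resolution rather than resolving it, and the claim that the correspondence ``lifts immediately'' once a suitable source problem is chosen is where the gap lies. Making the instance polynomial is not a routine change of source problem: it is the entire technical content of the reduction. The argument that actually works (the one in \cite{AH+06}, reproduced in this thesis for the monotone variant in Lemma \ref{lemma:monDNFmainlemma} and the theorem following it) reduces from \emph{3-partite} set cover and compresses the universe $[n]$ into $t=\Theta(\log n)$ variables by assigning each element $i$ a distinct constant-weight codeword $b(i)\in\{0,1\}^{q}$ with $\binom{q}{q/2}\ge n$, building $v^{i}$ from three such blocks, and defining $w^{S_i}$ as the bitwise AND of the $v^{\alpha}$ for $\alpha\in S_i$; the constant-weight and anti-chain properties are what guarantee the key equivalence $\alpha\in S_i\Leftrightarrow w^{S_i}\le v^{\alpha}$.

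Once the instance is compressed, your backward direction also stops being immediate: in your naive reduction a single $0$-entry at $\vec 0$ suffices to force a positive literal in each term, but in the compressed instance one must force every term $m_i$ of a consistent DNF to be sandwiched as $w^{S_k}\le m_i\le v^{j}$ for some $k,j$, and this requires \emph{additional, carefully placed} $0$-entries: the set $A$ of points lying immediately below each $w^{S_i}$ (to forbid terms strictly smaller than every $w^{S_i}$) and the set $B$ of in-neighbours of the $v^{j}$ that are not above any $w^{S_i}$ (to forbid terms below some $v^j$ but incomparable with all the $w^{S_i}$). Extracting a cover from the sandwiched terms then needs the map $\tau$ sending each term to a set on such a path. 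None of this is supplied by, or derivable ``immediately'' from, your outline, so the proposal as written establishes membership in NP and a correct but exponential-size reduction, and leaves the polynomial-time hardness proof incomplete.
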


We show that the monotone variant of $minDNF(\star)$, denoted $minMonDNF(\star)$, is NP-complete using a very similar reduction.

Besides being of independent interest, the problem of deciding $minMonDNF(\star)$ arises in the context of truth-table minimization of monotone $\Pi_{3}$ formulas (notice that a monotone $\Pi_{3}$ formula is a conjunction of monotone DNFs), where the size is the sum of sizes of the 2nd level monotone DNFs. Being unable to show the NP-completeness of the decisional variant of monotone $\Pi_3$ truth-table minimization (denoted $minMon\Pi_3$), we suggest the following relaxation (denoted $minMon\Pi_3^{'}$): The input contains not only a truth-table and a desired size $k$, but also a monotone $\Pi_{3}$ formula $C$. The goal is to decide if there is a $k$-term monotone DNF that may be added to the top-gate of $C$, such that the resulting formula will be consistent with $f$. The reader may easily verify that the problems $minMon\Pi_3^{'}$ and $minMonDNF(\star)$ are equivalent, since the required monotone DNF $M$ that we need to add must satisfy (we denote by $\{M_{i}\}_{i=1}^{k}$ the 2nd level monDNFs of $C$):
\begin{enumerate}
\item $\forall x\in\{0,1\}^{n}$ such that $f(x)=1$, we must have $M(x)=1$.
\item $\forall x\in\{0,1\}^{n}$ such that $f(x)=0$ and $\forall i \in [k],M_{i}(x)=1$, we must have $M(x)=0$.
\item For every other $x\in\{0,1\}^{n}$, the value of $M(x)$ may be arbitrary, thus $M(x)=\star$.

\end{enumerate}
The NP-completeness of $minMonDNF(\star)$ does not imply the NP-completeness of $minMon\Pi_3$, but it could be regarded as an evidence for the possible hardness of the latter. We may also deduce that a minimization algorithm for monotone $\Pi_3$ formulas, if exists, will probably not work in a serial fashion, i.e. it will not construct each of the branches at the 2nd level separately, since in this way the last formula to be constructed might impose an $NP$-complete problem. We leave the NP-completeness of $minMon\Pi_3$ as an interesting open problem (see Section \ref{sctn:LowerBoundsSigma3} for a further discussion about minimization of depth-3 formulas).

The following lemma establishes the NP-completeness of $minMonDNF(\star)$.
\begin{remark} In the following lemma, notice that:
\begin{enumerate}

\item We shall abuse notation by regarding a term in a monotone DNF over the variable set $\{x_{i}\}_{i=1}^{t}$, a $t$-bit binary vector and a subset of $[t]$ as the same object.

\item We treat an arbitrary given family $\bb{S}$ of subsets of $[n]$ as an \textit{anti-chain} (i.e. there are no distinct $S_{i},S_{j}$ such that $S_{i}\subseteq S_{j}$). This does not limit the generality of our claim since our final goal is to find a set cover in $\bb{S}$, thus for any pair $S_i,S_j\in\bb{S}$ such that $S_i \sus S_j$ we may omit $S_i$. Moreover, all containments may be found in $poly(n,|\bb{S}|)$ time.

\item We denote by $G_{HC}$ the hypercube graph, as defined earlier in this section.
\end{enumerate}
\end{remark}

\begin{lemma} \label{lemma:monDNFmainlemma}
Let $\mathbb{S} = \{ S_{i} \} _{i=1}^{s} \subseteq 2^{[n]}$ be an anti-chain such that $\bigcup_{i=1}^s S_i = [n]$. Define sets of vectors in $\{0,1\}^t$ (a vector for each member of $[n]$ and $\bb{S}$, when explicit definition of them and of t later): 

\begin{eqnarray*}
V	=	\{ v^{i} \in \{0,1\}^{t}     \vert i\in [n] \} \\
W	=	\{ w^{S_{i}} \in \{0,1\}^{t} \vert S_{i} \in \mathbb{S}\} 
\end{eqnarray*} 

such that 
\begin{eqnarray} 
\alpha \in S_{i} \Leftrightarrow w^{S_{i}} \le v^{\alpha},\label{eq:condition1} \\
V,W \mbox{ are constant Hamming weight sets.}\label{eq:condition2}
\end{eqnarray}
Let $f:\{0,1\}^{t} \to \{0,1\}$ be the following partial function:

\begin{enumerate}
\item Let $A \triangleq \{ x\in \{0,1\}^{t} \vert \exists i,(x,w^{S_{i}})\in E(G_{HC})\}$. For every $x\in A$ define $f(x)=0$. Namely, $f$ gets 0 on A, which is the set of all vectors identical to some $w^{S_{i}}$ except one missing 1.
\item $\forall v^{i}\in V$ define $f(v^{i})=1$.
\item Let $B \triangleq \{x\in \{0,1\}^t \vert \exists i \in [n], \left( x, v^i \right) \in E\left(G_{HC}\right)\mbox{ and } \nexists j \in [s], w^{S_j}\leadsto x\}$, namely, all vertices that are in-neighbours of a node from $V$, and no node from $W$ leads to them. For all $x\in B$ define $f(x)=0$.
\item Otherwise $f(x)=\star$.
\end{enumerate}

Then $f$ has a monotone DNF with $k$ terms iff there is a set cover of size $k$ in $\mathbb{S}$.
\end{lemma}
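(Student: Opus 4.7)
The plan is to prove both directions by matching monotone DNF terms with sets $S_i \in \bb{S}$ via the encoding captured by condition (\ref{eq:condition1}). For the ``if'' direction, assume $\{S_{i_j}\}_{j=1}^k$ is a cover of $[n]$ and take the DNF $\phi = \bigvee_{j=1}^k T_{w^{S_{i_j}}}$, where $T_w$ denotes the monotone term $\bigwedge_{\ell \in \mathrm{supp}(w)} x_\ell$. Consistency is then checked pointwise: each $v^\alpha$ is covered because the cover property furnishes some $j$ with $\alpha \in S_{i_j}$, whence $w^{S_{i_j}} \le v^\alpha$ by (\ref{eq:condition1}); each $x \in A$ has Hamming weight one below the constant $W$-weight guaranteed by (\ref{eq:condition2}), so no term fires; and each $x \in B$ has no $w^{S_m}$ below it by its very definition.

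For the ``only if'' direction, let $\phi = \bigvee_{j=1}^k T_j$ be consistent with $f$, with $U_j = \mathrm{supp}(T_j)$. I would define $V_j = \{\alpha : U_j \subseteq \mathrm{supp}(v^\alpha)\}$, the set of $\alpha$ whose $v^\alpha$ satisfies $T_j$; w.l.o.g.\ every $V_j$ is non-empty (else $T_j$ can be dropped). Since every $v^\alpha$ must satisfy some term, $\bigcup_j V_j = [n]$, and the plan reduces to showing each $V_j$ is contained in some $S_{i_j} \in \bb{S}$, yielding a cover of size at most $k$.

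For this key subclaim I would case-split on $U_j$. First, the $f = 0$ constraint on $A$ rules out $U_j \subsetneq \mathrm{supp}(w^{S_i})$ for any $i$: otherwise removing a bit $b \in \mathrm{supp}(w^{S_i}) \setminus U_j$ from $w^{S_i}$ would produce an $A$-element on which $T_j$ still fires. If $U_j = \mathrm{supp}(w^{S_i})$ for some $i$, then condition (\ref{eq:condition1}) gives $V_j = S_i$ at once. In the remaining case, where $U_j$ is incomparable with every $\mathrm{supp}(w^{S_i})$, let $W_j := \bigcap_{\alpha \in V_j} \mathrm{supp}(v^\alpha) \supseteq U_j$ and let $z$ denote the vector with support $W_j$. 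Under the contrary assumption that no $S_m$ contains $V_j$, condition (\ref{eq:condition1}) forces no $w^{S_m}$ to lie below $z$; I would then walk up the hypercube from $z$ toward some $v^\alpha$ with $\alpha \in V_j$, aiming to reach an in-neighbour of $v^\alpha$ that belongs to $B$, which contradicts the consistency of $\phi$ via the monotonicity of $T_j$.

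The main obstacle I anticipate is this last step: producing an in-neighbour $v^\alpha - e_b \in B$ on which $T_j$ fires requires a bit $b \in \mathrm{supp}(v^\alpha) \setminus U_j$ that lies in every $\mathrm{supp}(w^{S_m})$ with $\alpha \in S_m$, and such a $b$ need not exist from conditions (\ref{eq:condition1}) and (\ref{eq:condition2}) alone. The argument will therefore lean on the specific construction of $V$ and $W$ promised later in the paper, which should furnish enough ``slack'' among the supports of the $w^{S_m}$ to realise such a bit and close the case.
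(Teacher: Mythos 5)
Your proof follows the same architecture as the paper's: the ``if'' direction is identical, and your ``only if'' argument (cover $[n]$ by the sets $V_j$ of indices $\alpha$ whose $v^{\alpha}$ satisfies term $T_j$, then show each $V_j$ sits inside some $S_i$) is a reformulation of the paper's claim that every term $m_i$ of a minimal consistent DNF is sandwiched as $w^{S_k}\le m_i\le v^{j}$, after which mapping each term to such an $S_k$ yields the cover. One small repair first: your trichotomy on $U_j$ omits the case $\mathrm{supp}(w^{S_i})\subsetneq U_j$. It is the harmless ``good'' case ($w^{S_i}$ below the term forces $V_j\subseteq S_i$ directly via (\ref{eq:condition1})), but it is neither ``$U_j$ strictly inside some $\mathrm{supp}(w^{S_i})$'', nor ``equal to one'', nor ``incomparable with all of them'', so it must be listed.

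More importantly, the obstacle you flag in the incomparable case is genuine, and it is exactly the point at which the paper's own proof is a bare assertion: the paper simply states that a term lying below some $v^{j}$ and incomparable with all of $W$ must cover a node of $B$. Conditions (\ref{eq:condition1})--(\ref{eq:condition2}) alone do not imply this. For instance, take $t=6$, $n=3$, $v^{1}=111100$, $v^{2}=110011$, $v^{3}=001111$, $\bb{S}=\{S_1=\{1,2\},S_2=\{1,3\}\}$ with $w^{S_1}=110000$ and $w^{S_2}=001100$: both conditions hold, yet the term with support $\{1,3\}$ lies below $v^{1}$, is incomparable with both $w$'s, and covers no node of $B$, since deleting coordinate $2$ from $v^{1}$ leaves $w^{S_2}$ underneath and deleting coordinate $4$ leaves $w^{S_1}$ underneath. (The lemma itself survives in this example only because that term covers $v^{1}$ alone and hence cannot occur in a minimum DNF.) So your instinct is right that the incomparable case needs extra ``slack'' among the supports of the $w^{S_m}$, which only the explicit three-block construction of the subsequent theorem supplies; but since the lemma is stated for arbitrary $V,W$ satisfying (\ref{eq:condition1})--(\ref{eq:condition2}), deferring to that construction does not prove the lemma as stated --- you would have to either add the needed structural property as a hypothesis or find a different argument for this case. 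This is the same repair the paper's own proof requires.
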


\begin{proof} 
For the ``if'' direction, assume $\{S_{i_{j}}\} _{j=1}^{k}$ is a set cover in $\mathbb{S}$. Define the following monotone DNF formula 
\begin{eqnarray*}
\phi=\bigvee_{j=1}^{k}\bigwedge_{\alpha \vert (w^{S_{i_{j}}})_{\alpha}=1}x_{\alpha}. \\
\end{eqnarray*}
To see the consistency of $\phi$ with $f$, let us verify that conditions 1,2,3 above are met. For condition 1, let $x\in A$. We know that there exists an $i$ such that $(x,w^{S_{i}}) \in E(G_{HC})$, thus $x\le w^{S_{i}}$ (bitwise). To see that indeed $\phi\left(x\right)=0$, observe that $\phi$ gets 0 on $x$ iff $x$ does not contain any term of $\phi$. Assume for the contrary that $x$ contains some term of $\phi$, i.e, $\exists j,w^{S_{j}}\le x$. We get that $w^{S_{j}} \le w^{S_{i}}$, a contradiction to (\ref{eq:condition2}).

For condition 2, notice that since $\{ S_{i_{j}}\} _{j=1}^{k}$ is a cover, every $j\in[n]$ has some $S_{i_{j}}$ covering it, and by (\ref{eq:condition1}) $w^{S_{i_{j}}}\le v^{i}$, thus $\phi\left(v^{i}\right)=1$. 

For condition 3, since we only choose terms that correspond to vectors of the form $w^{S_i}$ for some $i$, it is clear that no term covers any $x\in B$, thus $\phi(x)=0$.

For the ``only if'' let
\begin{eqnarray*}
\phi=\bigvee_{i=1}^{k}\bigwedge_{j\in m_{i}}x_{j}.
\end{eqnarray*}
(for some sets $m_{i}\subseteq [t]$) be a minimal monotone DNF consistent with $f$. We claim that for every $m_i$ there exists $j\in [n],k \in [s]$ such that $w^{S_k}\le m_i \le v^j$. First, if $m_i$ is incomparable with all $V$ or strictly larger than any $v^j\in V$ it is redundant, in contradiction with $\phi$'s minimality. Therefore there exists $j\in [n]$ such that $m_i \le v^j$. Second, if $m_i$ is incomparable with all $W$, then since $m_i \le v^j$, we have that $m_i$ covers some node from $B$, which is a 0-node, a contradiction. Moreover, if there is some $k \in [s]$ such that $m_i < w^{S_k}$, then $m_i$ covers a node from $A$, which is also a 0-node, a contradiction.

We construct a set cover in the following manner: since every $m_{i}$ is on some path between a node from $A$ (a 0-node) and $V$ (a 1-node), we may define $\tau(m_{i})$ to be some set $S_{i}$ such that there exists a path between a vertex from $A$ to a vertex from $V$ passing through $m_{i}$ and $w^{S_{i}}$. We claim that $\{ \tau(m_{i})\} _{i=1}^{k}$ is a cover for $[n]$. To see that, let $j\in[n]$. We must show that $\exists i\in[k]$ such that $j \in \tau(m_{i})$. We know that $\phi$ is consistent with $f$, thus $\phi(v^{j})=1$. Therefore $\exists m_{r}$ such that $(\bigwedge_{j\in m_{r} } x_{j})(v^{j})=1$ and $m_{r}\le v^{j}$. According to the definition of $\tau(m_{r})$ and $A$ we have that $\tau (m_{r})\subseteq m_{r}$. Therefore $w^{\tau(m_{r})} \le m_{r}\le v^{j}$, thus $j\in\tau(m_{r})$. 
\end{proof}

We are now ready to show the explicit construction of the sets $V,W$ of the lemma above, and thus, together with some additional technical details, show the NP-completeness of the desired language.

\begin{thm}
The language
\[
minMonDNF(\star) = \{ (T_{f},k) \vert \mbox{There exists a k-term monotone DNF consistent with the partial t.t } T_{f} \}
\] is NP-complete.
\end{thm}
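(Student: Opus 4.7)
The claim splits into membership in $NP$ and $NP$-hardness. Membership is immediate: a nondeterministic machine guesses a monotone DNF of size at most $k$ and verifies agreement with $T_f$ on every non-$\star$ entry in $O(n\cdot 2^n)$ time.

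For $NP$-hardness, the plan is a polynomial-time reduction from $SC$ that plugs directly into Lemma~\ref{lemma:monDNFmainlemma}. Given an $SC$ instance $(1^m,\mathbb{S},k)$, the first step is to preprocess $\mathbb{S}$ into an antichain in $poly(m,|\mathbb{S}|)$ time (iteratively remove any $S_i$ strictly contained in some $S_j$), and reject if $\bigcup\mathbb{S}\ne[m]$ (since in that case no cover exists at all). Next, produce explicit constant-Hamming-weight vector families $V=\{v^\alpha\}_{\alpha\in[m]}$ and $W=\{w^{S_i}\}_{S_i\in\mathbb{S}}\subseteq\{0,1\}^t$ whose bitwise order realises membership, i.e.\ satisfying conditions~(\ref{eq:condition1}) and~(\ref{eq:condition2}). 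Having $V,W$ in hand, build the partial truth-table $T_f$ by sweeping $x\in\{0,1\}^t$ and classifying each $x$ according to the four cases of the lemma---label $1$ for $x\in V$, label $0$ for hypercube-neighbours of $W$ and for hypercube-predecessors of $V$ not reached from $W$, and $\star$ otherwise---and finally output $(T_f,k)$. By Lemma~\ref{lemma:monDNFmainlemma}, $(T_f,k)\in minMonDNF(\star)$ iff $(1^m,\mathbb{S},k)\in SC$, giving correctness.

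A natural first candidate for $V,W$ uses $t=m$ coordinates indexed by $[m]$: set $v^\alpha=\mathbf{1}_{[m]\setminus\{\alpha\}}$ (weight $m-1$) and $w^{S_i}=\mathbf{1}_{[m]\setminus S_i}$ (weight $m-|S_i|$). One checks $w^{S_i}\le v^\alpha\iff [m]\setminus S_i\subseteq[m]\setminus\{\alpha\}\iff\alpha\in S_i$, so~(\ref{eq:condition1}) holds. $V$ is already of constant weight, and $W$ can be equalised by appending, to each $w^{S_i}$, private dummy coordinates that raise its weight to a common maximum, while leaving these dummies at $0$ in every $v^\alpha$ so that no spurious containments are introduced.

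The principal technical obstacle is the size of $t$. Because the $SC$ input is only $\Omega(m)$ bits (the $1^m$ unary padding providing the lower bound), the output truth-table of size $2^t$ must be polynomial in $m$, which forces $t=O(\log(m|\mathbb{S}|))$ rather than the $\Theta(m)$ produced by the naive construction above. Overcoming this requires replacing the coordinate-per-element encoding with a compact, constant-Hamming-weight binary code of length $O(\log(m|\mathbb{S}|))$ whose bitwise-containment order exactly realises the arbitrary membership relation of $\mathbb{S}$---essentially the combinatorial gadget driving the $minDNF(\star)$ hardness proof of~\cite{AH+06}, upgraded so that both $V$ and $W$ remain constant-weight as required in the monotone setting. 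Once this gadget is built, the rest of the argument is routine book-keeping and $NP$-hardness follows from the $NP$-hardness of $SC$.
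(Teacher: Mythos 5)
There is a genuine gap, and it sits exactly where the whole difficulty of the theorem lies. You correctly identify that the reduction must produce a truth-table over $t=O(\log(m|\bb{S}|))$ variables, and that this requires constant-Hamming-weight families $V,W$ of logarithmic length whose bitwise order realises the membership relation of $\bb{S}$. But you then simply assert that such a gadget exists for an \emph{arbitrary} set system, ``upgraded'' from \cite{AH+06}, without constructing it. No such gadget exists in general. For a fixed constant-weight family $V\subseteq\{0,1\}^t$, the only natural candidate for $w^{S}$ realising $\{\alpha: w^S\le v^\alpha\}=S$ is $w^S=\bigwedge_{\alpha\in S}v^\alpha$, and when $|S|$ is large relative to $t=O(\log m)$ this meet degenerates (its weight collapses, eventually to $0$, at which point $w^S\le v^\beta$ for every $\beta$), destroying both condition~(\ref{eq:condition1}) and the constant-weight requirement on $W$. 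More bluntly, only $2^t=poly(m,|\bb{S}|)$ subsets of $[m]$ are realisable as down-sets of a single vector, so there is no hope of a uniform encoding handling arbitrary $\bb{S}$; whether a tailored encoding exists for a given instance is precisely the open question your proof defers.

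The paper avoids this by \emph{not} reducing from general $SC$: it reduces from 3PSC (3-partite set cover, NP-complete via 3-dimensional matching), in which every $S_i$ has exactly three elements, one in each class of a partition $\Pi_1,\Pi_2,\Pi_3$. This structure is what makes the logarithmic-length encoding possible: each element $i$ gets a weight-$\frac{q}{2}$ codeword $b(i)$ of length $q=\Theta(\log n)$ placed in block $\Pi(i)$ of a $3q$-bit vector (all-ones elsewhere), and $w^{S_i}=v^\alpha\wedge v^\beta\wedge v^\gamma$ then has exactly one non-trivial block per partition class, giving uniform weight $\frac{3q}{2}$ for $W$ and a clean proof that $w^{S_i}\le v^\alpha\iff\alpha\in S_i$ via the uniqueness of the constant-weight codewords $b(i)$. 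Your proposal's overall architecture (preprocess to an antichain, build $V,W$, apply Lemma~\ref{lemma:monDNFmainlemma}) matches the paper's, but the missing gadget is the substance of the proof, and filling it requires changing the source problem of the reduction, not just the encoding.
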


\begin{proof}
We reduce from the language 3PSC (3-Partite set cover) which is the tuples $(n,k,\Pi,\mathbb{S})$ such that $k$ is a natural number, $\Pi$ is a partition of $[n]$ into 3 disjoint sets $\Pi_{1},\Pi_{2},\Pi_{3}$, while $\mathbb{S}=\{S_{i}\}_{i=1}^{s}$
is a collection of subsets of $[n]$ of size exactly 3 (and therefore, also an anti-chain), such that $\forall j\in[3]\forall i \in[s]$ we have $|S_{i}\cap\Pi_{j}|=1$, (namely, every $S_{i}$ has exactly one representative from every $\Pi_{i}$) and there exists a cover of $[n]$ by $k$ elements from $\mathbb{S}$. 3PSC is NP-complete as noted in \cite{AH+06}, by a simple reduction from the 3D matching problem, shown as NP-complete by \cite{GJ78}.\\
\indent Given an instance $(n,k,\Pi,\mathbb{S})$, we produce the vectors $v_{i},i\in[n]$ as follows: let $q$ be the smallest integer such that ${q \choose \frac{q}{2}}\ge n$ (thus $q=\Theta(\log n)$) and $t=3q$. Assign to each $i\in[n]$ some unique $q$-bit vector $b(i)$ that contains exactly $\frac{q}{2}$ 1's. Now, for every $i\in[n]$, let $\Pi(i)$ be the index of the set in $\Pi$ containing $i$. We define $v_{i}\in\{0,1\}^{t}$ by defining it over 3 consecutive $q$-bit blocks. In any block but $\Pi(i)$ it will be all 1's, and it will be $b(i)$ in block $\Pi(i)$. We define the vectors $w^{S_{i}}$ to be the $\wedge$ of all $v^{\alpha}$ such that $\alpha \in S_{i}$. Compute the set $B$ as defined in lemma \ref{lemma:monDNFmainlemma} by preforming BFS from every node in $W$. Define $f$ as in lemma \ref{lemma:monDNFmainlemma} and output $(T_{f},k)$.\\
\indent To see the correctness of the reduction, according to Lemma \ref{lemma:monDNFmainlemma} it suffices to show that the construction meets requirements (\ref{eq:condition1}) and (\ref{eq:condition2}), i.e. that $\alpha \in S_{i}\Leftrightarrow  w^{S_{i}} \le v^{\alpha}$ and $V,W$ are of constant Hamming weight. To see (\ref{eq:condition2}), notice that the Hamming weight of all $w\in W$ is $\frac{3q}{2}$ and the Hamming weight of all $v\in V$ is $2t+\frac{q}{2}$. Second, to prove (\ref{eq:condition1}), assume $\alpha \in S_{i}$. Let $\Pi(\alpha)\in[3]$ be the index of the set of $\Pi$ containing $\alpha$. We have that $w^{S_{i}}$ equals $v^{\alpha}\wedge v^{\beta} \wedge v^{\gamma}$, when $\alpha,\beta,\gamma$ reside in $\Pi_{1},\Pi_{2},\Pi_{3}$ separately. Therefore in block $\Pi(\alpha)$ the entries of $w^{S_{i}}$  are exactly as in $v^{\alpha}$, and in the other blocks the inequality is obvious since $v^{\alpha}$ is 1. Conversely, Assume that $w^{S_{i}} \le v^{\alpha}$. By the construction of $W$, we know that in block $\Pi(\alpha)$, the vector $w^{S_{i}}$ consists of some $q$-bit vector with exactly $\frac{q}{2}$ 1's. In the same block, $v^{\alpha}$ also consists of some $\frac{q}{2}$ 1's $q$-bit vector. Since $w^{S_{i}}\le v^{\alpha}$ implies $(w^{S_{i}})_{j}=1\Rightarrow (v^{\alpha})_{j}=1$, we have that $w^{S_{i}}$ and $v^{\alpha}$ are identical in block $\Pi(\alpha)$. According to the uniqueness of the vectors $b(i)$, we have that $w^{S_{i}}$ was generated by a $\wedge$ that included $v^{\alpha}$, thus $\alpha \in S_{i}$.\\
\indent To see the polynomial complexity, notice that we created a truth-table of size $2^{t}=2^{\Theta(\log n)}=poly(n)$, over $t=\Theta(\log n)$ variables. In order to compute it, we require $n$ computations of vectors in $V$, each takes at most $O(n)$, and $s$ computations of vectors in $W$, each is an $\wedge$ between 3 vectors from V. The computation of the nodes in the set $B$ is done by at most polynomially many runs of BFS.
\end{proof}

\begin{remark}
One may wonder if the above result extends to the arithmetic case. I.e., given a partial truth-table $T_{f}$ and a number $k$, decide if there exists a $k$-term multilinear polynomial over $\Ftwo$ that is consistent with $f$ (as will be noted in further sections, finding a consistent multilinear polynomial for a given full truth-table is possible in polynomial time). We note that this problem may be formulated by purely linear-algebraic means in the following way:
\[
minML(\star)=\{(T_{f},k)\vert T_{f} \in \{0,1,\star\}^{2^{n}}, \exists x\in \{0,1\}^{2^{n}}, \|x\|\le k, B_{T_{f}}Ax=\overline{T_{f}}\}
\]
such that $\overline{T_f}$ is identical to $T_f$ except for $0$s instead of $\star$s, $A$ is the matrix that maps vectors of coefficients of multilinear polynomials to the corresponding truth-tables, $B_{T_f}$ is the diagonal matrix with $1$s in $B_{ii}$ wherever $\left( T_f \right)_i \ne \star$ and $\|\cdot\|$ denotes the Hamming weight. This problem highly resembles certain problems in coding theory (e.g., ``Minimum Distance'' in \cite{V97}), and exploring this similarity may be an interesting research problem. 
\end{remark}

\subsection{Lower Bounds from Natural NP-completeness of $min\Sigma_{3}$} \label{sctn:LowerBoundsSigma3}

The Minimum Circuit Size Problem (MCSP) is the language of pairs $(T_{f},k)$ such that $T_{f}$ is a truth-table, and there exists a boolean circuit of at most $k$ gates, with fan-in limited to 2, that is consistent with $f$. In \cite{KC00} it is proved that NP-completeness (which is ``natural'' in some sense, see definition \ref{defn:natural}) of this language would yield explicit constructions of functions with high circuit complexity, under some reasonable assumptions. Since finding such constructions is a long standing open problem, we may deduce that such a reduction may be hard to find. See discussion in \cite{AB09}, Section 14.4.3.\\
\indent In this section we show that even showing a ``natural'' NP-completeness of the truth-table minimization problem of a much simpler model would still yield an explicit construction of functions with surprisingly high circuit complexity. Specifically, let $min\Sigma_{3}$ be the language of pairs $(T_{f},k)$ such that $T_{f}$ is a truth-table, and there exists a $\Sigma_{3}$ formula with at most $k$ gates that is consistent with $f$. We show that if there is a ''natural'' reduction from any NP-complete language to $min\Sigma_{3}$, then it is possible to explicitly construct a boolean function on $m$ inputs that has no linear-size logarithmic-depth circuits, under the assumption that $\mbox{NP}\nsubseteq\mbox{SUBEXP}$ (when $\mbox{SUBEXP}\triangleq \bigcap_{\e > 0} \mbox{DTIME}(2^{n^{\e}})$). First, let us define the kind of reductions that we consider.
\begin{defn} \label{defn:natural}
\cite{KC00} A polynomial reduction $R$ from a language $A$ to a language $B$ is called \textit{natural} if for every instance $I$ of $A$, the size of $R(I)$, as well as any numerical parameters of it, is a function of $|I|$ only. 
\end{defn}

As noted in \cite{KC00}, most known reductions are natural. In order to prove the main result of this section, we will need the following lemmas:

\begin{lemma} \label{lemma:NumberSigma3}
Denote by $\Sigma_{3}(s,t)$ the number of $\Sigma_{3}$ formulas with $s$ gates on $t$ variables. Then $\Sigma_{3}(s,t)\le(2^{2t})^{s}\cdot s^{s+1}$.
\end{lemma}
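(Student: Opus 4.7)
The plan is to encode any $\Sigma_3$ formula of size $s$ by two independent pieces of data and count the possibilities for each. Since a $\Sigma_3$ formula is a rooted tree of depth at most $3$ whose internal gates alternate between $\vee$ and $\wedge$, the label of every gate is forced by its depth, so the formula is fully determined by its underlying rooted-tree structure on the $s$ gates together with, for each bottom $\vee$-gate, the subset of literals feeding into it.

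To count the tree structures, I would label the $s$ gates by $\{1, \ldots, s\}$ with gate $1$ designated as the root (the top $\vee$-gate) and, for each of the remaining $s-1$ gates, record its parent in $\{1, \ldots, s\}$. This overcounts, since some parent assignments do not yield a valid depth-$3$ rooted tree (they may contain cycles or violate the depth bound), but it gives at most $s^{s-1}$ possibilities. Absorbing an extra slack of $s^2$ — to avoid having to single out the root in advance, and to cover any other bookkeeping looseness in the overcount — brings the bound on tree structures to $s^{s+1}$.

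To count the literal-sets, I would use the fact that with $t$ variables there are exactly $2t$ literals, hence at most $2^{2t}$ subsets of literals that can serve as the inputs of a single bottom $\vee$-gate. Since at most $s$ of the $s$ gates are bottom $\vee$-gates, this contributes at most $(2^{2t})^s$ possibilities. Multiplying the two counts yields the desired $(2^{2t})^s \cdot s^{s+1}$. The only care point is to verify that the pair (tree structure, literal-set per bottom $\vee$-gate) really does determine the formula; this is immediate because the $\vee/\wedge$ labels are read off from depth. No substantive obstacle is anticipated: the argument is essentially a bookkeeping exercise with a deliberately generous overcount.
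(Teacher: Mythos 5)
Your proposal is correct and follows essentially the same counting argument as the paper: both encode a $\Sigma_3$ formula by its gate-connection structure (bounded by roughly $s^{s}$ parent choices, with an extra factor of $s$ or so of slack) together with a subset of the $2t$ literals for each bottom gate (bounded by $(2^{2t})^{s}$), and multiply. The only cosmetic difference is that the paper first picks the size of the bottom level ($s$ choices) and then assigns each bottom gate a parent in the middle level, whereas you record a parent pointer for every non-root gate; both yield the stated bound $(2^{2t})^{s}\cdot s^{s+1}$.
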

 
\begin{proof} First, we have to choose what will be the size of the bottom level (the sizes of the middle and top levels are determined by it). Clearly, there are $s$ possibilities for that. Second, consider the gates at the bottom level. For every gate we ought to choose some subset of $\{ x_{i}\} _{i=1}^{t}\cup \{ \overline{x_{i}}\} _{i=1}^{t}$ as inputs. There are $2^{2t}$ options for every gate, therefore at most $(2^{2t})^{s}$ options overall. The last stage will be connecting every gate in the bottom level to some gate at the middle level. There are at most $s$ gates, for each of them we have at most $s$ options. Overall we get that $\Sigma_{3}(s,t)\le (2^{2t})^{s}\cdot s^{s+1}$.
\end{proof}

We now cite (without proof) a celebrated result by Valiant \cite{V77}, recently simplified by \cite{V09}.

\begin{lemma} \label{lemma:Jukna}
If $f:\{0,1\}^{m}\to\{0,1\}$ cannot be computed by $\Sigma_{3}$ formulas of size $2^{O(m/\log\log m)}$ then f cannot be computed by boolean circuits (of fan-in 2) of depth $O(\log m)$ and size $O(m)$.
\end{lemma}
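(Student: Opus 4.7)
The plan is to establish the contrapositive: assuming $f$ is computed by a fan-in-$2$ Boolean circuit $C$ of depth $d=O(\log m)$ and size $s=O(m)$, I would exhibit a $\Sigma_{3}$ formula of size $2^{O(m/\log\log m)}$ computing $f$. The starting point is Valiant's graph-theoretic depth-reduction lemma applied to the DAG underlying $C$: one shows the existence of a set $W$ of wires with $|W|=O(s/\log d)$ whose ``removal'' shallows the circuit, in the sense that for every gate $g$ (in particular the output gate), and also for every wire $w\in W$ itself, the value at $g$ (resp.\ $w$) is computable as a function depending on only $O(\log\log m)$ entries drawn from the input $x$ together with the wires in $W$. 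Since $d=\Theta(\log m)$, this yields $|W|=O(m/\log\log m)$.

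Next, I would assemble the $\Sigma_{3}$ formula by enumerating over all guesses $\alpha\in\{0,1\}^{|W|}$ of the values carried by the cut wires. The top $\vee$ ranges over these $2^{|W|}$ guesses. For each guess $\alpha$, the middle $\wedge$ encodes two conjoined conditions: (i) \emph{self-consistency}, namely for every $w\in W$, the shallow sub-circuit that now computes $w$'s value as a function of $x$ and of the $\alpha$-entries evaluates to $\alpha_w$; and (ii) the shallow sub-circuit at the output gate evaluates to $1$ under $x$ and $\alpha$. Each such shallow sub-function depends on only $O(\log\log m)$ inputs, hence is expressible as a DNF on the variables $x_1,\ldots,x_m$ (after hard-coding the relevant bits of $\alpha$) of size at most $2^{O(\log\log m)}=\mathrm{polylog}(m)$. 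Placing these DNFs beneath the middle $\wedge$ gives the required $\vee\wedge\vee$ shape. The total gate count is
\[
2^{|W|}\cdot\bigl(|W|+1\bigr)\cdot 2^{O(\log\log m)}\;=\;2^{O(m/\log\log m)},
\]
as desired; contrapositively, any $f$ without such a $\Sigma_{3}$ formula rules out log-depth linear-size circuits.

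The main obstacle is the depth-reduction lemma, which is the combinatorial heart of the proof. I would approach it by fixing a topological layering of $C$ into $d$ levels and assigning to each edge the pair of levels of its endpoints; a careful hashing of edges by their ``level-interval'' (using an averaging argument over $O(\log d)$ residue classes) identifies $O(s/\log d)$ edges whose removal ensures that any remaining directed path traverses few enough levels to have length $O(\log\log m)$. Making the counting quantitatively tight (the $\log d$ in the denominator rather than $\log\log d$, and independence of the hidden constants from $s$ and $d$) is the delicate step; once this lemma is in hand, the conversion to a depth-$3$ formula outlined above is essentially bookkeeping together with the usual trick of ``guess and verify'' that gives rise to $\Sigma_{3}$-formulas from nondeterministic shallow computations.
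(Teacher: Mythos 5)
The paper does not actually prove this lemma---it is cited verbatim from Valiant \cite{V77} (as simplified in \cite{V09}) with the explicit remark ``we now cite (without proof)''---so there is no in-paper argument to compare against. Your proposal reconstructs the standard guess-and-verify depth reduction, which is indeed the right approach, but two steps as written are off. First, the quantitative claim after applying the graph-theoretic lemma is inconsistent: removing a set $W$ of $O(s/\log d)=O(m/\log\log m)$ wires from a circuit of depth $d=c\log m$ reduces the remaining depth to $\varepsilon\log m$ for a constant $\varepsilon<1$ (by taking $r=O(1)$ halving rounds in Valiant's lemma), so each gate and each cut wire becomes a function of at most $2^{\varepsilon\log m}=m^{\varepsilon}$ of the inputs and $W$-values --- \emph{not} $O(\log\log m)$ of them. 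To get $O(\log\log m)$-locality you would need remaining depth $O(\log\log\log m)$, which forces $r\approx\log d$ and hence $|W|=\Omega(m)$, destroying the $2^{|W|}$ enumeration. The corrected bottom pieces have size $2^{m^{\varepsilon}}\cdot\mathrm{poly}(m)$ rather than $\mathrm{polylog}(m)$; fortunately this is still $2^{o(m/\log\log m)}$, so your final bound survives, being dominated by the $2^{|W|}$ factor.

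Second, the shape of the resulting formula is wrong as stated: a top $\vee$ over a middle $\wedge$ whose children are \emph{DNFs} is a $\vee\wedge\vee\wedge$ object of depth four, not a $\Sigma_{3}$ formula. You must express each shallow sub-function (the self-consistency predicate for each $w\in W$ and the output predicate) as a \emph{CNF} on its $m^{\varepsilon}$ relevant variables, so that the middle $\wedge$ absorbs the CNFs' top $\wedge$ gates and the whole expression collapses to $\vee\wedge\vee$. Both issues are repairable by standard means and do not change the conclusion, but as written the parameters of the depth-reduction step do not cohere and the constructed object is not a $\Sigma_3$ formula.
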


These lemmas give us the following corollary:

\begin{thm} \label{thm:LowerBounds}
If $\mbox{NP}\nsubseteq\mbox{SUBEXP}$ and there exists a natural reduction $\mbox{SAT}\le_{p}min\Sigma_{3}$, then there is an explicitly defined family of functions $\mathcal{F} = \{ f_{n}\} _{n\in\mathbb{N}}$ that cannot be computed by circuits of depth $O(\log n)$ and size $O(n)$ infinitely often.
\end{thm}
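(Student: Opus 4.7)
The plan is to use the hypothesized natural reduction $R:\mathrm{SAT}\le_p min\Sigma_{3}$ as a source of explicitly constructible hard functions, and then convert $\Sigma_{3}$ hardness into the desired circuit lower bound via Lemma~\ref{lemma:Jukna}. By naturality, every SAT input $\phi$ of length $n$ is mapped by $R$ to a pair $(T_\phi,k(n))$ whose truth-table length $2^{t(n)}$ and size parameter $k(n)$ depend only on $n$, and $t(n)=O(\log n)$ because $R$ is polynomial-time. Taking $\phi_n$ to be any explicit unsatisfiable formula of length $n$ (e.g.\ $x_1\wedge\neg x_1$ padded to length $n$), the correctness of $R$ guarantees that the function $f_n:\{0,1\}^{t(n)}\to\{0,1\}$ represented by $T_{\phi_n}$ has no $\Sigma_{3}$ formula of size at most $k(n)$; the family $\{f_n\}$ is then uniformly and explicitly constructible.

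The key quantitative ingredient is a polynomial lower bound on $k(n)$ obtained via exhaustive search. By Lemma~\ref{lemma:NumberSigma3}, the number of $\Sigma_{3}$ formulas of size at most $k$ over $t$ variables is at most $2^{O(k(t+\log k))}$, and each can be evaluated on every input in time $\mathrm{poly}(k\cdot 2^{t})$. Composing this enumeration with $R$ yields a deterministic SAT algorithm running in time $2^{O(k(n)\log n)}$ (using $t(n)=O(\log n)$ and the regime $k(n)\le n$). Because SAT is NP-complete under polynomial reductions, the hypothesis $\mathrm{NP}\nsubseteq\mathrm{SUBEXP}$ supplies a fixed $\e_0>0$ with $\mathrm{SAT}\notin\mathrm{DTIME}(2^{n^{\e_0}})$; hence $k(n)\log n>n^{\e_0}$, and so $k(n)>n^{\e_0/2}$, must hold for infinitely many $n$.

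Combining this bound with Lemma~\ref{lemma:Jukna} finishes the proof: Valiant's lemma says that any $O(\log m)$-depth $O(m)$-size circuit on $m$ variables admits a $\Sigma_{3}$ formula of size $2^{O(m/\log\log m)}$, and for $m=t(n)=\Theta(\log n)$ this threshold becomes $2^{O(\log n/\log\log\log n)}=n^{o(1)}$, which is eventually dominated by $n^{\e_0/2}$. Thus on the infinitely many $n$ for which $k(n)>n^{\e_0/2}$, the function $f_n$ on its $m=t(n)$ inputs exceeds Valiant's $\Sigma_{3}$ threshold and hence admits no linear-size logarithmic-depth circuit; reindexing by the variable count (padding with dummy variables to fill any gaps in the image of $t$) yields the explicit family $\mathcal{F}$ required by the theorem.

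The step most worth executing carefully is reconciling the two notions of size at play: the SAT parameter $n$, which governs the lower bound on $k(n)$, and the variable count $m=t(n)$, which governs Valiant's threshold. The logarithmic relation $m=\Theta(\log n)$ is precisely what makes $2^{\Omega(m/\log\log m)}$ sub-polynomial in $n$, and this is exactly the slack exploited by the polynomial lower bound on $k(n)$; any reduction with $t(n)=n^{\Omega(1)}$ would close this window and would demand a considerably stronger complexity-theoretic assumption than $\mathrm{NP}\nsubseteq\mathrm{SUBEXP}$.
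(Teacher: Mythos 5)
Your proposal is correct and follows essentially the same route as the paper's proof: both push explicit unsatisfiable formulas through the natural reduction, use the counting bound of Lemma~\ref{lemma:NumberSigma3} plus brute-force enumeration to show that a sub-polynomial size parameter would place SAT in SUBEXP, and then invoke Valiant's Lemma~\ref{lemma:Jukna} with $m=\Theta(\log n)$ to convert the resulting polynomial $\Sigma_3$ lower bound into the circuit lower bound. The only differences are cosmetic (your explicit $2^{O(k(t+\log k))}$ bookkeeping and the $\e_0/2$ slack versus the paper's $o(n^{\e})$/$\Omega(n^{\e})$ dichotomy).
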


\begin{proof}
Let $R$ be the natural reduction from $\mbox{SAT}$ to $min\Sigma_{3}$. Denote $\phi\overset{_{R}}{\mapsto}(T_{\phi},s(|\phi|))$ (the existence of the function $s$ is guaranteed from $R$ being natural). Denote $|\phi|=n$, and notice that $T_{\phi}$ is a truth-table of a function on $c\log n$ variables for some $c>0$, since $R$ is a polynomial reduction. \\
\indent Now, if for all $\e>0$ we have $s(n)=o(n^{\e})$, we show that $\mbox{NP}\subseteq\mbox{SUBEXP}$. According to lemma \ref{lemma:NumberSigma3} we get that $\Sigma_{3}(s(n),c\log n)=o(2^{n^{2\e}})$ (see Lemma \ref{lemma:LowerBoundsAddendum} at the end of this chapter) for every $\e>0$. Therefore we may traverse all relevant $\Sigma_{3}$ formulas deterministically using an algorithm from the class $\mbox{SUBEXP}$. We will use that fact to decide $\mbox{SAT}$ in \mbox{SUBEXP} by applying the reduction $R$ on any instance $\phi$, and decide if $(T_{\phi},s(|\phi|))$ is in $min\Sigma_{3}$ by traversing all relevant formulas and checking in polynomial time if any of them is consistent with $T_{\phi}$ (this is possible since $|T_{\phi}|=poly(n)$). \\
\indent We infer that there exists $\e>0$ such that $s(n)=\Omega(n^{\e})$ for every $n\in N$ for some infinite $N\subseteq\mathbb{N}$. We construct the desired function family in the following manner: take any family of non-satisfiable $\mbox{CNF}$ formulas $F=\{ \phi_{n}\} _{n\in\mathbb{N}}$, apply the reduction $R$ on it and define $\mathcal{F}$ to be the resulting family of functions. Since $F$ is a family of no instances of $\mbox{SAT}$, $\{R(\phi_{n})\} _{n\in\mathbb{N}}$ is a family of no instances of $min\Sigma_{3}$. We deduce that infinitely often we have $f_{n}\in{\mathcal{F}}$  such that its $\Sigma_{3}$ complexity is at least $2^{\e \log n}$ (notice that $f_{n}$ is a function over $c\log n$ variables for some $c>0$). Therefore, by lemma \ref{lemma:Jukna} we get the desired family of functions over $m=\log n$ inputs that cannot be computed by circuits of depth $O(\log m)$ and size $O(m)$.
\end{proof}

\subsection{Inapproximability of Formula Depth} \label{sctn:FormulaDepth}

Alledner et al. show an inapproximability result regarding formula size: (Theorem 25 in \cite{AKRR03}) the minimal boolean formula size (of fan in 2) for a given truth-table is inapproximable in $\mbox{BPP}$ up to a factor of $N^{1-\e}$, when $N$ is the size of the input, and for every $0<\e<1$. A hardness result on depth minimization of formulas when the fan-in is bounded by two\footnote{Bounding the fan-in is necessary since without a bound the optimal formula depth for any function is 2, e.g. by a canonical CNF.} is easily derivable from \cite{AKRR03} using formula balancing. According to \cite{K78}, given a formula of size $l$ for a function $f$, we may balance it to get an equivalent formula of depth $c\cdot \log(l)$ for $c=1.73$. Therefore, if we denote by $\mbox{FSIZE}(f)$ and $\mbox{FDEPTH}(f)$ the optimal formula size and depth of $f$, we have $\mbox{FDEPTH}(f)\le c\log (\mbox{FSIZE}(f))$. Moreover, since the fan in is at most 2, we have $\mbox{FSIZE}(f)\le 2^{\mbox{FDEPTH}(f)}$. 

\begin{thm}
Assuming that Blum Integer Factorization is not in $\mbox{ZPP}$, there is no polynomial algorithm that approximates $\mbox{FDEPTH}(f)$ up to a factor of $1  +\frac{1}{c}$.
\end{thm}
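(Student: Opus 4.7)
My plan is to reduce from the $N^{1-\varepsilon}$-inapproximability of formula size established in \cite{AKRR03}, under the same Blum Integer Factorization assumption. Suppose for contradiction that there is a polynomial-time algorithm $A$ satisfying $\mbox{FDEPTH}(f)\le A(T_f)\le (1+\tfrac{1}{c})\mbox{FDEPTH}(f)$ for every truth table $T_f$. Writing $D':=A(T_f)$ and combining this with the two-sided bound $\log\mbox{FSIZE}(f)\le \mbox{FDEPTH}(f)\le c\log\mbox{FSIZE}(f)$ derived just before the theorem statement, I obtain
\[
\log\mbox{FSIZE}(f) \;\le\; D' \;\le\; (c+1)\log\mbox{FSIZE}(f),
\]
so $D'$ approximates $\log\mbox{FSIZE}(f)$ within a multiplicative factor of $c+1$.

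From $D'$ I will then construct an estimator for $\mbox{FSIZE}(f)$. The naive estimator $2^{D'}$ only approximates $\mbox{FSIZE}(f)$ within factor $\mbox{FSIZE}(f)^{c}$, which may be as large as $N^{c}>N$ and is therefore far too lossy to contradict \cite{AKRR03}. The key step is to symmetrize the slack by outputting the geometric-mean estimator $s':=2^{D'/\sqrt{c+1}}$. A direct calculation from the display above yields
\[
\mbox{FSIZE}(f)^{1/\sqrt{c+1}} \;\le\; s' \;\le\; \mbox{FSIZE}(f)^{\sqrt{c+1}},
\]
so both $s'/\mbox{FSIZE}(f)$ and $\mbox{FSIZE}(f)/s'$ are bounded by $\mbox{FSIZE}(f)^{\sqrt{c+1}-1}\le N^{\sqrt{c+1}-1}$.

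Since $c=1.73<3$ we have $\sqrt{c+1}<2$, so $\varepsilon_0:=2-\sqrt{c+1}>0$ and $s'$ constitutes a polynomial-time $N^{1-\varepsilon_0}$-approximation of $\mbox{FSIZE}$, contradicting \cite{AKRR03} with $\varepsilon=\varepsilon_0$. The main obstacle I expect is precisely the choice of estimator: the value $1+\tfrac{1}{c}$ in the statement is not arbitrary, but is calibrated so that $(1+\tfrac{1}{c})\cdot c=c+1<4$, i.e.\ $\sqrt{c+1}<2$, which is the regime where the geometric-mean estimator beats the AKRR barrier. Any estimator of the form $2^{D'/e}$ with $e=1$ (naive) or $e=c+1$ (pessimistic lower bound) yields a FSIZE-approximation factor of at least $N$; only an exponent strictly between $1$ and $c+1$ -- optimally around $\sqrt{c+1}$ -- avoids that trap, which is the one technical subtlety of the reduction.
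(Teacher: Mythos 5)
Your sandwich $\log\mbox{FSIZE}(f)\le D'\le (c+1)\log\mbox{FSIZE}(f)$ and the ensuing bounds $\mbox{FSIZE}(f)^{1/\sqrt{c+1}}\le s'\le \mbox{FSIZE}(f)^{\sqrt{c+1}}$ are correct, but the final step --- ``this contradicts \cite{AKRR03}'' --- does not go through. The inapproximability of formula size in \cite{AKRR03} is a \emph{one-sided} statement: there is no polynomial algorithm $B$ with $\mbox{FSIZE}(f)\le B(T_f)\le N^{1-\varepsilon}\cdot\mbox{FSIZE}(f)$ (the thesis makes this explicit in Chapter 5 for the analogous branching-program result). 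Your $s'$ overestimates $\mbox{FSIZE}(f)$ by up to $N^{\sqrt{c+1}-1}\approx N^{0.65}$ and underestimates it by up to $N^{1-1/\sqrt{c+1}}\approx N^{0.39}$; recentering into the one-sided form costs the \emph{product} of these two factors, namely $N^{\sqrt{c+1}-1/\sqrt{c+1}}=N^{c/\sqrt{c+1}}\approx N^{1.05}>N$, so no contradiction is obtained. Put differently, a two-sided $N^{1-\varepsilon_0}$-approximation with $\varepsilon_0=2-\sqrt{c+1}\approx 0.35<\frac12$ is not ruled out by the one-sided result; only $\varepsilon_0>\frac12$ would be. Incidentally, within your own framework the exponent you dismiss, $e=c+1$, actually fares better: $2^{D'/(c+1)}$ never exceeds $\mbox{FSIZE}(f)$ and is at least $\mbox{FSIZE}(f)\cdot N^{-c/(c+1)}$, so after recentering it gives a one-sided $N^{c/(c+1)}\approx N^{0.63}$-approximation; your claim that $e=c+1$ ``yields a factor of at least $N$'' is incorrect, and the geometric mean is the wrong optimization target when the benchmark is one-sided.

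The paper avoids this trap by never converting the leading $\mbox{FDEPTH}(f)$ term multiplicatively. It splits $D(T_f)\le \mbox{FDEPTH}(f)+\frac{1}{c}\mbox{FDEPTH}(f)$, bounds the first summand by the \emph{absolute} quantity $(1-\varepsilon)n$ (restricting attention to inputs with $\mbox{FDEPTH}(f)\le(1-\varepsilon)n$, a weaker hypothesis on the assumed algorithm), and bounds only the second summand via $\frac{1}{c}\cdot c\log\mbox{FSIZE}(f)=\log\mbox{FSIZE}(f)$. Exponentiating then yields $\mbox{FSIZE}(f)\le 2^{D(T_f)}\le N^{1-\varepsilon}\cdot\mbox{FSIZE}(f)$ directly in the one-sided form of \cite{AKRR03}, with no recentering loss. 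To salvage your purely multiplicative route you would need either to switch to $e=c+1$ as above, or to open up the proof in \cite{AKRR03} and argue that $s'$ still separates its hard instances (pseudorandom functions of polylogarithmic formula size) from random functions.
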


\begin{proof}
Let $D$ be a polynomial time algorithm such that on input which is a truth-table $T_{f}$ of a function $f$ such that $\mbox{FDEPTH}(f)\le (1-\e)n$ for some $\e>0$ the algorithms outputs 
\begin{eqnarray*}
\mbox{FDEPTH}(f)\le D(T_{f})\le(1+\frac{1}{c})\cdot \mbox{FDEPTH}(f)
\end{eqnarray*}
Notice that the existence of $D$ is a weaker assumption than the existence of an approximation algorithm for any $f$. We get that 
\begin{eqnarray*}
\mbox{FDEPTH}(f)\le D(T_{f})\le (1-\e)\cdot n + \frac{1}{c}\cdot \mbox{FDEPTH}(f)
\end{eqnarray*}
Thus 
\begin{eqnarray*}
2^{FDEPTH(f)}\le 2^{D(T_{f})}\le 2^{(1-\e)\cdot n + \frac{1}{c}\cdot FDEPTH(f)} \\
\end{eqnarray*}
And using the aforementioned bounds will yield
\begin{eqnarray*}
\mbox{FSIZE}(f)\le 2^{D(T_{f})} \le N^{1-\e}\cdot \mbox{FSIZE}(f)
\end{eqnarray*}
Therefore defining an algorithm $A$ such that $A(T_{f})=2^{D(T_{f})}$ will contradict the result from \cite{AKRR03} mentioned earlier. 
\end{proof}

\section{Efficient Algorithms} \label{sctn:algorithms}

\subsection{Background on Read Once Formulas and Related Models} \label{subsecn:ROF}
A \textit{Boolean} read-once formula (abbr. $ROF$) is a boolean formula in which every variable labels at most one leaf (in its negated or non-negated form).  If the set of operations in nodes is $\{\cdot,+\}$ over some fixed finite field it is called an \textit{Arithmetic} $ROF$. Both arithmetic and boolean formulas were a subject to extensive research in many fields of complexity theory (learning theory in \cite{AHK89,BHH92} and polynomial identity testing in \cite{SV08} to name a few).  \\
\indent It is known (as mentioned in \cite{AHK89}) that the boolean and arithmetic ROF for a specific function is unique\footnote{In the arithmetic case over $\bb{F}_2$, it is unique only up to negated variables. Since negation is commonly referred to as a cheap operation, the cost of the negation gates is usually neglected. See further section for discussion about minimization with costly negation gates}, and therefore the algorithms we look for in this case are rather \textit{decision} algorithms than \textit{minimization} algorithms. In this section we provide an algebraic proof for a stronger claim, that will allow us to devise an efficient truth-table minimization algorithms for several classes of formulas that include ROFs over several different bases. Moreover, we discuss the natural case where the variables appear only in their non-negated form, and negation gates are costly.

\subsubsection{Previous Work}
As mentioned earlier, it is known that boolean $ROF$s \cite{AHK89} and arithmetic ROFs over any field \cite{BHH92} may be learned efficiently using membership and equivalence queries. We shall utilize these facts to construct efficient truth-table minimization algorithms. However, Golumbic et al. \cite{GAU04} show a stronger result than ours: a boolean read-once function (i.e., a function that has a boolean $ROF$) may be recognized in time $O(n\cdot k)$, where $k$ is the number of terms in some DNF representation of $f$. Since every read-once function is also unate (i.e., it has a corresponding unate formula), we may construct its unate DNF representation (as mentioned in Section \ref{sctn:monDNF}) and feed it into the algorithm of \cite{GAU04}. The advantage of our algorithm present over the existing ones (including the algorithm of \cite{P95}) is that it constructs a $ROF$ over larger bases, (e.g., ${\vee, \wedge, \oplus}$) and is extendible to wider classes of formulas (see Definition \ref{defn:UFK}).

\subsection{Main Theorem}

We now turn to formulate the main theorem of this section. This theorem will allow us to devise several truth-table minimization algorithms. Moreover, it also proves the uniqueness of ROFs over any basis $B\subseteq \{\wedge,\vee,\oplus\}$ over $\Ftwo$. First, a few definitions:

\begin{defn}
Let $\odot$ be some symmetric operation over $\Fq$ on arbitrarily many inputs. $f\left(x_{1},\ldots,x_{n}\right)$ is called $\odot$-decomposable if there is a non-trivial partition of $\left\{ x_{i}\right\} _{i=1}^{n}=\bigcup_{i=1}^{t}X_{i}$ (i.e., $t>1$, $\forall i,X_{i}\ne\emptyset$ and $\forall i\ne j,X_{i}\cap X_{j}=\emptyset$) such that $f=\bigodot_{i=1}^{t}f_{i}\left(X_{i}\right)$ for some functions $f_{i}$. A $\odot$-decomposition $f=\bigodot_{i=1}^{t}f_{i}\left(X_{i}\right)$ is maximal if $\forall i,f_{i}$ is $\odot$-indecomposable. The functions $\left\{ f_{i}\right\} _{i=1}^{t}$ are called the factors of $f$. In out settings $\odot$ will either be the boolean $\wedge,\vee$ or the operations $+,\cdot$ over a finite field $\Fq$.
\end{defn}

Now, we use algebraic tools to show the following two theorems:

\begin{thm} \label{thm:mainWedgeVee}
Let $f:\mathbb{F}_{2}^{n}\to\mathbb{F}_{2}$. Then $f$ cannot be both $\wedge$ and $\vee$-decomposable.
\end{thm}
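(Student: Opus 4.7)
My plan is to argue by contradiction. Suppose $f = g_1(X_1) \wedge g_2(X_2) = h_1(Y_1) \vee h_2(Y_2)$ with non-constant $g_i, h_j$ and non-trivial partitions $\{X_1,X_2\}$ and $\{Y_1,Y_2\}$ of $[n]$. By Remark~\ref{remark:AllVariables}, every one of $g_1, g_2, h_1, h_2$ depends on all of its own variables. The two identities exactly say that $f^{-1}(1) = g_1^{-1}(1) \times g_2^{-1}(1)$ is a combinatorial rectangle with respect to $\{X_1,X_2\}$, while $f^{-1}(0) = h_1^{-1}(0) \times h_2^{-1}(0)$ is one with respect to $\{Y_1,Y_2\}$. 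I would parametrize $x \in \mathbb{F}_2^n$ as $(\alpha,\beta,\gamma,\delta)$ over the four blocks $A_{ij} = X_i \cap Y_j$ and split into two cases depending on whether all four $A_{ij}$ are non-empty.

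In the degenerate case some $A_{ij}$ is empty (this includes the case $\{X_1,X_2\}=\{Y_1,Y_2\}$, where $A_{12}=A_{21}=\emptyset$). A one-line slab argument then settles it: if, for example, $A_{12}=\emptyset$ then $h_2$ depends only on $A_{22}$, so picking $\alpha^* \in \mathbb{F}_2^{X_1}$ with $g_1(\alpha^*)=0$ makes $f$ vanish on the whole slab $\{\alpha^*\}\times\mathbb{F}_2^{X_2}$, and the identity $h_1 \vee h_2 = 0$ there forces $h_2(\delta)=0$ for every $\delta$, contradicting non-constancy of $h_2$. The other three empty-block sub-cases are symmetric.

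Now assume all four $A_{ij}$ are non-empty. Define $S = \{\alpha : g_1(\alpha,\cdot) \not\equiv 1\}$, $S' = \{\gamma : g_2(\gamma,\cdot) \not\equiv 1\}$, and analogously $T, T'$ for the second arguments of $g_1, g_2$. For each $\alpha \in S$ there is a $\beta^*$ with $g_1(\alpha,\beta^*)=0$; on the slab $\{\alpha\} \times \{\beta^*\} \times \mathbb{F}_2^{A_{21}} \times \mathbb{F}_2^{A_{22}}$ the function $f$ vanishes, so $h_1(\alpha,\cdot)\equiv 0$, and symmetrically $h_1(\cdot,\gamma)\equiv 0$ for $\gamma \in S'$. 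Conversely, if $\alpha_0 \notin S$ and $\gamma_0 \notin S'$ then $g_1(\alpha_0,\cdot) \equiv g_2(\gamma_0,\cdot) \equiv 1$, so $f \equiv 1$ on the corresponding slab, and $h_1(\alpha_0,\gamma_0)$ must be $1$ (otherwise $h_2 \equiv 1$ and then $f \equiv 1$). Together these pin down $h_1(\alpha,\gamma) = \bar p(\alpha)\cdot \bar q(\gamma)$ where $\bar p, \bar q$ are the indicators of the complements of $S, S'$, and an identical analysis yields $h_2(\beta,\delta) = \bar p'(\beta)\cdot \bar q'(\delta)$. Non-constancy of $h_1, h_2$ further ensures that each of $\bar p, \bar q, \bar p', \bar q'$ actually attains the value $1$.

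The contradiction then comes from a single substitution. Non-constancy of $g_1$ gives $T \neq \emptyset$; pick $\beta_0 \in T$, so $\bar p'(\beta_0)=0$ and $h_2(\beta_0,\cdot)\equiv 0$. Restricting the identity $f = g_1 \wedge g_2 = h_1 \vee h_2$ to the slab $\beta = \beta_0$ yields $g_1(\alpha, \beta_0) \cdot g_2(\gamma, \delta) = \bar p(\alpha)\bar q(\gamma)$, whose right-hand side is independent of $\delta$. If $g_1(\cdot, \beta_0) \equiv 0$ then $\bar p\, \bar q \equiv 0$, impossible because each of $\bar p, \bar q$ hits the value $1$. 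Otherwise I pick $\alpha^*$ with $g_1(\alpha^*,\beta_0)=1$ and deduce $g_2(\gamma,\delta) = \bar p(\alpha^*)\bar q(\gamma)$, so $g_2$ is independent of $\delta$, contradicting the fact that $g_2$ depends on the non-empty block $A_{22}$. The delicate step is the slab/indicator pinning down of $h_1$ and $h_2$ as products in the third paragraph; everything else is standard bookkeeping.
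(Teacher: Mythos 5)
Your proof is correct, but it takes a genuinely different route from the paper's. The paper passes to the unique multilinear polynomial $f^{ML}$, observes that $\wedge$-decomposability of $f$ is equivalent to reducibility of $f^{ML}$ and $\vee$-decomposability to reducibility of $f^{ML}+1$ (Observation \ref{obsrv:fMLirreducible} and Corollary \ref{cor:DecomposableReducible}), and then invokes Lemma \ref{lemma:PplusCirr} --- built on the Shpilka--Volkovich commutator $\Delta_{ij}$ and discrete partial derivatives --- to conclude that $P$ and $P+c$ cannot both factor non-trivially into variable-disjoint factors unless $c=0$. Your argument is instead purely combinatorial: it exploits the fact that $f^{-1}(1)$ is a rectangle with respect to $\{X_1,X_2\}$ while $f^{-1}(0)$ is a rectangle with respect to $\{Y_1,Y_2\}$, and derives a contradiction by slab restrictions. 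The details check out: the implicit reduction to two factors per decomposition is harmless by associativity, the factors are non-constant and depend on all their variables by Remark \ref{remark:AllVariables}, the four empty-block degenerate cases are handled, and the pinning of $h_1(\alpha,\gamma)=\bar p(\alpha)\bar q(\gamma)$ and $h_2(\beta,\delta)=\bar p'(\beta)\bar q'(\delta)$ followed by the substitution $\beta=\beta_0$ does produce the contradiction with $g_2$ depending on $A_{22}$. What your approach buys is elementarity and self-containedness --- no polynomial machinery at all. What the paper's approach buys is generality and reuse: Lemma \ref{lemma:PplusCirr} holds over any field and any constant $c$, the same toolkit proves Theorem \ref{thm:mainWedgeXor} (the $\wedge$ versus $\oplus$ case, where your rectangle argument does not obviously apply), and the reducibility characterizations are needed anyway for the decomposition-finding steps of the minimization algorithms.
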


\begin{thm} \label{thm:mainWedgeXor}
Let $f:\mathbb{F}_{2}^{n}\to\mathbb{F}_{2}$. Then $f$ cannot be both $\wedge$ and $\oplus$-decomposable.
\end{thm}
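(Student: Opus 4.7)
The plan is to translate the statement into a polynomial identity and argue via coefficient comparison and formal partial derivatives, in the spirit of the Shpilka--Volkovich techniques cited in the introduction. Every boolean function has a unique multilinear representation over $\Ftwo$, in which $\wedge$ becomes multiplication and $\oplus$ becomes addition. Suppose for contradiction that $f$ admits both $f=g(X)\cdot h(Y)$ and $f=g'(X')+h'(Y')$ with nontrivial disjoint partitions $\{X,Y\}$ and $\{X',Y'\}$ of the variables, and note that by Remark \ref{remark:AllVariables} each of $g,h,g',h'$ is nonconstant on its variable set. First, if $\{X,Y\}=\{X',Y'\}$, say $X=X'$ and $Y=Y'$, pick $y_0$ with $h(y_0)=0$ (possible since $h$ is nonconstant). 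Then $f(X,y_0)\equiv 0$, while the additive side gives $f(X,y_0)=g'(X)+h'(y_0)$. This forces $g'\equiv h'(y_0)$, a constant, contradicting the fact that $g'$ depends on every variable of $X'=X\neq\es$.

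Next consider the refinement $A=X\cap X'$, $B=X\cap Y'$, $C=Y\cap X'$, $D=Y\cap Y'$. When one of these four sets is empty but the partitions still differ, one of the sets $X',Y'$ is contained in one of $X,Y$, and after relabeling we may assume $X'\sus X$, so $Y\sus Y'$ and $X\setminus X'\neq\es$. Choose an assignment $b$ to $X\setminus X'$ so that $\tilde g(X')\triangleq g(X',b)$ is nonconstant; such $b$ exists because $g$ depends on every variable of $X'\sus X$. The identity becomes
\[
\tilde g(X')\cdot h(Y)\;=\;g'(X')+\tilde h'(Y),
\]
where $\tilde h'$ is $h'$ with the variables of $X\setminus X'$ fixed to $b$. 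Viewing both sides as polynomials in $Y$ with coefficients in $\Ftwo[X']$, the coefficient of any monomial $\prod_{i\in T}y_i$ with $T\sus Y$ and $T\neq\es$ yields $\tilde g(X')\cdot\hat h(T)=\hat{\tilde h'}(T)\in\Ftwo$. Since $\tilde g$ is nonconstant, this forces $\hat h(T)=0$ for every such $T$, so $h$ is a constant polynomial, contradicting its dependence on $Y$.

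Finally, when all of $A,B,C,D$ are nonempty, pick $a\in A$ and $b\in D$ and use the formal multilinear derivative $\partial_x p\triangleq p|_{x=1}+p|_{x=0}$. On one hand $\partial_a\partial_b(g\cdot h)=(\partial_a g)\cdot(\partial_b h)$ since $a\in X$ and $b\in Y$ lie in disjoint variable sets, and this product is nonzero because $g$ depends on $a$ and $h$ depends on $b$, so each factor is a nonzero polynomial on its own variable set. On the other hand $a\in X'$ and $a\notin Y'$, while $b\in Y'$ and $b\notin X'$, so $\partial_a h'=\partial_b g'=0$ and hence $\partial_a\partial_b(g'+h')=0$, a contradiction. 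The derivative manipulation itself is short; I expect the main subtle point to be ensuring the case analysis is exhaustive, i.e.\ verifying that whenever the partitions differ but some block of $\{A,B,C,D\}$ is empty, one can always swap labels of $X',Y'$ (and of $X,Y$) to reach the containment situation handled in the previous paragraph.
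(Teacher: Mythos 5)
Your proof is correct, and its final case is exactly the engine of the paper's proof: for a pair $a,b$ of variables lying in different blocks of \emph{both} partitions, the double discrete derivative $\partial_a\partial_b$ vanishes on the additive side but equals $(\partial_a g)(\partial_b h)\ne 0$ on the multiplicative side (the paper phrases the latter via $(x_i,x_j)$-decomposability and Proposition \ref{prop:Delta2Not0}). The organizational difference is that the paper proves a small combinatorial lemma (Proposition \ref{prop:2Partitions}) showing that such a crossing pair exists for \emph{any} two non-trivial partitions after relabelling within each partition, which collapses your three cases into one. Indeed, the derivative argument only needs $A=X\cap X'\ne\es$ and $D=Y\cap Y'\ne\es$ (not all four intersections nonempty, as you require): in your case 1 one has $A=X$ and $D=Y$, and in your case 2 one has $A=X'$ and $D=Y$, so your separate treatments of those cases (evaluation at a zero of $h$, and coefficient comparison over $\Ftwo[X']$) are correct but redundant. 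Your concern about exhaustiveness of the case split is thus resolved in the strongest possible way: the "bad" configurations you guard against never occur, and if you prefer a self-contained write-up, proving the crossing-pair lemma once (as the paper does) is cleaner than the ad hoc arguments. Everything else — the reduction to two-block decompositions, the use of Remark \ref{remark:AllVariables} to get that each factor depends on all of its variables, and the product rule for the discrete derivative — matches the paper.
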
 

As a corollary we get:

\begin{corollary} \label{cor:Main}
Let $f:\mathbb{F}_{2}^{n}\to\mathbb{F}_{2}$. At most one of the following is true: 
\begin{enumerate}
\item $f$ is $\wedge$-decomposable.
\item $f$ is $\oplus$-decomposable.
\item $f$ is $\vee$-decomposable.

\end{enumerate}
\end{corollary}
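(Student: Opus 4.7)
The corollary demands that we rule out each of the three pairwise compatibilities: $(\wedge,\vee)$, $(\wedge,\oplus)$, and $(\vee,\oplus)$. The first two are already handled by Theorems~\ref{thm:mainWedgeVee} and~\ref{thm:mainWedgeXor}, so the plan is to reduce the remaining case $(\vee,\oplus)$ to Theorem~\ref{thm:mainWedgeXor} via complementation.

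Assume for contradiction that $f$ admits both a nontrivial $\vee$-decomposition $f=\bigvee_{i=1}^{t} f_{i}(X_{i})$ with $t>1$ and a nontrivial $\oplus$-decomposition $f=\bigoplus_{j=1}^{s} g_{j}(Y_{j})$ with $s>1$. Consider $\bar{f}=1\oplus f$. By De Morgan's law applied to the $\vee$-decomposition,
\[
\bar{f}=\overline{\bigvee_{i=1}^{t} f_{i}(X_{i})}=\bigwedge_{i=1}^{t}\overline{f_{i}}(X_{i}),
\]
so $\bar{f}$ is $\wedge$-decomposable with respect to the same nontrivial partition $\{X_{i}\}$. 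On the other hand,
\[
\bar{f}=1\oplus\bigoplus_{j=1}^{s}g_{j}(Y_{j})=\overline{g_{1}}(Y_{1})\oplus g_{2}(Y_{2})\oplus\cdots\oplus g_{s}(Y_{s}),
\]
so $\bar{f}$ remains $\oplus$-decomposable with respect to the same nontrivial partition $\{Y_{j}\}$. Hence $\bar{f}$ is simultaneously $\wedge$- and $\oplus$-decomposable, contradicting Theorem~\ref{thm:mainWedgeXor}.

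The main subtlety — and the thing I would check carefully — is that the ``derived'' decompositions of $\bar{f}$ are genuinely nontrivial in the sense of the paper's definition: the index sets $\{X_{i}\}$ and $\{Y_{j}\}$ do not change under complementation, so nontriviality is inherited immediately, and the per-factor dependence on all variables required by Remark~\ref{remark:AllVariables} is preserved because complementation neither creates nor eliminates dependence on any variable of a factor. Combining this argument with Theorems~\ref{thm:mainWedgeVee} and~\ref{thm:mainWedgeXor}, at most one of the three decomposability conditions can hold for $f$, which is the claim of the corollary.
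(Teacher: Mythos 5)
Your proposal is correct and follows essentially the same route as the paper: the pairs $(\wedge,\vee)$ and $(\wedge,\oplus)$ are dispatched directly by Theorems~\ref{thm:mainWedgeVee} and~\ref{thm:mainWedgeXor}, and the remaining pair $(\vee,\oplus)$ is handled by passing to $\overline{f}$, which becomes $\wedge$-decomposable by De Morgan while staying $\oplus$-decomposable after absorbing the constant $1$ into one factor, contradicting Theorem~\ref{thm:mainWedgeXor}. This is exactly the complementation argument the paper uses in its case analysis.
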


In order to prove Theorems \ref{thm:mainWedgeVee} and \ref{thm:mainWedgeXor}, we present the following definitions and observations (we denote by $f^{ML}$ the unique multilinear polynomial consistent with $f:\mathbb{F}_{2}^{n}\to\mathbb{F}_{2}$):

\begin{obsrv} \label{obsrv:fMLirreducible}
Let $f:\mathbb{F}_{2}^{n}\to\mathbb{F}_{2}$. $f$ is $\wedge$-decomposable iff $f^{ML}$ is reducible.
\end{obsrv}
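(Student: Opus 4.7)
The plan is to prove both directions by viewing $f^{ML}$ as an element of the polynomial ring $\mathbb{F}_{2}[x_{1},\ldots,x_{n}]$, and exploiting the fact that on $\{0,1\}^{n}$ the operation $\wedge$ coincides with multiplication in $\mathbb{F}_{2}$.

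For the forward direction, suppose $f = \bigwedge_{i=1}^{t} f_{i}(X_{i})$ for some non-trivial partition $\{X_{i}\}_{i=1}^{t}$ of the variables. Since $\wedge$ equals multiplication on $\{0,1\}$, we have $f(x) = \prod_{i=1}^{t} f_{i}^{ML}(x)$ for every $x \in \{0,1\}^{n}$. Because the $X_{i}$ are pairwise disjoint, the product $\prod_{i=1}^{t} f_{i}^{ML}$ is itself a multilinear polynomial in $\mathbb{F}_{2}[x_{1},\ldots,x_{n}]$. By uniqueness of the multilinear representation we conclude $f^{ML} = \prod_{i=1}^{t} f_{i}^{ML}$. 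Under the global assumption of Remark~\ref{remark:AllVariables}, each factor depends on variables, hence is non-constant, so this is a genuine factorization and $f^{ML}$ is reducible.

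For the backward direction, suppose $f^{ML} = p \cdot q$ with $p,q \in \mathbb{F}_{2}[x_{1},\ldots,x_{n}]$ both non-constant. The key observation is that $p$ and $q$ must involve disjoint sets of variables: if some $x_{i}$ appeared in both $p$ and $q$, then the product $p \cdot q$ would contain a monomial with $x_{i}^{2}$, contradicting the multilinearity of $f^{ML}$. Consequently $p$ and $q$ are themselves multilinear, and each evaluates to a $\{0,1\}$-valued function on $\{0,1\}^{n}$. Letting $X_{p},X_{q}$ be their respective (disjoint, non-empty) variable sets, we obtain $f(x) = p(x) \cdot q(x) = p(x) \wedge q(x)$ on $\{0,1\}^{n}$, exhibiting $f$ as $\wedge$-decomposable.

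The only mild subtlety — and the single place where one has to be a bit careful — is the disjointness argument in the backward direction: one must work in the honest polynomial ring $\mathbb{F}_{2}[x_{1},\ldots,x_{n}]$ rather than the quotient by $\langle x_{i}^{2}-x_{i}\rangle$, so that the clash ``$x_{i}^{2}$ cannot appear in a multilinear polynomial'' is meaningful. Once this is stated cleanly, both directions are essentially a one-line translation between the boolean operation $\wedge$ and multiplication in $\mathbb{F}_{2}$.
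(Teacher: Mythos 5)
Your proof is correct and follows essentially the same route as the paper's: translate $\wedge$ into multiplication, use variable-disjointness to get multilinearity of the product and invoke uniqueness of the multilinear representation in one direction, and use multilinearity to force the factors to be variable-disjoint in the other. You are in fact slightly more careful than the paper on the two points it glosses over (non-constancy of the factors via Remark~\ref{remark:AllVariables}, and the disjointness of the factor supports --- though for the latter the cleanest justification is that degrees in $x_i$ add under multiplication in the integral domain $\mathbb{F}_2[x_1,\ldots,x_n]$, which rules out any cancellation of the $x_i^2$ terms).
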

\begin{proof}
Assume $f$ is $\wedge$-decomposable. Let $f=\bigwedge_{i=1}^{k}f_{i}\left(X_{i}\right)$ be the maximal decomposition. Represent every $f_{i}$ as a multilinear polynomial, denoted by $f_{i}^{ML}$. Obviously, $\prod_{i=1}^{k}f_{i}^{ML}$ is a multilinear polynomial (since the $f_{i}$'s are variable disjoint) consistent with $f$, and by the uniqueness of the multilinear representation it is exactly $f^{ML}$. In addition, every $f_{i}^{ML}$ is irreducible, since otherwise $f_{i}$ would be $\wedge$-decomposable, contradicting the definition of a maximal decomposition. \\ \indent Conversely, assume $f^{ML}$ is reducible. Let $f^{ML}=\prod_{i=1}^{k}f_{i}$ be the factorization. Since $f^{ML}$ is multilinear, then the $f_{i}$'s are variable disjoint, and therefore $f$ is $\wedge$-decomposable.
\end{proof}

As a simple corollary, we have:
\begin{corollary} \label{cor:DecomposableReducible}
Let $f:\mathbb{F}_{2}^{n}\to\mathbb{F}_{2}$. $f$ is $\vee$-decomposable iff $f^{ML}+1$ is reducible.
\end{corollary}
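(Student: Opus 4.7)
The plan is to reduce the $\vee$-case to the already-established $\wedge$-case (Observation \ref{obsrv:fMLirreducible}) by passing through the complement function. The bridge is De Morgan's law together with the fact that, over $\mathbb{F}_2$, negation is just addition of the constant $1$.

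First, I would prove the equivalence ``$f$ is $\vee$-decomposable $\iff$ $\neg f$ is $\wedge$-decomposable.'' If $f = \bigvee_{i=1}^t f_i(X_i)$ for some non-trivial partition $\{X_i\}$, then De Morgan gives $\neg f = \bigwedge_{i=1}^t \neg f_i(X_i)$, which is a non-trivial $\wedge$-decomposition (the variable partition is preserved and each factor $\neg f_i$ is still a function on $X_i$). The converse direction is symmetric, using the involution $\neg \neg f = f$.

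Next, I would identify the multilinear polynomial of $\neg f$. Since $\neg f(x) = 1 + f(x)$ for every $x \in \mathbb{F}_2^n$, the polynomial $f^{ML} + 1$ is multilinear and agrees with $\neg f$ on all inputs. By the uniqueness of the multilinear representation, $(\neg f)^{ML} = f^{ML} + 1$.

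Combining these two facts with Observation \ref{obsrv:fMLirreducible} applied to the function $\neg f$, we obtain the chain of equivalences
\begin{equation*}
f \text{ is } \vee\text{-decomposable} \iff \neg f \text{ is } \wedge\text{-decomposable} \iff (\neg f)^{ML} \text{ is reducible} \iff f^{ML}+1 \text{ is reducible},
\end{equation*}
which is exactly the claim. There is no real obstacle here; everything reduces to the already proved $\wedge$-case together with the elementary identity $(\neg f)^{ML} = f^{ML} + 1$ over $\mathbb{F}_2$.
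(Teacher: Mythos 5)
Your proof is correct and follows exactly the route the paper intends: the paper presents this as "a simple corollary" of Observation \ref{obsrv:fMLirreducible}, and its own later use of the result (in the proof of Theorem \ref{thm:mainWedgeVee}) confirms the intended argument is precisely the De Morgan reduction $f$ is $\vee$-decomposable $\iff$ $\overline{f}=f\oplus 1$ is $\wedge$-decomposable, combined with $(\neg f)^{ML}=f^{ML}+1$. Nothing to add.
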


Moreover, since the ring of polynomials over a field is a unique factorization domain, we have:
\begin{corollary} \label{cor:UniquenessOfDecomposition}
Let $f:\mathbb{F}_{2}^{n}\to\mathbb{F}_{2}$. Then its maximal $\wedge$ (resp. $\vee$) decomposition is unique (if exists).
\end{corollary}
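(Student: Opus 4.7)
The plan is to reduce the uniqueness of the Boolean $\wedge$-decomposition to the uniqueness of polynomial factorization over $\mathbb{F}_2$, using Observation \ref{obsrv:fMLirreducible} as the bridge between the combinatorial and algebraic worlds, and then to deduce the $\vee$ case via the transformation $f \mapsto f^{ML}+1$ using Corollary \ref{cor:DecomposableReducible}.

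More concretely, first I would observe that if $f = \bigwedge_{i=1}^{t} f_i(X_i)$ is a maximal $\wedge$-decomposition into variable-disjoint factors, then each $f_i$ is $\wedge$-indecomposable by definition, and (exactly as in the proof of Observation \ref{obsrv:fMLirreducible}) since the $X_i$'s are pairwise disjoint, the product $\prod_{i=1}^{t} f_i^{ML}$ is multilinear and computes $f$, hence by uniqueness of multilinear representation equals $f^{ML}$. By Observation \ref{obsrv:fMLirreducible}, each $f_i$ being $\wedge$-indecomposable means that $f_i^{ML}$ is irreducible in $\mathbb{F}_2[x_1,\ldots,x_n]$. Thus any maximal $\wedge$-decomposition of $f$ yields a factorization of $f^{ML}$ into irreducibles in $\mathbb{F}_2[x_1,\ldots,x_n]$.

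Next, I would invoke the fact that $\mathbb{F}_2[x_1,\ldots,x_n]$ is a unique factorization domain, whose only units are the nonzero constants $\mathbb{F}_2^* = \{1\}$; therefore the irreducible factorization of $f^{ML}$ is unique up to reordering of factors, with no unit ambiguity. Given two maximal $\wedge$-decompositions $f = \bigwedge_{i=1}^{t} f_i(X_i) = \bigwedge_{j=1}^{s} g_j(Y_j)$, both induce irreducible factorizations $f^{ML} = \prod_i f_i^{ML} = \prod_j g_j^{ML}$, so $t = s$ and (after reordering) $f_i^{ML} = g_i^{ML}$ for each $i$, whence $f_i = g_i$ and $X_i = Y_i$ (the variable sets being recoverable as the variables appearing in each irreducible factor). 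This establishes the uniqueness in the $\wedge$ case.

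For the $\vee$ case, I would apply Corollary \ref{cor:DecomposableReducible}: a maximal $\vee$-decomposition of $f$ corresponds to an irreducible factorization of $f^{ML}+1$ in the same UFD, so the identical argument applies verbatim. I do not anticipate a serious obstacle here, since the heavy lifting has already been done by Observation \ref{obsrv:fMLirreducible} and Corollary \ref{cor:DecomposableReducible}; the only subtle point is verifying that distinct irreducible factors really do live on disjoint variable sets (which is automatic since $f^{ML}$ is multilinear, so any irreducible factor is also multilinear, and two factors sharing a variable would force that variable to appear quadratically in the product).
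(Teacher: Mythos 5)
Your proposal is correct and follows essentially the same route as the paper: the paper derives this corollary directly from Observation \ref{obsrv:fMLirreducible} and Corollary \ref{cor:DecomposableReducible} together with the one-line remark that the polynomial ring over a field is a unique factorization domain. You have simply spelled out the details the paper leaves implicit (trivial units over $\Ftwo$, recovering the variable sets from the irreducible factors, and disjointness of those sets via multilinearity), all of which are accurate.
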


Now, we give the algebraic tools to be used in the proof of Theorems \ref{thm:mainWedgeVee} and \ref{thm:mainWedgeXor}.

\begin{defn} \cite[Definition 4.5]{SV10} For any polynomial $P\in\mathbb{F}\left[x_{1},\ldots,x_{n}\right]$, define the multilinear commutator of $x_{i}$ and $x_{j}$ as: 
\begin{eqnarray*}
\Delta_{ij}P=P\vert_{\begin{array}{c}
x_{i}=1\\
x_{j}=1
\end{array}}\cdot P\vert_{\begin{array}{c}
x_{i}=0\\
x_{j}=0
\end{array}}-P\vert_{\begin{array}{c}
x_{i}=1\\
x_{j}=0
\end{array}}\cdot P\vert_{\begin{array}{c}
x_{i}=0\\
x_{j}=1
\end{array}}
\end{eqnarray*}
\end{defn}
 
\begin{defn} \cite[Definition 4.2]{SV10} A polynomial $Q$ will be called $\left(x_{i},x_{j}\right)$-decomposable if $Q$ does not have any irreducible factor depending both on $x_{i}$ and $x_{j}$.
\end{defn}

\begin{defn} \label{defn:partialDerivative} \cite[Definition 2.5]{SV08} For a polynomial $Q\in\mathbb{F}\left[x_{1},\ldots,x_{n}\right]$ define the discrete partial derivative with respect to $x_{i}$ as: $\frac{\partial Q}{\partial x_{i}}=Q\vert_{x_{i}=1}-Q\vert_{x_{i}=0}$.
\end{defn}

We will use the following propositions:

\begin{prop} \label{prop:DeltaQis0}
\cite[Lemma 4.6]{SV10} Let $Q$ be a multilinear polynomial. Then $Q$ is $\left(x_{i},x_{j}\right)$-decomposable $\iff\Delta_{ij}Q=0$.
\end{prop}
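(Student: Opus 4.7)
The plan is to work in the polynomial ring $R = \mathbb{F}[x_k : k \ne i,j]$ and view $Q$ as an element of $R[x_i, x_j]$. Since $Q$ is multilinear, it has a unique normal form
\[
Q = \alpha \, x_i x_j + \beta \, x_i + \gamma \, x_j + \delta,
\]
with $\alpha,\beta,\gamma,\delta \in R$ multilinear. A direct substitution into the definition of $\Delta_{ij}$ gives
\[
\Delta_{ij} Q = (\alpha + \beta + \gamma + \delta)\,\delta - (\beta + \delta)(\gamma + \delta) = \alpha\delta - \beta\gamma,
\]
so it suffices to prove that $Q$ is $(x_i, x_j)$-decomposable iff $\alpha\delta = \beta\gamma$.

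For the easy direction ($\Rightarrow$): since $Q$ is multilinear, any factorization of $Q$ into polynomials must be variable-disjoint, because any variable common to two factors would produce a squared term in the product. Using the assumption that no irreducible factor of $Q$ depends on both $x_i$ and $x_j$, group the irreducible factors by their dependence to write $Q = A \cdot B$ where $A$ does not depend on $x_j$ and $B$ does not depend on $x_i$. Then $Q|_{x_i = a,\, x_j = b} = A|_{x_i = a}\cdot B|_{x_j = b}$ for every $a,b \in \{0,1\}$, and substituting into the definition of $\Delta_{ij}$ the two products cancel, giving $\Delta_{ij}Q = 0$.

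For the converse ($\Leftarrow$), assume $\alpha \delta = \beta \gamma$ and exploit that $R$ is a UFD. If $\gamma = \delta = 0$ then $Q = x_i(\alpha x_j + \beta)$ is already $(x_i, x_j)$-decomposable, and the remaining degenerate cases (some but not all of $\alpha,\beta,\gamma,\delta$ vanish) reduce similarly to obvious factorizations. Otherwise let $g = \gcd(\gamma, \delta)$ and write $\gamma = g\gamma'$, $\delta = g\delta'$ with $\gcd(\gamma', \delta') = 1$. The identity $\alpha\delta = \beta\gamma$ becomes $\alpha\delta' = \beta\gamma'$, so $\gamma' \mid \alpha$; setting $k = \alpha/\gamma' \in R$ yields $\alpha = k\gamma'$ and, plugging back, $\beta = k\delta'$. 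Hence
\[
Q = k\gamma' x_i x_j + k\delta' x_i + g\gamma' x_j + g\delta' = (k\,x_i + g)(\gamma' x_j + \delta'),
\]
with the first factor independent of $x_j$ and the second independent of $x_i$, so every irreducible factor of $Q$ depends on at most one of $x_i, x_j$.

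The main obstacle is the $(\Leftarrow)$ direction, and in particular verifying that $g, k, \gamma', \delta'$ are themselves multilinear. This follows from a small preliminary observation: if a multilinear $r \in R$ factors as $s \cdot t$, then both $s$ and $t$ are multilinear, since any variable appearing in both factors would produce a squared term in $r$. Consequently the GCD-based factorization stays inside the multilinear world. The other nuisance is the case analysis of vanishing coefficients (where the clean GCD computation breaks), but each such case yields a direct factorization with $x_i$ on one side and $x_j$ on the other, so the argument goes through.
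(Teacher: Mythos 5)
The paper does not actually prove this proposition: it is imported verbatim as \cite[Lemma 4.6]{SV10} and used as a black box, so there is no in-paper argument to compare against. Your proof is a correct, self-contained replacement. Writing $Q=\alpha x_ix_j+\beta x_i+\gamma x_j+\delta$ over $R=\mathbb{F}[x_k:k\ne i,j]$ and computing $\Delta_{ij}Q=\alpha\delta-\beta\gamma$ is right, the forward direction via grouping the irreducible factors into an $x_j$-free part $A$ and an $x_i$-free part $B$ (legitimate because degrees add over an integral domain, so multilinearity forces variable-disjoint factors) is sound, and the converse correctly exploits that $R$ is a UFD: from $\alpha\delta'=\beta\gamma'$ with $\gcd(\gamma',\delta')=1$ you get $\gamma'\mid\alpha$ and the explicit factorization $Q=(kx_i+g)(\gamma'x_j+\delta')$, after which every irreducible factor of $Q$ divides one of the two sides and hence misses $x_i$ or $x_j$. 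Two small points worth tightening if you write this up: (i) the cancellation steps ($g$ from $\alpha g\delta'=\beta g\gamma'$, and $\gamma'$ to conclude $\beta=k\delta'$) need $g\ne 0$ and $\gamma'\ne 0$ respectively, so the degenerate cases you defer are exactly $\gamma=0$ or $\delta=0$, where $\alpha\delta=\beta\gamma$ forces one of $\alpha,\beta$ to vanish and $Q$ factors (or drops a variable) immediately; and (ii) the multilinearity of $g,k,\gamma',\delta'$ is a pleasant consequence of your degree observation but is not actually needed, since $(x_i,x_j)$-decomposability only refers to the irreducible factors of $Q$ itself. This is essentially the same $2\times 2$ determinant/rank-one mechanism underlying the Shpilka--Volkovich lemma, so your argument is a faithful and correctly executed reconstruction rather than a divergent route.
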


\begin{prop} \label{prop:QplusCisSmthn}
\cite[Observation 3.2.14]{V12} If $Q$ is a multilinear polynomial over a field $\mathbb{F}$, then $\forall c\in\mathbb{F},\Delta_{ij}\left(Q+c\right)=\Delta_{ij}Q+c\cdot\frac{\partial^{2}Q}{\partial x_{i}\partial x_{j}}$. \\ \indent Note: $\frac{\partial^{2}Q}{\partial x_{i}\partial x_{j}}=\frac{\partial^{2}Q}{\partial x_{j}\partial x_{i}}$ by lemma 2.6 in \cite{SV08}. 
\end{prop}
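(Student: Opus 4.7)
The plan is to prove the identity by direct expansion, reducing everything to four field elements: the values of $Q$ at the four Boolean settings of $(x_i, x_j)$. This is a routine algebraic verification; there is no real obstacle.

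First, I would introduce the shorthand $a = Q\vert_{x_i=1, x_j=1}$, $b = Q\vert_{x_i=0, x_j=0}$, $p = Q\vert_{x_i=1, x_j=0}$, $q = Q\vert_{x_i=0, x_j=1}$, so that the definition of the multilinear commutator gives $\Delta_{ij} Q = ab - pq$ immediately. Since $c \in \mathbb{F}$ is a constant (independent of $x_i, x_j$), the restriction operators commute with addition by $c$, so
\[
\Delta_{ij}(Q+c) = (a+c)(b+c) - (p+c)(q+c).
\]
Expanding the two products, the $c^2$ terms cancel and the linear-in-$c$ terms collect to yield
\[
\Delta_{ij}(Q+c) = ab - pq + c\,(a + b - p - q) = \Delta_{ij} Q + c\,(a + b - p - q).
\]

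Second, I would unfold the mixed second partial derivative by applying Definition \ref{defn:partialDerivative} twice: $\frac{\partial Q}{\partial x_j} = Q\vert_{x_j=1} - Q\vert_{x_j=0}$, and then differentiating with respect to $x_i$ produces
\[
\frac{\partial^{2} Q}{\partial x_i \partial x_j} = Q\vert_{x_i=1, x_j=1} - Q\vert_{x_i=0, x_j=1} - Q\vert_{x_i=1, x_j=0} + Q\vert_{x_i=0, x_j=0} = a + b - p - q.
\]
Comparing this with the coefficient of $c$ obtained in the first step gives the claimed identity $\Delta_{ij}(Q+c) = \Delta_{ij} Q + c \cdot \frac{\partial^{2} Q}{\partial x_i \partial x_j}$.

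The only ingredient that needs any care is the observation that for a multilinear polynomial $Q$, the discrete partial derivative defined in Definition \ref{defn:partialDerivative} coincides with the usual formal derivative, so that the symmetry $\frac{\partial^{2} Q}{\partial x_i \partial x_j} = \frac{\partial^{2} Q}{\partial x_j \partial x_i}$ cited from Lemma~2.6 of \cite{SV08} is consistent with the calculation above (both iterations yield the same symmetric expression $a + b - p - q$). Since the entire argument is just a four-term expansion, it carries over verbatim to any field $\mathbb{F}$, matching the statement of the proposition.
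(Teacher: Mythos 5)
Your verification is correct: the paper does not prove this proposition at all — it is imported verbatim as Observation 3.2.14 of \cite{V12} — and your four-value expansion ($\Delta_{ij}(Q+c) = (a+c)(b+c)-(p+c)(q+c) = \Delta_{ij}Q + c(a+b-p-q)$, matched against the twice-unfolded discrete derivative) is exactly the routine computation that underlies the cited observation. One small bonus of writing it out: the argument never actually uses multilinearity of $Q$, only the definitions of the restriction operators and the discrete partial derivative, so the identity holds for arbitrary polynomials; multilinearity is only needed elsewhere (e.g.\ in Proposition \ref{prop:DeltaQis0} and Proposition \ref{prop:Delta2Not0}).
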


Finally, before proving Theorem \ref{thm:mainWedgeVee}, we note the following two simple lemmas, whose proof appears at the end of this chapter:

\begin{prop} \label{prop:2Partitions}
For any two non-trivial partitions $\left\{ X_{i}\right\} _{i=1}^{k},\left\{ X_{i}'\right\} _{i=1}^{k'}$ of $\left\{ x_{i}\right\} _{i=1}^{n}$, there exists $i,j\in\left[n\right]
 , i_{1},j_{1}\in\left[k\right]$ and $i_{2},j_{2}\in\left[k'\right]$ such that $i\ne j,i_{1}\ne j_{1},i_{2}\ne j_{2}$ and $x_{i}\in X_{i_{1}}\cap X_{i_{2}}',x_{j}\in X_{j_{1}}\cap X_{j_{2}}$. 
 \end{prop}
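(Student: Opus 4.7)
The plan is to proceed by contradiction. Suppose no pair of indices $(i,j)$ with the required property exists. Then for every two distinct variables $x_i,x_j$, they must lie in a common block of the first partition or in a common block of the second (equivalently, every ``cross-block'' pair in partition 1 is an ``intra-block'' pair in partition 2, and vice versa). I would then fix an arbitrary variable $x_a$ and let $A$ denote the unique block of $\{X_i\}_{i=1}^{k}$ that contains it, and $B$ the unique block of $\{X_i'\}_{i=1}^{k'}$ that contains it. The contradictory hypothesis forces every other variable to share either $A$ or $B$ with $x_a$, and therefore $\{x_1,\ldots,x_n\}=A\cup B$.

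Non-triviality of both partitions then supplies the contradiction. Because $\{X_i\}_{i=1}^{k}$ has at least two non-empty blocks, $A$ is a proper subset of the variable set, so there is some variable $x_b\notin A$; the covering equality $A\cup B=\{x_1,\ldots,x_n\}$ then forces $x_b\in B\setminus A$. Symmetrically, non-triviality of $\{X_i'\}_{i=1}^{k'}$ produces some $x_c\in A\setminus B$. The variables $x_b,x_c$ are distinct, they lie in different blocks of the first partition ($A$ contains $x_c$ but not $x_b$), and they lie in different blocks of the second ($B$ contains $x_b$ but not $x_c$). The pair $(x_b,x_c)$ is therefore exactly the witness whose existence we had assumed to be impossible, contradicting our hypothesis and establishing the claim.

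The proof is entirely elementary and I foresee no genuine obstacle. The only conceptual point is to recognize that negating the conclusion pins down a single pair of blocks, one from each partition, whose union would have to exhaust all $n$ variables; this is plainly incompatible with both partitions having at least two non-empty blocks.
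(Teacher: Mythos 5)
Your proof is correct. It takes a genuinely different (and arguably cleaner) route than the paper's. You argue by contradiction: negating the conclusion means every pair of variables shares a block in at least one of the two partitions, which, after fixing a reference variable $x_a$ with blocks $A$ and $B$, forces the global covering identity $A\cup B=\{x_1,\ldots,x_n\}$; non-triviality of each partition then hands you $x_b\in B\setminus A$ and $x_c\in A\setminus B$, which form the forbidden witness pair. The paper instead gives a direct construction: it fixes a variable $x_i$ with blocks $X_{i_1}$ and $X_{i_2}'$ and splits into two cases according to whether some block of the second partition not containing $x_i$ escapes $X_{i_1}$, producing the witness pair explicitly in each case. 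The two arguments have essentially the same combinatorial content, but your contradiction framing isolates the single structural fact that matters (one block from each partition would have to cover everything) and avoids the paper's somewhat laborious case analysis; the paper's version has the minor virtue of being constructive, exhibiting the pair directly rather than inside a reductio. Either way the result stands, and there is no gap in your argument.
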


Intuitively, Proposition \ref{prop:2Partitions} implies that we may choose a pair of variables $x_{i}\ne x_{j}$ such that each of them is in the intersection of two different sets from the above partitions.

\begin{prop} \label{prop:Delta2Not0}
If $Q\in\mathbb{F}\left[x_{1},\ldots,x_{n}\right]$ is $\left(x_{i},x_{j}\right)$-decomposable and multilinear, then $\frac{\partial^{2}Q}{\partial x_{i}\partial x_{j}}\ne 0$. Note that since $Q$ is multilinear, so are its partial derivatives, therefore $\frac{\partial^{2}Q}{\partial x_{i}\partial x_{j}}\ne 0$ both as polynomials and as functions.
\end{prop}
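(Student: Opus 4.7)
The plan is to exploit the unique factorization of $\mathbb{F}[x_1,\ldots,x_n]$ together with the rigidity that multilinearity forces on such factorizations, and then evaluate the discrete derivative through the resulting decomposition. Before anything else, I would note the intended standing assumption (consistent with Remark \ref{remark:AllVariables}) that $Q$ genuinely depends on both $x_i$ and $x_j$; without this the conclusion is false, e.g.\ when $Q = x_i$.

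First I would establish a structural lemma: if $Q$ is multilinear and $Q = \prod_{k=1}^r Q_k$ is its (unique up to units) factorization into irreducibles, then the $Q_k$'s are pairwise variable-disjoint. The reason is that if two distinct irreducible factors both depended on some variable $x_\ell$, then each contributes a factor of degree $\geq 1$ in $x_\ell$ and the product would have degree $\geq 2$ in $x_\ell$, contradicting multilinearity. In particular, every $Q_k$ is itself multilinear in its own variables.

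Using this, I would invoke $(x_i,x_j)$-decomposability to regroup the factorization: let $A$ be the product of all irreducible factors that involve $x_i$, let $B$ be the product of all irreducible factors that involve $x_j$, and let $C$ be the product of the remaining factors. By the hypothesis that no irreducible factor involves both $x_i$ and $x_j$, these three products are on pairwise disjoint sets of variables, so $Q = A \cdot B \cdot C$ with $A$ depending on $x_i$ (and not on $x_j$), $B$ depending on $x_j$ (and not on $x_i$), and $C$ depending on neither. Since $Q$ is assumed to depend on $x_i$ and on $x_j$, both $A$ and $B$ are nonconstant in the respective variables.

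Finally I would compute the second discrete derivative directly from Definition \ref{defn:partialDerivative}. Because $B,C$ are independent of $x_i$, one gets $\tfrac{\partial Q}{\partial x_i} = \tfrac{\partial A}{\partial x_i}\cdot BC$, and since $A,C$ are independent of $x_j$, a further application yields $\tfrac{\partial^2 Q}{\partial x_i \partial x_j} = \tfrac{\partial A}{\partial x_i}\cdot\tfrac{\partial B}{\partial x_j}\cdot C$. Each factor on the right is a nonzero polynomial: $\tfrac{\partial A}{\partial x_i}=A|_{x_i=1}-A|_{x_i=0}$ is nonzero precisely because the multilinear $A$ depends on $x_i$ (and analogously for $B$), while $C$ is a nonzero polynomial as a factor of $Q$. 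Since $\mathbb{F}[x_1,\ldots,x_n]$ is an integral domain, the product is nonzero, and multilinearity of $Q$ (hence of its partial derivatives) then gives nonvanishing as a function too. The only real obstacle here is the bookkeeping for the structural lemma on irreducible factors of a multilinear polynomial; once that is in hand, the rest is a direct computation.
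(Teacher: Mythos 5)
Your proof is correct and follows essentially the same route as the paper's: both split $Q$ into variable-disjoint factors separating $x_i$ from $x_j$ (justified by the fact that irreducible factors of a multilinear polynomial are variable-disjoint), apply the product rule for the discrete derivative, and conclude non-vanishing from the non-vanishing of each factor. You supply slightly more detail than the paper (the explicit unique-factorization argument for the decomposition, and the integral-domain step in place of the paper's explicit evaluation at a concatenated assignment), and you correctly flag the dependence on Remark \ref{remark:AllVariables}, which the paper also invokes.
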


Now, using the above we obtain the following more general lemma. Theorem \ref{thm:mainWedgeVee} will be an easy corollary of it.

\begin{lemma} \label{lemma:PplusCirr}
Let $P\left(x_{1},\ldots,x_{n}\right)$ be any multilinear polynomial over any field $\mathbb{F}$. Assume that the following non-trivial and maximal factorizations exist, for some $c\in\mathbb{F}$: 
\begin{eqnarray*}
P	=	\prod_{i=1}^{k}P_{i}\left(X_{i}\right) \\
P+c	=	\prod_{i=1}^{k'}P_{i}'\left(X_{i}'\right)
\end{eqnarray*}
Then $c=0$.
\end{lemma}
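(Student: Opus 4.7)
The plan is to reduce the statement to a single identity about the multilinear commutator $\Delta_{ij}$, exploiting the fact that the $\wedge$-decomposition is reflected in the factorization of the multilinear polynomial $P^{ML}$. The first step is to apply Proposition \ref{prop:2Partitions} to the two partitions $\{X_i\}_{i=1}^{k}$ and $\{X_i'\}_{i=1}^{k'}$. This yields a pair of distinct variables $x_i, x_j$ that lie in different blocks of \emph{both} partitions simultaneously. This is precisely the configuration needed to have $(x_i,x_j)$-decomposability for both $P$ and $P+c$ at the same time, since no single irreducible factor of $P$ (respectively $P+c$) can contain both $x_i$ and $x_j$.

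Next, I would translate $(x_i,x_j)$-decomposability into an equation via Proposition \ref{prop:DeltaQis0}: since both $P$ and $P+c$ are multilinear and $(x_i,x_j)$-decomposable (the latter being multilinear as $c$ is a constant), we obtain $\Delta_{ij}P = 0$ and $\Delta_{ij}(P+c) = 0$. Now I would plug these two equalities into the identity of Proposition \ref{prop:QplusCisSmthn}, namely $\Delta_{ij}(P+c) = \Delta_{ij}P + c\cdot \tfrac{\partial^{2}P}{\partial x_{i}\partial x_{j}}$, obtaining
\[
0 \;=\; 0 \;+\; c\cdot \frac{\partial^{2}P}{\partial x_{i}\partial x_{j}}.
\]

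Finally, I would conclude using Proposition \ref{prop:Delta2Not0}: since $P$ is multilinear and $(x_i,x_j)$-decomposable, its second derivative $\tfrac{\partial^{2}P}{\partial x_{i}\partial x_{j}}$ is a nonzero polynomial, and in particular a nonzero element of the coordinate ring. Hence the only way the product $c\cdot \tfrac{\partial^{2}P}{\partial x_{i}\partial x_{j}}$ can vanish in $\mathbb{F}[x_1,\dots,x_n]$ (an integral domain) is for $c$ itself to be $0$.

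The only subtle point, and the one I would check carefully before anything else, is the application of Proposition \ref{prop:2Partitions}: one must verify that because both factorizations are nontrivial (i.e., $k,k'>1$) and jointly cover the same variable set $\{x_1,\dots,x_n\}$, such a pair $(x_i,x_j)$ with the required cross-block property always exists; this is a purely combinatorial statement about two partitions of the same finite set, and is exactly what that proposition delivers. Everything else is a mechanical combination of the cited results of Shpilka--Volkovich and Volkovich, so this step is the only place where care is needed.
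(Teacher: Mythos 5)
Your proposal is correct and follows essentially the same route as the paper's proof: Proposition \ref{prop:2Partitions} to obtain a common pair $(x_i,x_j)$ of cross-block variables, Proposition \ref{prop:DeltaQis0} to get $\Delta_{ij}P=\Delta_{ij}(P+c)=0$, Proposition \ref{prop:QplusCisSmthn} to reduce this to $c\cdot\frac{\partial^{2}P}{\partial x_{i}\partial x_{j}}=0$, and Proposition \ref{prop:Delta2Not0} to conclude $c=0$. Your added remark that $\mathbb{F}[x_1,\ldots,x_n]$ is an integral domain makes the last step slightly more explicit than the paper's, but the argument is the same.
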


\begin{proof}
According to Proposition \ref{prop:2Partitions}, we may consider the partitions $\left\{ X_{i}\right\} _{i=1}^{k}$ and $\left\{ X_{i}'\right\} _{i=1}^{k'}$ and find $x_{i},x_{j}$
such that both $P$ and $P+c$ are $\left(x_{i},x_{j}\right)$-decomposable. From Proposition \ref{prop:DeltaQis0}, we get that $\Delta_{ij}\left(P\right)=\Delta_{ij}\left(P+c\right)=0$, and thus, by Proposition \ref{prop:QplusCisSmthn} we get $\Delta_{ij}\left(P+c\right)=\Delta_{ij}P+c\cdot\frac{\partial^{2}P}{\partial x_{i}\partial x_{j}}=c\cdot\frac{\partial^{2}P}{\partial x_{i}\partial x_{j}}=0$. Now, from Proposition \ref{prop:Delta2Not0} and since $P$ is $\left(x_{i},x_{j}\right)$-decomposable, we get that $\frac{\partial^{2}P}{\partial x_{i}\partial x_{j}}\ne 0$, and therefore $c=0$. 
\end{proof}

\begin{proof} (of Theorem \ref{thm:mainWedgeVee}) Assume for contradiction that $f$ is both $\wedge$ and $\vee$-decomposable. Using negation we infer that both $f$ and $\overline{f}=f\oplus 1$ are $\wedge$-decomposable. By Corollary \ref{cor:DecomposableReducible} we have that both $f$ and $f\oplus 1$ are decomposable, which contradicts lemma \ref{lemma:PplusCirr}.
\end{proof}

\begin{proof} (of Theorem \ref{thm:mainWedgeXor}) Assume for contradiction that $f$ is both $+$-decomposable and $\cdot$-decomposable (we use the algebraic notation $\cdot,+$ rather than the boolean one $\wedge,\oplus$ in this proof for convenience). Therefore we may write $f_{1}\left(X_{1}\right)+f_{2}\left(X_{2}\right)=g_{1}\left(Y_{1}\right)\cdot g_{2}\left(Y_{2}\right)$ such that $X_{1}\cap X_{2}=Y_{1}\cap Y_{2}=\es,X_{1}\cup X_{2}=Y_{1}\cup Y_{2}=\left\{ x_{i}\right\} _{i=1}^{n}$. We shall use the fact that Definition \ref{defn:partialDerivative} complies with the product and sum rules for multilinear polynomials, as ordinary derivative does \cite[Lemma 2.1.9]{V12}. By Proposition \ref{prop:2Partitions} we may choose $x_{i}\in X_{1}\cap Y_{1},x_{j}\in X_{2}\cap Y_{2}$. Since any multilinear polynomial $P$ depends on a variable $x_{k}$ iff $\frac{\partial P}{\partial x_{k}}\ne 0$ \cite[Lemma 2.1.8]{V12} we get 
\[
\frac{\partial^{2}f^{ML}}{\partial x_{i}\partial x_{j}}=\frac{\partial^{2}f_{1}}{\partial x_{i}\partial x_{j}}+\frac{\partial^{2}f_{2}}{\partial x_{i}\partial x_{j}}=0
\]
since $f_{1}$ does not depend on $x_{j}$, and $f_{2}$ does not depend on $x_{i}$. However, since $f^{ML}=g_1(Y_1)\cdot g_2(Y_2)$, we know that $f^{ML}$ is $(x_i,x_j)$-decomposable, and by Proposition \ref{prop:Delta2Not0} we get a contradiction.
\end{proof}

We shall now deduce Corollary \ref{cor:Main} using Theorems \ref{thm:mainWedgeVee} and \ref{thm:mainWedgeXor}.

\begin{proof} (of Corollary \ref{cor:Main}) We prove the following three claims for $i\in\left[3\right]$: If claim $i$ holds then claims $\left[3\right]\backslash\left\{ i\right\} $ does not hold.
\begin{itemize}
\item $i=1$. If 1 holds then $f$ is $\wedge$-decomposable, thus it cannot be $\vee$-decomposable due to Theorem \ref{thm:mainWedgeVee}, and 3 does not hold. Moreover, if 2 does hold we have that $f$ is both $\cdot$ and $+$ decomposable in $\bb{F}_{2}$, contradicting Theorem \ref{thm:mainWedgeXor}.
\item  $i=2$. If 2 holds then similarly, 1 does not hold. If 3 holds, then $\overline{f}$ is $\wedge$-decomposable, and we have: $\bigwedge_{i=1}^{t}\overline{g_{i}}\left(Y_{i}\right)=\bigoplus_{i=1}^{k}f_{i}\left(X_{i}\right)+1$ Thus $\overline{f}$ is both $+$-decomposable and $\cdot$-decomposable over $\bb {F}_{2}$, contradicting Theorem \ref{thm:mainWedgeXor}.
\item $i=3$. 1 cannot hold due to Theorem \ref{thm:mainWedgeVee}. 2 cannot hold from the same reason as in the previous case.

\end{itemize}
\end{proof}

Before presenting the algorithm for minimization of read-once formulas over $\{\vee,\wedge,\oplus\}$, we give the following useful observations:

\begin{obsrv} \label{obsrv:HowToFindDecompositions} \ 
\begin{enumerate}
\item $\wedge$-decomposability may be found in $poly(2^n)$ time. By defining 
\begin{eqnarray*}
\sum _{\ga \in\zo{n}} f(\ga)\cdot \prod_{i\vert\ga_i = 1} x_i \prod_{i\vert\ga_i =0}(x_i\oplus 1)
\end{eqnarray*}
and changing basis, we find the unique multilinear representation of $f$. Observation \ref{obsrv:fMLirreducible} shows that we may factor this polynomial (e.g., in $O(N^8)$ by the algorithm of \cite{Len85}) to find the multilinear representations of $f$'s factors, and convert them back to their truth-table representation. Notice that if $f$ is unate, a simpler algorithm is possible: find $f$'s minimal unate DNF. Construct a graph $G$ on $n$ vertices such that $\{i,j\}\in E(G) \Leftrightarrow $ there exists a term containing both $x_i,x_j$. Finding connected components in this graph will suffice.

\item Similarly, $\vee$-decomposability may be found in $poly(2^n)$ time.

\item $\oplus$-decomposability may also be found in $poly(2^n)$ time using an algorithm that searches connected components in the graph $G'$, defined according to the multilinear representation of $f$, as $G$ was constructed according to the unate DNF in Section 1.
\end{enumerate}
\end{obsrv}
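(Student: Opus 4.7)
My plan is to handle all three items by first reducing each decomposability question to a statement purely about the multilinear representation $f^{ML}\in\Ftwo[x_1,\ldots,x_n]$, and then solving that statement with elementary tools (connected components in an auxiliary ``co-occurrence'' graph on $[n]$, plus in the worst case one invocation of a multivariate polynomial factorization algorithm). The first step, common to all three items, is to compute $f^{ML}$ from the truth table in $poly(2^n)$ time. This is immediate from the sum-of-minterms expression
\begin{eqnarray*}
f^{ML}(x_1,\ldots,x_n)\;=\;\sum_{\ga\in\zo{n}} f(\ga)\cdot\prod_{i:\ga_i=1}x_i\cdot\prod_{i:\ga_i=0}(1\oplus x_i),
\end{eqnarray*}
which expands into at most $N\cdot 2^n=poly(N)$ monomials and collapses, via straightforward bookkeeping, into the unique multilinear polynomial in at most $2^n$ coefficients.

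For item 1, the key fact is Observation \ref{obsrv:fMLirreducible}, which says that $f$ is $\wedge$-decomposable iff $f^{ML}$ is a reducible polynomial, and that in this case the irreducible factors of $f^{ML}$ are themselves variable-disjoint and correspond to the factors of the maximal $\wedge$-decomposition. So it suffices to factor $f^{ML}$ over $\Ftwo$, which can be done in time polynomial in the dense size of $f^{ML}$ by a standard multivariate factorization algorithm. Once the variable-disjoint factors are in hand, each factor depends on at most $n$ variables and is itself multilinear, so converting each back to a truth table is a $poly(2^n)$ evaluation. I would actually avoid invoking heavy factorization machinery entirely and present the following shortcut: build the graph $G'$ on vertex set $[n]$ with an edge $\{i,j\}$ whenever the monomial set of $f^{ML}$ contains some monomial in which both $x_i$ and $x_j$ appear. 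Since multilinear factorization over a field respects variables (no irreducible factor can share a variable with another), the connected components of $G'$ are precisely the variable-supports of the maximal $\wedge$-factors; one then obtains each factor $f_i$ by restricting the monomials of $f^{ML}$ to the appropriate component and evaluating.

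For item 2, the reduction is $f$ is $\vee$-decomposable iff $\overline{f}=f\oplus 1$ is $\wedge$-decomposable (De Morgan), so I simply complement the truth table and apply the item-1 algorithm; this is $poly(2^n)$. For item 3, the observation is even cleaner: writing $f=\bigoplus_{i=1}^{t}f_i(X_i)$ with the $X_i$ pairwise disjoint is possible iff no monomial of $f^{ML}$ uses variables from two different parts, because $+$ over $\Ftwo$ of variable-disjoint multilinear polynomials is just the union of their monomial sets with no cancellation. Hence the same co-occurrence graph $G'$ from item 1 works verbatim: compute its connected components, group the monomials of $f^{ML}$ accordingly, and each group is the multilinear representation of one $\oplus$-summand $f_i$.

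The main obstacle I anticipate is not the algorithmic side but the justification that the graph-based shortcut for item 1 really yields a maximal \emph{$\wedge$}-decomposition, rather than merely a partition into variable-disjoint pieces whose product equals $f^{ML}$. What needs to be argued is that the connected-components partition is (i) realized by an actual factorization of $f^{ML}$ and (ii) maximal. For (i), note that the monomials of a product of variable-disjoint multilinear polynomials are precisely the products of monomials from each factor, so once we fix a partition $[n]=\bigsqcup C_\ell$ in which no monomial crosses parts, the polynomial $f^{ML}$ is automatically the product $\prod_\ell g_\ell$ where $g_\ell$ is the sum of monomials supported inside $C_\ell$, after pulling out the constant term appropriately. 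For (ii), by Corollary \ref{cor:UniquenessOfDecomposition} the maximal $\wedge$-decomposition is unique, and any refinement of the partition would split some connected component, contradicting the existence of a crossing monomial within that component. Once these two points are written carefully, the complexity bounds are immediate by inspection, and the statement follows.
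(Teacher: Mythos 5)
Your overall frame (compute $f^{ML}$ from the truth table, invoke Observation \ref{obsrv:fMLirreducible} to reduce $\wedge$-decomposability to reducibility of $f^{ML}$, handle $\vee$ by complementation, and handle $\oplus$ by a co-occurrence graph on the monomials of $f^{ML}$) matches the paper, and your item 3 is correct and is exactly the paper's argument. The problem is the ``shortcut'' you substitute for polynomial factorization in item 1. You claim that the connected components of the graph $G'$ on $[n]$, with an edge $\{i,j\}$ whenever some monomial of $f^{ML}$ contains both $x_i$ and $x_j$, are the variable supports of the maximal $\wedge$-factors. This is false: multiplication of variable-disjoint polynomials \emph{creates} crossing monomials. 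Take $f^{ML}=x_1x_2+x_1x_3=x_1\cdot(x_2+x_3)$. This is reducible, so $f$ is $\wedge$-decomposable, yet the monomials $x_1x_2$ and $x_1x_3$ make $G'$ connected on $\{1,2,3\}$, and your algorithm would declare $f$ $\wedge$-indecomposable. The same happens for $(x_1+x_2)(x_3+x_4)$, whose expansion $x_1x_3+x_1x_4+x_2x_3+x_2x_4$ yields a connected graph on $\{1,2,3,4\}$. In fact, since $f$ depends on all its variables (Remark \ref{remark:AllVariables}), every nonconstant factor contributes a nonconstant monomial, so the product always contains a monomial straddling any two factors, and $G'$ is connected \emph{whenever} $f$ is $\wedge$-decomposable. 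Worse, if $G'$ is disconnected then $f^{ML}$ splits as a sum of variable-disjoint polynomials, i.e.\ $f$ is $\oplus$-decomposable, and by Theorem \ref{thm:mainWedgeXor} (or Corollary \ref{cor:Main}) it then cannot be $\wedge$-decomposable. So your test for item 1 gives the wrong answer in both directions. The precise point where your justification breaks is step (i): for a product, $f^{ML}$ is \emph{not} ``the sum of monomials supported inside each part''; that identity holds for $\oplus$-decompositions only, and you have transported it to the $\wedge$ case where it fails.

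The fix is what the paper actually does: for general $f$, genuinely factor the multilinear polynomial $f^{ML}$ (e.g.\ via Lenstra's algorithm, in time polynomial in the dense representation); the irreducible factors are automatically variable-disjoint by multilinearity, and by Observation \ref{obsrv:fMLirreducible} and Corollary \ref{cor:UniquenessOfDecomposition} they give the unique maximal $\wedge$-decomposition. Your item 2 reduction ($\vee$-decomposability of $f$ equals $\wedge$-decomposability of $\overline{f}$, i.e.\ reducibility of $f^{ML}+1$) is fine in itself but inherits the broken item 1. Your item 3 stands as written, precisely because addition over $\Ftwo$ of variable-disjoint multilinear polynomials is a union of monomial sets with no cancellation and no crossing monomials, which is the property your graph argument actually needs.
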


\begin{remark}
Unlike $\wedge$ and $\vee$-decompositions, $\oplus$-decomposition is not unique, since constants may be distributed among the factors. Therefore we define a \textit{canonical} $\oplus$-decomposition, to be the decomposition that corresponds to $f^{ML} = \bigoplus_{i=1}^k f^{ML}_i \oplus c$ such that all of $f_i^{ML}$ are homogeneous polynomials.
\end{remark}

\subsection{Read Once Formulas Over $\{\neg,\oplus, \vee, \wedge\}$} \label{sctn:ROFXOR}

We now turn to present the efficient minimization algorithm for read-once formulas over $\Ftwo$ with the operation set $\{\vee, \wedge ,\oplus\}$.

\begin{defn} \label{defn:ROFoPLUS}
Denote by $ROF_{\oplus}$ the class of read once formulas with gates labelled $\vee,\wedge$ or $\oplus$, where the variables appear in their negated or positive form and no constants are allowed. The size of an $ROF_{\oplus}$ formula is defined to be its number of gates.
\end{defn}

The following algorithm finds the minimal $ROF_\oplus$ representation for a function $f$ given as its full truth-table. The algorithm works in a recursive manner, trying to find $\vee,\wedge$ or $\oplus$-decomposition as in Observation \ref{obsrv:HowToFindDecompositions}. In the case where $f$ is $\wedge$ or $\vee$-decomposable, the factorization is unique, and the algorithm proceeds recursively on $f$'s factors. If $f$ is $\oplus$-decomposable, it will find the canonical decomposition $f^{ML} = \bigoplus_{i=1}^k f^{ML}_i \oplus c$, and proceed in a recursive manner on the factors $f_1+c$ and $f_i$ for $i=2,\ldots,k$. As we shall see, negation does not change the size of the formula, and therefore this will suffice for our case.

\begin{algorithm}
\DontPrintSemicolon
Construct $f^{ML}$, the multilinear representation of $f$.\;

\lIf {$f$ is over a single variable $x_i$} {return either $x_i,\overline{x_i}$.}\\

\lIf {$f$ is $\wedge$-decomposable as $f=\bigwedge_{i=1}^k f_i$} {\Return $\bigwedge _{i=1}^{k} MinimizeROF_{\oplus}(T_{f_i})$.}\\

\lIf {$f$ is $\vee$-decomposable as $f=\bigvee_{i=1}^k f_i$} {\Return $\bigvee _{i=1}^{k} MinimizeROF_{\oplus}(T_{f_i})$.}

\If {$f$ is $\oplus$-decomposable as $f=[\bigoplus_{i=1}^k f_i]\oplus c$} {\Return $[\bigoplus _{i=2}^{k} MinimizeROF_{\oplus}(T_{f_i})] \oplus MinimizeROF_{\oplus}(T_{f_1\oplus c})$.}

Reject.\;

\caption{$MinimizeROF_{\oplus}(T_{f})$} \label{alg:MinimizeROFoPLUS}
\end{algorithm}

\begin{clm}
Algorithm \ref{alg:MinimizeROFoPLUS} finds the minimal $ROF_{\oplus}$ representation of the given function (if it exists, otherwise it rejects) in $poly(2^n)$ time.
\end{clm}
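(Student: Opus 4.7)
The plan is to establish three things: correctness when a representation exists, correct rejection when it does not, and polynomial runtime. I would dispose of minimality almost for free first: since $ROF_\oplus$ inherits fan-in $2$ from the ambient formula model and $f$ depends on all its $n$ variables by Remark~\ref{remark:AllVariables}, every valid $ROF_\oplus$ for $f$ has exactly $n$ leaves, hence exactly $n-1$ internal gates. Consequently any valid output is automatically minimal, so the interesting content is correctness and runtime.

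Correctness would be proved by induction on $n$. The base case $n=1$ is immediate: the only $ROF_\oplus$ formulas over one variable $x_i$ are $x_i$ and $\overline{x_i}$, and the algorithm returns whichever is consistent with $f$. For the inductive step, suppose $f$ admits an $ROF_\oplus$ formula $F$, and look at its top gate $\odot\in\{\wedge,\vee,\oplus\}$ with variable-disjoint children $F_1,\ldots,F_k$ computing $g_1,\ldots,g_k$; then $f=\bigodot g_i$ shows that $f$ is $\odot$-decomposable. Corollary~\ref{cor:Main} guarantees the three branches of the algorithm are mutually exclusive, so the algorithm enters exactly the correct branch. For $\odot\in\{\wedge,\vee\}$, uniqueness of the factorization (Corollary~\ref{cor:UniquenessOfDecomposition}) ensures that the algorithm's factors coincide with those induced by $F$, and the recursive calls succeed by the induction hypothesis. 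Conversely, if $f$ is not decomposable in any of the three ways yet depends on more than one variable, the top gate of any alleged $ROF_\oplus$ for $f$ would exhibit such a decomposition, a contradiction, so rejecting is correct.

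The main obstacle is the $\oplus$ case, since the decomposition is unique only up to distributing constants among the factors: writing the canonical form $f^{ML}=\bigoplus_{i=1}^{k}f_i^{ML}\oplus c$ (the variable-partition is forced by the connected components of the monomial graph from Observation~\ref{obsrv:HowToFindDecompositions}), the sub-formulas of $F$ compute $g_i=f_i\oplus c_i$ for some $c_i\in\Ftwo$ with $\bigoplus c_i=c$, while the algorithm lumps the entire $c$ into the first factor. To justify this choice I would prove a short auxiliary lemma by induction on $n$: if $h$ has an $ROF_\oplus$ of size $s$, then so does $\overline{h}$, by pushing the negation down via De Morgan through $\wedge,\vee$ gates and by flipping the polarity of exactly one child of each $\oplus$ gate encountered; both rules preserve the read-once property and the gate count. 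Hence each $f_i\oplus c_i$, and in particular $f_1\oplus c$, admits an $ROF_\oplus$ iff $f_i$ does, with the same size, so the algorithm's recursive calls succeed precisely when $F$ exists. For the runtime, Observation~\ref{obsrv:HowToFindDecompositions} shows each decomposability test and extraction costs $poly(2^n)$, and the recursion tree has at most $2n-1$ nodes (one per gate or leaf of the final formula), giving total time $n\cdot poly(2^n)=poly(2^n)$.
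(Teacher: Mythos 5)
Your proof follows essentially the same route as the paper's: induction on the number of variables, with Corollary~\ref{cor:Main} forcing the label of the top gate, uniqueness of the maximal $\wedge$/$\vee$-decomposition (Corollary~\ref{cor:UniquenessOfDecomposition}) handling those branches, the observation that negation can be pushed through $\wedge,\vee$ gates by De Morgan and absorbed into a single child of an $\oplus$ gate without changing the gate count handling the distribution-of-constants ambiguity in the $\oplus$ branch, and a recursion-tree count for the running time. You are also more explicit than the paper about why rejection is correct when no decomposition exists, which is a small improvement.

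The one place you deviate is the opening claim that minimality comes for free because every $ROF_{\oplus}$ for $f$ has exactly $n$ leaves and hence $n-1$ gates. That reasoning presupposes fan-in $2$, but Definition~\ref{defn:ROFoPLUS} does not impose this and the algorithm itself constructs $k$-ary gates (it returns $\bigwedge_{i=1}^{k}$ of the recursive results); under unbounded fan-in, $x_1\wedge x_2\wedge x_3$ admits both a one-gate and a two-gate read-once representation, so valid formulas for the same function can have different sizes and minimality is not vacuous. The paper instead argues minimality recursively: since any $ROF_{\oplus}$ for a $\odot$-decomposable $f$ has top gate $\odot$ whose children induce a coarsening of the unique maximal decomposition, the single top gate over minimal formulas for the \emph{maximal} factors can only have fewer gates. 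Your argument as written leaves this step unproved under the fan-in convention the algorithm actually uses; the fix is short, but it is the one genuine omission relative to the paper.
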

\begin{proof}
we infer from Corollary \ref{cor:Main} that if $f$ is $\odot$-decomposable (where $\odot \in\{\vee,\wedge,\oplus\}$) then every $ROF_\oplus$ for it must have a top gate $\odot$. If $\odot \in \{\vee, \wedge\}$ then the decomposition is unique, therefore the algorithm produces a correct minimal formula for $f$ iff it does so for each of its factors. Therefore the correctness follows by induction on the number of variables. If $f$ is $\oplus$-decomposable, then the decomposition is unique only up to distribution of constants among the factors, or namely, distribution of negations. However, it is easy to see that negation does not alter the size of the formula thanks to De-Morgan laws, while negating an $\oplus$ gate is simply negating one of its inputs. Therefore preforming the recursion over the functions in the canonical $\oplus$-decomposition results in a formula of the same size as in every other decomposition, and the claim follows in this case too. \\ 
\indent The complexity may be seen as polynomial using a simple analysis of the recursion tree. Since any node of the recursion corresponds to a variable (or to a non-decomposable function) there are at most $n$ nodes. A known combinatorial claim (provable using straightforward induction) is that a tree with $n$ nodes, such that any inner node has at least 2 sons, has at most $2n$ nodes. Every node of the recursion tree requires polynomial, time , and therefore the overall complexity is $poly(2^n)$.
\end{proof}

\subsection{Boolean and Arithmetic Read Once Formulas} \label{sctn:OurROF}

The following is a corollary of Theorem \ref{thm:Learning}:

\begin{corollary}
Boolean and arithmetic ROFs have an efficient truth-table minimization algorithm.
\end{corollary}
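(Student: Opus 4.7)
The plan is to verify the two hypotheses of Theorem \ref{thm:Learning} and invoke it directly.

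For hypothesis 1 (a proper and exact learning algorithm running in $2^{O(n)}$ time), I would cite the algorithms of \cite{AHK89} for boolean ROFs and \cite{BHH92} for arithmetic ROFs. Both of these learn ROFs using membership and equivalence queries in time polynomial in $n$, producing a hypothesis which is itself an ROF (proper) and which agrees with the target on every assignment (exact). A polynomial running time is comfortably within the $2^{O(n)}$ budget required by Theorem \ref{thm:Learning}, so when the queries are simulated from the truth table (as in the theorem's proof) the overall minimization algorithm will be polynomial in $N=2^{n}$.

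For hypothesis 2 (a model-minimization algorithm running in $2^{O(n)}$ time), I would appeal to the uniqueness of ROFs discussed in Section \ref{subsecn:ROF}: a read-once function has an essentially unique ROF representation. In the boolean case this uniqueness is exact, so $B$ can simply return its input unchanged. In the arithmetic case over $\Ftwo$ the uniqueness holds only up to distribution of negations (equivalently, added constants among the factors), but this ambiguity is resolvable by a single canonicalizing pass over the formula — precisely in the spirit of the canonical $\oplus$-decomposition defined after Observation \ref{obsrv:HowToFindDecompositions}. In either case $B$ runs in time linear in its input and hence well within $2^{O(n)}$.

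With both hypotheses established, the corollary is immediate from Theorem \ref{thm:Learning}. The one subtlety worth checking carefully is that the learning algorithms of \cite{AHK89, BHH92} really are both proper and exact in the precise sense demanded by Theorem \ref{thm:Learning}; this is analogous to the inspection performed in the remark preceding Corollary \ref{cor:LDLModelMinimization} for \cite{BBTV96}, and I do not expect any genuine obstacle there. I also note that size, while not explicitly mentioned here, is not an issue either: by uniqueness, producing \emph{any} ROF for $f$ already produces the (essentially) minimal one.
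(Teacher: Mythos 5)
Your proposal matches the paper's own proof: both verify the two hypotheses of Theorem \ref{thm:Learning} by citing the proper and exact learning algorithms of \cite{AHK89} and \cite{BHH92} (whose query counts are easily simulated from the truth table in $O(nN)$ time) and by letting the model-minimization algorithm $B$ simply return its input, justified by the uniqueness of the ROF representation. Your extra remark about canonicalizing negations in the arithmetic case over $\Ftwo$ is a harmless refinement of the same argument, consistent with the paper's footnote on that point.
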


\begin{proof}
Let $A,B$ be the algorithms mentioned in Theorem \ref{thm:Learning}. Since any function has at most one (boolean or arithmetic) $ROF$ representation, the minimization algorithm $B$ may just output its input. For the complexity of $A$ for boolean ROFs, the learning algorithm of \cite{AHK89} properly and exactly learns a read-once function using $O(n^4)$ time, $O(n^3)$ membership queries and $O(n)$ equivalence queries. Since simulating a membership query requires $O(1)$, and simulating an equivalence query requires $O(nN)$ time, the overall complexity of $A$ is $O(nN)\in poly(N)$. Similarly, the algorithm of \cite{BHH92} uses $O(n^5)$ membership and $n$ equivalence queries, therefore $A$ takes $O(nN)$ time also in the case of arithmetic ROFs.
\end{proof}

We note that the algorithm for read-once formulas over $\{\vee,\wedge,\oplus\}$ presented earlier may also be used to construct boolean read-once functions by only checking $\wedge$ and $\vee$-decomposability. Since every function that has a boolean $ROF$ representation is a unate function, we may use the unate DNF/CNF representation to check decomposability in $O(N)$ time (see Observation \ref{obsrv:HowToFindDecompositions}). This would yield an $O(nN)$ minimization algorithm, which is identical to what we get by using results from computational learning theory.
Moreover, in the case of arithmetic formulas over \textit{any} field, Theorem \ref{thm:mainWedgeXor} may be generalized to any finite field, by replacing $\oplus$ with addition and $\wedge$ with multiplication. The proof for this generalized version of Theorem \ref{thm:mainWedgeXor} will be identical, since every read-once formula over any field represents a multilinear polynomial, and no restriction on the size of the field exists during its proof. \\ \indent However, factoring a multilinear polynomial is a costly operation (albeit polynomial), and the resulting algorithm falls way behind the one which uses computational learning theory. Therefore, we omit the details.

\subsection{Minimization of boolean $ROF$ with Costly Negation gates} \label{sctn:CostlyBoolean}

It is common to disregard the cost of negation gates, and allow the inputs to be in a negated or non-negated form. Since minimization is our concern we consider the more natural case where inputs arrive only in their non-negated form, and negation gates may be placed on every edge (notice that the additional cost of the negation gates may blow up the size of the formula by a multiplicative factor of 2). It is easy to see that in this model the read-once representation is not unique (see Figure \ref{fig:BooleanROFExample}). However, we show how to find a minimal representation in this model under a certain restriction, given a formula in the ordinary model (i.e., where negations are only in the leaves). For convenience we shall assume that the root of the formulas has an outgoing edge.

\begin{figure}
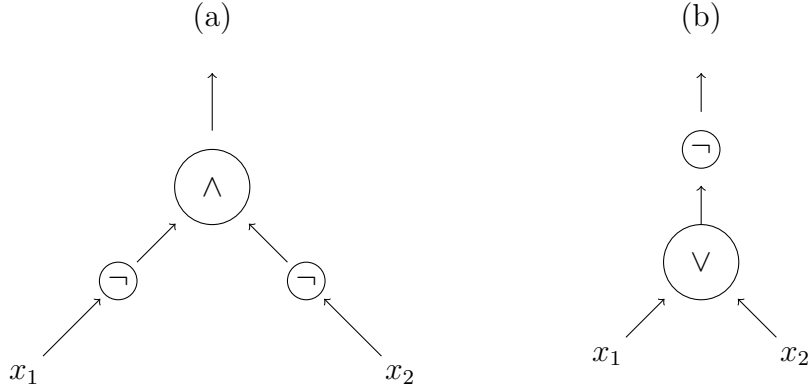
 
\begin{center}
\begin{pgfpicture}{0cm}{0cm}{11.5cm}{6cm}

\pgfputat{\pgfxy(3  ,5.25)}{\pgfbox[center,center]{(a)}}
\pgfputat{\pgfxy(9.5,5.25)}{\pgfbox[center,center]{(b)}}

\pgfsetendarrow{\pgfarrowto}


\pgfcircle[stroke]{\pgfpoint{3cm}{3cm}}{0.5cm}
\pgfputat{\pgfxy(3,3)}{\pgfbox[center,center]{$\wedge$}}

\pgfcircle[stroke]{\pgfpoint{1.75cm}{1.75cm}}{0.25cm}
\pgfputat{\pgfxy(1.75,1.75)}{\pgfbox[center,center]{$\neg$}}

\pgfcircle[stroke]{\pgfpoint{4.25cm}{1.75cm}}{0.25cm}
\pgfputat{\pgfxy(4.25,1.75)}{\pgfbox[center,center]{$\neg$}}

\pgfline{\pgfxy(0.75,0.75)}{\pgfxy(1.5,1.5)}
\pgfline{\pgfxy(2,2)}{\pgfxy(2.5,2.5)}
\pgfline{\pgfxy(5.25,0.75)}{\pgfxy(4.5,1.5)}
\pgfline{\pgfxy(4,2)}{\pgfxy(3.5,2.5)}
\pgfline{\pgfxy(3,3.75)}{\pgfxy(3,4.5)}

\pgfputat{\pgfxy(0.5,0.5)}{\pgfbox[center,center]{$x_1$}}
\pgfputat{\pgfxy(5.5,0.5)}{\pgfbox[center,center]{$x_2$}}


\pgfcircle[stroke]{\pgfpoint{9.5cm}{2cm}}{0.5cm}
\pgfputat{\pgfxy(9.5,2)}{\pgfbox[center,center]{$\vee$}}

\pgfcircle[stroke]{\pgfpoint{9.5cm}{3.5cm}}{0.25cm}
\pgfputat{\pgfxy(9.5,3.5)}{\pgfbox[center,center]{$\neg$}}

\pgfline{\pgfxy(8.5,1)}{\pgfxy(9,1.5)}
\pgfline{\pgfxy(10.5,1)}{\pgfxy(10,1.5)}
\pgfline{\pgfxy(9.5,2.5)}{\pgfxy(9.5,3)}
\pgfline{\pgfxy(9.5,4)}{\pgfxy(9.5,4.5)}

\pgfputat{\pgfxy(8.25,0.75)}{\pgfbox[center,center]{$x_1$}}
\pgfputat{\pgfxy(10.75,0.75)}{\pgfbox[center,center]{$x_2$}}

\end{pgfpicture}
\end{center}
\caption{\label{fig:BooleanROFExample} Read-once formula representation is not unique if negations are not only at the leaves.}
\end{figure}

\begin{defn}
We say that $neg\left(e\right)=1$ if the edge $e$ has a negation gate placed on it and $neg\left(e\right)=0$ otherwise. For a formula $T$ we write $neg_{T}\left(e\right)$ to indicate that $e$ belongs to $T$ and has a negation gate placed on it. We omit the notation $T$ when it is clear from context. \\For an edge $e$, we write $lower\left(e\right)$ to indicate the deeper (i.e., farther from the root) node of $e$, and $upper\left(e\right)$ to indicate its shallower node.
\end{defn}

Denote by $ROF$ the ordinary read-once formula model (where the negation are free and located at the leaves). Unfortunately, we could not find a general minimization algorithm for read-once formula over the basis $\left\{ \wedge,\vee,\neg\right\}$ but only for the following restricted model. 

\begin{defn}
$ROF_{\neg}$ is the class of all read once formulas with gates $\vee, \wedge$ or $\neg$, where the variables appear only in their positive form, and a negation gate does not appear between two gates with labels $\wedge, \vee$. The size of a $ROF_{\neg}$ formula is its number of gates.
\end{defn}

Imposing the additional restriction allows us to find the minimal read-once formula which is \textit{structurally} identical to the formula in the $ROF$ model (i.e., up to negated edges). Removing this restriction allows different formula structures, of whom we lack the tools to find. Although the initial question of adding cost to negation gates is a natural one, the additional restriction we impose may seem unnatural in some sense. However, it may serve as a first step towards a minimization algorithm of read-once formula without this restriction. This direction requires a better understanding of the structural effect that De-Morgan laws inflict upon read-once formulas, and we leave it as an interesting open problem.

\indent Denote by $f^{RO}$ the unique ordinary read-once representation for a function $f$. First, we show that any $ROF_{\neg}$ formula consistent with $f$ is a result of applying some operations on the nodes of $f^{RO}$.

\begin{defn}
For a node $v$ labelled $\vee$ or $\wedge$ define the operator $flip\left(v\right)$ as follows: 
\begin{itemize}
\item If $label\left(v\right)=\vee$ change it to $\wedge$ and vice versa. 
\item For every $e$ such that $v\in e$, do $neg\left(e\right)=\overline{neg\left(e\right)}$.
\end{itemize}
\end{defn}

It is easy to see that the operator $flip\left(v\right)$ does not change the functionality of the formula, thanks to De-Morgan laws. Moreover, we also have that performing $flip(v)$ twice does not change the formula at all.

\begin{thm} \label{thm:ROFFlipOperations}
Let $f^{S}$ be some $ROF_{\neg}$ consistent with $f$. Then there exists a series $\left(v_{1},\ldots,v_{k}\right)$ of nodes such that applying $\left(flip\left(v_{1}\right),\ldots,flip\left(v_{k}\right)\right)$ to $f^{RO}$ results in $f^{S}$. 
\end{thm}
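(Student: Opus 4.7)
My plan is a structural induction showing that $f^S$ and $f^{RO}$ share the same underlying $\vee/\wedge$ tree (up to the leaf-to-variable assignment), and that in fact $f^S$ is determined to be either literally $f^{RO}$ or $f^{RO}$ with every internal node flipped. The flip sequence required by the theorem is therefore either empty or consists of all the internal nodes, and the main work is proving this dichotomy.

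First I would establish, by induction on the depth of $f^{RO}$, that $f^S$ has the same underlying tree as $f^{RO}$. If $f$ depends on a single variable the statement is immediate. Otherwise assume (symmetrically) that $f^{RO}$'s root is $\wedge$, so by Corollary \ref{cor:UniquenessOfDecomposition} the maximal $\wedge$-decomposition $f=\bigwedge_i f_i$ is unique, and by Corollary \ref{cor:Main} $f$ is not $\vee$-decomposable. Using the $ROF_{\neg}$ restriction that an edge between two $\vee/\wedge$ gates carries no negation, I would show that $f^S$'s root is either $\wedge$ with no outgoing negation (its children effectively contributing the $f_i$'s) or $\vee$ with outgoing negation (children effectively contributing the $\neg f_i$'s). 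Applying the inductive hypothesis to each child subtree --- against $f_i^{RO}$ in the first case, or against the canonical ROF for $\neg f_i$ (which has the same underlying tree as $f_i^{RO}$, since negating a ROF just swaps $\wedge\leftrightarrow\vee$ everywhere and negates the leaves) in the second --- gives the same-tree conclusion.

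Next, for each internal node $v$ define $\phi(v)\in\{0,1\}$ to be $1$ iff $v$'s gate label in $f^S$ differs from its label in $f^{RO}$. I would show $\phi$ is constant across the tree by the following propagation argument: let $p$ be an internal node and compare the standalone subtree-functions at $p$ in the two formulas. If they agree, then both decompositions of this common function use the same gate type (Corollary \ref{cor:Main}) and the same factor functions (Corollary \ref{cor:UniquenessOfDecomposition}), so $\phi(p)=0$ and every child standalone function matches, forcing $\phi(c)=0$ for each child $c$. If they differ by negation, then De Morgan rewrites the $f^S$ decomposition at $p$ with the opposite gate and negated factors, forcing $\phi(p)=1$ and every child standalone function to be negated, hence $\phi(c)=1$. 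The parallel argument applied at leaves forces the leaf-edge negations of $f^S$ to all match those of $f^{RO}$ when $\phi=0$ and to all be toggled when $\phi=1$, and likewise for the root's outgoing edge.

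Given this dichotomy the flip sequence is explicit. If $\phi=0$ then $f^S=f^{RO}$ and the empty sequence suffices. If $\phi=1$ then flipping every internal node of $f^{RO}$ (in any order) produces $f^S$: each internal-to-internal edge gets its negation toggled exactly twice and so is restored to $0$ (preserving the $ROF_{\neg}$ constraint), while each leaf-edge and the root's outgoing edge is toggled exactly once, matching the target pattern. The main obstacle I anticipate is the first step --- propagating the uniqueness of decomposition through the induction while simultaneously tracking the outgoing-negation bit of the root and the leaf-edge negations, which obey different rules than internal-to-internal edges and must be carried carefully in the induction hypothesis.
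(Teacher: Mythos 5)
Your proposal is correct, but it takes a genuinely different route from the paper. The paper's proof is algorithmic and top-down: it runs a normalization procedure on $f^{S}$ that repeatedly selects a shallowest negated internal edge $e$ and performs $flip(lower(e))$, pushing all negations to the leaves; since the result is a read-once formula with negations only at the leaves, it must equal $f^{RO}$ by the uniqueness of the ordinary $ROF$ representation, and because $flip$ is an involution the recorded sequence of flips, reversed, carries $f^{RO}$ to $f^{S}$. You instead prove a structural rigidity statement: using Corollary \ref{cor:Main} and Corollary \ref{cor:UniquenessOfDecomposition} at every node, $f^{S}$ must have the same skeleton as $f^{RO}$ and must be either $f^{RO}$ itself or its global De Morgan dual, so the flip set is explicitly $\emptyset$ or the set of all internal nodes. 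Your parity argument on internal edges (a flipped set $S$ leaves an internal edge unnegated iff both or neither endpoint is in $S$, and the internal nodes form a connected subtree) is exactly what forces this dichotomy, and it is sound. What each approach buys: the paper's argument is shorter and uses only the uniqueness of ordinary $ROF$s as a black box, whereas yours requires carrying the outgoing-negation bit through the induction, as you anticipate. On the other hand, your conclusion is strictly stronger --- every function has at most two $ROF_{\neg}$ representations --- which would make the paper's subsequent theorem immediate and would collapse the $2^{O(n)}$ subset enumeration in the minimization corollary to checking just two candidates. One caveat, shared equally by the paper's proof and yours: both implicitly assume that $f^{S}$ contains no adjacent identical $\wedge$/$\vee$ gates, since otherwise its skeleton can differ from that of $f^{RO}$ (e.g.\ $x_{1}\wedge(x_{2}\wedge x_{3})$ versus a single fan-in-3 conjunction) and no sequence of flips, which preserve the skeleton, could succeed; your step establishing that each node's children realize the \emph{maximal} decomposition is precisely where this assumption is consumed.
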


\begin{proof}
Perform the following deterministic algorithm on $f^{S}$: While not all negations are adjacent to leaves: choose an edge $e$ of minimal depth such that $neg\left(e\right)=1$ and preform $flip$ on $lower\left(e\right)$. \\
\indent Since the depth of the edge at hand may only decrease, and since the fact that negated edges cannot appear between $\wedge$ and $\vee$ gates does not allow collapsing of identical gates after a $flip$ operation, this algorithm results in $f^{RO}$. Let $\left(u_{1},\ldots,u_{k}\right)$ be the series of nodes that $flip$ was applied over in this algorithm. It is not hard to see that applying the $flip$ operations in a reversed order on $f^{RO}$ results in $f^{S}$.
\end{proof}

The following theorem we will show that the optimal $ROF_{\neg}$ may be found by applying $flip$ exactly once on some subset of vertices, regardless of their order.

\begin{thm}
Let $f^{S}$ be the optimal $ROF_{\neg}$ consistent with a function $f$. Then there exists $A\sus V\left(f^{RO}\right)$ such that applying $flip\left(v_{i}\right)$ for all $v_{i}\in A$  in any order results in $f^{S}$.
\end{thm}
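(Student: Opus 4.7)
The plan is to invoke Theorem \ref{thm:ROFFlipOperations} and then reduce the resulting sequence of flips modulo two, relying on two elementary observations: the operator $flip$ is an involution, and flips at distinct nodes commute. Together these allow any sequence of flips to be replaced by the set-theoretic symmetric difference of the nodes that appear.

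First I would observe that $flip(v) \circ flip(v)$ is the identity: the label of $v$ is toggled twice (back to its original value) and each edge incident to $v$ has its negation bit toggled twice (also back to its original value), while nothing else in the formula is touched. Hence any repetition of the same node in a flip-sequence can be cancelled.

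Next I would verify that $flip(u)$ and $flip(v)$ commute whenever $u \ne v$. The label changes happen at two distinct nodes and so are obviously independent. For the negation bits on edges, the only case requiring care is when $u$ and $v$ are adjacent: then the unique edge connecting them is toggled once by each flip, but since toggling is XOR with $1$ it is order-independent; every other edge is incident to at most one of $u, v$ and is therefore affected by at most one of the two flips. Consequently the two operations commute as permutations of the space of annotated formulas.

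Finally, I would apply Theorem \ref{thm:ROFFlipOperations} to obtain a sequence $(v_1, \ldots, v_k)$ whose composition sends $f^{RO}$ to $f^{S}$. Using the commutativity from the previous step I would rearrange the sequence so that repeated occurrences of the same node become adjacent, then use the involution property to cancel them in consecutive pairs. What remains is a sequence in which every node appears at most once; let $A$ be the underlying set of these nodes. One last appeal to commutativity shows that applying the flips at the members of $A$ in any order yields exactly $f^{S}$, which is the claim. The main obstacle is purely the edge-case analysis in the commutativity step (in particular the behaviour on an edge whose endpoints are both flipped); once that is dispatched the rest is bookkeeping.
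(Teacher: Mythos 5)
Your proposal is correct and follows essentially the same route as the paper: both start from Theorem \ref{thm:ROFFlipOperations} to get a flip-sequence taking $f^{RO}$ to $f^{S}$, and both reduce it to a set by noting that only the parity of each node's number of appearances matters (the paper records this via the explicit formula $neg_{f^{RO}}(e)\oplus(k_u+k_v)\bmod 2$ for each edge, which is just your involution-plus-commutativity observation in computed form). Your explicit check of the adjacent-nodes case in the commutativity step is a slightly more careful writing of the same bookkeeping.
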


\begin{proof}
Since the skeleton of the formula (i.e., the formula without any $\neg$ gates and with no labels) of any $ROF_{\neg}$ formula is identical to the skeleton of $f^{RO}$, the size that oughts to be minimized is $\sum_{e}neg\left(e\right)$. Let $\left(v_{1},\ldots,v_{k}\right)$  be the series of nodes that were flipped in the transition from $f^{RO}$ to $f^{S}$, such that $\left|\left\{ v_{1},\ldots,v_{k}\right\} \right|$ is minimal. Notice that the contribution of every edge $e=\left(u,v\right)$ in $f^{S}$ to $\sum_{e}neg_{f^{S}}\left(e\right)$ is $neg_{f^{RO}}\left(e\right)\oplus\left(k_{u}+k_{v}\right)_{\mod2}$, when $k_{v},k_{u}$ is the number of times that the vertices $u,v$ appear in $\left(v_{1},\ldots,v_{k}\right)$. Therefore odd number of appearances of a node in $\left(v_{1},\ldots,v_{k}\right)$ may be reduced to a single appearance, even number of appearences may be erased, and the order does not matter.
\end{proof}

\begin{corollary}
The minimal $ROF_{\neg}$ may be found in $poly\left(2^{n}\right)$ time. 
\end{corollary}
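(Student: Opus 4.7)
The plan is to leverage the preceding two theorems, which reduce the problem to a purely combinatorial optimization over the (unique) ordinary read-once formula $f^{RO}$ for $f$. By the previous theorem, every optimal $ROF_{\neg}$ for $f$ is obtained from $f^{RO}$ by flipping some subset $A\subseteq V(f^{RO})$ of internal ($\wedge/\vee$) gates, in any order. Since the skeleton of the formula is preserved by flipping, the size to minimize is purely the total number of negated edges, which after flipping equals
\[
\mathrm{cost}(A)\;=\;\sum_{e=(u,v)\in E(f^{RO})}\!\bigl(neg_{f^{RO}}(e)\,\oplus\,[u\in A]\,\oplus\,[v\in A]\bigr),
\]
where $[w\in A]$ is taken to be $0$ whenever $w$ is a leaf (since $flip$ is only defined on $\wedge/\vee$ gates).

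The first step is to construct $f^{RO}$ from the truth table in $poly(2^n)$ time, using the minimization algorithm for ordinary boolean read-once formulas from Section \ref{sctn:OurROF} (or, equivalently, the recursive $\wedge/\vee$-decomposition procedure underlying Algorithm \ref{alg:MinimizeROFoPLUS}, restricted to the basis $\{\wedge,\vee\}$ applied to the unate DNF representation of $f$). If this procedure fails, then $f$ has no $ROF_\neg$ representation and we reject. Otherwise we label every edge of $f^{RO}$ by its value $neg_{f^{RO}}(e)\in\{0,1\}$, where an edge feeding a negated leaf $\overline{x_i}$ is declared to carry a negation.

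The second step is to minimize $\mathrm{cost}(A)$ over all admissible $A$. The key observation is that $f^{RO}$ is a tree with exactly $n$ leaves and with every internal node of fan-in at least $2$ (otherwise the node is redundant, contradicting the minimality of $f^{RO}$). Hence $|V(f^{RO})|\le 2n-1$ and in particular the number of internal gates is $O(n)$. Thus naive enumeration over all $2^{O(n)}=poly(2^n)$ candidate subsets $A$, computing $\mathrm{cost}(A)$ in $O(n)$ time per subset, yields a $poly(2^n)$ algorithm and completes the proof. (A cleaner $O(n)$-time alternative is a standard bottom-up tree DP: for each node $v$ and each value $x_v\in\{0,1\}$ of $[v\in A]$, store the minimum total negation-count in the subtree rooted at $v$ together with the edge joining $v$ to its parent; combine children by summing, and take the final minimum over $x_{root}\in\{0,1\}$.)

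The only nontrivial point is the polynomial bound on $|V(f^{RO})|$, which follows from minimality plus the fan-in-$\ge 2$ property noted above; everything else is mechanical given Theorem \ref{thm:ROFFlipOperations} and the preceding claim. Consequently the overall procedure runs in $poly(2^n)$ time, as required.
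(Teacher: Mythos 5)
Your proposal is correct and follows essentially the same route as the paper: construct the unique $f^{RO}$, observe it has $O(n)$ gates (since it has at most $n$ leaves and fan-in at least $2$), and brute-force over all $2^{O(n)}=poly(2^n)$ subsets of gates to flip, relying on the two preceding theorems for correctness. Your explicit cost formula and the remark about a linear-time tree DP are nice refinements, but the core argument is identical to the paper's.
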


\begin{proof}
Apply the ordinary algorithm to get the $ROF$ representation of $f$. Place a negation gate over every negated variable, and cancel its negation. 
Now preform:

\begin{itemize}
\item For each $S\sus V\left(f^{RO}\right)$.
	\begin{itemize}
		\item Define $t_{S}=f^{RO}$.
		\item For each $v\in S$ apply $flip\left(v\right)$ on $t_{S}$.
	\end{itemize}
\item Output the minimal $t_{S}$
\end{itemize}

The correctness follows from the previous theorems. The complexity is $poly\left(2^{n}\right)$ since $f^{RO}$ has at most $n$ leaves, therefore it has at most $2n$ gates, and the loop iterates at most $2^{2n}$ times. Each iteration may be done in $poly\left(n\right)$ time. 
\end{proof}

\subsection{Minimization of Read once formulas over $\left\{ \neg,\oplus,\vee,\wedge\right\}$ with costly negation gates} \label{sctn:CostlyXOR}

As in the previous section we show a minimization algorithm for read once formulas over the basis $\left\{ \neg,\vee,\wedge,\oplus\right\}$ under a similar restriction. Recall that $ROF_{\oplus}$ is the class of read-once formulas over the basis $\left\{ \wedge,\vee,\oplus\right\}$ when negations are in the leaves (see Definition \ref{defn:ROFoPLUS} in Section \ref{sctn:ROFXOR}). We define the following model, which relates to $ROF_{\oplus}$ as $ROF$ relates to $ROF_{\neg}$.

\begin{defn}
$ROF_{\oplus,\neg}$ is the class of read once formulas with gates $\wedge, \vee, \neg$ or $\oplus$, where inputs arrive only in their positive form, constant are not allowed and no $\neg$ gate is located between $\vee$ and $\wedge$ gates.
\end{defn}

Unlike boolean $ROF$, in $ROF_{\oplus}$, even restricting the negations to appear only in leaves does not imply uniqueness (it is easy to see that $\neg x_1 \oplus x_2$ is equivalent to $x_1 \oplus \neg x_2$). However, after choosing which variables are being negated, we get a unique representation for that choice (see Corollary \ref{cor:UniqueROFXorA}). Since there are at most $2^{n}$ options to choose which variables are being negated, if we manage to find them all, we may apply a theorem similar to the Theorem \ref{thm:ROFFlipOperations} to get that we only need to traverse all subsets of nodes with $\wedge$ or $\vee$ label, perform $flip$ to them and reduce redundant negations around $\oplus$ nodes (as will be explained later), to get the minimal $ROF_{\oplus,\neg}$ representation. \\

\indent As a tool in the minimization algorithm, we define the following model: For $a\in\zo{n}$, a $ROF_{\oplus,a}$ formula is a $ROF_{\oplus}$ model such that a variable $x_{i}$ appears in its negated form iff $a_{i}=1$. This model's relation to $ROF_{\oplus}$ may be seen as similar to the relation between DNF and unate DNF. The uniqueness of $ROF_{\oplus,a}$ for every $a$ will be a corollary of the following theorem:

\begin{thm} \label{thm:UniqueROFnoNegations}
Any function $f$ has at most one $ROF_{\oplus}$ representation without any negations.
\end{thm}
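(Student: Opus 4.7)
The plan is to proceed by induction on the number of variables $n$, using Corollary \ref{cor:Main} to fix the top gate of any candidate formula. The base case $n=1$ is immediate, since the only $ROF_\oplus$ without negations on a single variable $x_i$ is the formula consisting of the leaf $x_i$. For the inductive step, I will observe a key normalization: any $ROF_\oplus$ formula $T$ without negations satisfies $T(\vec 0)=0$, since each of the gates $\wedge,\vee,\oplus$ outputs $0$ on all-zero inputs, so by a trivial induction on the structure of $T$ the whole formula vanishes at $\vec 0$. Equivalently, the multilinear representation $f^{ML}$ of any such $f$ has constant term $0$.

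Now suppose $f$ has two $ROF_\oplus$ representations $T_1, T_2$ without negations. By Corollary \ref{cor:Main}, $f$ is decomposable with respect to at most one of $\wedge, \vee, \oplus$, so the top gates of $T_1$ and $T_2$ bear the same label. If the top label is $\wedge$ (resp.\ $\vee$), Corollary \ref{cor:UniquenessOfDecomposition} tells us that the maximal $\wedge$-decomposition (resp.\ $\vee$-decomposition) $f = \bigwedge_{i=1}^k f_i(X_i)$ is unique, so the subtrees of $T_1$ and $T_2$ hanging from the top gate correspond to the same variable partition and compute the same factors $f_i$. Each such subtree is itself a $ROF_\oplus$ without negations on $|X_i|<n$ variables, and by the induction hypothesis these subtrees coincide.

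The interesting case is when the top gate is $\oplus$. In the general $ROF_\oplus$ setting the $\oplus$-decomposition is only unique up to distributing a constant among the factors, but here I will show that the additional constraint of having no negations eliminates this freedom. Indeed, the maximal $\oplus$-decomposition of $f^{ML}$ has a unique underlying variable partition $\{X_i\}$ (obtained by grouping variables that co-occur in a monomial of $f^{ML}$), and the only freedom in choosing the factors $f_i^{ML}$ is the distribution of the constant term of $f^{ML}$. However, since each subtree of the top $\oplus$-gate is itself a $ROF_\oplus$ without negations, the preliminary observation forces every $f_i(\vec 0)=0$, i.e.\ every $f_i^{ML}$ has constant term $0$. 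This pins down each $f_i^{ML}$ uniquely as the sum of monomials of $f^{ML}$ supported on $X_i$, with constant term $0$; consistency requires $f(\vec 0)=0$, which we already know. Applying the induction hypothesis to each $f_i$ completes the step.

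The main obstacle is handling the $\oplus$-case, because that is the only kind of decomposition that is not a priori unique; the whole proof hinges on the structural observation that forbidding negations forces $f(\vec 0)=0$ at every subtree, which eliminates exactly the ambiguity (constant redistribution) that otherwise spoils uniqueness for $\oplus$. Everything else reduces cleanly to the machinery already developed in Corollaries \ref{cor:Main} and \ref{cor:UniquenessOfDecomposition}.
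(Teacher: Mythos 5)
Your proof is correct and follows essentially the same route as the paper's: induction plus Corollary \ref{cor:Main} to fix the top gate, uniqueness of $\wedge$/$\vee$-decompositions for those cases, and resolution of the constant-distribution ambiguity in the $\oplus$-case. Your observation that a negation-free, constant-free $ROF_\oplus$ must vanish on $\vec 0$ is precisely the justification for the step the paper leaves as ``it is not hard to prove that at most one of $f_i,\overline{f_i}$ has a negation-free representation,'' so you have in fact filled in the one detail the paper omits.
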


\begin{proof}
From Corollary \ref{cor:Main} we infer that for any function that has a representation in the $ROF_{\oplus}$ model, all consistent representations are of the same depth. Therefore we may define the term ''function of depth $d$'' as a function which has some $ROF_{\oplus}$ of depth $d$, and prove the theorem using induction on depth. For functions of depth 1 the claim is obvious. For a function of depth $d$, if it is $\wedge$ or $\vee$-decomposable, then the decomposition is unique, and the claim follows. If $f$ is $\oplus$-decomposable, let $\bigoplus_{i=1}^{k}f_{i}\oplus c$ be the canonical decomposition. We infer that in any $ROF_{\oplus}$ for $f$, all 2nd level function must be either $f_{i}$ or $\overline{f_{i}}$ for some $i\in\left[k\right]$. However, it is not hard to prove that at most one of $f_{i},\overline{f_{i}}$ has a $ROF_{\oplus}$ representation with no negations, and the claim follows.
\end{proof}

\begin{corollary} \label{cor:UniqueROFXorA}
For every $a\in\zo{n}$, any function $f$ has at most one $ROF_{\oplus,a}$ representation.
\end{corollary}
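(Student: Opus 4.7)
The plan is to reduce this statement to Theorem \ref{thm:UniqueROFnoNegations} by a change of variables. Fix $a\in\{0,1\}^n$ and consider the function $g:\{0,1\}^n\to\{0,1\}$ defined by $g(x)=f(x\oplus a)$. I claim there is a natural bijection between $ROF_{\oplus,a}$ representations of $f$ and $ROF_\oplus$ representations of $g$ that use no negation gates. Given a $ROF_{\oplus,a}$ formula $\varphi$ for $f$, replace every literal $\neg x_i$ (which by definition of $ROF_{\oplus,a}$ appears exactly at the indices $i$ where $a_i=1$) by the positive literal $x_i$; the resulting formula $\psi$ is a $ROF_\oplus$ with only positive literals, and by construction it computes $g$ since substituting $x\oplus a$ for $x$ in $\psi$ flips precisely the variables that were negated in $\varphi$. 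This map is clearly invertible: given a negation-free $ROF_\oplus$ for $g$, negate the leaves labelled by indices $i$ with $a_i=1$ to obtain a $ROF_{\oplus,a}$ for $f$.

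Now apply Theorem \ref{thm:UniqueROFnoNegations} to $g$: it has at most one negation-free $ROF_\oplus$ representation. Pulling this uniqueness back through the bijection yields at most one $ROF_{\oplus,a}$ representation of $f$, as required.

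The only point that needs a bit of care is verifying that the bijection really is well-defined on both sides, i.e., that after flipping the literals one still obtains a legal formula in the target class — this is immediate since negations in $ROF_{\oplus,a}$ appear only at the leaves (forced by the definition, which forbids $\neg$ gates between $\vee,\wedge$ gates and disallows constants), so the skeleton and gate labels are preserved under the variable substitution. The actual content of the proof — the uniqueness — has already been done in Theorem \ref{thm:UniqueROFnoNegations}, so there is no further obstacle.
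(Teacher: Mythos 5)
Your proof is correct and takes exactly the same route as the paper, which simply applies Theorem \ref{thm:UniqueROFnoNegations} to $f_a(x)=f(x\oplus a)$; you have merely spelled out the bijection between $ROF_{\oplus,a}$ representations of $f$ and negation-free $ROF_\oplus$ representations of $f_a$ that the paper leaves implicit.
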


\begin{proof}
Apply Theorem \ref{thm:UniqueROFnoNegations} on the function $f_a (x) = f\left(x\oplus a\right)$.
\end{proof}

Now, we state a definition and a theorem that will allow us to find the minimal $ROF_{\oplus,\neg}$ for a function, given $ROF_{\oplus,a}$ for every a.

\begin{defn}
Let $T$ be a formula in $ROF_{\oplus,\neg}$ form and $S\sus V\left(T\right)\cap\left\{ v\vert label\left(v\right)\in\left\{ \vee,\wedge\right\} \right\}$. Define $flip\left(S\right)$ as the following operation on $T$:
\begin{itemize}
\item $\forall v\in S$, perform $flip\left(v\right)$.
\item For all $v\in V\left(T\right)$ such that $label\left(v\right)=\oplus$, erase any pair of $\neg$ gates that are adjacent to $v$.
\end{itemize}
Denote the resulting formula by $T_{S}$.
\end{defn}

\begin{thm} \label{thm:ROF_OPLUS_NEG_CORRECTNESS}
Let $f:\bb{F}_{2}^{n}\to\bb{F}_{2}$, and let $ROF_{\oplus,a}\left(f\right)$ be the unique $ROF_{\oplus,a}$ formula for $f$ (if exists). Then the size of the minimal $ROF_{\oplus,\neg}$ for $f$ is: 
\begin{eqnarray*}
\min_{a\in\zo{n}}\left[\min_{S\sus V\left(ROF_{\oplus,a}\left(f\right)\right)\cap\left\{ \mbox{\ensuremath{\vee,\wedge} nodes}\right\} }\left|\left(ROF_{\oplus,a}\left(f\right)\right)_{S}\right|\right]
\end{eqnarray*}
\end{thm}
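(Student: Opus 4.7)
The plan follows the template of the $ROF_{\neg}$ treatment in Section \ref{sctn:CostlyBoolean}, but refined to handle the additional flexibility afforded by $\oplus$-gates, which absorb pairs of adjacent negations. Two inequalities must be established. The easy direction (upper bound) is that for every $a$ and every admissible $S$, the formula $(ROF_{\oplus,a}(f))_S$ is a valid $ROF_{\oplus,\neg}$-representation of $f$, so the minimum on the right is at least the minimum $ROF_{\oplus,\neg}$ size. This follows because each $flip(v)$ preserves functionality by De-Morgan, and because negations cancel in pairs around any $\oplus$-gate. The substantive direction is the converse.

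For the lower bound, I would first establish \emph{skeleton uniqueness}: every $ROF_{\oplus,\neg}$ formula for $f$ has the same underlying labelled tree (ignoring $\neg$-gates) as $f^{RO}$ over the basis $\{\wedge,\vee,\oplus\}$. This follows by induction on depth from Corollary \ref{cor:Main}: at each level the top gate label is determined by which of the three decomposability types holds for the corresponding sub-function, and the partition into factors is unique (canonically in the $\oplus$-case). In particular, for every $a$ such that $ROF_{\oplus,a}(f)$ exists (which exists for at least one $a$, assuming $f$ has some $ROF_{\oplus,\neg}$ at all), its skeleton coincides with that of any $ROF_{\oplus,\neg}$ formula for $f$.

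Next I would show the analogue of Theorem \ref{thm:ROFFlipOperations}: given any $ROF_{\oplus,\neg}$ formula $T$ for $f$, there exist $a \in \{0,1\}^n$ and a set $S$ of $\vee/\wedge$-nodes such that $T = (ROF_{\oplus,a}(f))_S$. This is proved constructively, by iteratively pushing $\neg$-gates toward the leaves in $T$. Encountering a $\vee$- or $\wedge$-gate at $\mathrm{lower}(e)$ on a negated edge $e$ triggers a $flip$ of that gate, decreasing the depth of the violating negation (the assumption that $\neg$-gates never separate $\vee/\wedge$-gates rules out collapsing). Encountering an $\oplus$-gate allows us either to cancel a pair of incident negations or to transport a single negation to a designated child (e.g.\ the leftmost), using $\neg x\oplus y = x\oplus \neg y$; this may introduce at most one persistent $\neg$ adjacent to the $\oplus$-gate, exactly the quantity preserved by the definition of $(\cdot)_S$. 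The process terminates with all negations at the leaves, defining the string $a$; reversing the flip operations yields the required $S$. A standard commutation-and-involution argument (as in the $ROF_{\neg}$ case) shows that the order of flips is immaterial and that $S$ can be taken to be a set rather than a multiset.

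Finally, combining the two directions gives the desired equality. The main obstacle I anticipate is step three: making the negation-pushing procedure unambiguous around $\oplus$-gates while respecting the rule that no $\neg$-gate is wedged between a $\vee$ and a $\wedge$. Concretely, one must verify that after moving a negation across a $\vee$ or $\wedge$ gate via $flip$, we do not create a forbidden configuration, and that the ``assign leftover negation to the canonical child'' choice at $\oplus$-gates faithfully corresponds to the minimization implicit in the definition of $(ROF_{\oplus,a}(f))_S$. Once these local moves are justified, the global argument is a routine induction on the number of $\neg$-gates.
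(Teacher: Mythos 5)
Your proposal follows essentially the same route as the paper: push negations toward the leaves (flipping $\vee/\wedge$ gates, transporting a negation through an $\oplus$-gate to a designated child) to reach some $ROF_{\oplus,a}(f)$, then reverse the process via a parity argument showing that only the set of odd-multiplicity flips matters. One small correction: your intermediate claim $T=(ROF_{\oplus,a}(f))_S$ as an exact equality of formulas is too strong --- around an $\oplus$-gate the surviving negation may end up on a different incident edge than in $T$ --- but the size equality $\left|T\right|=\left|(ROF_{\oplus,a}(f))_S\right|$, which is all the theorem needs and is what the paper actually proves, does follow from your parity bookkeeping.
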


\begin{proof}
We show that given the minimal $ROF_{\oplus,\neg}$ formula $P$ for $f$, there is a finite series of $flip$ operations on $\vee,\wedge$ nodes that we may apply on $P$ to get the unique $ROF_{\oplus,a}$ for some $a$. Afterwards we show that a formula of the same size as $P$ may be achieved by doing a $flip$ operation on some subset of nodes of the unique formula for that $a$. \\ \indent
Define the following algorithm $A$ on $P$: 

\begin{enumerate}
\item While there exists an inner negated edge (i.e., non-adjacent to a leaf),
	\begin{enumerate}
		\item Choose such an edge $e$ of minimum depth.
		\item If $v=lower\left(e\right)$ is a boolean gate ($\wedge$ or $\vee$) do $flip\left(v\right)$.
		\item If $lower\left(e\right)$ is a $\oplus$ gate, remove the negation from $e$ and move it to one of $lower\left(e\right)$'s sons (say, the 			leftmost one).
	\end{enumerate}
\item End While.
\end{enumerate}
It is easy to see that this procedure ends, and it results in the unique $f_{a}\triangleq ROF_{\oplus,a}\left(f\right)$ formula for some $a$. Now, consider $f_{a}$. We claim that 
\begin{eqnarray*}
\min_{S\sus V\left(f_{a}\right)\cap\left\{ \mbox{\ensuremath{\vee,\wedge} nodes}\right\} }\left|\left(f_{a}\right)_{S}\right|=\left|P\right|.
\end{eqnarray*}
The direction $\ge$ is trivial since $P$ is minimal. For the other direction, notice first that due to the restriction that no $\neg$ gate is located between $\wedge$ and $\vee$ gates, no collapse of gates is possible during the execution of $A\left(P\right)$, and therefore the \textit{skeleton} (i.e., the tree that is resulted by erasing labels and negation gates) of $f_{a}$ and $P$ is identical. Second, let $K$ be the series of boolean gates that were flipped during the execution of $A$ on $P$ (which resulted in $f_{a}$), and let $\overline{K}$ be the set of elements that appear odd number of times in $K$. Observe that:

\begin{enumerate}

\item For any edge $e$ that contains no $\oplus$ node, we know (as in previous section) that $neg_{\left(f_{a}\right)_{\overline{K}}}\left(e\right)$ depends only in the parity of the number of appearances of $e$'s edges in $K$. Therefore $neg_{\left(f_{a}\right)_{\overline{K}}}\left(e\right)=neg_{P}\left(e\right)$. 

\item Let $v$ be any node of $P$ with $label\left(v\right)=\oplus$, we have that the contribution of \textit{all} edges adjacent to it in $A\left(P\right)=f_{a}$ is: (when $u_{e}$ is the other non-$\oplus$ vertex of $e$, and $K_{u_{e}}$ is its number of appearances in $K$)
\begin{eqnarray*}
\bigoplus_{e\vert v\in e}\left[neg_{P}\left(e\right)\oplus\left(K_{u_{e}}\right)_{mod2}\right]=\left(\bigoplus_{e\vert v\in e}neg_{P}\left(e\right)\right)\oplus\left(\bigoplus_{u_{e}\vert v\in e}\left(K_{u_{e}}\right)_{mod2}\right)
\end{eqnarray*} 

\end{enumerate}
From 1 and 2 we deduce that applying $flip\left(\overline{K}\right)$ to $f_{a}$ will result in a formula with the same size as $P$, and the claim follows.
\end{proof}

\begin{clm}
The above algorithm finds the minimal $ROF_{\oplus,\neg}$ in $poly(2^n)$ time.
\end{clm}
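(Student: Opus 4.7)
The plan is to invoke Theorem \ref{thm:ROF_OPLUS_NEG_CORRECTNESS} directly: the algorithm enumerates all $a\in\zo{n}$, for each one computes (if it exists) the unique $ROF_{\oplus,a}$ representation of $f$ using Algorithm \ref{alg:MinimizeROFoPLUS} applied to the shifted function $f_a(x)=f(x\oplus a)$, and then enumerates all subsets $S$ of its $\vee,\wedge$ nodes, evaluating the size of $(ROF_{\oplus,a}(f))_S$ for each, and returns the smallest formula found over all such $(a,S)$. Correctness is immediate from Theorem \ref{thm:ROF_OPLUS_NEG_CORRECTNESS}, which asserts that the minimum of $|(ROF_{\oplus,a}(f))_S|$ over all choices of $a$ and $S$ equals the size of the minimal $ROF_{\oplus,\neg}$ for $f$ (and cases in which no $ROF_{\oplus,a}$ exists simply contribute no candidate).

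For the running time, I would account for each component as a polynomial in $N=2^n$. There are $N$ choices of $a$; for each one, forming the truth table of $f_a$ is an $O(N)$-time truth-table relabelling. By the claim at the end of Section \ref{sctn:ROFXOR}, running Algorithm \ref{alg:MinimizeROFoPLUS} on $T_{f_a}$ takes $poly(N)$ time and either returns the unique $ROF_{\oplus,a}(f)$ or rejects. By Remark \ref{remark:AllVariables} we may assume $f$ depends on all $n$ variables, so the returned formula is read-once with at most $n$ leaves, hence at most $2n$ gates in total. In particular, the set of $\vee,\wedge$ nodes has size at most $2n$, giving at most $2^{2n}=N^{2}$ subsets $S$ to enumerate. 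Evaluating each $(ROF_{\oplus,a}(f))_S$ amounts to toggling $O(n)$ edge negations and cancelling pairs of negations adjacent to $\oplus$ nodes, a task of cost $O(n)$.

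Summing: for each $a$, the work is $poly(N) + N^2\cdot O(n) = poly(N)$, and multiplying by the $N$ choices of $a$ still yields $poly(N)$. Maintaining the running minimum is trivial. Hence the full procedure runs in $poly(2^n)$ time, as required.

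The only mildly subtle point, and the one I would flag as the main (very small) obstacle, is to be careful that the bound on the number of boolean nodes really is $O(n)$ uniformly across the $N$ enumerated $a$'s, so that the $2^{O(n)}$ bound on subsets does not degrade into something superpolynomial in $N$. This is handled by the observation that every $ROF_{\oplus,a}(f)$ is read-once over the same $n$ variables (indeed $ROF_\oplus$ formulas have at most $2n-1$ gates for a function on $n$ variables), so the subset enumeration costs $(2^n)^{O(1)}$ regardless of $a$.
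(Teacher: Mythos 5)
Your outer structure---enumerate $a\in\zo{n}$, compute $ROF_{\oplus,a}(f)$ when it exists, enumerate subsets $S$ of the $\vee,\wedge$ gates, apply $flip(S)$, return the minimum, with correctness delegated to Theorem \ref{thm:ROF_OPLUS_NEG_CORRECTNESS} and a $poly(2^n)$ accounting---is exactly the paper's. The gap is in the subroutine you use to produce $ROF_{\oplus,a}(f)$. You claim that running Algorithm \ref{alg:MinimizeROFoPLUS} on $T_{f_a}$ ``either returns the unique $ROF_{\oplus,a}(f)$ or rejects.'' It does neither: that algorithm returns a minimal $ROF_{\oplus}$ representation of $f_a$ in which leaves may be negated, and it rejects only when $f_a$ has no $ROF_{\oplus}$ representation at all (equivalently, when $f$ has none). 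What is needed here is the unique \emph{negation-free} $ROF_{\oplus}$ representation of $f_a$ (Corollary \ref{cor:UniqueROFXorA}), which exists only for certain $a$; Algorithm \ref{alg:MinimizeROFoPLUS} does not detect its nonexistence, and even when it does exist it may fail to return it. The trouble is the $\oplus$-decomposable case: given the canonical decomposition $f_a=\bigoplus_{i=1}^k f_i\oplus c$, Algorithm \ref{alg:MinimizeROFoPLUS} always absorbs the constant into $f_1$ and recurses on $f_1\oplus c, f_2,\ldots,f_k$. If the negation-free representation of $f_a$ requires distributing the complement to a different factor (say $f_2\oplus 1$ has a negation-free representation but $f_1\oplus 1$ does not), the recursion emits a negated leaf even though a negation-free formula for $f_a$ exists. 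Consequently your enumeration over $a$ neither certifies which $a$ admit an $ROF_{\oplus,a}$ representation nor produces the family of formulas $ROF_{\oplus,a}(f)$ over which Theorem \ref{thm:ROF_OPLUS_NEG_CORRECTNESS} takes its minimum, so the value you compute can be strictly larger than the true one.

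This is precisely why the paper introduces a separate procedure, Algorithm \ref{MinimizeROFoPLUSnoNeg}: in the $\oplus$-decomposable case it recurses on both $f_i$ and $\overline{f_i}$ for every factor, records which of the two admits a negation-free representation (at most one does, by the argument in Theorem \ref{thm:UniqueROFnoNegations}), and accepts only if the recorded complements sum to the constant $c$ of the canonical decomposition. With that subroutine substituted for Algorithm \ref{alg:MinimizeROFoPLUS}, the rest of your argument---the $O(n)$ bound on the number of boolean gates uniformly in $a$, the $2^{O(n)}$ subsets, and the overall $poly(2^n)$ count---goes through and matches the paper.
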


\begin{proof}
Theorem \ref{thm:ROF_OPLUS_NEG_CORRECTNESS} allows us to devise the following algorithm for finding the minimal $ROF_{\oplus,\neg}$ for a given function: Find all $ROF_{\oplus,a}$  for any possible $a\in\zo{n}$. For each of them traverse all subsets $S$ of $\wedge,\vee$ gates and apply $flip\left(S\right)$ to $f_{a}$. Choose the minimal representation that is received along the way. Its correctness follows immediately from Theorem \ref{thm:ROF_OPLUS_NEG_CORRECTNESS}. \\
\indent Now, in order to find all $ROF_{\oplus,a}$, we devise Algorithm \ref{MinimizeROFoPLUSnoNeg} for $ROF_{\oplus}$ with no negations, and use it for $f(x\oplus a)$ for all $a\in\zo{n}$. The only difference between this algorithm and the $ROF_{\oplus}$ algorithm presented earlier, is in dealing with $\oplus$-decomposability. We already showed that if $f$ is $\oplus$-decomposable with factors $\{f_i\}_{i=1}^{k}$ then in every $ROF_\oplus$ representation of $f$ the functions at the second level are either from the set $\{f_i\}_{i=1}^{k}$ or from the set $\{\overline{f_i}\}_{i=1}^{k}$. We also mentioned that at most one of $f_i,\overline{f_i}$ has a negation-free representation. Therefore it suffices to traverse all $i=1,\ldots,k$ and check if either $f_i,\overline{f_i}$ have a negation-free representation, and return the proper formula iff it indeed represents $f$.
\end{proof}

\begin{algorithm}
\DontPrintSemicolon
Construct $f^{ML}$, the multilinear representation of $f$.\;

\lIf {$f$ is the identity function over a single variable $x_i$}{\Return $x_i$.}\\
\lElse {Reject.}

\If {$f$ is $\wedge$-decomposable as $f=\bigwedge_{i=1}^k f_i$} {\Return $\bigwedge _{i=1}^{k} MinimizeROF_{\oplus}NoNeg(T_{f_i})$.}

\If {$f$ is $\vee$-decomposable as $f=\bigvee_{i=1}^k f_i$} {\Return $\bigvee _{i=1}^{k} MinimizeROF_{\oplus}NoNeg(T_{f_i})$.}

\If {$f$ is $\oplus$-decomposable as $f=[\bigoplus_{i=1}^k f_i]\oplus c$} { 
	\For{$i=1,\ldots,k$} {
		$MinimizeROF_{\oplus}NoNeg(T_{f_i}\oplus b_j)$ for $b_j \in \zo{}$.\;
		\lIf {Both calls rejected} {Reject.}\\
		\lElse {Let $b_{j(i)}$ be the accepted call and $T_i$ the returned formula.}\\
	}
	\lIf {$\bigoplus_{i=1}^k b_{j(i)}= c$} {\Return $\bigoplus _{i=1}^{k} T_i$.}\\
	\lElse {Reject.}
}

Reject.\;

\caption{$MinimizeROF_{\oplus}NoNeg(T_{f})$} \label{MinimizeROFoPLUSnoNeg}
\end{algorithm}

\subsection{Unate Formulas of the Second Order} \label{sctn:UF2}

Since truth-table minimization of depth 2 unate formulas is easy (see Section \ref{sctn:monDNF}), the algorithm mentioned in section \ref{sctn:OurROF} naturally extends to a certain type of unate formulas, which as far as we know do not exist in the current literature. 

The idea behind the extension of the algorithm is simple: at every decomposition step, we choose the minimal representation between the result of the recursive calls, and the representation as a unate DNF/CNF. Since adjacent identical gates may be collapsed together, this model requires some subtle definition, which could be regarded as a natural extension of the ordinary $ROF$ model.

\begin{defn} \label{defn:InducedFormula}
Let $U$ be a set of input nodes in a formula $\mu$, with a lowest common ancestor u. The \textit{sub-formula induced by $U$} is the sub-graph that is rooted at $u$ and contains exactly all 2nd level sub-formulas of $u$ that contain a variable from $U$ (see Figure \ref{fig:InducedSubformula}).
\end{defn}

\begin{figure}
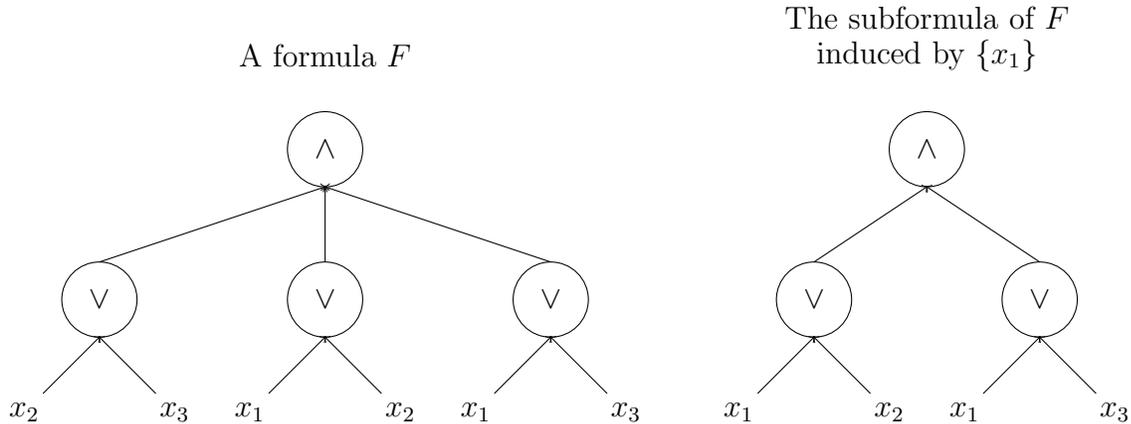
 
\begin{center}
\begin{pgfpicture}{0cm}{0cm}{17cm}{7cm}

\pgfputat{\pgfxy(4.5,5.25) }{\pgfbox[center,center]{A formula $F$}}
\pgfputat{\pgfxy(12.5,5.75)}{\pgfbox[center,center]{The subformula of $F$}}
\pgfputat{\pgfxy(12.5,5.25)}{\pgfbox[center,center]{induced by $\{x_1\}$}}

\pgfsetendarrow{\pgfarrowto}

\pgfcircle[stroke]{\pgfpoint{1.5cm}{2cm}}{0.5cm}
\pgfputat{\pgfxy(1.5,2)}{\pgfbox[center,center]{$\vee$}}

\pgfcircle[stroke]{\pgfpoint{4.5cm}{2cm}}{0.5cm}
\pgfputat{\pgfxy(4.5,2)}{\pgfbox[center,center]{$\vee$}}

\pgfcircle[stroke]{\pgfpoint{7.5cm}{2cm}}{0.5cm}
\pgfputat{\pgfxy(7.5,2)}{\pgfbox[center,center]{$\vee$}}

\pgfcircle[stroke]{\pgfpoint{11cm}{2cm}}{0.5cm}
\pgfputat{\pgfxy(11,2)}{\pgfbox[center,center]{$\vee$}}

\pgfcircle[stroke]{\pgfpoint{14cm}{2cm}}{0.5cm}
\pgfputat{\pgfxy(14,2)}{\pgfbox[center,center]{$\vee$}}

\pgfcircle[stroke]{\pgfpoint{4.5cm}{4cm}}{0.5cm}
\pgfputat{\pgfxy(4.5,4)}{\pgfbox[center,center]{$\wedge$}}

\pgfcircle[stroke]{\pgfpoint{12.5cm}{4cm}}{0.5cm}
\pgfputat{\pgfxy(12.5,4)}{\pgfbox[center,center]{$\wedge$}}

\pgfline{\pgfxy(0.75,0.75)}{\pgfxy(1.5,1.5)}
\pgfline{\pgfxy(2.25,0.75)}{\pgfxy(1.5,1.5)}
\pgfline{\pgfxy(3.75,0.75)}{\pgfxy(4.5,1.5)}
\pgfline{\pgfxy(5.25,0.75)}{\pgfxy(4.5,1.5)}
\pgfline{\pgfxy(6.75,0.75)}{\pgfxy(7.5,1.5)}
\pgfline{\pgfxy(8.25,0.75)}{\pgfxy(7.5,1.5)}
\pgfline{\pgfxy(10.25,0.75)}{\pgfxy(11,1.5)}
\pgfline{\pgfxy(11.75,0.75)}{\pgfxy(11,1.5)}
\pgfline{\pgfxy(13.25,0.75)}{\pgfxy(14,1.5)}
\pgfline{\pgfxy(14.75,0.75)}{\pgfxy(14,1.5)}

\pgfline{\pgfxy(1.5,2.5)}{\pgfxy(4.5,3.5)}
\pgfline{\pgfxy(4.5,2.5)}{\pgfxy(4.5,3.5)}
\pgfline{\pgfxy(7.5,2.5)}{\pgfxy(4.5,3.5)}

\pgfline{\pgfxy(11,2.5)}{\pgfxy(12.5,3.5)}
\pgfline{\pgfxy(14,2.5)}{\pgfxy(12.5,3.5)}

\pgfputat{\pgfxy(0.5,0.5)}{\pgfbox[center,center]{$x_2$}}
\pgfputat{\pgfxy(2.5,0.5)}{\pgfbox[center,center]{$x_3$}}
\pgfputat{\pgfxy(3.5,0.5)}{\pgfbox[center,center]{$x_1$}}
\pgfputat{\pgfxy(5.5,0.5)}{\pgfbox[center,center]{$x_2$}}
\pgfputat{\pgfxy(6.5,0.5)}{\pgfbox[center,center]{$x_1$}}
\pgfputat{\pgfxy(8.5,0.5)}{\pgfbox[center,center]{$x_3$}}
\pgfputat{\pgfxy(10,0.5)}{\pgfbox[center,center]{$x_1$}}
\pgfputat{\pgfxy(12,0.5)}{\pgfbox[center,center]{$x_2$}}
\pgfputat{\pgfxy(13,0.5)}{\pgfbox[center,center]{$x_1$}}
\pgfputat{\pgfxy(15,0.5)}{\pgfbox[center,center]{$x_3$}}

\end{pgfpicture}
\end{center}
\caption{\label{fig:InducedSubformula} Induced subformula.}
\end{figure}

Now define the following model:
\begin{defn} \label{defn:UFK}
A unate formula of order $k$ is a unate formula (over the basis $\{\wedge, \vee\}$, when negations are at the leaves) such that if $U_{i}$  is the set of all input nodes labelled by $x_{i}$ (or $\overline{x_{i}}$), then the sub-formula induced by $U_{i}$ has depth at most $k$. For simplicity, we do not allow adjacent identical gates. This model will be denoted $UF_{k}$. The size of such a formula is the number of leaves. For a boolean function $f$ denote the size of the smallest formula in this model which is consistent with $f$ by $L_{k}\left(f\right)$.
\end{defn}

\begin{remark}
Notice that $UF_{1}$ is the class of boolean read-once formulas and $UF_{n}$ (when $n$ is the number of variables) is the class of all unate functions. In \cite{V12}, arithmetic read-once formulas that bare some single-variable polynomials in the leaves were considered. Our model ($UF_{k}$) may be similarly considered as a boolean read-once formula with some unate formula of depth $\le k$ at the leaves.
\end{remark}

As mentioned in Observation \ref{obsrv:HowToFindDecompositions}, there are simple algorithms for finding the $\wedge$ and $\vee$-decompositions of unate functions in $poly(2^n)$ time. Denote by $find_\wedge(T_f)$ and $find_\vee(T_f)$ the algorithms that receive a truth-table of a function and return the truth-tables of its factors, or return $T_f$ if $f$ indecomposable. In addition, let $MinUnateCNF(T_f)$, $MinUnateDNF(T_f)$ be the minimization algorithms for unate DNF/CNF mentioned in Section \ref{sctn:monDNF}. Notice that the definition of size for whom those algorithm were made is different (number of terms / clauses rather than number of leaves), but in the unate setting it is not hard to prove that a formula is minimal according to one definition iff it is minimal according to the other.
We present algorithm \ref{alg:MinimizeUF2} for minimization of unate formulas of order 2.

\begin{algorithm} \label{alg:MinimizeUF2}
\DontPrintSemicolon
\If {$f$ is a function over 1 variable} {\Return the proper formula.}

$(T_{g_i})_{i=1}^{k} \triangleq find_\wedge(T_f)$.\;
Construct an empty tree $T_{\wedge}$.\;

\If {$k\ne 1$} {
	Add a root $v$ labelled $\wedge$ to $T_{\wedge}$.\;
	\For {$i=1,\ldots, k$} {
		$F_i = MinimizeUF_{2}(T_{g_i})$.\;
		Add $F_i$ to $T_{\wedge}$. If $F_i$ has a top gate $\wedge$, use $v$ instead.\;
	}
}

$(S_{h_i})_{i=1}^{t} \triangleq find_\vee(T_f)$.\;
Construct an empty tree $T_{\vee}$.\;

\If {$t\ne 1$} {
	Add a root $u$ labelled $\vee$ to $T_{\vee}$.\;
	\For {$i=1,\ldots, t$} {
		$K_i = MinimizeUF_{2}(T_{h_i})$.\;
		Add $K_i$ to $T_{\vee}$. If $S_i$ has a top gate $\vee$, use $u$ instead.\;
	}
}

$T_{DNF} \triangleq MinUnateDNF(T_f)$.\;
$T_{CNF} \triangleq MinUnateCNF(T_f)$.\;

\lIf {$T_\vee, T_\wedge, T_{DNF}, T_{CNF}$ are empty} {Reject.}\\
\lElse {\Return the minimal tree among them.}

\caption{$MinimizeUF_{2}(T_{f})$} \label{alg:MinimizeUF2}
\end{algorithm}

Denote by $uDNF_{s}\left(f\right), uCNF_{s}\left(f\right)$ the size of  the minimal $unateDNF, unateCNF$ for $f$, when size is defined to be the number of leaves. 
The following Theorem will show that Algorithm \ref{alg:MinimizeUF2} returns the minimal $UF_2$ representation of the given function in polynomial time.

\begin{thm} \label{thm:UF2Correctness}
Let $f$ be some unate boolean function. 
\begin{enumerate}
\item If $f$ is $\wedge$ or $\vee$-decomposable with the factors $\left\{ f_{i}\right\} _{i=1}^{k}$, then\\ $L_{2}\left(f\right)=\min\left\{ \sum_{i=1}^{k}L_{2}\left(f_{i}\right),uDNF_{s}\left(f\right),uCNF_{s}\left(f\right)\right\} 
 $.

\item If $f$ is indecomposable, then $L_{2}\left(f\right)=\min\left\{ uDNF_{s}\left(f\right),uCNF_{s}\left(f\right)\right\}$.
\end{enumerate}
\end{thm}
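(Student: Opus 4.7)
The plan is to prove Theorem~\ref{thm:UF2Correctness} by strong induction on the number of variables $n$, closely mirroring the structure of Algorithm~\ref{alg:MinimizeUF2}. The base case $n=1$ is immediate. In the inductive step the upper bound $L_2(f)\le \min\{\sum_i L_2(f_i),\,uDNF_s(f),\,uCNF_s(f)\}$ is immediate from three explicit constructions: combining optimal $UF_2$ formulas for the factors $f_i$ under a single top $\wedge$-gate (resp.\ $\vee$-gate), collapsing the adjacent identical gates at the top so the leaf-count is exactly $\sum L_2(f_i)$, and the fact that any unate DNF or CNF is itself a depth-$2$ $UF_2$. The substantive content is the matching lower bound, which I will obtain by analysing any minimal $UF_2$ formula $T$ for $f$ according to the label of its top gate.

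Assume $f$ is $\wedge$-decomposable (the $\vee$-case is dual), and first suppose $T$'s top gate is $\wedge$, with children $T_1,\ldots,T_m$. Partition those children into equivalence classes under the relation ``share a variable,'' taken transitively; by construction, variables in distinct classes are disjoint. Taking the $\wedge$ within each class exhibits a variable-disjoint $\wedge$-decomposition of $f$, which by Corollary~\ref{cor:UniquenessOfDecomposition} is a coarsening of $\{X_i\}$. Hence each class corresponds to a set $S_\ell \subseteq [k]$ with variable set $\bigcup_{i\in S_\ell}X_i$, and $\{S_\ell\}$ partitions $[k]$. When there are at least two classes, each class has strictly fewer than $k$ factors, and applying the inductive hypothesis to $\bigwedge_{i\in S_\ell}f_i$ gives $L_2(\bigwedge_{i\in S_\ell}f_i)=\sum_{i\in S_\ell}L_2(f_i)$ (because $uCNF_s(\bigwedge_{i\in S_\ell}f_i)=\sum_{i\in S_\ell}uCNF_s(f_i)\ge \sum_{i\in S_\ell}L_2(f_i)$, and similarly $uDNF_s$ is even larger). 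Summing over $\ell$ yields $|T|\ge \sum_i L_2(f_i)$, which is all we need.

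The delicate case is a single equivalence class, i.e.\ every pair of top-level children is transitively linked by shared variables. The induced-subformula constraint of Definition~\ref{defn:UFK} then forces any shared variable $x$ to appear as a literal direct child of the $\vee$-gate atop each affected $T_j$, and the distributive identity $\bigwedge_j (x\vee C_j)=x\vee\bigwedge_j C_j$ strictly decreases the number of leaves. I plan to verify, one shared variable at a time, that either this local rewriting preserves all induced-subformula depths---contradicting the minimality of $T$ and reducing to the multi-class sub-case above---or it is blocked precisely because the $C_j$'s contain no further structure, in which case the $T_j$'s are themselves flat $\vee$-terms and $T$ is already a unate CNF, giving $|T|\ge uCNF_s(f)$. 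The top-$\vee$ case is handled dually: by Corollary~\ref{cor:Main}, $f$ is not $\vee$-decomposable, so the ``multi-class'' sub-case cannot arise and the symmetric dual argument forces $|T|\ge uDNF_s(f)$.

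Part~2 (indecomposable $f$) then follows from the same case analysis: Corollary~\ref{cor:Main} together with the uniqueness of $\wedge$- and $\vee$-decompositions rules out the multi-class sub-case for either top gate, so the structural-simplification argument of the previous paragraph leaves $|T|\ge \min\{uDNF_s(f),uCNF_s(f)\}$. The main technical obstacle is the single-class regime: a naive distributive rewrite can push a \emph{second} shared variable deeper than depth two and so violate the $UF_2$ constraint, and the proof must either order the rewrites carefully enough that no such violation occurs, or carry out a direct structural identification of $T$ with a unate CNF (respectively DNF) of no larger size.
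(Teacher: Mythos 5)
Your proposal follows essentially the same route as the paper's proof: induction on the number of variables, a partition of the top-level children of a minimal formula into variable-disjoint classes, identification of those classes with subsets of the maximal decomposition via Corollary~\ref{cor:UniquenessOfDecomposition}, the direct-product lemma (Lemma~\ref{lemma:DirectProduct}) to handle the flat pieces, and Corollary~\ref{cor:Main} to exclude the wrong top gate. The one place you diverge---and where you flag a ``main technical obstacle''---is the single-class case, and that obstacle is illusory: by Definition~\ref{defn:InducedFormula} the sub-formula induced by $U_x$ contains the \emph{entire} second-level sub-formulas in which $x$ occurs, not merely the occurrences of $x$ and paths to them. Hence if $x$ appears in two children $T_j,T_{j'}$ of the top $\wedge$-gate, the $UF_2$ constraint forces both $T_j$ and $T_{j'}$ to have depth at most $1$, i.e.\ to be flat clauses in their entirety; the configuration $T_j=x\vee C_j$ with $C_j$ of depth two that motivates your distributive rewriting cannot occur. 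Since in a single equivalence class every child shares a variable with some other child, every child is a flat clause, $T$ is already a unate CNF, and $\left|T\right|\ge uCNF_{s}\left(f\right)$ follows with no rewriting and no ordering of rewrites; this is precisely the paper's one-line observation that a $UF_2$ formula of depth greater than $2$ must induce a non-trivial variable decomposition at its top gate. With that correction the branch you already identified as ``$T$ is already a unate CNF'' is the only branch, and your argument closes along the same lines as the paper's.
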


To prove this theorem we will need the following direct product lemma:

\begin{lemma} \label{lemma:DirectProduct}
Let $\left\{ f_{i}\left(X_{i}\right)\right\} _{i=1}^{k}$ be variable disjoint unate boolean functions. Then:
\begin{enumerate}
\item $uDNF_{s}\left(\bigwedge_{i=1}^{k}f_{i}\right)\ge\sum_{i=1}^{k}uDNF_{s}\left(f_{i}\right)$

\item $uCNF_{s}\left(\bigwedge_{i=1}^{k}f_{i}\right)=\sum_{i=1}^{k}uCNF_{s}\left(f_{i}\right)$

\end{enumerate}
\end{lemma}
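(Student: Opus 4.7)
The plan is to analyze both parts through the structure of prime implicants and prime implicates of $f = \bigwedge_{i=1}^{k} f_i$ under the variable-disjointness hypothesis. Without loss of generality assume every $f_i$ is monotone (by renaming negated variables) and non-trivial (by Remark \ref{remark:AllVariables} and excluding the trivial case $f \equiv 0$). The key structural lemma I would prove first is: (a) every prime implicant of $f$ uses at least one variable from each $X_i$, and (b) every prime implicate of $f$ uses variables from exactly one $X_i$. For (a), if a prime implicant $t$ omits $X_{i_0}$, then perturbing $\alpha|_{X_{i_0}}$ leaves $t(\alpha)=1$ unchanged, so $f_{i_0}$ must equal $1$ on every assignment, contradicting non-triviality. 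For (b), given a prime implicate $c$ split as $c = c_1 \vee c_2$ with $c_1$ on $X_{i_1}$ and $c_2$ on the remaining blocks, if neither sub-clause is itself an implicate pick witnesses $\alpha$ with $f(\alpha)=1,\,c_1(\alpha)=0$ and $\beta$ with $f(\beta)=1,\,c_2(\beta)=0$, then splice $\gamma$ by taking $\gamma|_{X_{i_1}}=\alpha|_{X_{i_1}}$ and $\gamma$ elsewhere equal to $\beta$; variable-disjointness gives $f(\gamma)=1$ while $c(\gamma)=0$, contradicting $c$ being an implicate.

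For part 1, I would take a minimum unate DNF $D = \bigvee_j t_j$ of $f$, noting that each $t_j$ may be assumed to be a prime implicant (otherwise shorten, which only decreases the leaf count). Decompose $t_j = \bigwedge_{i=1}^{k} t_{j,i}$ along the partition $\{X_i\}$; by (a) every $t_{j,i}$ is non-empty. Each $t_{j,i}$ implies $f_i$: satisfy $t_{j,i}$ on $X_i$ and extend by satisfying the other $t_{j,r}$ on their own blocks, forcing $t_j = 1$, hence $f=1$, hence $f_i = 1$. Then $D_i \triangleq \bigvee_j t_{j,i}$ computes $f_i$ on $X_i$: one direction is the above; for the other, any satisfying $\alpha|_{X_i}$ of $f_i$ extends to a full assignment making all $f_r = 1$, hence some $t_j$ is satisfied, so $t_{j,i}(\alpha|_{X_i})=1$. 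Counting leaves yields
\[
uDNF_s(f_i) \le |D_i| \le \sum_j |t_{j,i}|, \qquad \sum_{i=1}^{k} uDNF_s(f_i) \le \sum_{i=1}^{k}\sum_j |t_{j,i}| = \sum_j |t_j| = |D|.
\]

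For part 2, the $\le$ direction is immediate: concatenating the minimum unate CNFs $C_1,\ldots,C_k$ of the $f_i$'s produces a unate CNF for $\bigwedge f_i$ of total leaf count $\sum_i uCNF_s(f_i)$. For $\ge$, let $C = \bigwedge_j c_j$ be a minimum unate CNF; each $c_j$ may be taken to be a prime implicate, so by (b) we may group the clauses by which $X_i$ they use, writing $C = \bigwedge_{i=1}^{k} C_i$. To show $C_i$ computes $f_i$, for any $\alpha|_{X_i}$ with $C_i(\alpha|_{X_i})=1$ extend by picking, for each $r \ne i$, any $\beta_r$ satisfying $C_r$; then $C$ evaluates to $1$ on the combined assignment, forcing $f = 1$ and hence $f_i(\alpha|_{X_i}) = 1$. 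The converse direction is symmetric. Therefore $|C_i| \ge uCNF_s(f_i)$ for each $i$, and $|C| = \sum_i |C_i| \ge \sum_i uCNF_s(f_i)$, yielding equality.

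The main obstacle is the structural lemma at the top of paragraph one, particularly the splicing argument that establishes part (b); once this is in hand, both parts reduce to routine bookkeeping on leaf counts. A secondary subtlety is verifying that the minimum unate DNF (resp.~CNF) can be taken to consist only of prime implicants (resp.~prime implicates) when leaves rather than clauses are counted, but this follows from the observation that a non-prime term (resp.~clause) can always be shortened without changing the computed function, strictly decreasing the leaf count.
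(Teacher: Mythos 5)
Your proof is correct, but it takes a genuinely different route from the paper's. The paper's argument for part 1 is a substitution argument: for a unate DNF $\phi$ of $\bigwedge_{i}f_{i}$, let $A_{i}$ be the set of leaves labelled by variables of $X_{i}$; restricting $\phi$ by an assignment $\ga^{-i}$ that satisfies every factor other than $f_{i}$ yields a unate DNF for $f_{i}$ with at most $\left|A_{i}\right|$ leaves, whence $\left|A_{i}\right|\ge uDNF_{s}\left(f_{i}\right)$ and $\left|\phi\right|=\sum_{i}\left|A_{i}\right|\ge\sum_{i}uDNF_{s}\left(f_{i}\right)$ (the paper wraps this in an induction on $k$, though the claim on the $A_{i}$'s already suffices), and the lower bound in part 2 is obtained by repeating the same argument with $uCNF_{s}$ in place of $uDNF_{s}$. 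You instead normalize a minimum DNF (resp.\ CNF) to consist of prime implicants (resp.\ prime implicates) and prove the structural facts that every prime implicant of a conjunction of variable-disjoint non-trivial factors meets every block, while every prime implicate lives inside a single block; the bounds then follow by splitting each term (resp.\ grouping the clauses) along the partition. Both arguments are sound. The paper's substitution argument is shorter and transfers verbatim to the arithmetic analogue (Lemma \ref{lem:ArithmeticDirectProduct}); yours yields more structural information and, in the CNF case, the cleaner conclusion that a minimum unate CNF literally decomposes as a concatenation of CNFs for the factors. Two small points you should make explicit: the shortening step preserves unateness (it only deletes literals), and the satisfiability of each $f_{r}$ and of each $C_{r}$, which your extension arguments use, follows from $f\not\equiv 0$.
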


\begin{proof}
For part 1, we use induction on $k$. For $k=1$ there is nothing to prove. Now let $k$ be arbitrary, and assume correctness up to $k-1$. Let $\phi$ be some unate DNF for $\bigwedge_{i=1}^{k}f_{i}\left(X_{i}\right)$. Let $\ga_{1}$ be some assignment on $X_{1}$ such that $f_{1}\left(\ga_{1}\right)=1$ (we assume that such an assignment exists since otherwise, $\bigwedge_{i=1}^{k}f_{i}$ is the constant 0 function, and there is nothing to prove). Denote by $\phi\vert_{\ga_{1}}$ the formula resulting from replacing each leaf $x_{i}\in X_{1}$ in $\phi$ by the corresponding value in $\ga_{1}$, and omitting the leaf according to the assigned value. Obviously $\phi\vert_{\ga_{1}}$ is consistent with $\bigwedge_{i=2}^{k}f_{i}$, therefore by the induction hypothesis we have that $\left|\phi\vert_{\ga_{1}}\right|\ge\sum_{i=2}^{k}uDNF_{s}\left(f_{i}\right)$. Now define $A_{i}$ as the set of leaves in $\phi$ baring variables from $X_{i}$. Since $\left\{ X_{i}\right\} _{i=1}^{k}$ are mutually disjoint, it is clear that
$\left|\phi\right|=\sum_{i=1}^{k}\left|A_{i}\right|$ and $\left|\phi\vert_{\ga_{1}}\right|\le\left|\phi\right|-\left|A_{1}\right|$ since we removed from $\phi$ all leaves that were labelled by a variable from $X_{1}$, and by doing so we may have removed additional leaves not in $A_{1}$. Thus: $\left|\phi\right|\ge\sum_{i=2}^{k}uDNF_{s}\left(f_{i}\right)+\left|A_{1}\right|$. \\
\indent We now claim that $\forall i\in\left[k\right]$, we have $\left|A_{i}\right|\ge uDNF_{size}\left(f_{i}\right)$. \underline{Proof}: assume for contradiction that $\left|A_{i}\right|<uDNF_{s}\left(f_{i}\right)$ for some $i$. Let $\ga^{-i}$ be an assignment to all variables in $X_{j}$ for all $j\ne i$  such that $f_{j}\left(\ga^{-i}\right)=1$ (we assume $\ga^{-i}$ exists as we do for $\ga_{1}$). Clearly, $\phi\vert_{\ga^{-i}}$
is consistent with $f_{i}$, and $\left|\phi\vert_{\ga^{-i}}\right|\le\left|A_{i}\right|$
 since we remove all $A_{j}$ for $j\ne i$. We get that $\phi\vert_{\ga^{-i}}$ is a unate DNF consistent with $f_{i}$, albeit $\left|\phi\vert_{\ga^{-i}}\right|<uDNF_{size}\left(f_{i}\right)$, a contradiction. Using this claim we get $\left|\phi\right|\ge\sum_{i=1}^{k}uDNF_{s}\left(f_{i}\right)$ Which finishes part 1. \\
\indent  As for part 2, the direction $uCNF_{s}\left(\bigwedge_{i=1}^{k}f_{i}\right)\le\sum_{i=1}^{k}uCNF_{s}\left(f_{i}\right)$
is easy. We may just take an $\wedge$ between all uCNF representations of the $f_{i}$'s to get QED. For the other direction, we may repeat the proof of part 1 of this lemma, considering $uCNF_{s}$ instead of $uDNF_{s}$.
\end{proof}

We now turn to prove the correctness of the algorithm.

\begin{proof}
(Of Theorem \ref{thm:UF2Correctness}) For the first part, the direction $\le$ is straightforward - take the minimal representation among the $uDNF$, $uCNF$ and $\wedge$ of the minimal $UF_{2}$ representations of $f$'s factors. For the direction $\ge$ we shall prove the claim by induction on the number of variables of $f$. For $f$ over 1 variable there is nothing to prove. Let $f$ be over 2 variables. For part 1, $f$'s factors must be $f_{1}\left(x_{1}\right),f_{2}\left(x_{2}\right)$. The direction $\le$ is easy. For the direction $\ge$, let $\phi$ be some minimal $UF_{2}$ for $f$. If $\phi$ is of depth $\le 2$, we are done. Furthermore, it is easy to see that there are no $UF_{2}$ for a function over 2 variables of depth $\ge 3$: assume for contradiction that $\phi$ is of depth $\ge 3$. We may conclude that one of its 2nd level sub-formulas is of depth 2, and depends only on (say) $x_{1}$. As far as unate formulas are considered,  there are no minimal depth 2 formulas that depends on 1 variable. \\
\indent Now, let $f$ be a function over any number of variables, let $f=\bigwedge _{i=1}^k f_i (X_i)$ be the maximal $\wedge$-decomposition of $f$ and let $\mu$ be some minimal $UF_{2}$ formula for $f$. If $\mu$ is of depth $>2$, according to the definition of $UF_{2}$, it induces some variable decomposition. Therefore if $\mbox{topgate}\left(\mu\right)=\vee$ we would get an $\vee$-decomposition, in contradiction with Corollary \ref{cor:Main}. Therefore we either have that $\mu$ is of depth $\ge 3$ and has top-gate $\wedge$, or it is a unate DNF/CNF. \\
\indent If it is a unate DNF or a unate CNF, we are done. Else, we may write $\mu$ as a $\wedge$ of variable disjoint functions in the following way: let $\{f_1,\ldots , f_t\}$ be the 2nd level sub-formulas of $\mu$ of depth 1. In order to present $\mu$ as a conjunction of variable disjoint factors, we cluster $\{f_1,\ldots , f_t\}$ into classes such that the variables of each class are disjoint. For every class define $\mu_i$ to be the $\wedge$ of all formulas in the class. Notice that each $\mu_i$ is of depth 1 or 2. Let $D_1$ of the set of indices of those $\mu_i$s. Now let $\{g_1,\ldots , g_s\}$ be the 2nd level sub-formulas of $\mu$ of depth $\ge 2$. From the definition of $UF_2$ it is clear that they are variable disjoint, since if they are not, there is a variable $x_{i}$ such that $U_{x_{i}}$ induces a formula of depth $\ge 3$. For convenience of notation let $\{\mu_i\}_{i\in D_2}$ be the set  $\{g_1,\ldots , g_s\}$, where $D_1 \cap D_2 = \es$. We get:
\begin{eqnarray*}
\mu=\bigwedge_{i\in D_{1}}\mu_{i}\wedge\bigwedge_{j\in D_{2}}\mu_{j}
\end{eqnarray*}
and all factors are variable disjoint. Therefore $\mu$ induces some $\wedge$-decomposition of $f$. According to Corollary \ref{cor:UniquenessOfDecomposition} this decomposition may refined in order to achieve the maximal decomposition. I.e., $\forall i\in D_{1}\cup D_{2},\exists I_{i}\sus\left[k\right],\mu_{i}=\bigwedge_{j\in I_{i}}f_{i}\left(X_{i}\right)$ Such that $\left\{ I_{i}\right\} _{i\in D_{1}\cup D_{2}}$ is a partition of $\left[k\right]$. Now distinguish between the following cases:
\begin{enumerate}
\item $i\in D_{1}$. Notice that in this case, $\mu_{i}$ is represented in its unate CNF form (perhaps as a single clause), using $\mbox{topgate}\left(\mu\right)$. Therefore according to the minimality of $\mu$, lemma \ref{lemma:DirectProduct}, and the induction hypothesis, $L_{2}\left(\mu_{i}\right)	=	uCNF_{s}\left(\bigwedge_{j\in I_{i}}f_{j}\right)
	=	\sum_{j\in I_{i}}uCNF_{s}\left(f_{j}\right)$. Moreover, by the induction hypothesis $\forall j\in I_{i}$ we have $uCNF_{s}\left(f_{j}\right)\ge L_{2}\left(f_{j}\right)$. Hence, $L_{2}\left(\mu_{i}\right)\ge\sum_{j\in I_{i}}L_{2}\left(f_{j}\right)$.

\item $i\in D_{2}$. Since identical adjacent gates are not allowed we know that $\mbox{topgate}\left(\mu_{i}\right)=\vee$. Therefore we have two sub-cases -
	\begin{enumerate}
		\item $\mu_{i}$ is of depth 2, i.e. a unate DNF. Again, by lemma \ref{lemma:DirectProduct}, the minimality of $\mu$ and the induction hypothesis, $L_{2}\left(\mu_{i}\right)	=	uDNF_{s}\left(\bigwedge_{j\in I_{i}}f_{j}\right) \ge	\sum_{j\in I_{i}}uDNF_{s}\left(f_{j}\right)
	\ge	\sum_{j\in I_{i}}L_{2}\left(f_{j}\right)$.

		\item $\mu_{i}$ is of depth larger than 2. In this case, by the induction hypothesis, it is immediate that $L_{2}\left(\mu_{i}\right)=L_{2}\left(\bigwedge_{j\in I_{i}}f_{j}\right)=\sum_{j\in I_{i}}L_{2}\left(f_{j}\right)$.
	\end{enumerate}

\end{enumerate}

Therefore, since $\left\{ I_{i}\right\} _{i\in D_{1}\cup D_{2}}$ is a partition of $\left[k\right]$, we have $\left|\mu\right|	=	\sum_{i\in D_{1}\cup D_{2}}\left|\mu_{i}\right|
	\ge	\sum_{i=1}^{k}L_{2}\left(f_{i}\right)$ as needed.\\
\indent If the function $f$ is $\vee$-decomposable, we observe these simple facts:

\begin{enumerate}
\item $\overline{f}$ is $\wedge$-decomposable, and its factors are $\left\{ \overline{f_{i}}\right\} _{i=1}^{k}$.

\item For every function $g$, $L_{2}\left(g\right)=L_{2}\left(\overline{g}\right)$, since negation does not change the size of the formula.

\item For any function $g$, $uDNF_{s}\left(g\right)=uCNF_{s}\left(\overline{g}\right)$.
\end{enumerate}
Therefore we have 
\begin{eqnarray*}
L_{2}\left(f\right)	& = & L_{2}\left(\overline{f}\right) \\ 
	& = & \min\left\{\sum_{i=1}^{k}L_{2}\left(\overline{f_{i}}\right),uDNF_{s}\left(f\right),uCNF_{s}\left(f\right)\right\} \\
	& = & \min\left\{ \sum_{i=1 }^{k}L_{2}\left(f_{i}\right),uCNF_{s}\left(f\right),uDNF_{s}\left(f\right)\right\} 
\end{eqnarray*}
and the claim follows. \\
\indent For the second part of the theorem (where $f$ is indecomposable), the $\le$ part is obvious. For the $\ge$ part, let $\phi$ be some minimal $UF_{2}$ formula for $f$. If $\phi$ is of depth $\ge 3$, it induces some variable partition, contradicting $f$'s indecomposability. Therefore $\phi$ is either a unate DNF or a unate CNF, and the claim follows.
\end{proof}

\begin{corollary}
Algorithm \ref{alg:MinimizeUF2} finds the minimal $UF_2$ representation of the given function in $poly(2^n)$ time.
\end{corollary}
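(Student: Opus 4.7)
The plan is to establish correctness by strong induction on the number of input variables, leveraging Theorem \ref{thm:UF2Correctness} which was just proven. The base case (a single variable) is handled directly by the algorithm returning the appropriate literal. For the inductive step, I would split into the two cases matching Theorem \ref{thm:UF2Correctness}. If $f$ is $\wedge$-decomposable with maximal decomposition $f=\bigwedge_{i=1}^{k} g_i$, then by Observation \ref{obsrv:HowToFindDecompositions} the call $find_\wedge(T_f)$ returns exactly these factors, and by the induction hypothesis each recursive invocation $MinimizeUF_2(T_{g_i})$ yields a minimal formula of size $L_2(g_i)$. Combining them under a single $\wedge$ root, while collapsing any child whose top gate equals $\wedge$ directly into this root to maintain the no-adjacent-identical-gates property of $UF_2$, produces $T_\wedge$ of size $\sum_{i=1}^{k} L_2(g_i)$. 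The analogous argument gives $|T_\vee| = \sum_{i=1}^{t} L_2(h_i)$ when $f$ is $\vee$-decomposable, and by the subroutines of Section \ref{sctn:monDNF}, $T_{DNF}$ and $T_{CNF}$ have sizes $uDNF_s(f)$ and $uCNF_s(f)$ respectively. Theorem \ref{thm:UF2Correctness}(1) then tells us that the minimum among $T_\wedge$, $T_\vee$, $T_{DNF}$ and $T_{CNF}$ is exactly $L_2(f)$.

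For the indecomposable case, $find_\wedge$ and $find_\vee$ each return $f$ itself as a single factor, so $k=t=1$ and the algorithm skips construction of $T_\wedge$ and $T_\vee$, returning $\min\{T_{DNF}, T_{CNF}\}$; by Theorem \ref{thm:UF2Correctness}(2) this equals $L_2(f)$. If the input truth-table corresponds to no unate function at all (and hence to no $UF_2$ formula), then the $MinUnateDNF$ and $MinUnateCNF$ subroutines fail, all four candidates remain empty, and the algorithm correctly rejects; this can be verified by running a unate-check on $T_f$ using the hypercube graph from Section \ref{sctn:monDNF} before the recursion begins.

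For the complexity, the non-recursive work per invocation is $poly(2^n)$: each of $find_\wedge$, $find_\vee$, $MinUnateDNF$ and $MinUnateCNF$ runs in polynomial time in the truth-table size by Observation \ref{obsrv:HowToFindDecompositions} and Section \ref{sctn:monDNF}, and the collapsing step is a linear scan. The recursion tree has a favorable structure: each internal node corresponds to a nontrivial decomposition splitting its variable set into at least two disjoint non-empty blocks, and each leaf corresponds either to a single-variable function or to an indecomposable function processed without further recursion. Since the blocks are disjoint the total number of leaves is at most $n$, and by the combinatorial fact used already in Section \ref{sctn:ROFXOR} (a tree with $n$ leaves in which every internal node has arity at least $2$ has at most $2n$ nodes), the total number of recursive calls is $O(n)$. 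Multiplying by the per-call polynomial cost yields a total running time of $poly(2^n)$.

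The main subtlety I expect to be the chief obstacle is justifying the leaf-count accounting for the gate-collapsing step: one must verify that merging a child whose top gate matches the parent's label preserves both functional equivalence and the leaf-based size measure of Definition \ref{defn:UFK}, so that the size $\sum_i L_2(g_i)$ promised by Theorem \ref{thm:UF2Correctness} is actually attained by the constructed $T_\wedge$ (and symmetrically for $T_\vee$). Since collapsing removes an inner node with top label equal to its parent and re-attaches its children one level up, the number of leaves is preserved, which suffices.
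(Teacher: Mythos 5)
Your proposal is correct and follows essentially the same route as the paper: correctness falls out of Theorem \ref{thm:UF2Correctness} because the algorithm enumerates all candidate top-level structures and takes the minimum, and the complexity bound comes from observing that the recursion tree has at most $n$ leaves with every internal node of arity at least $2$, exactly the analysis reused from Algorithm \ref{alg:MinimizeROFoPLUS}. The paper's own proof is a two-line appeal to these facts; your additional care about the gate-collapsing step preserving the leaf count and about rejecting non-unate inputs fills in details the paper leaves implicit, but does not change the argument.
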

\begin{proof}
The correctness of algorithm \ref{alg:MinimizeUF2} follows immediately from Theorem \ref{thm:UF2Correctness}, since the algorithm considers all 3 possibilities, and chooses the smallest. \\
\indent To see the polynomial complexity, observe that the recursion tree has at most $n$ leaves, therefore the complexity analysis of Algorithm \ref{alg:MinimizeROFoPLUS} may be applied to reach the same result.
\end{proof}

\subsection{Arithmetic Formulas of the Second Order over $\Ftwo$} \label{sctn:F2A}

In this section we shall see the arithmetic equivalent to $UF_2$ formulas, for which Corollary \ref{cor:Main} also allows us to devise a polynomial truth-table minimization algorithm. Define the following model:

\begin{defn}
$F_2^A$ is the class of read-once formulas over $\Ftwo$ with gates $\oplus, \wedge$ (with constants), where no negations are allowed, and for every variable $x_i$, the set $U_{x_i}$ induces a sub-formula of depth 2 (see Definiton \ref{defn:InducedFormula}). The size an $F_2^A$ formula is its number of leaves, excluding constants.
\end{defn}

We shall see that the minimal $F_2^A$ formula for a given function may be found in $poly(2^n)$ time. The main idea of the minimization algorithm resembles the one of Section \ref{sctn:UF2}. We decompose the function until no decomposition is possible and then apply minimization algorithms for depth 2 formulas. In the unate boolean case, minimization of depth 2 formula may be done easily. We show that in the arithmetic case it may also be done. \\
\indent For $\Sigma_2^A$ formulas the minimization algorithm is trivial over $\Ftwo$, since this model is simply the unique multilinear representation of the function (see Observation \ref{obsrv:HowToFindDecompositions} for an explanation about how to find it). For $\Pi_2^A$ (i.e., a product of linear polynomials) we devise a minimization algorithm. Before presenting the algorithm, we observe:

\begin{obsrv} \label{obsrv:Pi2}
Every function $f$ which has a $\Pi_2^A$ representation is a characteristic function of some affine space of $\Ftwo^n$.
\end{obsrv}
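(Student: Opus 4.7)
The plan is to unpack the definition of $\Pi_2^A$ over $\Ftwo$ and then invoke elementary linear algebra. A $\Pi_2^A$ formula computes a function of the form $f(x) = \prod_{i=1}^k L_i(x)$, where each $L_i$ is an affine linear polynomial $L_i(x) = c_i \oplus \bigoplus_{j \in S_i} x_j$ with $c_i \in \Ftwo$ and $S_i \sus [n]$. The first step I would make explicit is that over $\Ftwo$, the product of two values in $\{0,1\}$ coincides with their conjunction (i.e., $a \cdot b = a \wedge b$); consequently, $f(x) = 1$ if and only if $L_i(x) = 1$ for every $i \in [k]$.

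Next I would note that for each $i$, the solution set $H_i \triangleq \{x \in \Ftwo^n : L_i(x) = 1\}$ is an affine subspace of $\Ftwo^n$ --- specifically an affine hyperplane when $L_i$ is non-constant, the whole space $\Ftwo^n$ when $L_i \equiv 1$, and the empty set when $L_i \equiv 0$. Since the intersection of finitely many affine subspaces of $\Ftwo^n$ is itself an affine subspace (with the empty set treated as a degenerate affine subspace to cover the identically-zero case), the preimage $f^{-1}(1) = \bigcap_{i=1}^k H_i$ is an affine subspace of $\Ftwo^n$, and therefore $f$ is precisely its characteristic function.

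There is no substantive obstacle here: the argument is essentially a one-line translation from arithmetic product over $\Ftwo$ to intersection of affine solution sets. The only mild care required is in handling the degenerate cases where some $L_i$ is constant, which are cleanly absorbed by permitting both the empty set and the whole space as affine subspaces in the statement.
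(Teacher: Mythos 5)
Your argument is correct and is essentially the same as the paper's: both observe that $f(x)=1$ forces each first-level linear polynomial to equal $1$, so $f^{-1}(1)$ is the intersection of the corresponding affine solution sets and hence an affine subspace. Your treatment of the degenerate constant factors is slightly more explicit than the paper's, but the route is identical.
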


\begin{proof}
Every linear polynomial $P=\sum _{i=1}^n \ga_i x_i + c$ in the first level of a $\Pi_2^A$ formula defines the constraint of the form $f(x)=1 \Rightarrow \left<\ga,x\right>=c+1$ on $x\in\zo{n}$. Hence, the set $f^{-1}(1)$ is the intersection of all constraints in the 1st level, namely, an affine space. 
\end{proof}
Therefore, out of all constraints $(\ga,c)$ that contain $f^{-1}(1)$, we ought to find the smallest independent subset, when the size of the set is the sum of the Hamming weights of the corresponding vectors $\ga$, since we do not count constants.

\begin{clm} 
Algorithm \ref{alg:MinimizePi2A} finds the minimal $\Pi_2^A$ representation of the given function in $poly(2^n)$ time. 
\end{clm}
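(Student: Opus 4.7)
The plan is to combine Observation \ref{obsrv:Pi2} with the greedy algorithm for minimum-weight basis in a linear matroid. By Observation \ref{obsrv:Pi2}, any function $f$ with a $\Pi_2^A$ representation has $V := f^{-1}(1)$ equal to an affine subspace of $\Ftwo^n$; furthermore, a $\Pi_2^A$ formula for $f$ corresponds precisely to a list of linear polynomials $L_i(x) = \langle \gamma_i, x \rangle + c_i$ whose joint constraint system $\langle \gamma_i, x \rangle = c_i + 1$ cuts out $V$, and its size is $\sum_i |\gamma_i|$ (Hamming weight), since constants are not counted. Hence the task reduces to identifying a minimum total-weight set of constraints defining $V$.

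First I would scan the truth-table to compute $V$ and verify in $poly(2^n)$ time that it is affine (pick $v_0 \in V$, form $W := \{v - v_0 : v \in V\}$, and check closure of $W$ under addition), rejecting otherwise. Having confirmed $V = v_0 + W$, the next step is to compute the annihilator $W^{\perp} = \{\gamma \in \Ftwo^n : \langle \gamma, w\rangle = 0 \ \forall w \in W\}$ via Gaussian elimination. The crucial observation is that a constraint $\langle \gamma, x\rangle = b$ holds identically on $V$ iff $\gamma \in W^{\perp}$ and $b = \langle \gamma, v_0\rangle$, so the allowable constraints are in bijection with $W^{\perp}$. A family of such constraints cuts out exactly $V$ iff the corresponding $\gamma_i$'s span $W^{\perp}$, and in any size-optimal formula they must be linearly independent, since a dependent vector can be discarded without altering the solution set while strictly decreasing the total weight (the zero vector is never chosen, as it contributes either a trivially-true or a trivially-false factor).

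The problem is therefore to find a minimum Hamming-weight \emph{basis} of the subspace $W^{\perp}$. Since linear independence in $\Ftwo^n$ forms a matroid, the standard Kruskal-style greedy is optimal: enumerate all (at most $2^n$) vectors of $W^{\perp}$, sort them by Hamming weight, then iterate through them adding each to the growing basis whenever it is linearly independent of those chosen so far. By the matroid-greedy theorem the resulting set is a minimum-weight basis of $W^{\perp}$; the algorithm then outputs the $\Pi_2^A$ formula whose first-level polynomials are $\langle \gamma_i, x\rangle + \langle \gamma_i, v_0\rangle + 1$ for the selected $\gamma_i$'s.

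For the complexity, computing $V$ and verifying affineness takes $O(2^n)$, computing and enumerating $W^{\perp}$ together with sorting the at most $2^n$ vectors takes $poly(2^n)$, and each of the $O(2^n)$ greedy independence tests reduces to a Gaussian elimination on an $n \times n$ matrix, for a total of $poly(2^n)$. The main conceptual step is the reduction to minimum-weight matroid basis; once this reduction is recognized, both correctness and the time bound follow by routine bookkeeping, so I do not anticipate a substantial obstacle in executing this plan.
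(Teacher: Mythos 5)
Your proposal is correct, and the reduction is the same as the paper's: both rest on Observation \ref{obsrv:Pi2} to identify a $\Pi_2^A$ formula with a set of affine constraints cutting out $f^{-1}(1)$, and both then seek a minimum-total-Hamming-weight generating set for the space of valid constraint vectors. The difference is in how that last step is carried out. The paper simply collects all constraints $(\ga,c)$ satisfied on $f^{-1}(1)$ and invokes the algorithm $FindBasis$ of \cite{CGH95} as a black box (quoting its $O(n^4 2^n)$ bound), whereas you give a self-contained argument: the valid constraint vectors form the subspace $W^{\perp}$, linear independence over $\Ftwo$ is a matroid, and the Kruskal-style greedy over the explicitly enumerated (at most $2^n$-element) ground set provably returns a minimum-weight basis. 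Your route buys independence from the external reference and makes the correctness of the weight minimization transparent; the paper's buys brevity. Two further points in your favor: you justify why an optimal formula uses linearly independent $\ga_i$'s (a dependent nonzero vector can be dropped without changing the solution set, strictly decreasing weight), which the paper leaves implicit, and you add an explicit check that $f^{-1}(1)$ is affine, without which Algorithm \ref{alg:MinimizePi2A} as written would silently output a formula for the affine hull of $f^{-1}(1)$ rather than reject a function with no $\Pi_2^A$ representation. The only loose end, common to both treatments, is the degenerate case $f\equiv 0$, where $f^{-1}(1)$ is empty and picking $v_0$ fails; this needs a one-line special case.
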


\begin{proof}
In Algorithm \ref{alg:MinimizePi2A} we first find the set $\{(\ga_i,c_i)\}_{i\in I}$ of all constraints that contain $f^{-1}(1)$, and then use the algorithm $FindBasis$ of \cite{CGH95} as a black box to find the required minimal set. The correctness follows from Observation \ref{obsrv:Pi2} and the correctness of $FindBasis$. To see the polynomial complexity we note that the algorithm $FindBasis$ from \cite{CGH95} has a time bound of $O(n^4 2^n)$ (a complexity analysis does appear in the paper), and therefore Algorithm \ref{alg:MinimizePi2A} is polynomial.
\end{proof}

\begin{algorithm}
\DontPrintSemicolon
$i=0$.\;

\ForAll {$\ga \in \zo{n}$ and $c\in \{0,1\}$} {
	\ForAll{$x \in f^{-1}(1)$} {
		\lIf {$\left<\ga,x\right>\ne c+1$} {Continue to the next $(\ga,c)$.}\\
	}
	Save $(\ga,c)$ as $(\ga_i,c_i)$.\;
	$i = i+1$.\;
}

Find some basis $B$ of $A=\{\ga_j\}_{j=1}^i$.\;
$C=FindBasis(B)$ (w.l.o.g assume that $C$ is a set of indices of vectors in $A$).\;

\Return $\bigwedge_{j\in C} \left[\left(\sum _{i\vert\ga_{ji}=1}x_i\right)+c_j\right]$.
		
\caption{$Minimize\Pi_{2}^A (T_{f})$} \label{alg:MinimizePi2A}
\end{algorithm}

We now present our minimization algorithm for $F_2^A$ formulas (Algorithm \ref{alg:MinimizeF2A}).

\begin{algorithm}
\DontPrintSemicolon
\If {$f$ is over $1$ variable} {\Return the proper formula (one of $\left\{ 0,1,x_{i},x_{i}+1\right\} $).}

Represent $f$ as $P_{f}$, a multilinear polynomial (a.k.a $\Sigma_{2}^{A}$ formula).\;

Define two empty formulas: $F^{MUL}=F^{ADD}=\es$.\;

Factor $P_{f}+b$ efficiently for both $b\in\left\{ 0,1\right\}$.\;
Denote the non-trivial factorization among them (if exists) by $\prod_{i=1}^{k}f_{i}\left(X_{i}\right)$.\;
\If {$k\ne 1$} {
	\ForAll {$i\in\left[k\right]$} {
		Construct the truth-table of $f_{i}$. Denote it $T_{f_{i}}$.\;
		$T_{i}\triangleq\mbox{\mbox{Minimize}}F_{2}^{A}\left(T_{f_{i}}\right)$.\;
	}
	Define $F^{MUL}\triangleq\prod_{i=1}^{k}T_{i}+b$ (if either of the
$T_{i}$s has a top $\cdot$ gate, use $\mbox{topgate}\left(F^{MUL}\right)$
instead).\;
}

Use $P_{f}$ to check $+$-decomposability of $f$ (as in Observation \ref{obsrv:HowToFindDecompositions}),
denote it $P_{f}=c+\sum_{i=1}^{t}g_{i}\left(Y_{i}\right)$ (such that
every $g_{i}$ has no free element).\;

\If {$t\ne 1$} {
	\ForAll {$i\in\left[t\right]$} {
		\label{alg:F2A.LineSplit} Define $ T_{i,0}\triangleq\mbox{Minimize}F_{2}^{A}\left(T_{g_{i}}\right),T_{i,1}\triangleq\mbox{Minimize}F_{2}^{A}\left(T_{\overline{g_{i}}}\right)$.\;
	$T_{i} = \min \{T_{i,0}, T_{i,1}\}$.\;
	$c_i = \arg \min \{T_{i,0}, T_{i,1}\}$.\;
}
Define $F^{ADD}\triangleq\left(c+\sum_{i=1}^{t}c_{i}\right)+\sum_{i=1}^{t}T_{i}$. (if either of the
$T_{i}$s has a top $+$ gate, use $\mbox{topgate}\left(F^{ADD}\right)$
instead).\;
	
}

Define $A_{f}=Minimize\Pi_{2}^{A}\left(T_{f}\right)$.\;
\Return the minimal non-empty formula among $\left\{ F^{MUL},F^{ADD},P_{f},A_{f}\right\} $.\;

\caption{$MinimizeF_{2}^{A}\left(T_{f}\right)$} \label{alg:MinimizeF2A}
\end{algorithm}

The proof resembles the outline of the one in Section \ref{sctn:UF2}. The following lemma resembles lemma \ref{lemma:DirectProduct} and its proof is similar:
\begin{lemma} \label{lem:ArithmeticDirectProduct}
Let $\left\{ f_{i}\left(X_{i}\right)\right\} _{i=1}^{k}$ be boolean variable disjoint functions. Then - 
\begin{enumerate}
\item $\Pi_{2}^{A}\left(\prod_{i=1}^{k}f_{i}\right)=\sum_{i=1}^{k}\Pi_{2}^{A}\left(f_{i}\right)$.
\item $\Sigma_{2}^{A}\left(\prod_{i=1}^{k}f_{i}\right)\ge\sum_{i=1}^{k}\Sigma_{2}^{A}\left(f_{i}\right)$.
\item $\Sigma_{2}^{A}\left(\sum_{i=1}^{k}f_{i}\right)=\sum_{i=1}^{k}\Sigma_{2}^{A}\left(f_{i}\right)$.
\item $\Pi_{2}^{A}\left(\sum_{i=1}^{k}f_{i}\right)\ge\sum_{i=1}^{k}\Pi_{2}^{A}\left(f_{i}\right)$.
\end{enumerate}
\end{lemma}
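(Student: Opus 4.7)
My plan is to mirror the structure of Lemma \ref{lemma:DirectProduct}, while exploiting two simplifications afforded by the arithmetic setting: the multilinear representation of a function is unique, so $\Sigma_2^A(g)$ is simply the size of $g^{ML}$; and the size of a $\Pi_2^A$ formula equals the total Hamming weight of its linear factors. Parts 1 and 4 will follow from a restriction argument analogous to the proof of Lemma \ref{lemma:DirectProduct}, while parts 2 and 3 are obtained by direct manipulation of multilinear polynomials.

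For parts 1 and 4, I first establish the easy upper bound in part 1 by taking the product of minimal $\Pi_2^A$ formulas of the $f_i$; this is a valid $\Pi_2^A$ for $\prod_i f_i$ since the $X_i$ are disjoint, and has size $\sum_i \Pi_2^A(f_i)$. For the lower bounds in both parts, I take a minimal $\Pi_2^A$ formula $\phi = \prod_j L_j$ for the target function and partition its non-constant leaves according to the block $X_i$ to which each leaf's variable belongs, writing $|\phi| = \sum_i |A_i|$. For each $i$, I restrict $\phi$ by an assignment $\beta^{-i}$ to the variables outside $X_i$: for part 1 I pick $\beta^{-i}$ so that $f_j(\beta^{-i}|_{X_j}) = 1$ for every $j \ne i$, and for part 4 so that $f_j(\beta^{-i}|_{X_j}) = 0$. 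In either case, the restriction $\phi|_{\beta^{-i}}$ is a $\Pi_2^A$ formula computing $f_i$ of size exactly $|A_i|$, hence $|A_i| \ge \Pi_2^A(f_i)$; summing over $i$ yields the desired inequality.

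For parts 2 and 3, I use that $\Sigma_2^A(g)$ equals the total length of the non-constant monomials of $g^{ML}$. For part 3, since the $X_i$ are disjoint, the non-constant monomials of $\sum_i f_i^{ML}$ are the disjoint union of the non-constant monomials of the individual $f_i^{ML}$ (only constant terms can combine, and they contribute nothing to size), yielding $\Sigma_2^A(\sum_i f_i) = \sum_i \Sigma_2^A(f_i)$. For part 2, each monomial of $\prod_i f_i^{ML}$ is uniquely a product $\prod_i M_i$ with $M_i$ a monomial of $f_i^{ML}$, and $|\prod_i M_i| = \sum_i |M_i|$ by disjointness; expanding gives
\[
\Sigma_2^A\left(\prod_i f_i\right) = \sum_i \Sigma_2^A(f_i) \cdot \prod_{j \ne i} m_j,
\]
where $m_j \ge 1$ is the number of monomials in $f_j^{ML}$ (positive since Remark \ref{remark:AllVariables} makes each $f_j$ non-constant), which is at least $\sum_i \Sigma_2^A(f_i)$.

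The main obstacle is the restriction step in part 4: the assignment $\beta^{-i}$ must annihilate every other $f_j$, for otherwise the restriction would compute $\overline{f_i}$, which by Observation \ref{obsrv:Pi2} (recall that $\Pi_2^A$ captures only characteristic functions of affine subspaces) may fail to have any $\Pi_2^A$ representation at all, rendering the resulting inequality vacuous. This is handled via Remark \ref{remark:AllVariables}: each $f_j$ depends on all its variables and is therefore non-constant, hence it takes the value $0$ somewhere on $X_j$, so $\beta^{-i}$ can be constructed block by block independently on each $X_j$ for $j \ne i$.
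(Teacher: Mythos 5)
Your proposal is correct and follows essentially the same route as the paper: parts 1 and 4 via the same restriction argument (assigning the other blocks so that the remaining factors evaluate to $1$, resp.\ $0$, and using Remark \ref{remark:AllVariables} to guarantee such assignments exist), and part 3 via uniqueness of the multilinear representation. The only divergence is part 2, where the paper simply re-runs the restriction proof of Lemma \ref{lemma:DirectProduct} with $\Sigma_2^A$ in place of $uDNF_s$, whereas you expand $\prod_i f_i^{ML}$ directly and count monomial lengths; both work, and your version has the small advantage of making explicit that no cancellation occurs because the variable-disjointness forces distinct tuples of monomials to yield distinct product monomials over $\Ftwo$.
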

\begin{proof}
For 1,2 follow the proof of lemma \ref{lemma:DirectProduct}, replacing
$uCNF_{s}$ with $\Pi_{2}^{A}$ and $uDNF_{s}$ with $\Sigma_{2}^{A}$.
3 Follows immediately from the uniqueness of representation of boolean
functions as multilinear polynomials. For 4, let $\phi$ be some minimal $\Pi_{2}^{A}$
formula for $f\triangleq\sum_{i=1}^{k}f_{i}\left(X_{i}\right)$. Let
$A_{i}$ be the set of leaves of $\phi$ labelled by variables from
$X_{i}$. By definition, $\left|\phi\right|=\sum_{i=1}^{k}\left|A_{i}\right|$.
For any $j\in\left[k\right]$, let $\ga^{-j}$ be an assignment to
$\left\{ X_{i}\right\} _{i\ne j}$ such that $\forall i\ne j,f_{i}\left(\ga^{-j}\right)=0$
(such an assignment exists since we may assume w.l.o.g that no $f_{i}$
is the constant 1 function). Let $\phi\vert_{\ga^{-j}}$ be the formula
$\phi$, such that every literal in $\left\{ X_{i}\right\} _{i\ne j}$
is replaced by its corresponding value in $\ga^{-j}$. Clearly, $\phi\vert_{\ga^{-j}}$
is a $\Pi_{2}^{A}$ formula that represents $f_{i}$, therefore 
\[
\left|\phi\vert_{\ga^{-j}}\right|=\left|A_{j}\right|\ge\Pi_{2}^{A}\left(f_{i}\right)
\]
This gives us the immediate conclusion that $\Pi_{2}^{A}\left(\sum_{i=1}^{k}f_{i}\right)=\left|\phi\right|=\sum_{i=1}^{k}\left|A_{i}\right|\ge\sum_{i=1}^{k}\Pi_{2}^{A}\left(f_{i}\right)$.
\end{proof}
We now turn to prove the correctness of Algorithm \ref{alg:MinimizeF2A}:
\begin{thm} \label{thm:F2Acorrectness}
Let $f$ be a boolean function 
\begin{enumerate}
\item If $f$ or $f+1$ is $\cdot$-decomposable with factors $\left\{ f_{i}\right\} _{i=1}^{k}$,
then $L_{2}^{A}\left(f\right)=\min\left\{ \sum_{i=1}^{k}L_{2}^{A}\left(f_{i}\right),\Sigma_{2}^{A}\left(f\right),\Pi_{2}^{A}\left(f\right)\right\} $.
\item If $f$ is $+$-decomposable with factors $\left\{ f_{i}\right\} _{i=1}^{k}$,
then 
\[L_{2}^{A}\left(f\right)=\min\left\{ \sum_{i=1}^{k}\min\left\{ L_{2}^{A}\left(f_{i}\right),L_{2}^{A}\left(f_{i}+1\right)\right\} ,\Sigma_{2}^{A}\left(f\right),\Pi_{2}^{A}\left(f\right)\right\} .
\]
\item If $f$ is indecomposable, then $L_{2}^{A}=\min\left\{ \Sigma_{2}^{A}\left(f\right),\Pi_{2}^{A}\left(f\right)\right\} $.
\end{enumerate}
\end{thm}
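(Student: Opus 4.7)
The plan is to mirror the inductive proof of Theorem \ref{thm:UF2Correctness}, adapted to the arithmetic setting over $\Ftwo$. The $\le$ direction in each of the three cases is immediate from the construction of Algorithm \ref{alg:MinimizeF2A}: it enumerates every candidate allowed by the theorem statement --- a $\Sigma_{2}^{A}$ formula via the multilinear polynomial of $f$, a $\Pi_{2}^{A}$ formula via Algorithm \ref{alg:MinimizePi2A}, and, when applicable, the recursive top-level combinations --- and returns the smallest. The only subtlety is in part 2, where because a $+$-decomposition of $f$ is unique only up to distribution of constants among the factors, each $g_{i}$ may be replaced by $g_{i}+1$ at the cost of toggling a single top-level constant, which is exactly what the $\min\{L_{2}^{A}(g_{i}),L_{2}^{A}(g_{i}+1)\}$ inside the sum captures.

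For the $\ge$ direction I would induct on the number of variables $n$. The base $n=1$ is trivial. For the inductive step let $\mu$ be a minimal $F_{2}^{A}$ formula for $f$; if $\mu$ has depth at most $2$ then it is itself a $\Sigma_{2}^{A}$ or a $\Pi_{2}^{A}$ formula and the bound follows. Otherwise the top gate is $\cdot$ or $+$ (and, since adjacent identical gates are disallowed, its children sit at level $2$), so $\mu$ induces a non-trivial $\cdot$- or $+$-decomposition of either $f$ or of $f+c'$ for the optional top constant $c'$. Corollary \ref{cor:Main} then severely restricts which top gate is possible: in part 3 both options contradict indecomposability, so depth $\ge 3$ is ruled out. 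In part 1 case A ($f$ itself $\cdot$-decomposable) a top $+$ gate would force $f$ to also be $+$-decomposable, while in case B ($f+1$ $\cdot$-decomposable, i.e.\ $f$ is $\vee$-decomposable) a top $\cdot$ gate makes $f$ $\cdot$-decomposable and a top $+$ gate with at least two children makes $f$ $+$-decomposable, so only the structure $\mu=\mu_{1}+1$ survives and one then argues on $\mu_{1}$, which represents the $\cdot$-decomposable $f+1$. In part 2 a top $\cdot$ gate is immediately ruled out, and the single-child sub-case $\mu=\mu_{1}+1$ is impossible too, since it would make $f+1$ (equivalently $\mu_{1}$) $\cdot$-decomposable while $f$ is already $+$-decomposable, again contradicting Corollary \ref{cor:Main}.

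Once the top gate is determined, I would cluster the children of the root along the maximal decomposition --- unique by Corollary \ref{cor:UniquenessOfDecomposition} in the $\cdot$ case and canonical up to constant distribution in the $+$ case --- so that each cluster $\mu_{i}$ represents a sub-product $\prod_{j\in I_{i}}f_{j}$ or a sub-sum $\sum_{j\in I_{i}}g_{j}+c_{i}$, where $\{I_{i}\}$ partitions $[k]$. If $\mu_{i}$ has depth $\ge 3$ I invoke the induction hypothesis to bound its size from below by $\sum_{j\in I_{i}}L_{2}^{A}(f_{j})$ or $\sum_{j\in I_{i}}\min\{L_{2}^{A}(g_{j}),L_{2}^{A}(g_{j}+1)\}$ as appropriate (valid because each such $\mu_{i}$ has strictly fewer variables than $f$ once there is more than one cluster); if $\mu_{i}$ has depth $\le 2$ I appeal directly to the relevant part of Lemma \ref{lem:ArithmeticDirectProduct}, combined with the trivial bounds $L_{2}^{A}(g_{j})\le\min\{\Sigma_{2}^{A}(g_{j}),\Pi_{2}^{A}(g_{j})\}$ and $\Sigma_{2}^{A}(g_{j})=\Sigma_{2}^{A}(g_{j}+1)$. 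Summing over clusters yields the claimed lower bound. The main obstacle is the constant bookkeeping in part 2: I must verify that the minimum over the three candidates for $\sum_{j\in I_{i}}g_{j}+c_{i}$ is always attained by the recursive term, using that additive constants are free in the arithmetic size measure but cannot be pushed across a $\Pi_{2}^{A}$ formula for free; once that is done, the $\min\{L_{2}^{A}(g_{j}),L_{2}^{A}(g_{j}+1)\}$ structure propagates cleanly through the induction.
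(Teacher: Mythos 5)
Your proposal follows essentially the same route as the paper's proof: induction on the number of variables, using Corollary \ref{cor:Main} to pin down the admissible top gate of a minimal depth-$\ge 3$ formula, clustering the second-level subformulas along the maximal decomposition (unique by Corollary \ref{cor:UniquenessOfDecomposition}), and combining Lemma \ref{lem:ArithmeticDirectProduct} with the induction hypothesis, including the same device of absorbing the additive constant into a single factor to produce the $\min\{L_{2}^{A}(f_{j}),L_{2}^{A}(f_{j}+1)\}$ terms. The one spot you gloss over is part 1, case A, where a top $+$ gate with a single constant child does not immediately yield a $+$-decomposition --- but you handle exactly this configuration correctly in case B and in part 2, and the paper disposes of it by the same collapse-and-contradict argument.
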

\begin{proof}
We prove each part separately:
\begin{enumerate}
\item If $f$ is $\cdot$-decomposable, follow the proof of Theorem \ref{thm:UF2Correctness}
for the case where $f$ is $\wedge$-decomposable, replacing $uCNF_{s}$
with $\Pi_{2}^{A}$, $uDNF_{s}$ with $\Sigma_{2}^{A}$, and the following minor change - if $\mu$ is of depth $>2$, its topgate $v$ may not induce a variable decomposition, and only in the following
case: it may be a $+$-gate with one constant and one non-constant
sub-formulas. However, in this case the non-constant sub-formula is consistent with $f+1$, while Corollary \ref{cor:Main} tells us that it cannot be $\cdot$-decomposable. Hence, the top-gate of the non-constant sub-formula is an addition gate, and it may be collapsed with $v$. In this case we get that $v$ induces a $\oplus$-decomposition of $f$, again, in contradiction to Corollary \ref{cor:Main}. \\
If $f+1$ is $\cdot$-decomposable repeat the same proof with the
above modifications for $f+1$.
\item By induction on the number of variables. The case of $f$ over 2 variables
is easy, by considering all non-redundant $F_{2}^{A}$ formulas with
2 variables. \\
For $f$ over any number of variables $n$, the direction $\le$ is
easy. For $\ge$, let $\phi$ be some minimal $F_{2}^{A}$ for $f$. If $\phi$ is of depth $\le 2$, we are done. If the depth is $\ge 3$ and $\mbox{topgate}\left(\phi\right)=\cdot$, we get that $f$ is $\cdot$-decomposable, contradicting Corollary \ref{cor:Main}. Therefore, as stated in the previous section, we may assume that $v=\mbox{topgate}\left(\phi\right)$
is an addition gate, and it induces a variable partition. 
Let $\left\{ \phi_{i}\right\} _{i=1}^{t}$ be the sub-formulas of $\phi$.
As in the proof of Theorem \ref{thm:UF2Correctness} we may write: 
\[
\phi=\sum_{i\in D_{1}}\phi_{i}\left(X_{i}\right)+\sum_{i\in D_{2}}\phi_{i}\left(X_{i}\right)
\]
And for the same reasons, together with the fact that $\oplus$-decomposition is unique up to distribution of constants, we may write:
\[
\forall i\in D_{1}\cup D_{2},\exists I_{i}\sus\left[k\right],\exists c_{i}\in\left\{ 0,1\right\} ,\phi_{i}=\sum_{j\in I_{i}}f_{i}\left(X_{i}\right)+c_{i}
\]
For a partition $\left\{ I_{i}\right\} _{i\in D_{1}\cup D_{2}}$ of
$\left[k\right]$. Now distinguish between the following cases:

\begin{enumerate}
\item $i\in D_{1}$. In this case, $\phi$ is represented as $\Sigma_{2}^{A}$,
using $\mbox{topgate}\left(\phi\right)$. Using lemma \ref{lem:ArithmeticDirectProduct},
the induction hypothesis and $\phi$'s minimality, we deduce - 
\begin{eqnarray*}
L_{2}\left(\phi_{i}\right) & = & \Sigma_{2}^{A}\left(\phi_{i}\right)\\
 & = & \Sigma_{2}^{A}\left(\sum_{j\in I_{i}}f_{j}+c_{i}\right)\\
 & = & \Sigma_{2}^{A}\left(\sum_{j\in I_{i}}f_{j}\right)\\
 & = & \sum_{j\in I_{i}}\Sigma_{2}^{A}\left(f_{j}\right)\\
 & \ge & \sum_{j\in I_{i}}L_{2}^{A}\left(f_{j}\right)\\
 & \ge & \sum_{j\in I_{i}}\min\left\{ L_{2}^{A}\left(f_{j}\right),L_{2}^{A}\left(f_{j}+1\right)\right\} 
\end{eqnarray*}

\item $i\in D_{2}$. We have that $\mbox{topgate}\left(\phi_{i}\right)=\cdot$. Consider two subcases:

\begin{enumerate}
\item $\phi_{i}$ is a $\Pi_{2}^{A}$. Using the same tools as in the previous
section, we have - 
\begin{eqnarray*}
L_{2}^{A}\left(\phi_{i}\right) & = & \Pi_{2}^{A}\left(\sum_{j\in I_{i}}f_{j}\left(X_{j}\right)+c_{i}\right).
\end{eqnarray*}
In order to get rid of the constant $c_i$, define $g\left(X_{j'}\right)\triangleq f_{j'}\left(X_{j'}\right)+c_{i}$
for some $j'\in I_{i}$, to get - 
\begin{eqnarray*}
L_{2}^{A}\left(\phi_{i}\right) & = & \Pi_{2}^{A}\left(\sum_{j\ne j'\in I_{i}}f_{j}+g\right)\\
 & \ge & \sum_{j\ne j'\in I_{i}}\Pi_{2}^{A}\left(f_{j}\right)+\Pi_{2}^{A}\left(g\right)\\
 & \ge & \sum_{j\ne j'\in I_{i}}L_{2}^{A}\left(f_{j}\right)+L_{2}^{A}\left(g\right)\\
 & \ge & \sum_{j\in I_{i}}\min\left\{ L_{2}^{A}\left(f_{j}\right),L_{2}^{A}\left(f_{j}+1\right)\right\} 
\end{eqnarray*}

\item $\phi_{i}$ is not a $\Pi_{2}^{A}$. In this case it is immediate
from the induction hypothesis that - 
\begin{eqnarray*}
L_{2}^{A}\left(\phi_{i}\right) & = & L_{2}^{A}\left(\sum_{j\in I_{i}}f_{j}\right)\\
 & = & \sum_{j\in I_{i}}\min\left\{ L_{2}^{A}\left(f_{j}\right),L_{2}^{A}\left(f_{j}\right)+1\right\} 
\end{eqnarray*}
\\
We get that, since $\left\{ I_{j}\right\} _{j=1}^{t}$ is a partition
of $\left[k\right]$, we have:
\begin{eqnarray*}
L_{2}^{A}\left(f\right) & = & \left|\phi\right|\\
 & = & \sum_{i=1}^{t}\left|\phi_{i}\right|\\
 & \ge & \sum_{i=1}^{k}\min\left\{ L_{2}^{A}\left(f_{i}\right),L_{2}^{A}\left(f_{i}+1\right)\right\} 
\end{eqnarray*}
And the theorem follows in this case.
\end{enumerate}
\end{enumerate}
\item $f$ cannot have any $F_{2}^{A}$ of depth $\ge 3$, since it induces
a variable decomposition, thus any $F_{2}^{A}$ for $f$ is of depth
at most $2$, and the theorem follows.
\end{enumerate} 
\end{proof} 

\begin{corollary}
Algorithm \ref{thm:F2Acorrectness} finds the minimal $F_2^A$ representation of the given function in $poly(2^n)$ time. 
\end{corollary}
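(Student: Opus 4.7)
The plan is to establish the corollary in two steps: verify that Algorithm~\ref{alg:MinimizeF2A} implements the recursive characterization of $L_2^A$ given in Theorem~\ref{thm:F2Acorrectness}, and then bound the running time via the recursion tree. Since Theorem~\ref{thm:F2Acorrectness} has already done the combinatorial heavy lifting, the proof reduces to careful bookkeeping.

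For correctness I would proceed by strong induction on the number $n$ of variables. The one-variable base case is dispatched immediately by the algorithm. For the inductive step, Corollary~\ref{cor:Main} tells us that $f$ falls into at most one of the three classes ($\cdot$-decomposable with $f$ or $f+1$; $+$-decomposable; indecomposable), and Theorem~\ref{thm:F2Acorrectness} expresses $L_2^A(f)$ as the minimum over the appropriate subset of $\{F^{MUL}, F^{ADD}, \Sigma_2^A(f), \Pi_2^A(f)\}$. I would verify that each candidate computed by the algorithm matches the corresponding term in that formula: the $F^{MUL}$ branch iterates $b\in\{0,1\}$ so as to detect $\cdot$-decomposability of $f$ or of $f+1$, with an outer $+b$ capturing the constant correction; the $F^{ADD}$ branch selects $c_i\in\{0,1\}$ to match the smaller of $L_2^A(g_i)$ and $L_2^A(g_i+1)$ per factor, aggregating the induced constants into $c+\sum c_i$ at the top level; and $P_f$, $A_f$ directly give $\Sigma_2^A(f)$ and $\Pi_2^A(f)$ (the latter via the already-verified Algorithm~\ref{alg:MinimizePi2A}). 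The inductive hypothesis ensures that each recursive call returns a minimal $F_2^A$ representation of its input, so by Theorem~\ref{thm:F2Acorrectness} the minimum among the four candidates equals $L_2^A(f)$.

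The one delicate point is the accounting of constants. Because constant leaves are excluded from the size of an $F_2^A$ formula, any free constant added or absorbed at the outermost gate does not change the size; but one must also check that when a recursive subformula already has a matching top gate, merging it into its parent does not introduce spurious leaves. This is straightforward since collapsing two adjacent addition (resp. multiplication) gates just unions their input sets while preserving both the variable-bearing leaves and the induced variable partition.

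For complexity I would invoke the same recursion-tree bound used for Algorithm~\ref{alg:MinimizeROFoPLUS}. Each leaf of the recursion corresponds to a distinct variable of $f$, so there are at most $n$ leaves; since we recurse only through nontrivial decompositions, every internal node has fan-out at least two, giving at most $2n$ nodes in all. At each node the algorithm performs: the change-of-basis construction of $P_f$, factorization of a multilinear polynomial over $\mathbb{F}_2$ by \cite{Len85}, the graph-theoretic $+$-decomposition from Observation~\ref{obsrv:HowToFindDecompositions}, and one call to the polynomial-time $Minimize\Pi_2^A$ subroutine---all in $\mathrm{poly}(2^n)$ time. Summing over $O(n)$ nodes yields total running time $\mathrm{poly}(2^n)=\mathrm{poly}(N)$. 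The main (mild) obstacle is checking that the constant bookkeeping aligns exactly with part~2 of Theorem~\ref{thm:F2Acorrectness}; everything else is a direct translation of the recursive characterization into pseudocode.
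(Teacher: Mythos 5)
Your proposal is correct and follows essentially the same route as the paper: correctness by induction from Theorem \ref{thm:F2Acorrectness} together with Corollary \ref{cor:Main} (at most one decomposition type applies), and complexity via the recursion-tree bound inherited from Algorithms \ref{alg:MinimizeUF2} and \ref{alg:MinimizeROFoPLUS}; your extra care about the constant bookkeeping only makes explicit what the paper leaves implicit. One minor caveat: the ``at most $2n$ nodes'' count is slightly optimistic because line \ref{alg:F2A.LineSplit} spawns two recursive calls per $+$-factor ($T_{g_i}$ and $T_{\overline{g_i}}$), but even with this doubling the total work remains $poly(2^n)$, and the paper's own analysis glosses over the same point.
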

\begin{proof}
This follows easily from Theorem \ref{thm:F2Acorrectness}: the algorithm checks the decomposability of the function, knowing that at most one of the decompositions is possible, and outputs the minimal representation according to the decomposability it found. The complexity analysis is similar to that of Algorithm \ref{alg:MinimizeUF2}.
\end{proof}

\section{Open Problems} \label{sctn:OpenProblems}

There are several known examples in the theory of complexity for problems that inherently depend on some parameter $q$ such that for $q=2$ the problem is easy and for $q=3$ the problem becomes hard ($2SAT$ and $3SAT$, 2 and 3 colorability, computing the rank of a matrix and a tensor of dimension 3, etc.). As may seem from this chapter, the problem of finding the minimal unate formula consistent with a given truth-table may also be one of those problems, when $q$ indicates the depth of the formula. \\
\indent For general (non-unate) formulas, it is known that the corresponding decisional problem is $NP$-complete even for depth 2. As we've shown in Section \ref{sctn:LowerBoundsSigma3} proving similar results about depth-3 formulas may be a very hard task. As for unate formulas, we've shown that the existing algrithms for depth 2 minimization may be applied to get minimization algorithms for wider classes of formulas over several different bases ($UF_2, F_2^A$ formulas) using techniques from the world of $ROF$ minimization. To the best of our knowledge, there is no hardness result regarding the truth-table minimization of general unate formulas.\\
\indent Therefore, the most intriguing gap to be closed is the hardness of $MinUnate\Pi_3, MinUnate\Sigma_3$ and $Min\Sigma_3^A, Min\Pi_3^A$, i.e., of unate and arithmetic formulas of depth 3. On one hand, an efficient algorithm for minimization of such formulas will immediately provide an efficient algorithm for $UF_3$ formulas, and may constitute a step towards a construction of a minimization algorithm for general unate formulas. On the other hand, $NP$-complteness of one of those problems may provide new lower bounds for monotone formulas, as explained in Section \ref{sctn:LowerBoundsSigma3} (this is since the lemma by Valiant presented in that section may also be applied similarly to monotone formulas).\\
\section{Some Proofs}
\subsubsection{Proof of Proposition \ref{prop:2Partitions}}

Let $x_{i}$ for some $i\in\left[n\right]$. We know that $\left\{ X_{i}\right\} _{i=1}^{k},\left\{ X_{i}'\right\} _{i=1}^{k'}$ are partitions, therefore there exist $i_{1}\in\left[k\right],i_{2}\in\left[k'\right]$
such that $x_{i}\in X_{i_{1}}\cap X_{i_{2}}'$. Distinguish between
two cases - 

1. $\exists j_{2}\in\left[k'\right]$ such that $x_{i}\notin X_{j_{2}}'$
(and therefore $j_{2}\ne i_{2}$) and $X_{j_{2}}'\backslash X_{i_{1}}\ne\es$.
In this case let $x_{j}\in X_{j_{2}}'\backslash X_{i_{1}}$. Let $j_{1}\in\left[k\right]$
be such that $x_{j}\in X_{j_{1}}$ (note that $j_{1}\ne i_{1}$ since
$x_{j}\notin X_{i_{1}}$). We get $x_{i}\in X_{i_{1}}\cap X_{i_{2}}'$
and $x_{j}\in X_{j_{1}}\cap X_{j_{2}}'$ for $i_{1}\ne j_{1},i_{2}\ne j_{2}$
as needed.

2. $\forall j\in\left[k'\right]$ such that $x_{i}\notin X_{j}'$
we have that $X_{j}'\sus X_{i_{1}}$. Obviously - $X_{i_{2}}'\nsubseteq X_{i_{1}}$,
since if $X_{i_{2}}'\sus X_{i_{1}}$, we get that $\forall j\in\left[k'\right],X_{j}'\sus X_{i_{1}}$,
and therefore the partition $\left\{ X_{i}\right\} _{i=1}^{k}$ is
trivial. Now instead of choosing the aforementioned $x_{i}$, let
$x_{t}\in X_{i_{2}}'\backslash X_{i_{1}}$. Let $t'\in\left[k\right]$
such that $x_{t}\in X_{t'}$ (notice that $t'\ne i_{1}$, since $x_{t}\notin X_{i_{1}}$).
We get that $x_{t}\in X_{t'}\cap X_{i_{2}}'$. To choose his counterpart,
let $j_{2}\in\left[k'\right]$ be \emph{different} from $i_{2}$.
We know that $x_{i}\notin X_{j_{2}}'$, thus according to the assumption
we get $X_{j_{2}}'\sus X_{i_{1}}$, and any element $x_{j}\in X_{j_{2}}'$
will satisfy $x_{j}\in X_{j_{2}}'\cap X_{i_{1}}$. Since $t\ne j$,
$j_{2}\ne i_{2}$ and $t'\ne i_{1}$, we get QED for the pair $x_{t},x_{j}$. 

\subsubsection{Proof of Proposition \ref{prop:Delta2Not0}}
According to \cite[Lemma 2.1.8]{V12} every multilinear polynomial $P$ depends on $x_{i}$ iff $\frac{\partial P}{\partial x_{i}}\ne 0$. Now, from the fact that $Q$ is $(x_i,x_j)$-decomposable and multi-linear, we infer that it may be written as $Q(X)=Q_1(X_1)\cdot Q_2(X_2)$ such that $X_1 \cap X_2 =\es$ and $x_i \in X_1,x_j \in X_2$. By \cite[Lemma 2.1.9]{V12}, we have that partial derivatives of multi-linear polynomials comply with the ordinary sum and product rules, as ordinary derivative does. We get that:
\[
\frac{\partial Q}{\partial x_i x_j} = \frac{\partial Q_1}{\partial x_i} \cdot \frac{\partial Q_2}{\partial x_j}
\]
To see that $\frac{\partial Q}{\partial x_i x_j}$ is not the zero polynomial, observe that $\frac{\partial Q_1}{\partial x_i}$ and $\frac{\partial Q_2}{\partial x_j}$ are non-zero and variable disjoint - Remark \ref{remark:AllVariables} gives us that $Q$ depends on all its variables, and $x_i$ (resp. $x_j$) appears only in $Q_1$ (resp. $Q_2$). Therefore we may choose assignments $\ga,\gb$ for $X_1,X_2$ respectively such that $\frac{\partial Q_1}{\partial x_i}(\ga)\ne 0$ and $\frac{\partial Q_2}{\partial x_j}(\gb)\ne 0$ and get that the concatenation of $\ga,\gb$ is an assignments on whom $\frac{\partial Q}{\partial x_i x_j}$ does not vanish.

\subsubsection{Addendum to the proof of Theorem \ref{thm:LowerBounds}}

\begin{lemma} \label{lemma:LowerBoundsAddendum}
Let $S:\bb N\to\bb N$ be a monotone function such that $S\left(n\right)=o\left(n^{\e}\right)$ for every $\e>0$. Then $S\left(n\right)^{S\left(n\right)}=o\left(2^{\e n^{\e}\log n}\right)$.
\end{lemma}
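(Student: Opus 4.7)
The plan is to take base-2 logarithms of both sides, which reduces the claim to showing that $S(n)\log S(n) = o\!\left(\varepsilon\, n^{\varepsilon}\log n\right)$ for every $\varepsilon>0$. This reformulation is convenient because the RHS of the original statement is an exponential, whose asymptotic behaviour is governed by its exponent, and the $o$-notation transfers cleanly: $f(n)=o(2^{g(n)})$ iff $\log f(n) - g(n)\to -\infty$, which is implied by (and, when $g$ grows unboundedly, equivalent to) $\log f(n) = o(g(n))$.

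Once the problem is in logarithmic form, the key trick is simply to apply the hypothesis at a \emph{smaller} power of $n$. Fix an arbitrary $\varepsilon>0$. Applying the hypothesis with the parameter $\varepsilon/2$ (which is positive), we get $S(n)=o(n^{\varepsilon/2})$, so there exists $n_0$ such that for all $n\ge n_0$ we have $S(n)\le n^{\varepsilon/2}$. Taking $\log_2$ of both sides yields $\log S(n)\le (\varepsilon/2)\log n$ for $n\ge n_0$. Multiplying these two inequalities gives, for all $n\ge n_0$,
\[
S(n)\log S(n)\;\le\; n^{\varepsilon/2}\cdot \tfrac{\varepsilon}{2}\log n \;=\; \tfrac{\varepsilon}{2}\,n^{\varepsilon/2}\log n.
\]

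Finally, to verify the little-$o$ relation, I would divide both sides by $\varepsilon\, n^{\varepsilon}\log n$ (which is positive for large $n$), obtaining
\[
\frac{S(n)\log S(n)}{\varepsilon\, n^{\varepsilon}\log n}\;\le\;\frac{1}{2\,n^{\varepsilon/2}}\;\xrightarrow[n\to\infty]{}\;0,
\]
which establishes $S(n)\log S(n)=o(\varepsilon\, n^{\varepsilon}\log n)$, and hence $S(n)^{S(n)}=o(2^{\varepsilon n^{\varepsilon}\log n})$, as required.

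There is no real obstacle in this argument; the only point that requires minor care is the interpretation of the two occurrences of $\varepsilon$ in the statement (both quantified ``for every $\varepsilon>0$'', as is needed in the invocation inside the proof of Theorem \ref{thm:LowerBounds}) and the standard trick of spending half the budget on bounding $S(n)$ pointwise and half on bounding $\log S(n)$.
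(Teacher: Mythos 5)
Your proof is correct, and it takes a genuinely different route from the paper's. The paper extends $S$ to a differentiable monotone function and applies L'H\^opital's rule to the ratio $S(n)^{S(n)}/2^{\varepsilon n^{\varepsilon}\log n}$, eventually writing it as a product of two bounded factors and a factor $S'(n)/(\varepsilon n^{\varepsilon-1})$ that it claims tends to $0$; this argument leans on the tacit (and not generally valid) step that $S(n)=o(n^{\varepsilon})$ forces the corresponding bound on the \emph{derivative} of the chosen extension, which the paper acknowledges only as an abuse of notation. Your approach instead takes logarithms, reduces the claim to $S(n)\log S(n)=o\left(\varepsilon n^{\varepsilon}\log n\right)$ (legitimately, since the exponent $\varepsilon n^{\varepsilon}\log n$ tends to infinity, so $\log f - g \to -\infty$ follows from $\log f = o(g)$), and then spends the hypothesis at exponent $\varepsilon/2$ to bound $S(n)\le n^{\varepsilon/2}$ and $\log S(n)\le(\varepsilon/2)\log n$ separately. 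What your version buys is that it is entirely elementary --- no differentiability assumption, no L'H\^opital --- and it is actually more rigorous than the paper's calculus argument while being shorter; the paper's version buys nothing extra here beyond matching the style of the surrounding computation. The only cosmetic caveat is the degenerate case $S(n)\in\{0,1\}$, where $\log S(n)\le 0$ and the conclusion holds trivially, so your multiplication of the two inequalities should be read as applying once $S(n)\ge 2$ (or the bounded-$S$ case dispatched separately); this does not affect correctness.
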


\begin{proof}
Using basic calculus (and abusing the notation by writing $S\left(n\right)$
for some differentiable continuous monotone function that agrees with
the original $S\left(n\right)$ over $\bb{N}$) we get:

\begin{eqnarray*}
\lim _{n\to \infty}\frac{S\left(n\right)^{S\left(n\right)}}{2^{\e n^{\e}\log n}} & \overset{_{\mbox{L'Hopital}}}{=} & \lim _{n\to\infty}\frac{S\left(n\right)^{S\left(n\right)}\left[S'\left(n\right)+S'\left(n\right)\log\left(S\left(n\right)\right)\right]}{n^{\e n^{\e}}\left[\e\left(n^{\e-1}\right)+\e\left(n^{\e-1}\right)\log\left(n^{\e}\right)\right]}\\
 & = & \lim _{n\to\infty}\frac{S\left(n\right)^{S\left(n\right)}\left[S'\left(n\right)\left(1+\log\left(S\left(n\right)\right)\right)\right]}{n^{\e n^{\e}}\left[\e\left(n^{\e -1}\right)\left(1+\log\left(n^{\e}\right)\right)\right]}
\end{eqnarray*}
Since $S\left(n\right)=o\left(n^{\e}\right)$, then $\lim_{n\to\infty}\frac{S\left(n\right)}{n^{\e}}=\lim_{n\to\infty}\frac{S'\left(n\right)}{\e n^{\e-1}}=0$,
thus 
\[
=\lim_{n\to\infty}2^{S\left(n\right)\log S\left(n\right)-n^{\e}\log n^{\e}}\cdot\frac{S'\left(n\right)}{\e n^{\e-1}}\cdot\frac{1+\log\left(S\left(n\right)\right)}{1+\log\left(n^{\e}\right)}
\]
Now, since the function $\log$ in order-preserving, we have that
$\frac{\log S\left(n\right)+1}{\log\left(n^{\e}\right)+1}\le c$ for
some $c\in\bb R$ and large enough $n$. Moreover, it is easy to see
that 
\[
S\left(n\right)\log\left(S\left(n\right)\right)-n^{\e}\log n^{\e}\le const
\]
for large enough $n$. Therefore the above limit is a multiplication
of two bounded functions, and one that goes to zero with $n$, therefore
the entire limit is 0, and $S\left(n\right)=o\left(2^{\e n^{\e}\log n}\right)$.
\end{proof}

Therefore if $S\left(n\right)=o\left(n^{\e}\right)$ then 
\begin{eqnarray*}
2^{2c\log n\cdot S\left(n\right)}\cdot S\left(n\right)^{S\left(n\right)+1} & = & S\left(n\right)\cdot n^{2c\cdot S\left(n\right)}\cdot S\left(n\right)^{S\left(n\right)}\\
 & = & o\left(n^{\e}\cdot n^{2c\cdot n^{\e}}\cdot2^{\e n^{\e}\log n}\right)\\
 & = & o\left(2^{\e\log n}\cdot2^{2c\cdot n^{\e}\log n}\cdot2^{\e n^{\e}\log n}\right)\\
 & = & o\left(2^{\left(2c+\e\right)n^{\e}\log n}\right)\\
 & = & o\left(2^{n^{2\e}}\right).
\end{eqnarray*}

\chapter{Pseudorandomness} \label{chapt:Pseudo}
Originally used for cryptographic purposes, pseudorandom functions (PRF) were used several times outside of cryptography to show interesting interdisciplinary results, albeit the lack of unconditional constructions of those so far. E.g., in computational learning theory, the existence of PRFs in a certain class provides a barrier for the ability to efficiently learn that class. In complexity theory, the celebrated result of \cite{RR97} that the existence of PRFs which are strong in some sense implies that circuit lower bounds cannot be proved using ``natural'' arguments. Furthermore, in the field of truth-table minimization, the conjectured existence of PRFs provided inapproximability results for several classes of computational models (in \cite{KW09} for communication protocols, in \cite{AH+06} for $AC^0$ circuits, and Theorem \ref{thm:BP_approx} in this Thesis). Similarly, \cite{KC00} show that if the truth-table minimization of a general boolean circuit is easy, then there are no strong PRFs.

In this chapter we generalize the connection between the existence of PRFs in a certain class, and the truth-table minimization of that class (Section \ref{sctn:Generalization}). Moreover, we generalize the aforementioned result of \cite{KC00} to any complexity class in Section \ref{sctn:NaturalProperties} where our terminology is based on \cite{RR97}. One possible use of Section \ref{sctn:NaturalProperties}, regarding OBDDs (Definition \ref{defn:OBDD}), is presented in Section \ref{sctn:OBDD}.

Before stating our generalizations, we define pseudorandom functions:

\begin{defn} \label{defn:PRF}
Let $\cl{F} = \{F_n\}_{n\in\bb{N}}$ be a function ensemble, such that for all $n$, the set $F_n$ is a set of boolean functions on $n$ variables. $\cl{F}$ is called a \textit{pseudorandom function ensemble} (PRFE), and the functions in it are called \textit{pseudorandom functions} (PRF) if for every probabilistic polynomial-time oracle machine $A$, for every polynomial $p(n)$ and for every large enough $n$, 
\[
\left| \Pr(A^{F_n}(1^n) = 1) - \Pr(A^{H_n}(1^n) = 1) \right| \le \frac{1}{p(n)},
\]
Where $A^{F_n}$ denotes $A$ with an oracle access to a function uniformly chosen from $F_n$ and $A^{H_n}$ denotes $A$ with an oracle access to a uniformly random function on $n$ variables. The probability is taken over the choice of the function and the coin flips of $A$.
\end{defn}

\section{Truth-Table minimization and PRFs, a Generalization} \label{sctn:Generalization}

Theorem \ref{thm:PRFandTTM} generalizes the technique of the following results: \cite{KW09} prove that the minimal complexity of a communication protocol for a function given as truth table is inapproximable up to some constant factor. \cite{AH+06} prove inapproximability of $N^{1-\e}$ (for every $\e>0$) of the minimal size of an $AC^0$ circuit. We give $N^{c}$ hardness of approximation result ($0<c<\frac{1}{2}$) for the minimal branching program size (Theorem \ref{thm:BP_approx}). All three result use the conjectured existence of PRFs in the target model in the following way:

\begin{thm} \label{thm:PRFandTTM}
Let $C$ be a class of computational models. If $C$ satisfies the following
conditions:
\begin{enumerate}
\item $1-\exp\left(-\Omega\left(n\right)\right)$ of the functions on $n$ variables require
a model of size at least $l\left(n\right)$.

\item There is a PRFE in the model $C$ such that the maximal $C$-complexity of a function on $n$ variables in the ensemble is $u\left(n\right)$.

\item There exists $\e >0$ and a function $\ga(n)>0$, such that $u\left(2^{n / \e} \right) \le \frac{1}{\ga (n)^{2}}\cdot l\left(n\right)$ for every large enough $n$.

\end{enumerate}
Then there is no polynomial $\ga (n)$-approximation to the truth-table minimization of the model $C$, i.e., there is no polynomial algorithm $A$ such that $\frac{C(f)}{\ga(n)}\le A(T_f)\le \ga(n) \cdot C(f)$, where $C(f)$ denotes the $C$ complexity of the function $f$.
\end{thm}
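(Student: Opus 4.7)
The plan is a contradiction argument: assume $A$ is a polynomial-time $\ga(n)$-approximation for $C$-truth-table minimization, and use it to construct a polynomial-time distinguisher that separates the PRFE in $C$ from a uniform random function, contradicting condition 2. The three hypotheses align exactly with the needs of such a distinguisher: condition 1 lower-bounds the $C$-complexity of a random target, condition 2 upper-bounds the $C$-complexity of a pseudorandom target, and condition 3 certifies that these two bounds are separated by a factor strictly larger than $\ga(n)^2$, which is precisely the two-sided slack $A$ may introduce.

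I will set up the distinguisher as follows. Let $\{F_s\}_{s\in\bb{N}}$ be the PRFE from condition 2, so each $F_s$ is a distribution over $s$-variable functions of $C$-complexity at most $u(s)$. On security parameter $s$, set $n := \lceil \e \log s \rceil$, which ensures $2^{n/\e} \ge s$ and $2^n \le 2 s^\e$. Given oracle access to $h:\zo{s}\to\zo{}$, the distinguisher $D$ queries $h$ at the $2^n$ inputs $(x,0^{s-n})$ for $x\in\zo{n}$, assembling the truth-table $T_{\tilde h}$ of the restricted function $\tilde h(x) := h(x,0^{s-n})$. It then invokes $A$ on $T_{\tilde h}$ and outputs ``pseudorandom'' if the returned value is at most $\ga(n)\cdot u(2^{n/\e})$, and ``uniform'' otherwise.

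The two cases split cleanly. If $h$ is drawn from $F_s$, then the natural closure of $C$ under substitution of constants into variables gives $C(\tilde h) \le C(h) \le u(s) \le u(2^{n/\e})$, and the approximation guarantee $A(T_{\tilde h}) \le \ga(n)\cdot C(\tilde h)$ places the output below the threshold. If $h$ is uniform then $\tilde h$ is a uniform $n$-variable function, so by condition 1 we have $C(\tilde h) \ge l(n)$ except with probability $\exp(-\Omega(n))$; the approximation guarantee $A(T_{\tilde h}) \ge C(\tilde h)/\ga(n)$ then places the output at least $l(n)/\ga(n)$, which is strictly above the threshold since condition 3 is equivalent to $l(n)/\ga(n) > \ga(n)\cdot u(2^{n/\e})$. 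The distinguishing advantage is therefore $1 - \exp(-\Omega(n)) = 1 - s^{-\Omega(1)}$, which is non-negligible in the security parameter $s$.

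Finally, $D$ makes $2^n = O(s^\e)$ oracle queries and runs $A$ on a truth-table of length $2^n$, for a total of $\text{poly}(2^n) \subseteq \text{poly}(s)$ steps; it is therefore a polynomial-time adversary with non-negligible advantage, contradicting condition 2. I expect the subtlest step to be the complexity-monotonicity claim in the first case, namely $C(\tilde h) \le u(s)$: this implicitly assumes $C$ is closed under variable restriction by constants, a mild and nearly universal property of the models studied in this thesis (circuits, formulas, decision trees, branching programs) that should nevertheless be verified for each concrete application of the theorem.
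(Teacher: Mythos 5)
Your proposal is correct and follows essentially the same route as the paper: restrict the oracle to a logarithmic-size prefix of the variables, build the $2^{O(\log s)}$-size truth table, run the hypothetical approximator, and threshold at $\ga(n)\cdot u\bigl(2^{n/\e}\bigr)$, with condition 3 guaranteeing the gap between the pseudorandom upper bound and the random lower bound survives the two-sided $\ga(n)$ slack. Your explicit flagging of the closure-under-restriction step ($C(\tilde h)\le C(h)$) is a point the paper uses silently, and your bookkeeping of the advantage as $1-s^{-\Omega(1)}$ is in fact slightly more careful than the paper's.
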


\begin{proof}
Assume for contradiction that there exists an algorithm $B$ that given a truth-table $T_f$ of a function $f$ on $n$ variables it $\ga (n)$ approximates the size $C(f)$ of the minimal model in $C$ which is consistent with $f$. Define the following algorithm $A$ with an oracle access to $T$: On $1^n$, the algorithm $A$ defines $m\triangleq \e \log n$, and constructs the truth-table $T_g$ of $g(x) \triangleq  T(x0^{n-m})$ on $m$ variables. It then feeds the truth-table into $B$. If $B$'s output is $\le \ga (m)\cdot u\left( 2^{m/\e}\right)$ then $A$ outputs 1, otherwise it outputs 0. \\
\indent We show that $A$ is a distinguisher for the PRFE. If $T$ is a random oracle, then with probability $1-\exp ^{-\Omega (n)}$:
\[
B(T_g) \ge \frac{1}{\ga(m)} \cdot C(g) \ge \frac{1}{\ga(m)} \cdot l(m).
\]
On the other hand, if $T$ is pseudorandom then:
\[ 
B(T_g) \le \ga (m) \cdot C(g) \le \ga (m) \cdot C(T) \le \ga (m) \cdot u(n)=\ga(m)\cdot u\left( 2^{m/\e}\right).
\]
Since $\ga(m)\cdot u\left( 2^{m/\e}\right) \le l(m) / \ga(m)$ we get that 
\[
\left| \Pr(A^{F_n}(1^n) = 1) - \Pr(A^{H_n}(1^n) = 1) \right| \ge 1-\exp ^{-\Omega (n)},
\]
in contradiction with the definition of PRFE.
\end{proof}

\section{PRFs, Natural Properties and Truth-Table Minimization} \label{sctn:NaturalProperties}
While the result of Section \ref{sctn:Generalization} is usually used to get inapproximability results under cryptographic assumptions, in this section we would like to do the opposite. Namely, to formulate exactly what properties of a PRFE could be proved not to exist under the assumption that some model has an efficient truth-table minimization algorithm. We use the terminology of \cite{RR97} to formulate our result. We also show an example of a possible use of this theorem for the OBDD model (see Definition \ref{defn:OBDD}) which has a truth-table minimization algorithm \cite{FS90}. Unfortunately, this corollary is superseded by a result of \cite{KL01}, which shows a result stronger than ours using communication complexity arguments. We leave the search for other implementations of Theorem \ref{thm:NaturalProperty} as an open problem.

\begin{defn}
Let $\cl{C} = \left\{ C_{n}\right\} _{n\in\bb{N}}$ be a property of boolean functions, where $C_n$ is a set of functions on $n$ variables. For any complexity class $\Gamma$ we say that $\cl{C}$ is a $\Gamma$-natural property if it holds:

\begin{enumerate}
\item The predicate  $f_{n}\in C_{n}$ is computable in $\Gamma$ (when
$f_{n}$ is represented as its truth-table).

\item $\left|C_{n}\right|\ge\delta_{n}\cdot\left|H_{n}\right|$. When $H_n$ is the set of all boolean functions on $n$ variables, and for $\delta_{n}\ge 1 - \frac{1}{p(n)}$ for some polynomial $p$ and large enough $n$.

\end{enumerate}
\end{defn}

\begin{defn}
For a property $\cl{C} = \left\{ C_{n}\right\} _{n\in\bb{N}}$, we call a set of functions $\left\{ \mathcal{F}_{n}\right\} _{n\in\bb{N}}$ ``good'' for $\cl{C}$ if:
\begin{enumerate}
\item for every large enough $n$, and for every $f_{n}\in\mathcal{F}_{n}$,
$f_{n}\notin C_{n}\Rightarrow f_n^{\log n}\notin C_{\log n}$. (when $f_n^{\log n}$ is a sub-function of $f_{n}$ over $\log n$ variables for the assignment,
say, $\star^{\log n}0^{n-\log n}$).

\item $\exists m\in\bb{N}$ such that $\forall n>m,\,\mathcal{F}_{n}\cap C_{n}=\emptyset$.

\end{enumerate}
\end{defn}

\begin{thm} \label{thm:NaturalProperty}
Let $\Gamma$ be a complexity class such that $DTIME\left(O\left(n\right)\right)\sus\Gamma$.
Then if there exists a property $\cl{C}$ which is a $\Gamma$-natural property then there is no PRFE which is good for $C$, and fools algorithms in $\Gamma$.
\end{thm}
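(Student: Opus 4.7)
The plan is to prove the contrapositive by a natural-proofs style distinguisher. Suppose for contradiction that $\{\mathcal{F}_n\}_{n\in\bb{N}}$ is a PRFE that is good for $\cl{C}$ and fools all $\Gamma$-oracle algorithms. I will construct an oracle algorithm $A\in\Gamma$ whose distinguishing advantage between $\mathcal{F}_n$ and a uniformly random function is non-negligible, contradicting the fooling assumption.

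On input $1^n$ with oracle access to $g:\zo{n}\to\zo{}$, the algorithm $A$ first queries $g$ at the $n$ points of the form $x\cdot 0^{n-\log n}$ for $x\in\zo{\log n}$, and assembles the resulting $n$ answer bits into the truth table of the sub-function $g^{\log n}$ on $\log n$ variables. It then invokes the $\Gamma$-decider for the predicate ``$h \in C_{\log n}$'' on this truth table (which has length $n = 2^{\log n}$) and outputs its answer. Constructing the $n$ query strings and collecting their answers costs $O(n\log n)$ time, so by the hypothesis $DTIME(O(n))\sus\Gamma$ together with the fact that the $C$-membership predicate on truth tables of size $n$ is in $\Gamma$, the composed algorithm $A$ itself lies in $\Gamma$.

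For the advantage analysis: if $g$ is drawn uniformly from $H_n$, then $g^{\log n}$ is uniformly distributed over $H_{\log n}$ (its truth table is a uniformly random $n$-bit string), so by the density clause of a $\Gamma$-natural property, $\Pr[A^{H_n}(1^n)=1]=\delta_{\log n}\ge 1-\frac{1}{p(\log n)}$ for every large enough $n$. Conversely, if $g$ is drawn from $\mathcal{F}_n$, then condition 2 of goodness gives $g\notin C_n$ for all sufficiently large $n$, and condition 1 of goodness then forces $g^{\log n}\notin C_{\log n}$; consequently $A$ outputs $1$ with probability exactly $0$ in the pseudorandom case. Hence the distinguishing advantage is at least $1-\frac{1}{p(\log n)}$, which dominates every inverse polynomial in $n$ for large enough $n$, contradicting the assumption that $\{\mathcal{F}_n\}$ fools $\Gamma$-algorithms.

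The only delicate point is ensuring that $A$ genuinely resides in $\Gamma$, and this is exactly what the two quantitative assumptions enforce: the restriction $DTIME(O(n))\sus\Gamma$ absorbs the linear-time bookkeeping of issuing queries and assembling the sub-function's truth table, while restricting the sub-function to $\log n$ variables guarantees that the truth table handed to the $\Gamma$-decider has length $n$, matching the input scale at which the natural property's test is assumed to run in $\Gamma$. Once these are in place, the rest is a one-line density-versus-disjointness comparison, and the contradiction is immediate.
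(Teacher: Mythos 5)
Your proposal is correct and follows essentially the same route as the paper's proof: restrict the oracle to a $\log n$-variable sub-function at $\star^{\log n}0^{n-\log n}$, run the $\Gamma$-decider for $C_{\log n}$ on the resulting length-$n$ truth table, and compare the density guarantee in the random case against the goodness conditions in the pseudorandom case. The only difference is the (immaterial) output convention, and your advantage bound $\delta_{\log n}\ge 1-\frac{1}{p(\log n)}$ matches the paper's.
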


\begin{proof}
Let $\cl{C},\Gamma$ be as stated in the Theorem. Assume that $\mathcal{F}=\left\{ \mathcal{F}_{n}\right\} _{n\in\bb{N}}$ is a PRFE which is good for $\cl{C}$. We break the PRFE in the following manner: Let $A$ be an algorithm with an oracle access to either an entirely random function or a random function from the PRFE. $A$
(on input $1^{n}$) will construct a truth-table of a sub-function $g$ of the oracle over $\star^{\log n}0^{n-\log n}$. $A$ will compute the predicate ``$g\in C_{\log n}$''. If it gets an affirmative answer, it outputs 0 (namely, \textit{random}). Otherwise it outputs 1 (namely, \textit{pseudorandom}). 

If the oracle is indeed random, then $g$ has the property $C_{\log n}$ w.p $\delta_{\log n}$. If the oracle is pseudorandom (and ``good''), then for large enough $n$, $f_{n}\notin C_{n}$, and according to $\cl{C}$ being good, $g\notin C_{\log n}$ w.p 1. Computing the truth-table of $g$ takes $O\left(n\right)$ time. The decision $g\in C_{\log n}$ may be done in $\Gamma$ since its size is $O\left(n\right)$, which is the input size of $A$. Therefore $A$ is in $\Gamma$.
\end{proof}

\subsection{Implication to OBDDs} \label{sctn:OBDD}
Let $\cl{C}=\left\{ C_{n}\right\} _{n\in\bb{N}}$, $C_{n}=\left\{ f_{n}:\zo{n}\to\zo{}\vert OBDD\left(f_{n}\right)>2^{\e n}\right\} $
for some fixed $0<\e<1$.

\begin{obsrv}
$\cl{C}$ is $P$-natural.
\end{obsrv}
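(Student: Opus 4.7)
The plan is to check the two defining requirements of a $P$-natural property directly.

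For the first condition, I would invoke the truth-table minimization algorithm of \cite{FS90}, which computes the minimal OBDD size of a function presented as its full truth-table in time polynomial in the truth-table size $N = 2^n$. Given an input truth-table of $f_n$, run this algorithm to obtain $OBDD(f_n)$ and accept iff the returned value exceeds $2^{\varepsilon n}$. Since the running time is $\mathrm{poly}(2^n) = \mathrm{poly}(|\text{input}|)$, the predicate ``$f_n \in C_n$'' lies in $P$.

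For the density condition, I would use a standard counting argument to show that very few functions have small OBDDs. An OBDD on $n$ variables of size at most $s$ is specified by (i) a variable ordering (at most $n!$ choices), and (ii) for each of the $s$ internal nodes, its variable label (at most $n$ choices) together with its two outgoing edges, each pointing to one of the other $s$ nodes or to one of the two sinks (at most $(s+2)^2$ ordered pairs). Hence the number of distinct OBDDs of size at most $s$ on $n$ variables is bounded by
\[
n! \cdot \bigl( n (s+2)^2 \bigr)^s = 2^{O(s \log s + n \log n)}.
\]
Setting $s = 2^{\varepsilon n}$ gives at most $2^{O(n \cdot 2^{\varepsilon n})}$ distinct functions with $OBDD(f_n) \le 2^{\varepsilon n}$.

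Since $|H_n| = 2^{2^n}$ and $0 < \varepsilon < 1$, we have $n \cdot 2^{\varepsilon n} = o(2^n)$, so
\[
\frac{|H_n \setminus C_n|}{|H_n|} \le 2^{O(n \cdot 2^{\varepsilon n}) - 2^n} \le 2^{-\Omega(2^n)},
\]
which is dominated by $1/p(n)$ for every polynomial $p$ and every sufficiently large $n$. Thus $|C_n| \ge \delta_n \cdot |H_n|$ with $\delta_n \ge 1 - 1/p(n)$, proving the density condition. No step is conceptually difficult; the only care needed is to make the counting bound precise enough to beat $2^{2^n}$, which follows immediately from $\varepsilon < 1$.
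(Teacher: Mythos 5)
Your proposal is correct. The treatment of the first condition is identical to the paper's: both invoke the polynomial-time OBDD truth-table minimization algorithm of \cite{FS90} and threshold its output at $2^{\e n}$. For the density condition you diverge slightly: the paper observes that $BP(f)\le OBDD(f)$ for every $f$ and then cites the counting argument of \cite{W00}, which shows that all but an exponentially small fraction of functions require branching programs of exponential size (in fact of size roughly $2^n/n$, which dominates $2^{\e n}$); you instead carry out a self-contained counting argument directly on OBDDs, bounding the number of OBDDs of size at most $s$ by $n!\cdot\bigl(n(s+2)^2\bigr)^s=2^{O(s\log s+n\log n)}$ and setting $s=2^{\e n}$. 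Your count is a valid upper bound (the over-counting from ignoring the layer constraints only helps), and the conclusion $n\cdot 2^{\e n}=o(2^n)$ for $\e<1$ correctly yields a $1-2^{-\Omega(2^n)}$ fraction, which certainly beats $1-1/p(n)$. What the paper's route buys is brevity and reuse of a bound it already needs for Theorem \ref{thm:BP_approx}; what yours buys is independence from the external reference and from the inequality between BP and OBDD size. Both are sound.
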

\begin{proof}
Notice that:
\begin{enumerate}
\item The predicate $f_{n}\in C_{n}$ is computable in $P$ by \cite{FS90}.
\item $\left|C_{n}\right|\ge\delta_{n}\left|F_{n}\right|$, since for any
boolean function $f$, $BP\left(f\right)\le OBDD\left(f\right)$,
and a counting argument for $BP$s by \cite{W00} shows that all but exponentially small fraction of function on $n$ inputs require branching programs of exponential size.
\end{enumerate}\end{proof}

According to Theorem \ref{thm:NaturalProperty} we have that there is no $PRFE$ which is good for $C$. Therefore:

\begin{corollary}
there is no PRFE that can be implemented by OBDDs of size $n^{\e}$ and fool polynomial algorithms.
\end{corollary}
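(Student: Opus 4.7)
The plan is to derive the corollary as a direct application of Theorem \ref{thm:NaturalProperty}, instantiated with $\Gamma = P$ and the property $\cl{C}$ just described. Since the excerpt already establishes that $\cl{C}$ is $P$-natural (computability of the predicate via \cite{FS90} plus a counting argument lower-bounding $|C_n|$ via the $OBDD \geq BP$ inequality and \cite{W00}), what remains is to verify that any hypothetical PRFE $\cl{F} = \{\cl{F}_n\}_{n\in\bb{N}}$ whose members admit OBDDs of size $n^{\e}$ is \emph{good} for $\cl{C}$, and then invoke Theorem \ref{thm:NaturalProperty} to derive a contradiction.

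First I would verify the second goodness condition: for every $f_n \in \cl{F}_n$ we have $OBDD(f_n) \le n^{\e}$ by assumption, and since $n^{\e} \le 2^{\e n}$ for all $n \ge 1$, this gives $f_n \notin C_n$. Hence $\cl{F}_n \cap C_n = \es$ for all sufficiently large (in fact, all) $n$, so the second condition holds trivially.

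The first goodness condition requires showing that $f_n \notin C_n$ implies $f_n^{\log n} \notin C_{\log n}$, where $f_n^{\log n}$ is the restriction of $f_n$ obtained by fixing the last $n-\log n$ inputs to $0$. The key observation is the standard monotonicity fact that restricting variables does not increase OBDD size: any OBDD for $f_n$ can be converted into an OBDD for $f_n^{\log n}$ by short-circuiting the nodes corresponding to the fixed variables along their $0$-edge. Consequently
\[
OBDD(f_n^{\log n}) \le OBDD(f_n) \le n^{\e} = 2^{\e \log n},
\]
which places $f_n^{\log n}$ outside $C_{\log n}$, establishing goodness.

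With both conditions verified, $\cl{F}$ is good for $\cl{C}$. Since $DTIME(O(n))\sus P$ and $\cl{C}$ is $P$-natural, Theorem \ref{thm:NaturalProperty} applies and tells us that no PRFE good for $\cl{C}$ can fool algorithms in $P$, contradicting the assumption that $\cl{F}$ was a PRFE in Definition \ref{defn:PRF}'s sense. I expect no serious obstacle here, since all the heavy lifting was done in Theorem \ref{thm:NaturalProperty}; the only subtlety is choosing the restriction pattern ($\star^{\log n} 0^{n-\log n}$, matching the one used in the goodness definition) so that the OBDD monotonicity argument lines up with the indexing $\log n$ expected by the theorem.
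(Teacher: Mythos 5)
Your proposal is correct and follows essentially the same route as the paper: both verify that the hypothetical size-$n^{\e}$ PRFE is good for $\cl{C}$ by (i) noting $n^{\e}\le 2^{\e n}$ so $\mathcal{F}_n\cap C_n=\emptyset$, and (ii) using the fact that fixing the last $n-\log n$ variables to $0$ yields an OBDD for $f_n^{\log n}$ of size at most $n^{\e}=2^{\e\log n}$, then invoke Theorem \ref{thm:NaturalProperty} for the contradiction. The only difference is the order in which you check the two goodness conditions, which is immaterial.
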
 

\begin{proof}
Assume for contradiction that $\mathcal{F}=\left\{ \mathcal{F}_{n}\right\} _{n\in\bb{N}}$
is a PRFE such that $\forall n,\forall f_{n},OBDD\left(f_{n}\right)\le n^{\e}$.
We shall see that $\mathcal{F}$ is good for $C$:
\begin{enumerate}

\item We need to prove that for any large enough $n$, $f_{n}\notin C_{n}\Rightarrow f_{n}^{\log n}\notin C_{\log n}$. We know that $OBDD\left(f_{n}\right)\le n^{\e}$, therefore we may fix all variables $x_{\log n+1},\ldots,x_{n}$ in $f_{n}$'s OBDD to 0, and get an OBDD of size at most $n^{\e}$ that computes $ f_{n}^{\log n}$. I.e., $OBDD\left(f_{n}^{\log n}\right)\le n^{\e}$. In order for $f_{n}^{\log n}$ to be in $C_{\log n}$ we must have $OBDD\left(f_{n}^{\log n}\right)>2^{\e\log n}=n^{\e}$,
therefore $f_{n}^{\log n}\notin C_{\log n}$. 

\item Obviously, we have that for any $n$, $OBDD\left(f_{n}\right)\le n^{\e}$,
therefore for large enough $n$ we have $f_{n}\notin C_{n}$. 
\end{enumerate}
Hence the existence of $\cl{F}$ contradicts Theorem \ref{thm:NaturalProperty}, and the claim follows.
\end{proof}

However, a result by \cite{KL01} shows that no PRFE is implementable by OBDD of \textit{any} polynomial size using communication complexity arguments.

\begin{remark} 
A similar claim may be shown on every model that has a non-trivial truth-table minimization algorithm (see Chapters \ref{chapt:DT}, \ref{chapt:Formulas} for examples).
\end{remark}

\chapter{Branching Programs} \label{chapt:BPs}
\section{Perliminaries and Previous Work} \label{sctn:BPperliminaries}

\begin{defn}
A Branching Program (also known as Binary Decision Diagram) is a generalization of a decision tree in which the underlying graph may not be a tree, but some directed acyclic graph (DAG). Formally, it is a DAG in which every non-terminal node is labelled with a variable
in $\left\{ x_{i}\right\} _{i=1}^{n}$, and has out-degree 2, with
two edges labelled 0 and 1. There are two terminal nodes labelled
0 and 1. A branching program $P$ is said to compute the function
$f$ iff $\forall\overline{a}\in\left\{ 0,1\right\} ^{n}$, the path
that begins at the root, and follows the edges labelled $a_{i}$ for
every node labelled $x_{i}$, reaches the terminal node labelled $f(\overline{a})$.
A branching program is called read-once (also known as a Free BDD) if in every path from the root to a terminal node each $x_{i}$ appears at most once.
\end{defn}
Allender et al. \cite{AKRR03} show that the minimal size of a branching programs for a given truth-table cannot be approximated up to a factor $N^{1-\varepsilon}$
in bounded polynomial probabilistic time, under the assumption that there is no algorithm with polynomial expected running time that factors Blum integers. Our results below use the same assumption to achieve similar results using different techniques (as in \cite{KW09}). These techniques were generalized in Theorem \ref{thm:PRFandTTM}, and the results below use that generalization. \\
\indent In \cite{MV00}, it is stated that the problem of truth-table size-minimization for read-once branching programs is open. An exact minimization algorithm for a read once branching programs exists in the literature \cite{GD99}, but it requires super-exponential time for certain functions, and therefore is not useful for our purposes.

A common restriction that is often applied over branching programs, is to limit the variables in any path to appear according to some fixed permutation. 

\begin{defn} \label{defn:OBDD}
An Ordered Binary Decision Diagram (OBDD) is a branching
program that can be divided into layers $L_{1},\cdots,L_{d+1}$, for $d\le n$, such
that all nodes in each layer except the last are labelled with the same variable.
$L_{1}$ is a singleton that contains the root and $L_{d+1}$ contains
the terminal nodes. Directed edges may exist
between layers $L_{i},L_{j}$ only if $i<j$.
\end{defn}

It is known that OBDDs have a polynomial truth-table minimization algorithm \cite{FS90}. We show that a subclass of OBDD has a faster truth table minimization algorithm, that uses a learning algorithm from \cite{VW93} as a black box.

\begin{defn} \label{defn:muBP}
A $\mu$-branching program is a branching program such that every variable appears at most once in the entire program. 
\end{defn}

Since every directed acyclic graph has a topological sort, every $\mu$-branching program may be seen as an OBDD of width 1. We shall see that the language of all truth-tables that have a corresponding $\mu$ branching program has a decision algorithm which is faster than using the algorithm of \cite{FS90} and accepting iff its output is an OBDD of width 1.

\section{Hardness Results}

\begin{thm} \label{thm:BP_approx}
Assuming that factoring Blum integers\footnote{a Blum integer is a number $n$ such that $n=pq$ where $p$ and $q$ are prime and congruent to $3$ modulo 4.} is not possible in probabilistic polynomial time, $BP_{size}(f)$
(i.e. the optimal branching program size for a function $f$) cannot be approximated within a factor of $2^{cn}$ for any $0<c<\frac{1}{2}$. Namely, there is no polynomial time algorithm $B$ such that 
\begin{eqnarray*}\frac{1}{2^{cn}}\cdot BP_{size}(f)\leq B(T_f)\leq 2^{cn}\cdot BP_{size}(f) \end{eqnarray*}
\end{thm}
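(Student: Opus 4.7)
The plan is to apply Theorem~\ref{thm:PRFandTTM} with $C$ equal to the class of branching programs and approximation factor $\gamma(n) = 2^{cn}$. It therefore suffices to exhibit functions $l(n)$ and $u(n)$, and a parameter $\e>0$, that satisfy the three conditions of that theorem.

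For condition~1 I would invoke the standard Shannon-style counting argument (see e.g.\ \cite{W00}): since the number of branching programs of size $s$ on $n$ variables is at most $s^{O(s)}$, while the number of Boolean functions is $2^{2^n}$, a $1-\exp(-\Omega(n))$ fraction of functions require branching programs of size at least $l(n) = \Omega(2^n/n)$. For condition~2 I would use the hypothesis that factoring Blum integers is not in $\mathrm{BPP}$ to obtain a pseudorandom generator (e.g.\ Blum--Blum--Shub), apply the Goldreich--Goldwasser--Micali construction to turn it into a PRFE lying in $NC^{1}$, and then invoke Barrington's theorem to implement each member of the ensemble by a polynomial-size (in fact constant-width) branching program. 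This yields $u(n) \le n^{k}$ for some absolute constant $k$.

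For condition~3 I need some $\e>0$ such that $u(2^{n/\e}) \le l(n)/\gamma(n)^{2}$ for all large $n$. Substituting the bounds from the previous paragraph, this becomes $2^{kn/\e} \le 2^{(1-2c)n}/n$, which holds eventually as long as $k/\e < 1-2c$. Because the hypothesis $c < 1/2$ gives $1-2c>0$, any $\e > k/(1-2c)$ works. Theorem~\ref{thm:PRFandTTM} then immediately yields the stated inapproximability: no polynomial-time algorithm can $2^{cn}$-approximate $BP_{size}(f)$ from the truth table of $f$.

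The main point that is not purely mechanical is step~2: one must actually know that under the Blum-integer assumption there is a PRFE whose members are computable by polynomial-size branching programs. I would handle this by citing the composition PRG$\to$PRF$\to$NC$^{1}$$\to$poly-size BP described above, rather than reproving it. Everything else is a routine arithmetic calibration of $\e$ against the gap $1-2c$ that the condition $c<1/2$ makes positive; if one tried to push the theorem to $c\ge 1/2$ this calibration would fail, reflecting the fact that the random-function lower bound $2^n/n$ cannot overcome the quadratic slack $\gamma(n)^{2}$ in Theorem~\ref{thm:PRFandTTM} once $\gamma(n)=2^{cn}$ with $c\ge 1/2$.
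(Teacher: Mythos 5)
Your overall strategy is exactly the paper's: instantiate Theorem~\ref{thm:PRFandTTM} with the $\Omega(2^n/n)$ counting lower bound of \cite{W00} for $l(n)$, a polynomial bound $u(n)=n^{O(1)}$ coming from a factoring-based PRFE computable by small branching programs, and then calibrate $\e$ against the gap $1-2c>0$; the arithmetic in your condition~3 matches the paper's.

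However, there is a genuine gap in how you establish condition~2. You propose to take a factoring-based PRG (Blum--Blum--Shub), apply the Goldreich--Goldwasser--Micali construction, and assert that the resulting PRFE lies in $NC^{1}$. It does not: the GGM construction evaluates $F_s(x)$ by $n$ \emph{sequential} applications of the underlying generator, so its circuit depth is at least $n$ times the depth of the generator, and there is no known way to place it in $NC^{1}$ (equivalently, no known way to compute it with polynomial-size branching programs). This is precisely the obstacle that motivates the Naor--Reingold construction \cite{NR97}, which the paper cites: under the intractability of factoring Blum integers it gives a PRFE directly computable in $NC^{1}$, after which a depth-$d$-circuit-to-size-$2^{d}$-branching-program conversion (or, as you suggest, Barrington's theorem --- either works) yields $u(n)\le n^{\beta}$. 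With \cite{NR97} substituted for BBS$+$GGM, the rest of your argument goes through and coincides with the paper's proof.
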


Notice that the result of \cite{AKRR03} is stated with \textit{one-sided} error, i.e. that there is no polynomial algorithm $B$ such that $BP_{size}(f) \le B(T_f) \le 2^{cn} \cdot BP_{size}(f)$ for $0<c<1$. By multiplying with a proper factor, we have that both results are equivalent. We retain the two sided error terminology for convenience.

\begin{proof}
We show that the conditions 1,2 and 3 of Theorem \ref{thm:PRFandTTM} are met, where $C$ is the branching program model. For condition 1, a counting argument by \cite{W00} shows that all but exponentially small fraction of functions on $n$ inputs require branching programs of size at least ${2^{n}}\cdot n^{-1}(1-n^{-\frac{1}{2}})$. For condition 2, Naor and Reingold \cite{NR97} proved the existence of a pseudo-random function ensemble in $NC^{1}$ under the assumption of intractability of factoring Blum integers. We calculate the branching program complexity of the PRF from \cite{NR97} using the following proposition, provable by straightforward
induction.
\begin{prop}
For every boolean function $f:\{0,1\}^{n}\to \{0,1\}$, if there exists a circuit of depth at most $d$ computing $f$, then there is a branching program of size at most $2^{d}$ computing $f$.
\end{prop}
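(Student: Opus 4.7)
My plan is to proceed by induction on the depth $d$, with the size of a branching program measured as its number of non-terminal nodes (so the two sinks labelled $0$ and $1$ are not counted). The base case $d=0$ corresponds to a circuit which is a single input variable $x_i$ (or a constant); the corresponding branching program is a single node labelled $x_i$ with its two outgoing edges leading to the sinks (respectively, it is just the appropriate sink), which has size at most $1 = 2^0$.

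For the inductive step, fix a circuit of depth $d \ge 1$ computing $f$ and case on the top gate. If the top gate is a NOT, then $f = \neg g$ where $g$ is computed by a depth-$(d-1)$ subcircuit; by induction there is a BP for $g$ of size at most $2^{d-1} \le 2^d$, and swapping the labels of the $0$-sink and the $1$-sink yields a BP of the same size for $f$. If the top gate is an AND, then $f = g \wedge h$ with both $g$ and $h$ computed by depth-$(d-1)$ subcircuits; by induction there exist branching programs $P_g, P_h$ for $g, h$ with $|P_g|, |P_h| \le 2^{d-1}$. I compose them by identifying the $1$-sink of $P_g$ with the root of $P_h$, and by letting the $0$-sink of $P_g$ and the two sinks of $P_h$ play the role of the sinks of the new program. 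Any computation path first evaluates $g$; if $g(\overline{a})=0$ the path reaches the $0$-sink, while if $g(\overline{a})=1$ it continues into $P_h$ and outputs $h(\overline{a})$, so the constructed program computes $g \wedge h = f$. Its size is $|P_g| + |P_h| \le 2^{d-1} + 2^{d-1} = 2^d$. The OR case is entirely symmetric, replacing the role of the $1$-sink of $P_g$ with the $0$-sink.

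There is essentially no obstacle here; the only point that requires a small amount of care is the bookkeeping about which node of $P_g$ is redirected to the root of $P_h$ and the fact that collapsing sinks does not introduce cycles (which is immediate because $P_h$ is acyclic and sits ``below'' $P_g$ in the composed DAG). Under other standard size conventions (e.g., counting sinks, or using slightly different base cases for constants) the same induction goes through after absorbing a small additive constant, but with the convention above the bound $2^d$ is matched exactly.
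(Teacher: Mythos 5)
Your proof is correct and is exactly the ``straightforward induction'' on circuit depth that the paper alludes to without writing out (the paper gives no explicit proof of this proposition). The only cosmetic point is that after composing $P_g$ and $P_h$ you should also merge the $0$-sink of $P_g$ with the $0$-sink of $P_h$ so that the result has exactly two terminal nodes as the paper's definition of a branching program requires; this changes nothing in the size count.
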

Therefore the branching program complexity of the PRF from \cite{NR97} is at most $n^{\beta}$ for some constant $\gb$, when $n$ is the number of variables. 

We now find $\e$ and $\ga(n)$ that satisfy condition 3. $\ga(n)$ must satisfy
\[
2^{\frac{\gb n}{\e}} \le \frac{1}{\ga^2 (n)} \cdot \frac{2^n}{n}\cdot \left(1-\frac{1}{\sqrt{n}}\right),
\]
i.e., 
\[
\ga(n) \le 2^{\frac{n}{2} \cdot \left( 1-\frac{\gb}{\e}\right)}\cdot \sqrt{\frac{1-\frac{1}{\sqrt{n}}}{n}}.
\]
For any $0<c<\frac{1}{2}$ we may choose 
\[
\e > \frac{\gb n}{(1-2c)n-\log n + \log \left(1-\frac{1}{\sqrt{n}}\right)},
\]
since the r.h.s of the above equation goes to $\frac{\gb}{1-2c}$ as $n$ goes to infinity. This choice of $\e$ gives us that
\[
2^{\frac{n}{2} \cdot \left( 1-\frac{\gb}{\e}\right)}\cdot \sqrt{\frac{1-\frac{1}{\sqrt{n}}}{n}} \ge 2^{cn}
\]
therefore we may choose $\ga(n) = 2^{cn}$, and the claim follows.
\end{proof}

\section{Efficient Algorithm} \label{sctn:BPEfficient}

If we wish to decide whether a given truth-table is representable by a $\mu$-branching program, we may use the algorithm of \cite{FS90} for truth-table minimization of OBDDs, and accept iff the resulting OBDD has width 1 (see Section \ref{sctn:BPperliminaries}). This would yield an algorithm with time complexity of $O(n^2 \cdot 3^n)$, since this is the worst case complexity of the algorithm by \cite{FS90}. The algorithm we present below slightly improves on that, by returning the correct answer in $O(n\cdot 2^n)$ time, using a learning algorithm by \cite{VW93} and applying Theorem \ref{thm:Learning}. 

\begin{thm} 
Given a full truth-table $T_f$ of a function $f$, there exists an algorithm that finds an equivalent $\mu$-branching program for $f$ if such a branching program exists, and rejects otherwise. The algorithm requires $O(n\cdot 2^n)$ time, when $n$ is the number of variables of $f$.
\end{thm}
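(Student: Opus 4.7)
The plan is to apply Theorem \ref{thm:Learning} with $C$ taken to be the class of $\mu$-branching programs, using the learning algorithm of \cite{VW93} as the proper and exact learning algorithm $A$ required by the theorem. First I would verify that \cite{VW93} indeed yields a proper exact learner for $\mu$-BPs running in $2^{O(n)}$ time via membership and equivalence queries, so the first hypothesis of Theorem \ref{thm:Learning} is met.

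For the model minimization algorithm $B$ demanded by Theorem \ref{thm:Learning}, I would exploit the definitional restriction of Definition \ref{defn:muBP}: since every variable appears at most once in the entire program, any $\mu$-BP has at most $n+2$ nodes in total (one per variable, plus the two terminal nodes). Consequently model minimization is essentially trivial and can be done in polynomial time by iteratively contracting any node whose two outgoing edges lead to the same target, and by identifying functionally equivalent subprograms (which, for objects of size $O(n)$, can be verified in $\text{poly}(n)$ time). Thus both hypotheses of Theorem \ref{thm:Learning} are satisfied, giving a polynomial-time truth-table minimization (and hence a decision) algorithm.

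To refine the running time to the claimed $O(n \cdot 2^n)$ bound, I would follow the simulation described in the proof of Theorem \ref{thm:Learning} and account carefully for query costs. Membership queries are answered by a single truth-table lookup in $O(1)$ time. An equivalence query, given a hypothesis $h$ of size $O(n)$, is simulated by traversing all $2^n$ assignments and comparing $h(x)$ to the corresponding truth-table entry, which takes $O(n \cdot 2^n)$ time per query. Since a $\mu$-BP has only $O(n)$ nodes, the learning algorithm of \cite{VW93} issues only a bounded (small polynomial) number of equivalence queries, so the overall cost remains $O(n \cdot 2^n)$. Finally, as in the proof of Theorem \ref{thm:Learning}, before accepting I would perform one last full comparison of the learner's output with $T_f$ (again $O(n \cdot 2^n)$ time); if the comparison fails, the function cannot be in the class and the algorithm rejects.

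The main obstacle will be the precise bookkeeping: showing that the number of equivalence queries used by \cite{VW93} on a target with an $n$-node $\mu$-BP is $O(1)$ (or at most a very small function of $n$) so that the total simulation cost is dominated by a single $O(n \cdot 2^n)$ pass, rather than inflating to a higher polynomial in $2^n$. A secondary subtlety is handling the case where the input truth-table does not correspond to any $\mu$-BP: here I rely on the fact that \cite{VW93} is exact, so if its output — produced under the promise that the target lies in the concept class — fails the final verification pass, we may soundly reject.
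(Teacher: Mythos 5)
Your overall approach is exactly the paper's: apply Theorem \ref{thm:Learning} with the proper, exact learner of \cite{VW93} for $\mu$-branching programs, simulate the queries from the truth-table, and verify the output before accepting. Your handling of the model-minimization algorithm $B$ is also fine, though the paper makes it even simpler: after a preprocessing pass ensuring $f$ depends on all its variables (cf.\ Remark \ref{remark:AllVariables}), any $\mu$-branching program for $f$ has size exactly $n$, so $B$ can just output its input with no contraction or equivalence-checking needed.

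The one genuine gap is in your plan for the $O(n\cdot 2^n)$ bound. You propose to show that the learner issues only $O(1)$ equivalence queries; that is not what \cite{VW93} gives you --- their algorithm uses $O(n)$ equivalence queries (together with $poly(n)$ membership queries and $poly(n)$ time), so that route fails. The bound is recovered instead by lowering the cost of a single simulated equivalence query from your $O(n\cdot 2^n)$ to $O(2^n)$: rather than following the length-$n$ path for each of the $2^n$ assignments separately, compute the entire truth-table of the $O(n)$-node hypothesis in one bottom-up pass over the program (each node's table is obtained from its two children's tables in $O(2^n)$ total work, and there are $O(n)$ nodes, but the standard DAG evaluation gives the root's table in $O(2^n)$ for an ordered read-once program) and then compare it to $T_f$ bit by bit. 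With $O(n)$ equivalence queries at $O(2^n)$ each, plus the $O(n\cdot 2^n)$ preprocessing for variable dependence and the final verification pass, the total is $O(n\cdot 2^n)$ as claimed. Your fallback soundness argument for rejecting inputs outside the class is correct and matches the paper.
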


\begin{proof}
We use a result by \cite{VW93} and Theorem \ref{thm:Learning}. \cite{VW93} provide a meta-algorithm for learning concept classes under certain restrictions, and later use it to learn $\mu$-branching programs efficiently. We refer the reader to Theorem 3 of \cite{VW93} for details about the meta-algorithm used to learn $\mu$-branching programs. As explained in \cite{VW93}, the algorithm produces hypotheses from inside the concept class, until the correct one is found. Therefore this learning algorithm is both proper and exact, and the first condition of Theorem \ref{thm:Learning} is met. Second, notice that any $\mu$-branching program for a function on $n$ variables that depends on all of them is of size exactly $n$. We therefore add a preprocessing phase to the minimization algorithm, in which we make sure that $f$ depends on all its variables. If so, we feed its entire truth-table to the learning algorithm. If not, we reduce the truth-table of $f$ to a truth-table of the sub-function of it that is equivalent to $f$ and does depend on all its variables. This allows us to use Theorem \ref{thm:Learning} to get a polynomial truth-table minimization algorithm, since the minimization algorithm $B$ stated in that theorem may just output its input. \\
\indent As stated in \cite{VW93}, their learning algorithm requires $O(n)$ equivalence queries, $poly(n)$ time and $poly(n)$ membership queries. Thus, as explained in Theorem \ref{thm:Learning}, after checking for the dependence of $f$ in all its variables (easily implementable in $O(n\cdot 2^n)$ time) we get a truth-table minimization that requires $O(n\cdot 2^n)$ time. 
\end{proof}

\bibliographystyle{alpha}
\bibliography{thesis}

\end{document}